\documentclass[journal,onecolumn,twoside]{IEEEtran}
\normalsize
\ifCLASSINFOpdf
\else
\fi
\usepackage{setspace}
\usepackage{color}
\usepackage{cite}
\usepackage{amsmath}
\usepackage{amsfonts}
\usepackage{amssymb}
\usepackage{amsthm}
\usepackage{tikz}
\usepackage{cases}
\usepackage{bm}
\usepackage{graphicx}
    \graphicspath{{../}}
    \DeclareGraphicsExtensions{.pdf}
\usepackage[caption=true,font=footnotesize]{subfig}
\usepackage{multirow}
\usepackage{makecell}
\usepackage{mathdots}
\usepackage{booktabs}
\usepackage{url}
\usepackage{bm}
\usepackage{xtab}
\usepackage{pifont}
\usepackage{array}
\usepackage{algorithm}
\usepackage{algorithmic}
\usepackage{enumerate}
\usetikzlibrary{arrows}
\usepackage{caption}
\usepackage{soul}
\usepackage{xcolor}

\allowdisplaybreaks[4]
\usepackage[colorlinks,
            linkcolor=blue,
            anchorcolor=blue,
            citecolor=blue]{hyperref}
\theoremstyle{plain}

\newtheorem{theorem}{Theorem}
\newtheorem{lemma}[theorem]{Lemma}

\newtheorem{definition}[theorem]{Definition}
\newtheorem{corollary}[theorem]{Corollary}
\theoremstyle{definition}
\newtheorem{example}{Example}

\newtheorem{remark}{Remark}

\begin{document}

\title{On Sequence Reconstruction Problem for $q$-ary Deletion Channels}

\author{Xiang Wang, Han Li, and Fang-Wei Fu
        % <-this % stops a space
\thanks{X. Wang is with the School of Mathematics, Statistics and Mechanics, Beijing University of Technology, Beijing, 100124, China (e-mail: xwang@bjut.edu.cn). Han Li is with the Chern Institute of Mathematics and LPMC, Nankai University, Tianjin 300071, China (e-mail: hli@mail.nankai.edu.cn).  F.-W. Fu is with the Chern Institute of Mathematics and LPMC, Nankai University, Tianjin 300071, China (e-mail: fwfu@nankai.edu.cn).}}% <-this % stops a space
%\thanks{Manuscript received October 26, 2023; revised December 8, 2023.}}

% The paper headers
%\markboth{Journal of \LaTeX\ Class Files,~Vol.~1, No.~2, December~2023}%
%{Shell \MakeLowercase{\textit{et al.}}: A Sample Article Using IEEEtran.cls for IEEE Journals}

%\IEEEpubid{0000--0000~\copyright~2023 IEEE}
% Remember, if you use this you must call \IEEEpubidadjcol in the second
% column for its text to clear the IEEEpubid mark.

\maketitle

\begin{abstract}
The sequence reconstruction problem for $q$-ary deletion channels, introduced by Levenshtein in 2001, concerns the minimum number of channels required to uniquely recover a transmitted sequence from noisy outputs, where each channel introduces exactly $t$ deletions. Combinatorially, this problem is equivalent to determining $N_q(n,d,t)$, the maximum size of the intersection of two $t$-radius deletion balls whose centers are at Levenshtein distance at least $d$, for $q$-ary sequences of length $n$ over the alphabet \(\Sigma_q=\{0,1,\dots,q-1\}\). Levenshtein solved the uncoded case \(N_q(n,1,t)\) for all \(n \ge t\); subsequently, Gabrys and Yaakobi established \(N_2(n,2,t)\) for binary one-deletion-correcting codes, and Wang et al. extended this to the ternary case, determining \(N_3(n,2,t)\).

In this paper, we study the problem for \(q\)-ary sequences under minimum Levenshtein distance \(d=2\) with channels that introduce exactly \(t\) deletions. We determine the exact value of \(N_q(n,2,t)\) for all \(t\ge 2, q\geq 4\), and for sufficiently large \(n\), and show that the maximum intersection is attained by a family of explicitly constructed pairs of sequences. Furthermore, for each \(q\ge3\), we fully characterize the structure of $q$-ary sequence pairs that achieve this extremal intersection. In particular, if the intersection size matches the first two  terms of \(N_q(n,2,t)\), then the two center sequences must contain, at the same positions, length-5 blocks of the forms \((a,b,c,a,b)\) and \((b,a,c,b,a)\) for some distinct \(a,b,c\in\Sigma_q\); for \(t\ge q+2\), the exact maximum \(N_q(n,2,t)\) is attained precisely by \(2q!\) unordered pairs of sequences with a specific block structure.

Asymptotically, we prove that for \(q\ge 4\) and \(t\ge 2\),
\[
N_q(n,2,t)=\frac{6}{(t-2)!}n^{t-2}-\frac{3t+13}{(t-3)!}n^{t-3}+\frac{3t^2+25t+64}{4(t-4)!}n^{t-4}+O(n^{t-5}).
\]
Moreover, \(N_q(n,2,t)\) and \(N_{q-1}(n,2,t)\) share their first \(q-1\)  terms, and for \(t\ge q\) the coefficient of \(n^{t-q}\) in their difference is \(\frac{6t-6q+5}{(t-q)!}\).  
\end{abstract}

\begin{IEEEkeywords}
Sequence reconstruction, $q$-ary deletion channels, Levenshtein distance, deletion balls, deletion-correcting codes.
\end{IEEEkeywords}

\section{Introduction}
\IEEEPARstart{L}{evenshtein} introduced the sequence reconstruction problem in 2001~\cite{L1}. In this framework, a codeword $\mathbf{x} \in \mathcal{C}$ is transmitted over several noisy channels, and the decoder receives the distinct outputs from each channel, aiming to uniquely recover $\mathbf{x}$. The problem originally emerged from biological and chemical applications, and has recently gained renewed interest due to advances in racetrack memories~\cite{Parkin,Chee} and DNA-based storage systems~\cite{Church,Yazdi,Lenz}.

Let $\Sigma_q = \{0,1,\dots,q-1\}$ denote the alphabet, and let $\Sigma_q^n$ be the set of all $q$-ary sequences of length $n$. Suppose $\rho: \Sigma_q^n \times \Sigma_q^n \rightarrow \mathbb{N}$ is a metric, and $\mathcal{C} \subseteq \Sigma_q^n$ is a code with minimum distance $d$ under $\rho$. When each channel introduces at most $t$ errors, Levenshtein~\cite{L1} showed that the minimum number of channels required is $N_q(n,d,t)+1$, where $N_q(n,d,t)$ is the size of the largest possible
intersection of two radius-$t$ balls centered at distinct codewords at distance at least $d$:
\begin{equation*}
N_q(n,d,t)=\max_{\mathbf{x}_1,\mathbf{x}_2\in \Sigma_q^n,\,\rho(\mathbf{x}_1,\mathbf{x}_2)\geq d } |B_t(\mathbf{x}_1)\cap B_t(\mathbf{x}_2)|,
%\label{intr::eq1}
\end{equation*}
where $B_t(\mathbf{x}) = \{\mathbf{y}\in \Sigma_q^n \mid \rho(\mathbf{x},\mathbf{y})\leq t\}$. Determining $N_q(n,d,t)$ is known as the \emph{sequence reconstruction problem}.

Levenshtein's early work~\cite{L1} examined this problem under various metrics, including Hamming distance, Johnson graphs, and others. Subsequent studies explored settings such as permutations~\cite{K1,Yaakobi,Wang1,Wang2}, general error graphs~\cite{L3,L4}, Grassmann graphs~\cite{Yaakobi}, and deletion/insertion channels~\cite{Sala1,Lan,Sun1,Zhang,Song,Wang, Wang4, Gabrys, Pham, Pham2, Sun0, Sun2}.

Deletion channels have attracted significant attention. For the uncoded case $\mathcal{C} = \Sigma_q^n$ with $q \ge 2$, Levenshtein~\cite{L1,L2} determined $N_q(n,1,t)$. Gabrys and Yaakobi~\cite{Gabrys} later solved the binary one-deletion-correcting case, obtaining $N_2(n,2,t)$. Subsequently, Pham, Goyal, and Kiah~\cite{Pham,Pham2} gave an asymptotic characterization of $N_2(n,d,t)$ for $d \ge 2$, showing $N_2(n,d,t) = \frac{\binom{2d}{d}}{(t-d)!} n^{t-d} - O(n^{t-d-1})$ for $2 \le d \le t < n$, and $N_2(n,t,t) = \binom{2t}{t}$. Focusing on specific parameter settings, Wang et al.~\cite{Wang4} proved that $N_2(n,3,4)=20n-166$ holds for all $n\ge 13$. In addition, Wang et al.~\cite{Wang} determined the exact value of $N_3(n,2,t)$ for the ternary case. For the binary $3$-deletion channel, Zhang et al.~\cite{Zhang} characterized sequence pairs $(\mathbf{x},\mathbf{y})$ with distance at least $3$ such that $|D_3(\mathbf{x})\cap D_3(\mathbf{y})|\in\{19,20\}$, where $D_3(\cdot)$ denotes the deletion ball of radius $3$.  More generally, Sun and Ge~\cite{Sun0} proved the asymptotic upper bound $N_q(n,d,t)\le \binom{2d}{d}\binom{n-d}{t-d} = \frac{\binom{2d}{d}}{(t-d)!}n^{t-d} - O(n^{t-d-1})$ for $1\le d\le t<n$.

In this paper, we study the sequence reconstruction problem over deletion channels for $q$-ary sequences of length $n$, where the transmitted codeword belongs to a one‑deletion‑correcting code and each channel introduces exactly $t$ deletions. Our goal is to determine, for general $q\ge 3$, the minimum number of channel outputs required to uniquely recover the codeword. For a sequence $\mathbf{x} \in \Sigma_q^n$, the \emph{deletion ball} of radius $t$ is
\begin{equation*}
D_t(\mathbf{x}) = \{ \mathbf{y} \in \Sigma_q^{n-t} \mid \mathbf{y} \text{ is a subsequence of } \mathbf{x} \},
\end{equation*}
with $D_t(\mathbf{x}) = \emptyset$ for $t < 0$ or $t > n$. Let $D_q(n,t)$ denote the maximum size of such a ball over all sequences of length $n$, where we set $D_q(n,t)=0$ for $t<0$ or $t>n$. Throughout this paper, we adopt the conventions that \(\sum_{i=a}^{b} f(i)=0\) for \(a>b\) and \(1/k! = 0\) for any integer \(k<0\). It is known from~\cite{Hirschberg} that
\begin{equation}
D_q(n,t) = \sum_{i=0}^{t} \binom{n-t}{i} D_{q-1}(t,t-i)=\frac{n^t}{t!}-O(n^{t-1}).
\label{intr:eq2}
\end{equation}
In particular, $D_2(n,t) = \sum_{i=0}^{t} \binom{n-t}{i}$ and $D_3(n,t) = \sum_{i=0}^{t} \binom{n-t}{i} \sum_{j=0}^{t-i} \binom{i}{j}$.

Our main contribution is the exact determination of $N_q(n,2,t)$ for all $q \geq 3$ and $t \geq 2$, provided $n$ is sufficiently large. More precisely, for $t\geq 2, q\geq 3$, and sufficiently large $n$, we have
$N_q(n,2,t)=N_q(n,t),$
where $N_q(n,t)$ is defined as
\begin{align*}
N_q(n,t)\triangleq&
\sum_{j=0}^{q-3} \binom{q-3}{j} \Bigl[ 2D_q(n-q+1,\,t-j-1) + D_q(n-q+1,\,t-j-2)+ 2D_q(n-q-2,\,t-j-2) \nonumber\\
&\quad + D_q(n-q-2,\,t-j-3) + D_q(n-2q-2,\,t-j-2-q) - 2D_q(n-q-1,\,t-j-1) \nonumber\\
&\quad - D_q(n-q-1,\,t-j-2) - D_q(n-q-1,\,t-j-3)- D_q(n-q-2,\,t-j-4)  \nonumber\\
&\quad - D_q(n-2q-1,\,t-j-1-q) \Bigr] - \sum_{m=0}^{q-4} \sum_{j=0}^{m} \binom{m}{j} D_q(n-3-q-m,\,t-2-q-j).%\label{intr::eq3}
\end{align*}
Moreover, for all $t\ge 2$, $q\ge 4$, and sufficiently large $n$, the following asymptotic expansion holds:
\begin{equation*}
N_q(n,2,t)=\frac{6}{(t-2)!}n^{t-2}-\frac{3t+13}{(t-3)!}n^{t-3}+\frac{3t^2+25t+64}{4(t-4)!}n^{t-4}+O(n^{t-5}).
% \label{intr::eq4}
\end{equation*}
In particular, when $t$ is small, we obtain the following exact values:
\begin{align*}
N_q(n,2,t)=N_q(n,t)=
\begin{cases}
 6,&\text{if~} t=2,\\
6n-22,&\text{if~} t=3,\\
3n^2-25n+53,&\text{if~} t=4,
\end{cases}
\end{align*}
for $q\geq 4$ and sufficiently large $n$.

The remainder of the paper is organized as follows. Section~\ref{sec2} introduces notation and preliminary results. Section~\ref{sec3} derives key properties of $D_q(n,t)$ and characterizes the sequences that attain this quantity, either exactly or asymptotically. Section~\ref{sec4} establishes lower bounds on $N_q(n,2,t)$. Section~\ref{sec5} proves the matching upper bounds, yielding the exact value. Section~\ref{sec6} concludes the paper with a summary and future research directions.

\section{Definitions and Preliminaries}
\label{sec2}

For integers \(i\) and \(j\), we define \([i,j]\) to be the set \(\{i,i+1,\dots,j\}\) if \(i \le j\), and the empty set otherwise.  In particular, \([n]=[1,n]=\{1,2,\dots,n\}\) for every positive integer $n$.
Scalars are denoted by lowercase letters, while vectors or sequences appear in bold lowercase letters. Let $S_q$ be the set of all permutations over $\Sigma_q$. We denote by $\pi:=[\pi(0),\pi(1),\dots,\pi(q-1)]$ a \emph{permutation} over $\Sigma_q$.
Given $\mathbf{x} = (x_1, x_2, \dots, x_n) \in \Sigma_q^n$, the \emph{projection} onto the index interval $[i,j]$ is denoted by $\mathbf{x}_{[i,j]}$; the \emph{reversal} of \(\mathbf{x}\) is denoted by \(\mathbf{x}^R=(x_n,\dots,x_1)\); for any $\pi\in S_q$, the sequence $\pi(\mathbf{x})$ is obtained by applying $\pi$ to each component of $\mathbf{x}$, i.e., $\pi(\mathbf{x})=(\pi(x_1),\pi(x_2),\dots,\pi(x_n))$. The length of $\mathbf{x}$ is $|\mathbf{x}|$. A \emph{run} is a maximal substring of identical symbols, and the number of runs in a string $\mathbf{x}\in \Sigma_q^n$ is denoted by $r(\mathbf{x})$.

For two sequences $\mathbf{u} = (u_1,\dots,u_m)$ and $\mathbf{v} = (v_1,\dots,v_n)$, we write $\mathbf{u} \circ \mathbf{v} = (u_1,\dots,u_m,v_1,\dots,v_n)$ for their concatenation. If $\mathcal{C} \subseteq \Sigma_q^n$ and $\mathbf{u}$ has length at most $n$, the subset of $\mathcal{C}$ consisting of sequences that start with $\mathbf{u}$ is written as $\mathcal{C}^{\mathbf{u}}$. Similarly, $\mathcal{C}_{\mathbf{u}_2}^{\mathbf{u}_1}$ denotes the subset of $\mathcal{C}$ consisting of sequences that begin with $\mathbf{u}_1$ and end with $\mathbf{u}_2$. For $\mathbf{u} \in \Sigma_q^m$ and $\mathcal{C} \subseteq \Sigma_q^n$, we define
\[
\mathbf{u} \circ \mathcal{C} = \{\mathbf{u} \circ \mathbf{c} \mid \mathbf{c} \in \mathcal{C}\}.
\]

For $\mathbf{x} \in \Sigma_q^{n+k}, \mathbf{y} \in \Sigma_q^n$, and $t \ge 0$, define $D_q(\mathbf{x},\mathbf{y};t+k,t) = D_{t+k}(\mathbf{x}) \cap D_t(\mathbf{y})$. The \emph{Levenshtein distance} between $\mathbf{x}$ and $\mathbf{y}$ is
\[
d_L(\mathbf{x},\mathbf{y}) = \min\{s \ge 0 \mid D_q(\mathbf{x},\mathbf{y};s+k,s) \ne \emptyset\}.
\]
When $\mathbf{x}$ and $\mathbf{y}$ have the same length $n$, $d_L(\mathbf{x},\mathbf{y})$ equals the half of the minimum total number of insertions and deletions required to transform $\mathbf{x}$ into $\mathbf{y}$. A code $\mathcal{C} \subseteq \Sigma_q^n$ has \emph{minimum Levenshtein distance} at least $d$ if $d_L(\mathbf{x},\mathbf{y}) \ge d$ for any two distinct $\mathbf{x},\mathbf{y} \in \mathcal{C}$; such a code can correct up to $d-1$ deletions.

We define $N_q(n,d,t)$ as the maximum size of the intersection of two deletion balls of radius $t$ centered at sequences in $\Sigma_q^n$ with Levenshtein distance at least $d$:
\[
N_q(n,d,t) = \max\{|D_t(\mathbf{x}) \cap D_t(\mathbf{y})| : \mathbf{x},\mathbf{y} \in \Sigma_q^n,\ d_L(\mathbf{x},\mathbf{y}) \ge d\}.
\]
For sequences of possibly different lengths we set
\[
N_q(n,d,t+k,t) = \max\{|D_{t+k}(\mathbf{x}) \cap D_t(\mathbf{y})| : \mathbf{x} \in \Sigma_q^{n+k},\ \mathbf{y} \in \Sigma_q^n,\ d_L(\mathbf{x},\mathbf{y}) \ge d\}.
\]
Suppose a codeword from a $(d-1)$-deletion-correcting code $\mathcal{C} \subseteq \Sigma_q^n$ is transmitted over $N$ channels, each introducing exactly $t$ deletions, and the decoder receives $N$ distinct outputs. Levenshtein~\cite{L1} proved that the minimum $N$ guaranteeing successful reconstruction equals $N_q(n,d,t) + 1$.

Let $\sigma = (\sigma_1,\dots,\sigma_q)$ be an ordering of $\Sigma_q$. The periodic sequence $\mathbf{c}_q(n,\sigma)$ of length $n$ is defined by
\[
\mathbf{c}_q(n,\sigma) = (c_1,\dots,c_n) \quad \text{with} \quad c_i = \sigma_{((i-1) \bmod q) + 1}, \quad 1 \le i \le n.
\]
When $\sigma$ is the identity ordering $I_q = (0,1,\dots,q-1)$, we denote $\mathbf{a}_n = \mathbf{c}_q(n,I_q)$ and write $\mathbf{a}_n(j)$ for its $j$-th coordinate. 
The collection of all such periodic sequences over all orderings of $\Sigma_q$ is denoted by 
\[
\mathbf{C}_q(n) = \{\mathbf{c}_q(n,\sigma) \mid \sigma \text{ is an ordering of } \Sigma_q\}.
\]

As shown in~\cite{Hirschberg}, for every $\mathbf{x} \in \mathbf{C}_q(n)$ we have
\[
D_q(n,t) = |D_t(\mathbf{x})| = \sum_{i=0}^{t} \binom{n-t}{i} D_{q-1}(t,t-i).
\]
Moreover, $D_q(n,t)$ satisfies the recurrence
\begin{equation}
D_q(n,t) = D_q(n-1,t) + D_q(n-2,t-1)+\cdots + D_q(n-q,t-q+1), \quad n \ge t+1,  \label{def:eq1}
\end{equation}
with the conventions that $D_q(n,t)=0$ whenever $t>n$, $n<0$, or $t<0$, and $D_q(n,n)=1$ for all $n\ge 1$. 

Several results concerning $N_q(n,d,t)$ have been established. For $d=1$, Levenshtein~\cite{L1} proved the following theorem.

\begin{theorem}[Levenshtein \cite{L1}]
For any $n \ge t+1$ and $q \ge 2$,
\[
N_q(n,1,t) = \sum_{i=1}^{q-1} D_q(n-i-1,t-i) + D_q(n-2,t-1) = D_q(n,t) - D_q(n-1,t) + D_q(n-2,t-1).
\]
This is achieved by taking $\mathbf{x} = (c_1,c_2,c_3,\dots,c_n) \in \mathbf{C}_q(n)$ and $\mathbf{y} = (c_2,c_1,c_3,\dots,c_n)$. That is, $d_L(\mathbf{x},\mathbf{y}) = 1$ and $|D_t(\mathbf{x}) \cap D_t(\mathbf{y})| = N_q(n,1,t)$.
\label{def:thm1}
\end{theorem}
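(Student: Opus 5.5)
The statement is Levenshtein's theorem, so the plan has three parts. First I would verify the equality of the two closed forms. Then I would prove the upper bound for an arbitrary pair at distance at least one, and finally show that the stated pair attains it.

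\emph{Equality of the two expressions.} Apply the recurrence \eqref{def:eq1}:
\[
D_q(n,t)-D_q(n-1,t)=\sum_{i=1}^{q-1}D_q(n-i-1,t-i).
\]
Adding $D_q(n-2,t-1)$ to both sides gives the identity.

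\emph{Upper bound.} Let $\mathbf{x}\neq\mathbf{y}$ in $\Sigma_q^n$. Strip their longest common prefix; subsequences are counted by the usual first-symbol decomposition, so I will argue on the first position where they differ. Write $\mathbf{x}=a\circ\mathbf{x}'$ and $\mathbf{y}=b\circ\mathbf{y}'$ with $a\ne b$. Every $\mathbf{z}\in D_t(\mathbf{x})\cap D_t(\mathbf{y})$ starts with some symbol $c$. Split the intersection by $c$:
\begin{itemize}
\item If $c=a$, then $\mathbf{z}=a\circ\mathbf{z}'$ with $\mathbf{z}'\in D_t(\mathbf{x}')\cap D_{t-k}(\mathbf{y}'')$. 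Here $\mathbf{y}''$ is the suffix of $\mathbf{y}$ after its first $a$, at position $k+1\ge2$.
\item The case $c=b$ is symmetric.
\item Each remaining symbol $c$ contributes at most the subsequences of the suffixes after the first occurrence of $c$ in both.
\end{itemize}
Bounding each class crudely by a single deletion ball (of the shorter suffix) and using $|D_s(\mathbf{w})|\le D_q(|\mathbf{w}|,s)$ would give one term per symbol. That is too weak, since each term would be roughly $D_q(n-2,t-1)$.

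The sharp argument should be an induction on $n$ (and $t$) for the mixed-length quantity. The key inequality I would aim for is
\[
|D_{t}(\mathbf{x}')\cap D_{t-1}(\mathbf{y}')|\le D_q(n-1,t)-D_q(n-2,t),
\]
with the terms for the $q-2$ other symbols bounded by $D_q(n-i-1,t-i)$ after ordering the symbols by first occurrence. The first symbol to appear after position 2 loses at least 2 positions, the next at least 3, and so on. This matches the sum $\sum_{i=2}^{q-1}D_q(n-i-1,t-i)$. The two classes $c=a$ and $c=b$ together should give
\[
D_q(n-2,t-1)+D_q(n-2,t-1).
\]
To get that, I would bound the $c=a$ class by $|D_{t-1}(\mathbf{y}'')|$, which uses that $a$ must be deleted from $\mathbf{y}$ together with a prefix. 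The $c=b$ class is bounded similarly. I would use monotonicity of $D_q(\cdot,\cdot)$ in its first argument, shifted as needed, to compare with the extremal periodic values.

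\emph{Achievability.} For $\mathbf{x}=(c_1,c_2,c_3,\dots)$ periodic and $\mathbf{y}=(c_2,c_1,c_3,\dots)$, I would count common subsequences by first symbol:
\begin{itemize}
\item Starting with $c_1$ or $c_2$, the count is $D_t$ of the common tail $(c_3,\dots,c_n)$ with one deletion already used, giving $2D_q(n-2,t-1)$. Subsequences starting with $c_1c_2$ or $c_2c_1$ are handled inside this via the periodic structure.
\item Starting with $c_i$ for $i\ge3$, the count is $D_q(n-i,t-i+1)$.
\end{itemize}
Reindexing and applying \eqref{def:eq1} gives the stated formula. The main obstacle is the sharp upper bound in the first two classes: making the induction close without double-counting requires carefully choosing which suffix's ball dominates. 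I expect to need Hirschberg's result that periodic sequences maximize $|D_s(\cdot)|$, applied to suffixes.
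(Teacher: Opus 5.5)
Your derivation of the identity from Eq.~\eqref{def:eq1} and your achievability count are correct. The paper quotes this theorem from Levenshtein without proof; the closest in-paper argument is the appendix proof of Lemma~\ref{Upp:lm3}. Your upper bound has two genuine gaps. Below, $\mathcal{X}=D_t(\mathbf x)\cap D_t(\mathbf y)$, $\mathcal{X}^c$ is the set of its strings starting with $c$, and $F(n,t)=D_q(n,t)-D_q(n-1,t)+D_q(n-2,t-1)$ is the claimed value.

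\textbf{First gap: the case $x_1=a\ne b=y_1$ is bounded incorrectly, term by term.}
\begin{itemize}
\item Your displayed ``key inequality'' is ill-formed. $D_t(\mathbf x')$ and $D_{t-1}(\mathbf y')$ contain strings of different lengths, so their intersection is empty.
\item Read with $\mathbf y''$ instead, it fails at the extremal pair itself. There $\mathcal{X}^{c_1}=c_1\circ D_{t-1}(c_3,\dots,c_n)$ has size $D_q(n-2,t-1)$, which exceeds $D_q(n-1,t)-D_q(n-2,t)$ by $D_q(n-q-2,t-q)$. This also contradicts your later claim that the $a$- and $b$-classes give $2D_q(n-2,t-1)$.
\item Nor can each remaining symbol be assigned its own term $D_q(n-i-1,t-i)$ with $i\ge 2$. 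For $\mathbf x=(a,c,\mathbf z)$, $\mathbf y=(b,c,\mathbf z)$ one has $\mathcal{X}^c=c\circ D_{t-1}(\mathbf z)$.
\end{itemize}
The crude single-ball bound you dismissed is in fact exact once you choose the right sequence for each symbol. Bound $\mathcal{X}^a$ by the ball of the tail of $\mathbf y$ after its first $a$, which lies at position $\ge 2$; this gives $D_q(n-2,t-1)$. Bound every other $\mathcal{X}^c$ by the tail of $\mathbf x$ after its first $c$. These first occurrences sit at distinct positions $\ge 2$, so together they contribute at most $\sum_{i=1}^{q-1}D_q(n-1-i,t-i)$. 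The total is exactly $F(n,t)$, with no induction needed.

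\textbf{Second gap (the serious one): the common-prefix case is never handled.} ``Strip the longest common prefix'' is not a valid reduction. For $\mathbf x=(\mathbf u,\mathbf x')$ and $\mathbf y=(\mathbf u,\mathbf y')$, the intersection also contains strings produced by deletions inside $\mathbf u$ (Lemma~\ref{Pro:lm3}). So the case $x_1=y_1$ is where the real work lies.

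A naive induction does not close. For $x_1=y_1=a$ you get $|\mathcal{X}^a|\le F(n-1,t)$ and $\sum_{c\ne a}|\mathcal{X}^c|\le\sum_{i=1}^{q-1}D_q(n-1-i,t-i)$. Together these overshoot $F(n,t)$ by $F(n-1,t)-D_q(n-2,t-1)=D_q(n-3,t-1)-D_q(n-q-2,t-q)$, which is positive for $t\ge1$, $n\ge t+3$.

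You need a coupled count:
\begin{itemize}
\item Let $k$ be the first position where $\mathbf x$ and $\mathbf y$ differ, and let $p$ be the number of distinct symbols in $\mathbf x_{[1,k-1]}$.
\item A prefix symbol first occurs at the same offset $\ell\le k-2$ in both sequences. It contributes at most $F(n-1-\ell,t-\ell)$ by induction, since the two tails have equal length and differ. These offsets are distinct, so the prefix symbols give at most $\sum_{i=0}^{p-1}F(n-1-i,t-i)$.
\item A symbol outside the prefix first occurs at position $\ge k+1$ in $\mathbf x$, or in $\mathbf y$ if it equals $x_k$. By the counting of the first gap, these symbols contribute at most $D_q(n-p-2,t-p-1)+\sum_{i=1}^{q-p-1}D_q(n-p-1-i,t-p-i)$.
\end{itemize}
The induction then closes via
\[
D_q(n-p-2,t-p-1)+\sum_{i=1}^{q-p-1}D_q(n-p-1-i,t-p-i)\le\sum_{i=p}^{q-1}F(n-1-i,t-i).
\]
The slack in this inequality is $D_q(n-q-2,t-q)-D_q(n-p-q-2,t-p-q)\ge0$, an analogue of Lemma~\ref{Upp:lm2}. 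Combine it with $\sum_{i=0}^{q-1}F(n-1-i,t-i)=F(n,t)$ from Corollary~\ref{Pro:cor2}. This inside/outside-the-prefix split is what Case~2 of the appendix proof of Lemma~\ref{Upp:lm3} does, after first making the prefix periodic with Lemma~\ref{Pro:lm5}.
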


For the binary alphabet, Gabrys and Yaakobi~\cite{Gabrys} determined $N_2(n,2,t)$.

\begin{theorem}[Gabrys and Yaakobi \cite{Gabrys}]
For $t \ge 2$ and $n \ge \max\{8, 2t+1\}$,
\begin{align*}
N_2(n,2,t) &= 2D_2(n-4,t-2) + 2D_2(n-5,t-2) + 2D_2(n-7,t-2)+ D_2(n-6,t-3) + D_2(n-7,t-3).
\end{align*}
\label{def:thm2}
\end{theorem}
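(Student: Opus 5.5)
The plan is to prove the two inequalities separately: an explicit pair of sequences gives the lower bound, and an induction on $n+t$ gives the upper bound.

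\textbf{Lower bound.} I would take two sequences agreeing on a long alternating tail $\mathbf{c}\in\mathbf{C}_2(\cdot)$ and differing only in a short prefix of length about $7$:
\[
\mathbf{x}=\mathbf{u}\circ\mathbf{c},\qquad \mathbf{y}=\mathbf{v}\circ\mathbf{c}.
\]
The prefixes $\mathbf{u},\mathbf{v}$ should satisfy $d_L(\mathbf{u}\circ\mathbf{c},\mathbf{v}\circ\mathbf{c})=2$ while sharing many short subsequences. The shifts $n-4,\dots,n-7$ in the formula indicate which prefix lengths are involved. The ternary pattern $(a,b,c,a,b)$ versus $(b,a,c,b,a)$ is unavailable in the binary case, so the prefixes must be found by a finite search, for instance among pairs built from $0101$-type blocks with two adjacent transpositions.

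To compute $|D_t(\mathbf{x})\cap D_t(\mathbf{y})|$, I would partition common subsequences according to how their first few symbols embed greedily into $\mathbf{u}$ and $\mathbf{v}$. Each class is then $\mathbf{w}\circ D_{t-s}(\mathbf{c}')$ for a suffix $\mathbf{c}'$ of the alternating tail. Since alternating sequences attain $D_2(m,s)$, each class contributes a term $D_2(n-i,t-j)$. Summing the classes should reproduce
\[
2D_2(n-4,t-2)+2D_2(n-5,t-2)+2D_2(n-7,t-2)+D_2(n-6,t-3)+D_2(n-7,t-3).
\]
The condition $n\ge\max\{8,2t+1\}$ ensures that the tail is long enough for all these classes to be distinct and nonempty.

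\textbf{Upper bound.} I would induct on $n+t$, with the base cases $t=2$ and small $n$ checked directly. Given $\mathbf{x},\mathbf{y}$ with $d_L\ge 2$, split $D_t(\mathbf{x})\cap D_t(\mathbf{y})$ by the first symbol $a\in\{0,1\}$ of the common subsequence. The part starting with $a$ equals $a\circ\bigl(D_{t-i}(\mathbf{x}')\cap D_{t-j}(\mathbf{y}')\bigr)$, where $\mathbf{x}',\mathbf{y}'$ are the suffixes after the first occurrence of $a$, deleted after $i$ and $j$ symbols respectively.

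\emph{Equal first symbols.} If $x_1=y_1$, the class starting with $x_1$ is $x_1\circ(D_t(\mathbf{x}_{[2,n]})\cap D_t(\mathbf{y}_{[2,n]}))$, and the distance stays at least $2$. The induction hypothesis bounds this class, and the other class is bounded using Theorem~\ref{def:thm1} or by $D_2$ of a shorter length.

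\emph{Different first symbols.} Say $x_1=0$ and $y_1=1$. Then the two classes involve suffixes of unequal lengths, so I need bounds on the mixed quantities $N_2(m,d,s+k,s)$ for $k=1$ and $d\in\{1,2\}$. I would prove these auxiliary bounds by the same first-symbol recursion, as a separate lemma proved first.

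Finally, the recursion \eqref{def:eq1} for $q=2$, $D_2(n,t)=D_2(n-1,t)+D_2(n-2,t-1)$, should let the sum of the bounds collapse to the claimed formula.

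\textbf{Main obstacle.} I expect the hardest step to be the different-first-symbols case combined with the distance-$1$ sub-cases. There one must show that distance $2$ forces a loss relative to Levenshtein's $N_2(\cdot,1,\cdot)$ configuration. Concretely, one needs that the suffix pairs cannot simultaneously attain both inductive maxima. I would first try to classify, via Theorem~\ref{def:thm1}, the pairs attaining $N_2(m,1,s)$ as adjacent transpositions in alternating sequences. I would then show that, when two such configurations are combined, the distance-$2$ requirement costs exactly the terms absent from the formula.
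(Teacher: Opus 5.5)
The paper does not prove this theorem; it quotes it from Gabrys--Yaakobi. The closest in-paper template is the $q$-ary argument (Lemma~\ref{Upp:lm6}, Theorem~\ref{Upp:thm1}), which has the architecture you propose: explicit pair, first-symbol recursion, and a mixed-length lemma. Measured against it, one of your steps fails, and the two substantive parts are not done.

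\emph{The equal-first-symbol step does not close as stated.} Take $x_1=y_1=0$, $x_2=1$, $y_2=0$, $y_3=1$. The $1$-class is $1\circ\bigl(D_{t-1}(\mathbf{x}_{[3,n]})\cap D_{t-2}(\mathbf{y}_{[4,n]})\bigr)$.
\begin{itemize}
\item Theorem~\ref{def:thm1} does not apply, since the suffixes have lengths $n-2$ and $n-3$.
\item ``$D_2$ of a shorter length'' only gives $D_2(n-3,t-2)\sim n^{t-2}/(t-2)!$, which is of the same order as the target.
\item Adding $N_2(n-1,2,t)\sim 6n^{t-2}/(t-2)!$ gives leading coefficient $7$, not $6$.
\end{itemize}
The missing idea is the paper's identity $\hat{\mathcal{X}}^a=\mathcal{X}^a$: classes of symbols absent from the common prefix live in $D_{t-p}(\hat{\mathbf{x}})\cap D_{t-p}(\hat{\mathbf{y}})$ and inherit the distance condition. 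Here $\mathbf{y}_{[4,n]}\notin D_1(\mathbf{x}_{[3,n]})$, since otherwise $(0,1,\mathbf{y}_{[4,n]})$ would be a common subsequence of length $n-1$. So this class must be bounded by the mixed quantity $N_2(n-3,1,t-1,t-2)=O(n^{t-3})$. You must then check that this is at most $N_2(n-2,2,t-1)$ on the whole range, so that the recurrence for $D_2$ closes the sum.

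Bounding the class only through $N_2(n-1,2,t-1)$ is not enough: by the recurrence it overshoots by the right-hand side evaluated at $(n-3,t-2)$. For $q\ge 4$ a common prefix really does increase the intersection (Lemma~\ref{Low:lm4}), so the absence of such a gain in the binary case needs its own argument.

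Your induction also calls the hypothesis at $(n-1,t)$, which leaves the range when $n=2t+1$. That is an infinite family, not a finite set of base cases. For $t\ge5$ the right-hand side at $(2t,t)$ equals the trivial bound $2^{t}$, which can serve, but this has to be said.

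\emph{The different-first-symbol case is essentially the whole proof, and you only name its obstacle.} As you note, bounding $\mathcal{X}^0$ and $\mathcal{X}^1$ separately overshoots, so you need a quantitative loss estimate. The route you suggest is not available. Theorem~\ref{def:thm1} gives the value of $N_2(m,1,s)$ and one extremal pair, not a classification of extremal pairs. In any case the relevant quantity is the mixed $N_2(m,1,s+1,s)$, for which you would need both its exact value and a stability bound. The $q$-ary analogue obtains this by descending two or three more symbols (Cases A1--A2-3, B, C, D of Lemma~\ref{Upp:lm6}) and bounding every subclass by $M_q$, $N_q(\cdot,1,\cdot)$ or $D_q$ terms. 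The separator $c$ of the extremal blocks $(a,b,c,a,b)$, $(b,a,c,b,a)$ does not exist for $q=2$, so that case tree does not transfer and must be redone.

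\emph{The lower bound is not established.} No pair is exhibited, and your suggested search space is unreliable. Transposing positions $(1,2)$ and then $(3,4)$ of $01010101$ gives $10100101$; deleting $x_4$ from the first and $y_1$ from the second both give $0100101$, so $d_L=1$. Until a concrete pair is written down and both $d_L=2$ and the class count are verified, the formula is only conjectured. Note also that, unless you descend past the differing prefix in both words, the classes are intersections as in Lemma~\ref{Low:lm2}, not single tail balls.
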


In the binary case, Wang et al.~\cite{Wang4} proved that $N_2(n,3,4)=20n-166$ for all $n\geq 13$, and established the following result.

\begin{theorem}[Wang, Fang, Li, and Fu \cite{Wang4}]
For all $n \ge 13$,
\begin{align*}
N_2(n,3,4) &= 20n-166.
\end{align*}
Moreover, for $n\geq \max\{13,t+8\}$ and $t\geq 4$,
\begin{align*}
N_2(n,3,t)\geq  &6D_2(n-6,t-3)+4D_2(n-8,t-3)+6D_2(n-9,t-3)+4D_2(n-11,t-3)\\
&~~~~~+2D_2(n-13,t-5)+D_2(n-13,t-6).
\end{align*}
\label{def:thm3}
\end{theorem}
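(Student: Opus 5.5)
The theorem has two parts, and I would treat them separately: a matching lower and upper bound for $t=4$, and an explicit construction for the general lower bound.

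\emph{Lower bound.} I would exhibit a pair $\mathbf{x},\mathbf{y}\in\Sigma_2^n$ that agree outside a short window. Inside the window they differ by a local pattern of Levenshtein distance exactly $3$. Outside it they follow the alternating sequence $\mathbf{c}_2(\cdot,\sigma)$, since this maximizes deletion balls by the result of~\cite{Hirschberg} quoted above.

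Every common subsequence $\mathbf{z}$ is then classified by how its matching in $\mathbf{x}$ and in $\mathbf{y}$ passes through the window. Splitting on the first symbol of $\mathbf{z}$ gives $|D_t(\mathbf{x})\cap D_t(\mathbf{y})|=|D_t(\mathbf{x})^{0}\cap D_t(\mathbf{y})^{0}|+|D_t(\mathbf{x})^{1}\cap D_t(\mathbf{y})^{1}|$. For $\mathbf{x}=a\circ\mathbf{x}'$ we have $D_t(\mathbf{x})^{a}=a\circ D_t(\mathbf{x}')$, while $D_t(\mathbf{x})^{\bar a}$ is the deletion ball, of radius reduced by the length of the leading run, of the suffix of $\mathbf{x}'$ that follows that run.

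Iterating this reduction turns each piece into a deletion ball of an alternating tail. Such a ball has size $D_2(m,s)$. Summing the pieces should produce exactly the terms $6D_2(n-6,t-3)+4D_2(n-8,t-3)+\cdots+D_2(n-13,t-6)$. The window has to be chosen so that the leading coefficient is $\binom{6}{3}=20$, which is the Sun--Ge bound. At $t=4$ the formula evaluates to $20n-166$.

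\emph{Upper bound for $t=4$.} I would argue by induction on $n$, with the base case $n=13$ (and possibly a few more lengths) verified exhaustively by computer. In the inductive step, take $\mathbf{x},\mathbf{y}$ with $d_L(\mathbf{x},\mathbf{y})\ge 3$.

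\textbf{Same first symbol.} If $x_1=y_1=a$, then $d_L(\mathbf{x}_{[2,n]},\mathbf{y}_{[2,n]})\ge 3$. The $a$-part of the intersection is then at most $N_2(n-1,3,4)=20(n-1)-166$. The $\bar a$-part is an intersection of two deletion balls of shorter sequences, possibly with unequal radii. I would bound it by a constant; showing it is at most $20$ closes the induction.

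\textbf{Different first symbols.} If $x_1\ne y_1$, both parts become intersections of the form $D_{4}(\mathbf{x}')\cap D_{3}(\mathbf{y}')$ or similar, with lengths differing by one. Here the distance guarantee can weaken to $2$.

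For these intersections I would prove auxiliary lemmas bounding $N_2(n,d,t+k,t)$ for $d\in\{1,2\}$ and $t\le 3$. These reduce, by the same first-symbol recursion, to Theorem~\ref{def:thm1}, to Theorem~\ref{def:thm2}, and to explicit small-radius counts; for example, the number of common subsequences under $2$ and $3$ deletions is linear in $n$ with a controllable constant.

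\emph{Main obstacle.} I expect the hardest step to be the mixed-radius intersections with the reduced distance guarantee. In the bad case where $\mathbf{x}$ and $\mathbf{y}$ begin with long runs, the radius drops substantially and the slack needed for the additive constant $-166$ is easily lost. To handle this I would first normalize: use the symmetries $\mathbf{x}\mapsto\mathbf{x}^R$ and $\pi(\mathbf{x})$ to assume the first position where $\mathbf{x}$ and $\mathbf{y}$ differ is near the start. Then I would do a careful case analysis on the first few symbols, typically up to four of them, of each sequence. Cases that do not close by hand would be pushed into the computer-verified base range, which is why the threshold is $n\ge 13$.
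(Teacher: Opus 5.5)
The paper gives no proof of this theorem; it quotes it from \cite{Wang4}. Your outline therefore has to stand on its own, and in both halves it stops where the real work begins.

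For the lower bound, you never name the pair. Your first-symbol recursion is fine. So is the evaluation at $t=4$: since $D_2(m,1)=m$ and the last two terms vanish, $6(n-6)+4(n-8)+6(n-9)+4(n-11)=20n-166$. But the entire content of the bound is the specific window and how the alternating tail is aligned with it. Lower-order terms such as $2D_2(n-13,t-5)+D_2(n-13,t-6)$ come only from an explicit computation; compare how the correction terms in Lemma~\ref{Low:lm2} arise from the phase of the periodic tail. Saying the pieces ``should produce exactly'' the formula is not a derivation. You would also still have to verify $d_L(\mathbf{x},\mathbf{y})\ge 3$ and that the computation is valid for every $n\ge\max\{13,t+8\}$, not only for large $n$.

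For the upper bound, the case $x_1=y_1$ is sound in outline. When the first $\bar a$ is at position $2$ in both sequences, the $\bar a$-part is at most $N_2(n-2,3,3)=20$ by Theorem~\ref{def:thm5}, since $n-2\ge 10$. The other positions give empty or small mixed-radius sets that still need bounds. The real difficulty is the case $x_1\neq y_1$, and there nothing is proved. For $\mathbf{x}=(0,1,\dots)$ and $\mathbf{y}=(1,0,\dots)$, each half is bounded by a mixed-radius, distance-$2$ quantity, namely $N_2(n-2,2,4,3)$. Theorems~\ref{def:thm1} and~\ref{def:thm2} do not cover this; even the distance-$1$ analogue, Lemma~\ref{Upp:lm3}, needed its own induction. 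Bounding the two halves separately closes the step only if $N_2(n-2,2,4,3)\le 10n-83$. Otherwise you must show that the two halves cannot be extremal at the same time, the analogue of the refined window analysis in Cases A2-3a/b of Lemma~\ref{Upp:lm6}.

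That is the real obstacle, not long leading runs, which shrink one radius and are the easy case asymptotically. Even they need exact bounds, because the claim is for all $n\ge 13$: a subcase bound $cn-b$ with $c<20$ beats $20n-166$ only for $n\ge(166-b)/(20-c)$. Finally, pushing cases that do not close by hand into the computer-verified base range is logically invalid. The inductive step must hold for every $n\ge 14$, and exhaustive search over $2^{2n}$ pairs covers only a few small lengths. Nor can symmetries force the first difference to lie near the start, since the common prefix and suffix can both have length about $n/2$; your prefix induction makes that normalization unnecessary anyway.
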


For the ternary alphabet, Wang et al.~\cite{Wang} determined the exact value of $N_3(n,2,t)$.

\begin{theorem}[Wang, Li, and Fu \cite{Wang}]
For $t \ge 2$ and $q = 3$, the following hold:
\begin{itemize}
    \item If $t \le n \le \frac{3t}{2}$, then $N_3(n,2,t) = 3^{n-t}$.
    \item If $2 \le t \le 5$ and $n \ge \max\{6, \lfloor 3t/2 \rfloor + 1\}$, then $N_3(n,2,t) = M_1(n,t)$.
    \item If $t \ge 6$ and $\lfloor 3t/2 \rfloor + 1 \le n \le 3t-1$, then $N_3(n,2,t) = \max\{M_0(n,t), M_1(n,t)\}$.
    \item If $t \ge 6$ and $n \ge 3t$, then $N_3(n,2,t) = M_1(n,t)$.
\end{itemize}
Here,
\begin{align*}
M_0(n,t)\triangleq& D_3(n-4,t-2)+3D_3(n-5,t-2)+4D_3(n-5,t-3)+3D_3(n-6,t-3)+D_3(n-6,t-4)\\
&+2D_3(n-7,t-3)+2D_3(n-7,t-4)+D_3(n-8,t-4)+D_3(n-12,t-7)-D_3(n-10,t-5),
\end{align*}
and
$$ M_1(n,t)\triangleq D_3(n-4,t-2)+5D_3(n-5,t-2)+4D_3(n-5,t-3)+3D_3(n-6,t-3)+D_3(n-6,t-4)+D_3(n-8,t-5).$$
Furthermore, for any fixed $t \ge 3$,
\[
N_3(n,2,t) = \frac{6}{(t-2)!} n^{t-2} - \frac{3t+13}{(t-3)!} n^{t-3} + O(n^{t-4}).
\]
\label{def:thm4}
\end{theorem}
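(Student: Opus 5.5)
The plan is to prove matching lower and upper bounds separately, and then read off the asymptotics from the exact formulas using \eqref{intr:eq2}.

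\textbf{Lower bounds.} I will build explicit pairs $\mathbf{x},\mathbf{y}\in\Sigma_3^n$ with $d_L(\mathbf{x},\mathbf{y})=2$.
\begin{itemize}
\item For $M_1$, take two sequences that agree except on a short window. One is $(a,b,c,a,b)$ and the other is $(b,a,c,b,a)$, with $\{a,b,c\}=\Sigma_3$. The common prefix and suffix are chosen from $\mathbf{C}_3(\cdot)$ so that they continue periodically away from the window.
\item For $M_0$, use a second family whose window has a different layout.
\item In each case I count $D_t(\mathbf{x})\cap D_t(\mathbf{y})$ by splitting common subsequences according to how many symbols are deleted inside the window. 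That count is a finite set of patterns; each window pattern contributes a term $D_3(n-\ell,t-s)$ coming from the periodic tail. Inclusion--exclusion over overlapping patterns produces the coefficients in $M_0$ and $M_1$, including the negative term $-D_3(n-10,t-5)$.
\item The short-length regime $t\le n\le 3t/2$ is immediate. There $n-t\le t/2$, so every ternary word of length $n-t$ is a common subsequence of suitable $\mathbf{x},\mathbf{y}$, giving the value $3^{n-t}$.
\end{itemize}

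\textbf{Upper bound.} I will induct on $n$, with $t$ allowed to vary, using the standard first-symbol decomposition $D_t(\mathbf{x})\cap D_t(\mathbf{y})=\bigcup_{a\in\Sigma_3} a\circ\bigl(\cdots\bigr)$. For each symbol $a$, I strip the prefixes of $\mathbf{x}$ and $\mathbf{y}$ up to the first occurrence of $a$. This yields intersections of deletion balls of unequal radii, $D_{t+k}(\mathbf{x}')\cap D_t(\mathbf{y}')$, with $k$ small.

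The quantities needed are bounds on the mixed quantity $N_3(n,d,t+k,t)$ for $d\in\{1,2\}$ and $|k|\le 2$. These follow from Theorem~\ref{def:thm1}, from $|D_t(\mathbf{x})|\le D_3(n,t)$, and from the induction hypothesis.

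The case analysis splits on whether $x_1=y_1$.
\begin{itemize}
\item If $x_1=y_1$, the problem reduces directly to $N_3(n-1,2,\cdot)$ plus lower-order terms.
\item Otherwise I split further on the first two or three symbols of each sequence. In each branch I show that the sum of the resulting bounds is at most $M_1(n,t)$, or at most $\max\{M_0,M_1\}$ in the middle range.
\end{itemize}
These comparisons use the recurrence \eqref{def:eq1} to rewrite differences such as $D_3(n-4,t-2)-D_3(n-5,t-2)$ as sums of nonnegative terms.

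\textbf{Asymptotics and the main obstacle.} Expanding $D_3(n-\ell,t-s)=\frac{n^{t-s}}{(t-s)!}-\frac{c_{\ell,s}}{(t-s-1)!}n^{t-s-1}+\cdots$, with the $n^{t-3}$ correction computed from \eqref{intr:eq2}, gives the claimed expansion. The leading coefficient is $1+5=6$, and collecting the second-order contributions gives $-(3t+13)$. The main obstacle will be the upper-bound case analysis, specifically the branches where $\mathbf{x}$ and $\mathbf{y}$ first differ in a configuration close to the extremal window. There the crude bounds lose a term of order $n^{t-3}$. For these branches I would try a finer induction that tracks the first five symbols explicitly and compares against $M_1$ term by term, and only fall back to a direct count for small $n$.
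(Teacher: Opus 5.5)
The paper does not prove this theorem; it only quotes it from \cite{Wang}. The nearest in-paper argument is Section~\ref{sec5} specialised to $q=3$. There $N_3^{(0)}(n,t)=M_1(n,t)$ (Lemma~\ref{Low:lm3}), and Lemma~\ref{Upp:lm6} with Theorem~\ref{Upp:thm1} give $N_3(n,2,t)=M_1(n,t)$ only for fixed $t$ and sufficiently large $n$. That suffices for the asymptotic formula, and your coefficients $6$ and $-(3t+13)$ are correct. Your outline has the same shape (first-symbol decomposition, mixed-radius bounds, case split on the first symbols) and could at best reach that large-$n$ statement.

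\textbf{Main gap: the exact values on explicit ranges.} The most serious problem is the middle regime.
\begin{itemize}
\item You give no pair realising $M_0$; ``a second family whose window has a different layout'' is not a construction. This regime is not a technicality: for $t=6$, $n=10$ one computes $M_0(10,6)=74>73=M_1(10,6)$. So there the extremal pair is not the $(a,b,c,a,b)/(b,a,c,b,a)$ window.
\item Where your upper-bound sketch becomes concrete, it is asymptotic in $n$: discarding ``lower-order terms'', and crude bounds that ``lose a term of order $n^{t-3}$''. Such comparisons say nothing when $n=\Theta(t)$, which is exactly the window $\lfloor 3t/2\rfloor+1\le n\le 3t-1$. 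A ``direct count for small $n$'' cannot cover infinitely many pairs $(n,t)$.
\item The branch-by-branch exact inequalities, which are the whole content of the theorem, are only asserted.
\item The sketch never explains why the $M_0$ configuration is the only competitor of $M_1$, or why $M_0\le M_1$ once $n\ge 3t$ or $t\le 5$. Nor does it say where the thresholds $n\ge 3t$ and $n\ge\max\{6,\lfloor 3t/2\rfloor+1\}$ come from, rather than an unspecified $n_0(t)$.
\item An induction on $n$ with $t$ varying must feed the true piecewise values back in. Stripping the first symbol from an instance with $n=3t$ gives the instance $(3t-1,t)$, which lies in the middle range. There the available hypothesis is $\max\{M_0,M_1\}$, not $M_1$.
\end{itemize}

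\textbf{Steps you treat as routine that are not.}
\begin{itemize}
\item The mixed-radius value $N_3(n,1,t+1,t)=D_3(n,t)-D_3(n-2,t)+D_3(n-3,t-1)$ does not follow from Theorem~\ref{def:thm1}, the ball-size bound and an equal-radius induction hypothesis. It needs its own induction: Lemma~\ref{Upp:lm3}, proved in Appendix~\ref{APP-A} using the cyclic operation of Lemma~\ref{Pro:lm5} and Lemma~\ref{Upp:lm2}. Any $d=2$ mixed-radius quantity you invoke would likewise need its own exact formula built into the induction.
\item The case $x_1=y_1$ is not ``$N_3(n-1,2,t)$ plus lower-order terms''. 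By Corollary~\ref{Pro:cor2}, $M_1(n,t)=M_1(n-1,t)+M_1(n-2,t-1)+M_1(n-3,t-2)$ for $n>t+3$. So you must show the other two first-symbol classes contribute at most $M_1(n-2,t-1)+M_1(n-3,t-2)$, or trade against slack in the first class. Those terms are of the same order as the second-order term you are trying to pin down. The paper handles this by first making the common prefix periodic (Lemma~\ref{Pro:lm5}) and then comparing the remaining classes with those of the truncated intersection, again only for large $n$.
\item For the $M_1$ lower bound, the window must sit at one end of the word, with an empty common prefix (or, after reversal, an empty common suffix), as in $\mathbf{x}^{(0)},\mathbf{y}^{(0)}$ of Eq.~\eqref{Low:eq1} with $q=3$. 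Your construction puts periodic words on both sides of the window. For $t\ge 4$ and large $n$, a nonempty common prefix already costs a term of order $n^{t-4}$ (Cases B and C in the proof of Theorem~\ref{Upp:thm1}).
\end{itemize}
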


In the binary case, Pham, Goyal, and Kiah~\cite{Pham} gave asymptotically tight bounds for $N_2(n,d,t)$ with $1 \le d \le t$.

\begin{theorem}[Pham, Goyal, and Kiah \cite{Pham}]
For $n > t \ge d \ge 1$,
\[
N_2(n,d,t) = \frac{\binom{2d}{d}}{(t-d)!} n^{t-d} - O(n^{t-d-1}).
\]
In the special case $t = d \ge 1$ and $n \ge 4t-2$, we have $N_2(n,d,d) = \binom{2d}{d}$.
\label{def:thm5}
\end{theorem}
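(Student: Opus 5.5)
The statement has two parts: the leading-order asymptotics of $N_2(n,d,t)$, and the exact value $N_2(n,d,d)=\binom{2d}{d}$. I would prove matching upper and lower bounds for each.

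\emph{Upper bound.} I would use the general bound of Sun and Ge quoted in the introduction, $N_q(n,d,t)\le\binom{2d}{d}\binom{n-d}{t-d}$, specialised to $q=2$. It gives the leading term $\frac{\binom{2d}{d}}{(t-d)!}n^{t-d}$ directly. For $t=d$ it gives $N_2(n,d,d)\le\binom{2d}{d}$.

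To make the argument self-contained I would reprove this bound by induction on $n$ and $t$, using the standard first-symbol decomposition. Take $\mathbf{z}\in D_t(\mathbf{x})\cap D_t(\mathbf{y})$ with first symbol $a$, and embed $\mathbf{z}$ greedily, matching $a$ to the first occurrence of $a$ in $\mathbf{x}$ and in $\mathbf{y}$.
\begin{itemize}
\item If $x_1=y_1=a$, the tail counts $|D_t(\mathbf{x}_{[2,n]})\cap D_t(\mathbf{y}_{[2,n]})|$, and the distance $d_L\ge d$ is inherited by the suffixes.
\item Otherwise the first occurrences of $a$ lie at positions $i$ and $j$, so $i-1$ and $j-1$ deletions are spent. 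The remaining terms are intersections of balls of radii $t-i+1$ and $t-j+1$ around suffixes whose distance has dropped by at most $\min(i,j)-1$.
\end{itemize}
Bounding each term inductively by $\binom{2d'}{d'}\binom{n'-d'}{t'-d'}$ and summing with Vandermonde/Pascal identities should close the induction. The base case $t=d$ reduces to showing that the common subsequences of length $n-d$ correspond injectively to pairs of $d$-subsets of a $2d$-window around the first disagreement. Their number is at most $\binom{2d}{d}$ by the lattice-path interpretation of aligning the two deletion patterns.

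\emph{Lower bound.} I would give an explicit pair $\mathbf{x},\mathbf{y}$ that agree outside a window $W$ of length about $2d$ in the middle. The common part is alternating, $\mathbf{a}_n$ with $q=2$, so that its deletion ball has maximal size $D_2(\cdot,\cdot)$. Inside $W$, the two words are chosen to have Levenshtein distance exactly $d$ but $\binom{2d}{d}$ distinct common subsequences of length $|W|-d$. The natural candidate is two interleavings of $0^d$ and $1^d$ embedded in alternating context; I would pick them by computing small $d$ and generalising. Then:
\begin{itemize}
\item For $t=d$, all $d$ deletions are forced into $W$, giving exactly $\binom{2d}{d}$ common outputs once $n\ge 4t-2$. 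This threshold is needed so that the alternating context separates outputs and no extra coincidences occur.
\item For $t>d$, the remaining $t-d$ deletions are placed in the alternating context of length $\approx n$. This gives about $D_2(n-O(d),t-d)\cdot\binom{2d}{d}=\frac{\binom{2d}{d}}{(t-d)!}n^{t-d}-O(n^{t-d-1})$ outputs. The overcounting from coincident outputs affects only lower order, because such coincidences require deletions adjacent to $W$.
\end{itemize}

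\emph{Main obstacle.} I expect the hardest step to be the exact local analysis at $t=d$: exhibiting the window pair with exactly $\binom{2d}{d}$ distinct common outputs, and proving that nothing outside $W$ creates extra coincidences. I would first try to verify the construction for $d=2,3$ against Theorem~\ref{def:thm2}, whose leading coefficient is $6=\binom{4}{2}$, and then run an induction on $d$ that peels off one $0$ and one $1$ from the window.
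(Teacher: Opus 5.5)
The paper quotes this theorem from Pham, Goyal, and Kiah without proof, so your plan can only be judged on its own terms.

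\textbf{Upper bound.} This part is acceptable as a black-box use of Theorem~\ref{def:thm6} with $q=2$; for $t=d$ it gives exactly $\binom{2d}{d}$. All the real content is therefore in the lower bound.

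\textbf{Lower bound: the gap.} There is a genuine gap here. You never produce the extremal pair, and the shape you describe cannot work.

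For $t=d$ the common context is irrelevant. Write $\mathbf{x}=(\mathbf{u},\mathbf{c},\mathbf{v})$ and $\mathbf{y}=(\mathbf{u},\mathbf{d},\mathbf{v})$. Lemma~\ref{Pro:lm3} expresses $D_d(\mathbf{x})\cap D_d(\mathbf{y})$ as a union over $a+b\le d$. Every term with $a+b\ge 1$ is empty, because $d_L(\mathbf{c},\mathbf{d})\ge d_L(\mathbf{x},\mathbf{y})\ge d$. Hence $|D_d(\mathbf{x})\cap D_d(\mathbf{y})|=|D_d(\mathbf{c})\cap D_d(\mathbf{d})|$.

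So the window alone must yield $\binom{2d}{d}$ distinct binary words of length $|W|-d$. This rules out your proposed shape:
\begin{itemize}
\item For a window of length $2d$, in particular two interleavings of $0^d$ and $1^d$, there are only $2^d<\binom{2d}{d}$ such words once $d\ge 2$.
\item Any $|W|=2d+O(1)$ fails for large $d$.
\end{itemize}
The hypothesis $n\ge 4t-2$ is not about the context separating outputs; it is the length of the differing core itself.

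As an example, for $d=2$ the words $010101$ and $100110$ have longest common subsequence length $4$. They have exactly six common subsequences of that length: $0010,0011,0110,1001,1010,1011$. The family $(01)^{2d-1}$ versus $(1001)^{d-1}10$ also gives $2$ for $d=1$ and $20$ for $d=3$; I checked these via the first-symbol recursion. Proving that some such family works for every $d$ is the missing step.

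\textbf{What follows once a core exists.} Suppose you have a core $(\mathbf{c},\mathbf{d})$ of length $m$ with $d_L(\mathbf{c},\mathbf{d})=d$ and $|D_d(\mathbf{c})\cap D_d(\mathbf{d})|=\binom{2d}{d}$. Then the rest needs no coincidence analysis:
\begin{itemize}
\item Append a common alternating suffix $\mathbf{v}$ of length $n-m$, on one side only.
\item The $a=0$, $b=t-d$ term of Lemma~\ref{Pro:lm3} contains $(D_d(\mathbf{c})\cap D_d(\mathbf{d}))\circ D_{t-d}(\mathbf{v})$.
\item This concatenation is injective, because all core outputs have the same length $m-d$.
\item Hence $N_2(n,d,t)\ge\binom{2d}{d}D_2(n-m,t-d)$, which is the right leading term. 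For $t=d$ it gives $\binom{2d}{d}$ for every $n\ge m$.
\end{itemize}

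\textbf{Your self-contained upper-bound sketch.} The base case is wrong as stated, in two ways. First, common outputs are not confined to a $2d$-window at the first disagreement: disagreements can be far apart, and the extremal core has length $4d-2$. Second, pairs of $d$-subsets of a $2d$-set number $\binom{2d}{d}^2$, not $\binom{2d}{d}$.

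A correct short argument for the base case runs as follows.
\begin{itemize}
\item When $d_L(\mathbf{x},\mathbf{y})=d$, the set $D_d(\mathbf{x})\cap D_d(\mathbf{y})$ is exactly the set of distinct longest common subsequences.
\item If $x_1=y_1$, all of them begin with $x_1$, since otherwise prefixing $x_1$ would give a longer one.
\item In the binary case with $x_1\ne y_1$, each of the two first-symbol branches lowers by at least one the excess of one length over the LCS length.
\item Induction then bounds the count by $\binom{a+b}{a}$, where $a,b$ are these excesses; this gives $\binom{2d}{d}$ when $a=b=d$.
\end{itemize}
For $t>d$, your induction would need a hypothesis for unequal lengths and radii that you have not formulated, so citing Theorem~\ref{def:thm6} is the safer route.
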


Sun and Ge~\cite{Sun0} extended this result to the $q$-ary alphabet, strengthening the asymptotic upper bound.

\begin{theorem}[Sun and Ge \cite{Sun0}]\label{def:thm6}
 For $n > t \ge d \ge 1$,
 \begin{align*}
     N_q(n,d,t) \le \binom{2d}{d}\binom{n-d}{t-d}=\frac{\binom{2d}{d}}{(t-d)!}n^{t-d}-O(n^{t-d-1}).
 \end{align*}
\end{theorem}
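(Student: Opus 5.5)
The plan is to prove a stronger statement that allows unequal lengths, by induction on the total length. For finite sequences $\mathbf{x},\mathbf{y}$ over $\Sigma_q$, let $\mathrm{LCS}(\mathbf{x},\mathbf{y})$ denote the length of a longest common subsequence. Let $\mathcal{S}_\ell(\mathbf{x},\mathbf{y})$ denote the set of common subsequences of length $\ell$. The auxiliary claim is: if $|\mathbf{x}|=a$, $|\mathbf{y}|=b$ and $\mathrm{LCS}(\mathbf{x},\mathbf{y})\le L\le\min\{a,b\}$, then
\begin{equation*}
|\mathcal{S}_\ell(\mathbf{x},\mathbf{y})|\le \binom{a+b-2L}{a-L}\binom{L}{\ell}.
\end{equation*}
For the theorem, take $a=b=n$, $\ell=n-t$ and $L=n-d$. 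This choice of $L$ is valid: $d_L(\mathbf{x},\mathbf{y})\ge d$ means no common subsequence of length $n-d+1$ exists, so $\mathrm{LCS}\le n-d$. The bound then becomes $\binom{2d}{d}\binom{n-d}{n-t}=\binom{2d}{d}\binom{n-d}{t-d}$. The asymptotic form follows by expanding the binomial coefficient in $n$.

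\textbf{Base cases.} If $\ell>L$ the left side is $0$. If $\ell=0$ the left side is $1$ and the right side is at least $1$. If $a=0$ or $b=0$, then $L=0$ and the claim is trivial.

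\textbf{Case $x_1=y_1=c$.} Write $\mathbf{x}=c\circ\mathbf{x}'$ and $\mathbf{y}=c\circ\mathbf{y}'$. Every common subsequence of the tails extends by the prefix $c$, so $\mathrm{LCS}(\mathbf{x}',\mathbf{y}')\le L-1$. Now take $\mathbf{z}\in\mathcal{S}_\ell(\mathbf{x},\mathbf{y})$.
\begin{itemize}
\item If $\mathbf{z}$ starts with $c$, say $\mathbf{z}=c\circ\mathbf{z}'$, then greedy leftmost embedding shows $\mathbf{z}'\in\mathcal{S}_{\ell-1}(\mathbf{x}',\mathbf{y}')$.
\item Otherwise neither embedding can use the first position, so $\mathbf{z}\in\mathcal{S}_\ell(\mathbf{x}',\mathbf{y}')$.
\end{itemize}
By induction, applied with $(a-1,b-1,L-1)$, we get
\begin{equation*}
|\mathcal{S}_\ell(\mathbf{x},\mathbf{y})|\le \binom{a+b-2L}{a-L}\left[\binom{L-1}{\ell-1}+\binom{L-1}{\ell}\right]=\binom{a+b-2L}{a-L}\binom{L}{\ell}.
\end{equation*}
Here the first factor is unchanged because $(a-1)+(b-1)-2(L-1)=a+b-2L$ and $(a-1)-(L-1)=a-L$.

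\textbf{Case $x_1\ne y_1$.} The first symbol of any $\mathbf{z}$ differs from $x_1$ or from $y_1$. By leftmost embedding, $\mathbf{z}$ then avoids position $1$ of $\mathbf{x}$ or of $\mathbf{y}$. Hence
\begin{equation*}
\mathcal{S}_\ell(\mathbf{x},\mathbf{y})\subseteq \mathcal{S}_\ell(\mathbf{x}_{[2,a]},\mathbf{y})\cup\mathcal{S}_\ell(\mathbf{x},\mathbf{y}_{[2,b]}).
\end{equation*}
Both pairs still have LCS at most $L$. If $a>L$ and $b>L$, induction gives
\begin{equation*}
\binom{a+b-2L-1}{a-L-1}\binom{L}{\ell}+\binom{a+b-2L-1}{a-L}\binom{L}{\ell},
\end{equation*}
which equals the claimed bound by Pascal's rule. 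This is the lattice-path recursion that produces $\binom{2d}{d}$.

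\textbf{Main obstacle: the boundary case.} This occurs when $a=L$ or $b=L$, so that one shortened sequence has length $L-1<L$ and $L$ is no longer admissible. Suppose $a=L$. Then $\mathrm{LCS}(\mathbf{x},\mathbf{y})\le a$, and the claimed bound is $\binom{L}{\ell}$. This follows directly, without the recursion: every common subsequence is in particular a subsequence of $\mathbf{x}$, and $\mathbf{x}$ has at most $\binom{a}{\ell}=\binom{L}{\ell}$ subsequences of length $\ell$. The case $b=L$ is symmetric. It remains to check that the inductive step never needs an inadmissible parameter: in the unequal-first-symbol case with $a,b>L$, the lengths are $a-1\ge L$ and $b-1\ge L$, so the step stays admissible. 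I expect verifying these boundary conditions carefully to be the only delicate point.
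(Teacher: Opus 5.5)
The paper does not prove this statement. It is quoted from Sun and Ge~\cite{Sun0} and used as a black box, so there is no internal argument to compare yours against. Your proof is correct as it stands.

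The strengthened claim $|\mathcal{S}_\ell(\mathbf{x},\mathbf{y})|\le\binom{a+b-2L}{a-L}\binom{L}{\ell}$ is set up so that the two branches of the first-symbol split are exactly the two Pascal recursions:
\begin{itemize}
\item Equal first symbols lower the LCS budget $L$ by one and split $\binom{L}{\ell}$.
\item Unequal first symbols keep $L$ and take one step in the lattice-path count $\binom{a+b-2L}{a-L}$.
\end{itemize}
The boundary case $a=L$ (or $b=L$) is correctly handled by the trivial count $\binom{a}{\ell}$ of length-$\ell$ subsequences. The one check you leave implicit is also fine: when $x_1=y_1$ you need $L\ge 1$ so that the parameter $L-1$ is admissible, and this holds because $\mathrm{LCS}(\mathbf{x},\mathbf{y})\ge 1$ in that case.

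Compared with citing~\cite{Sun0}, your route is self-contained and does not depend on $q$ at all. It also covers unequal lengths.

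In particular, it directly supplies the $q$-ary estimate $|D_{t-s}(\hat{\mathbf{x}})\cap D_{t-s}(\hat{\mathbf{y}})|=O(n^{t-s-2})$ used in the proof of Theorem~\ref{Upp:thm1}. There the paper cites Theorem~\ref{def:thm5}, which is the binary result and does not strictly cover that $q$-ary step.
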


%\textcolor{red}{In this work, we investigate the case $d = 2$ for general $q$-ary alphabets. Our main result provides an exact formula for $N_q(n,2,t)$ for all $t \ge 2, q\geq 4$, and sufficiently large $n$.}

\section{Properties of Deletion Balls and Their Intersections}
\label{sec3}

In this section, we study properties of deletion balls $D_t(\mathbf{x})$ and their intersections $D_t(\mathbf{x}) \cap D_t(\mathbf{y})$ for $\mathbf{x}, \mathbf{y} \in \Sigma_q^n$. We also investigate properties of the maximum size $D_q(n,t)$ of a $t$-deletion ball and characterize the sequences that attain this extremal value, both exactly and asymptotically.

To analyze $|D_t(\mathbf{x})|$ for general $q\ge 2$, we extend the combinatorial decompositions introduced in~\cite{Gabrys} to arbitrary alphabets. 
The following lemmas enable us to partition deletion balls according to prescribed prefixes and suffixes.

\begin{lemma}
Let $n,m_1,m_2$ be positive integers with $m_1+m_2 \le n$, and let $q \ge 2$. For any $\mathbf{u} \in \Sigma_q^n$ and $0 \le t \le n$,
\[
|D_t(\mathbf{u})| = \sum_{\mathbf{u}_1 \in \Sigma_q^{m_1}} \sum_{\mathbf{u}_2 \in \Sigma_q^{m_2}} |D_t(\mathbf{u})_{\mathbf{u}_2}^{\mathbf{u}_1}|.
\]
\label{Pro:lm1}
\end{lemma}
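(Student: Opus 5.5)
The plan is a direct counting argument: I would show that the sets $D_t(\mathbf{u})_{\mathbf{u}_2}^{\mathbf{u}_1}$, as $(\mathbf{u}_1,\mathbf{u}_2)$ ranges over $\Sigma_q^{m_1}\times\Sigma_q^{m_2}$, form a partition of $D_t(\mathbf{u})$. The identity then follows by additivity of cardinality over a disjoint union. No earlier result of the paper is needed; only the definition of $\mathcal{C}^{\mathbf{u}_1}_{\mathbf{u}_2}$ from Section~\ref{sec2} enters.

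First, fix $\mathbf{v}\in D_t(\mathbf{u})$, so $\mathbf{v}\in\Sigma_q^{n-t}$. I would assume $m_1,m_2\le n-t$, as implicitly required by the definition of $\mathcal{C}^{\mathbf{u}_1}_{\mathbf{u}_2}$ (prefixes and suffixes must have length at most that of the sequences). Then $\mathbf{v}$ has a well-defined prefix $\mathbf{v}_{[1,m_1]}\in\Sigma_q^{m_1}$ and a well-defined suffix $\mathbf{v}_{[n-t-m_2+1,\,n-t]}\in\Sigma_q^{m_2}$.

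Second, the pair $(\mathbf{u}_1,\mathbf{u}_2)=(\mathbf{v}_{[1,m_1]},\mathbf{v}_{[n-t-m_2+1,n-t]})$ is the unique pair with $\mathbf{v}\in D_t(\mathbf{u})^{\mathbf{u}_1}_{\mathbf{u}_2}$. Existence holds by construction. Uniqueness holds because "begins with $\mathbf{u}_1$" means $\mathbf{v}_{[1,m_1]}=\mathbf{u}_1$, which pins down $\mathbf{u}_1$, and similarly for $\mathbf{u}_2$. Overlap of the prefix and suffix windows, possible when $m_1+m_2>n-t$, causes no harm: it only makes some pairs $(\mathbf{u}_1,\mathbf{u}_2)$ inconsistent, and the corresponding classes are empty and contribute $0$. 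Consequently the classes are pairwise disjoint and their union is $D_t(\mathbf{u})$, and summing cardinalities gives the claimed formula.

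There is no real obstacle. The only point needing care is the degenerate range where $m_1$ or $m_2$ exceeds $n-t$. There the prefix or suffix is undefined, every class is empty, and the formula would fail unless $D_t(\mathbf{u})$ is itself empty. I would therefore state explicitly that the lemma is applied with $m_1,m_2\le n-t$, which is the regime in which the decomposition is used later. The hypothesis $m_1+m_2\le n$ then plays no further role in the argument.
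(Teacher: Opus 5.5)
Your partition argument is correct, and it is the intended one. The paper states this lemma without proof, describing it only as partitioning deletion balls by prescribed prefixes and suffixes, in extension of Gabrys--Yaakobi.

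Your caveat is also right: the stated hypothesis $m_1+m_2\le n$ is not enough, and one needs $m_1,m_2\le n-t$. For example, with $t=n$ and $m_1=m_2=1$ we get $|D_n(\mathbf{u})|=1$ while every class on the right is empty. You are likewise correct that once $m_1,m_2\le n-t$ holds, overlapping prefix and suffix windows do no harm.
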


\begin{lemma}
Let $n,m_1,m_2,t$ be positive integers with $m_1+m_2 \le n$, and $q\geq 2$, and let $\mathbf{u} = (u_1,\dots,u_n) \in \Sigma_q^n$, $\mathbf{u}_1 \in \Sigma_q^{m_1}$, $\mathbf{u}_2 \in \Sigma_q^{m_2}$. Define $k_1$ as the smallest index for which $\mathbf{u}_1$ is a subsequence of $(u_1,\dots,u_{k_1})$,
and $k_2$ as the largest index for which $\mathbf{u}_2$ is a subsequence of $(u_{k_2},\dots,u_n)$.  If $k_1 < k_2$, then
\[
D_t(\mathbf{u})_{\mathbf{u}_2}^{\mathbf{u}_1} = \mathbf{u}_1 \circ D_{t^*}(u_{k_1+1},\dots,u_{k_2-1}) \circ \mathbf{u}_2,
\]
where $t^* = t - (k_1 - m_1) - (n - k_2 + 1 - m_2)$ (with the understanding that $D_{t^*}(\cdot)=\emptyset$ when $t^*<0$). Consequently,
\[
|D_t(\mathbf{u})_{\mathbf{u}_2}^{\mathbf{u}_1}| = |D_{t^*}(u_{k_1+1},\dots,u_{k_2-1})|.
\]
\label{Pro:lm2}
\end{lemma}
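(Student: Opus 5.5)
We prove the set identity by double inclusion, using a greedy (leftmost/rightmost) embedding argument; the cardinality statement then follows because the map $\mathbf{w}\mapsto \mathbf{u}_1\circ\mathbf{w}\circ\mathbf{u}_2$ is injective. Throughout, write $\mathbf{v}=(u_{k_1+1},\dots,u_{k_2-1})$, which has length $k_2-k_1-1$. A sequence of length $n-t$ is a subsequence of $\mathbf{v}$ surrounded by $\mathbf{u}_1,\mathbf{u}_2$ exactly when its middle part has length $n-t-m_1-m_2=(k_2-k_1-1)-t^*$. So the length bookkeeping matches, and $D_{t^*}(\mathbf{v})$ is the correct middle set; it is empty when $t^*<0$ or $t^*>|\mathbf{v}|$.

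For the inclusion $\supseteq$, take $\mathbf{w}\in D_{t^*}(\mathbf{v})$. By definition of $k_1$, the prefix $\mathbf{u}_1$ embeds into $(u_1,\dots,u_{k_1})$. By definition of $k_2$, the suffix $\mathbf{u}_2$ embeds into $(u_{k_2},\dots,u_n)$. The word $\mathbf{w}$ embeds into the positions $k_1+1,\dots,k_2-1$ strictly between them. Since $k_1<k_2$, these three blocks of positions are disjoint and correctly ordered, so $\mathbf{u}_1\circ\mathbf{w}\circ\mathbf{u}_2$ is a subsequence of $\mathbf{u}$. It has length $n-t$, it starts with $\mathbf{u}_1$ and it ends with $\mathbf{u}_2$. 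Hence it lies in $D_t(\mathbf{u})^{\mathbf{u}_1}_{\mathbf{u}_2}$.

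For the inclusion $\subseteq$, take $\mathbf{z}\in D_t(\mathbf{u})^{\mathbf{u}_1}_{\mathbf{u}_2}$. We must have $n-t\ge m_1+m_2$, so $\mathbf{z}=\mathbf{u}_1\circ\mathbf{w}\circ\mathbf{u}_2$ for a unique $\mathbf{w}$. Fix an embedding $i_1<\dots<i_{n-t}$ of $\mathbf{z}$ into $\mathbf{u}$. The key step is an exchange argument, carried out at both ends.
\begin{itemize}
\item The greedy leftmost embedding of $\mathbf{u}_1$ ends exactly at $k_1$. Any embedding of $\mathbf{u}_1$ has its $j$-th index at least the $j$-th greedy index, by induction on $j$. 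Hence $i_{m_1}\ge k_1$, and we may replace $i_1,\dots,i_{m_1}$ by the greedy indices without breaking monotonicity.
\item Symmetrically, the rightmost greedy embedding of $\mathbf{u}_2$ starts at $k_2$. Any embedding of $\mathbf{u}_2$ satisfies $i_{n-t-m_2+1}\le k_2$.
\end{itemize}
Therefore the indices used for $\mathbf{w}$ lie strictly between $k_1$ and $k_2$, so $\mathbf{w}$ is a subsequence of $\mathbf{v}$. Its length $n-t-m_1-m_2$ forces $\mathbf{w}\in D_{t^*}(\mathbf{v})$. In particular, if $t^*<0$ this length exceeds $|\mathbf{v}|$, no such $\mathbf{z}$ exists, and both sides are empty, consistent with the stated convention.

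The only real point of care is the exchange step: the given embedding of $\mathbf{z}$ need not use the greedy positions for $\mathbf{u}_1$ and $\mathbf{u}_2$, so we must show the middle part can still be confined to $(k_1,k_2)$. This comes from the domination property of greedy embeddings, namely that the $j$-th leftmost-greedy index is at most the $j$-th index of any embedding. That property we prove by a one-line induction.
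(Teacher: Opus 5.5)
The paper states this lemma without proof, as the $q$-ary extension of the Gabrys--Yaakobi decomposition, so there is no paper argument to compare against. Your double-inclusion argument is the standard justification and is correct. The key ingredient is the domination property of the leftmost and rightmost greedy embeddings, which gives $i_{m_1}\ge k_1$ and $i_{n-t-m_2+1}\le k_2$. The length bookkeeping $|\mathbf{w}|=(k_2-k_1-1)-t^*$ and the treatment of $t^*<0$ are also right.

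The one sentence you should not present as a deduction is ``We must have $n-t\ge m_1+m_2$.'' It does not follow from the hypotheses. If $n-t<m_1+m_2$, a word can begin with $\mathbf{u}_1$ and end with $\mathbf{u}_2$ with the two blocks overlapping, and then the identity is false under the literal reading of the notation. For example, take $\mathbf{u}=(0,1,1,0)$, $\mathbf{u}_1=(0,1)$, $\mathbf{u}_2=(1,0)$ and $t=1$:
\begin{itemize}
\item we have $k_1=2<k_2=3$, $t^*=1$, and the middle word is empty, so the right-hand side is empty;
\item yet $(0,1,0)\in D_1(\mathbf{u})$ begins with $(0,1)$ and ends with $(1,0)$.
\end{itemize}
You should therefore state explicitly that you read $D_t(\mathbf{u})^{\mathbf{u}_1}_{\mathbf{u}_2}$ as the set of words of the form $\mathbf{u}_1\circ\mathbf{w}\circ\mathbf{u}_2$. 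Under that reading both sides are automatically empty when $n-t<m_1+m_2$, because then $t^*>k_2-k_1-1$. This is the reading the paper actually needs, and with it your proof is complete.
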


To describe the intersections $D_t(\mathbf{x}) \cap D_t(\mathbf{y})$ for $\mathbf{x}, \mathbf{y} \in \Sigma_q^n$, we use a result from~\cite{Chrisnata} (Proposition~16), which naturally generalizes to arbitrary $q$-ary alphabets ($q\ge 2$) as follows.

\begin{lemma}
\label{Pro:lm3}
Let \( \mathbf{x} = (\mathbf{u},\mathbf{c},\mathbf{v}) \) and \( \mathbf{y} = (\mathbf{u},\mathbf{d},\mathbf{v}) \), where \( \mathbf{u}, \mathbf{c}, \mathbf{d},\) and \( \mathbf{v}\) are $q$-ary sequences with $q\geq 2$, and let $t,s \ge 0$. Then
\[
D_t(\mathbf{x}) \cap D_s(\mathbf{y}) = \bigcup_{a+b \leq \min\{t,s\}} D_a(\mathbf{u}) \circ \left( D_{t-a-b}(\mathbf{c}) \cap D_{s-a-b}(\mathbf{d}) \right) \circ D_b(\mathbf{v}),
\]
with the understanding that $D_r(\cdot)=\emptyset$ whenever $r<0$.
\end{lemma}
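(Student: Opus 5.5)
We prove both inclusions directly; the argument is a decomposition of a common subsequence according to how its parts embed into $\mathbf{u}$, the middle block, and $\mathbf{v}$.

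\emph{The inclusion $\supseteq$.} Fix $a+b\le\min\{t,s\}$. Take $\mathbf{w}=\mathbf{w}_1\circ\mathbf{w}_2\circ\mathbf{w}_3$ with $\mathbf{w}_1\in D_a(\mathbf{u})$, $\mathbf{w}_2\in D_{t-a-b}(\mathbf{c})\cap D_{s-a-b}(\mathbf{d})$, and $\mathbf{w}_3\in D_b(\mathbf{v})$.
\begin{itemize}
\item Deleting $a$ symbols from $\mathbf{u}$, $t-a-b$ symbols from $\mathbf{c}$, and $b$ symbols from $\mathbf{v}$ turns $\mathbf{x}$ into $\mathbf{w}$. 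This is $t$ deletions in total, so $\mathbf{w}\in D_t(\mathbf{x})$.
\item Symmetrically, deleting $a$ symbols from $\mathbf{u}$, $s-a-b$ from $\mathbf{d}$, and $b$ from $\mathbf{v}$ gives $\mathbf{w}\in D_s(\mathbf{y})$.
\end{itemize}
If some index $t-a-b$ or $s-a-b$ is negative, the corresponding ball is empty by convention and there is nothing to check.

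\emph{The inclusion $\subseteq$.} Let $\mathbf{w}\in D_t(\mathbf{x})\cap D_s(\mathbf{y})$. Choose the greedy leftmost embedding of $\mathbf{w}$ into $\mathbf{x}$.
\begin{itemize}
\item Let $\mathbf{w}_1$ be the longest prefix of $\mathbf{w}$ that is a subsequence of $\mathbf{u}$. Write $a=|\mathbf{u}|-|\mathbf{w}_1|$.
\item Let $\mathbf{w}_3$ be the longest suffix of the remaining word that is a subsequence of $\mathbf{v}$. Write $b=|\mathbf{v}|-|\mathbf{w}_3|$.
\item Set $\mathbf{w}=\mathbf{w}_1\circ\mathbf{w}_2\circ\mathbf{w}_3$. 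Then $\mathbf{w}_1\in D_a(\mathbf{u})$ and $\mathbf{w}_3\in D_b(\mathbf{v})$.
\end{itemize}
The prefix and suffix are defined purely in terms of $\mathbf{u}$ and $\mathbf{v}$, which both sequences share, so the split is the same whether we look at $\mathbf{x}$ or $\mathbf{y}$.

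We now show $\mathbf{w}_2$ is a subsequence of $\mathbf{c}$ of length at least $|\mathbf{c}|-(t-a-b)$, i.e. $\mathbf{w}_2\in D_{t-a-b}(\mathbf{c})$.
\begin{itemize}
\item Take any embedding of $\mathbf{w}$ into $\mathbf{x}$. Its portion landing in $\mathbf{u}$ is a prefix of $\mathbf{w}$ that is a subsequence of $\mathbf{u}$. By maximality, this prefix is contained in $\mathbf{w}_1$.
\item By a greedy exchange argument, we may shift the embedding so that $\mathbf{w}_1$ is embedded entirely in $\mathbf{u}$. Symmetrically, we may shift it so that $\mathbf{w}_3$ is embedded entirely in $\mathbf{v}$, without disturbing the prefix.
\item The two parts do not collide, because $|\mathbf{w}_1|+|\mathbf{w}_2|+|\mathbf{w}_3|=|\mathbf{w}|$ and the embedding is order-preserving.
\item Hence $\mathbf{w}_2$ embeds into $\mathbf{c}$.
\end{itemize}
The length claim is a count: $|\mathbf{w}|=|\mathbf{x}|-t$ gives $|\mathbf{w}_2|=|\mathbf{c}|-(t-a-b)$ exactly.

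The same argument with $\mathbf{y}$ gives $\mathbf{w}_2\in D_{s-a-b}(\mathbf{d})$. Finally, both $t-a-b\ge0$ and $s-a-b\ge0$, since these are deletion counts in nonempty balls. This gives $a+b\le\min\{t,s\}$.

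\emph{Main obstacle.} The delicate step is the exchange argument: an arbitrary embedding of $\mathbf{w}$ into $\mathbf{x}$ can be rearranged so that exactly $\mathbf{w}_1$ goes into $\mathbf{u}$ and exactly $\mathbf{w}_3$ goes into $\mathbf{v}$.
\begin{itemize}
\item \emph{Prefix step.} Take any embedding of $\mathbf{w}$ into $\mathbf{x}$. Replace its initial segment with the leftmost embedding of $\mathbf{w}_1$ into $\mathbf{u}$. Leftmost embeddings end no later than any other embedding of the same prefix, and $\mathbf{w}_1$ fits inside $\mathbf{u}$. The rest of $\mathbf{w}$ was originally embedded after the original image of $\mathbf{w}_1$, hence after this leftmost image, so the rest of the embedding remains valid.
\item \emph{Suffix step.} Apply the mirror argument with the rightmost embedding of $\mathbf{w}_3$ into $\mathbf{v}$.
\end{itemize}

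The length bookkeeping then finishes the proof. This is the standard argument of~\cite{Chrisnata}, and nothing in it depends on $q$.
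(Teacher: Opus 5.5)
Your proof is correct. The paper gives no proof of its own here: it cites Proposition~16 of~\cite{Chrisnata} and asserts that the binary result extends to $q$-ary alphabets. Your argument supplies what the paper takes on faith. It also makes the alphabet-independence transparent, because the canonical split (longest prefix of $\mathbf{w}$ embeddable in $\mathbf{u}$, then longest suffix of the remainder embeddable in $\mathbf{v}$) depends only on the shared blocks $\mathbf{u},\mathbf{v}$. It is therefore the same split for $\mathbf{x}$ and $\mathbf{y}$, and $q$ never enters.

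One step should be justified differently. The reason $\mathbf{w}_2$ lands inside $\mathbf{c}$ is not the length identity $|\mathbf{w}_1|+|\mathbf{w}_2|+|\mathbf{w}_3|=|\mathbf{w}|$. It is maximality, applied to the modified embedding:
\begin{itemize}
\item After the prefix step, the part of $\mathbf{w}$ lying in $\mathbf{u}$ is a prefix of $\mathbf{w}$ embeddable in $\mathbf{u}$ that contains $\mathbf{w}_1$. By maximality it equals $\mathbf{w}_1$, so everything after $\mathbf{w}_1$ lies in $\mathbf{c}\circ\mathbf{v}$.
\item The suffix step moves only letters of $\mathbf{w}_3$, and never leftward, so the positions of $\mathbf{w}_1$ and $\mathbf{w}_2$ are unchanged. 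Afterwards, the part lying in $\mathbf{v}$ is a suffix of the remainder embeddable in $\mathbf{v}$ that contains $\mathbf{w}_3$. By maximality it equals $\mathbf{w}_3$.
\end{itemize}
Hence $\mathbf{w}_2$ sits strictly between $\mathbf{u}$ and $\mathbf{v}$, that is, inside $\mathbf{c}$. The same reasoning with $\mathbf{y}$ places it inside $\mathbf{d}$.

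The opening ``greedy leftmost embedding of $\mathbf{w}$ into $\mathbf{x}$'' is never used and can be dropped.
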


Let \(\mathbf{x} = (x_1,\dots,x_{n+k}) \in \Sigma_q^{n+k}\), \(\mathbf{y} = (y_1,\dots,y_n) \in \Sigma_q^n\), and define \(\mathcal{X} = D_{k+t}(\mathbf{x}) \cap D_t(\mathbf{y})\) with \(n \ge t \ge 1\) and \(k \ge 0\). By Lemmas \ref{Pro:lm1} and \ref{Pro:lm2}, for any \(a \in \Sigma_q\) we have
\[
|\mathcal{X}^a| = \bigl| D_{t+k-\ell}(x_{2+\ell},\dots,x_{n+k}) \cap D_{t-\ell^*}(y_{2+\ell^*},\dots,y_n) \bigr|,
\]
where \(\ell+1\) and \(\ell^*+1\) are non‑negative integers denoting the first occurrence of the symbol \(a\) in \(\mathbf{x}\) and \(\mathbf{y}\), respectively (i.e., \(x_{\ell+1}=a\) and \(y_{\ell^*+1}=a\)). Unless otherwise stated, \(\ell\) and \(\ell^*\) are defined in this way during the decomposition of \(\mathcal{X}\).

For any two sequences $\mathbf{x},\mathbf{y}\in\Sigma_q^n$, applying a permutation or reversal to both sequences does not affect their Levenshtein distance or the intersection of their deletion balls.

\begin{lemma}\label{Pro:lm4}
For any two sequences $\mathbf{x},\mathbf{y}\in\Sigma_q^n$, we have
\begin{align*}
d_L(\mathbf{x},\mathbf{y}) = d_L(\mathbf{x}^R,\mathbf{y}^R) = d_L(\pi(\mathbf{x}),\pi(\mathbf{y})),\qquad
|D_t(\mathbf{x})\cap D_t(\mathbf{y})| = |D_t(\mathbf{x}^R)\cap D_t(\mathbf{y}^R)| = |D_t(\pi(\mathbf{x}))\cap D_t(\pi(\mathbf{y}))|. 
\end{align*}
\end{lemma}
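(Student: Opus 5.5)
The plan is to show that reversal and symbol permutation act bijectively on $\Sigma_q^{m}$ and commute with taking subsequences. Both invariances then follow directly from the definitions of $D_t(\cdot)$ and $d_L$.

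\emph{Step 1: the maps commute with subsequences.} Let $\phi$ denote either the map $\mathbf{z}\mapsto\mathbf{z}^R$ or the map $\mathbf{z}\mapsto\pi(\mathbf{z})$. These maps are defined on sequences of every length, preserve length, and are involutive or invertible ($(\mathbf{z}^R)^R=\mathbf{z}$ and $\pi^{-1}(\pi(\mathbf{z}))=\mathbf{z}$). Suppose $\mathbf{z}=(x_{i_1},\dots,x_{i_m})$ with $i_1<\dots<i_m$ is a subsequence of $\mathbf{x}\in\Sigma_q^n$. Then $\pi(\mathbf{z})=(\pi(x_{i_1}),\dots,\pi(x_{i_m}))$ is a subsequence of $\pi(\mathbf{x})$ at the same indices. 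Likewise $\mathbf{z}^R$ is the subsequence of $\mathbf{x}^R$ at the indices $n+1-i_m<\dots<n+1-i_1$. Applying the same reasoning to $\phi^{-1}$ gives the converse. Hence $\phi(D_s(\mathbf{x}))=D_s(\phi(\mathbf{x}))$ for every $s\ge 0$, and $\phi$ is injective on $\Sigma_q^{n-s}$.

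\emph{Step 2: the ball intersections.} Since $\phi$ is a bijection, it maps intersections to intersections:
\[
\phi\bigl(D_t(\mathbf{x})\cap D_t(\mathbf{y})\bigr)=\phi(D_t(\mathbf{x}))\cap\phi(D_t(\mathbf{y}))=D_t(\phi(\mathbf{x}))\cap D_t(\phi(\mathbf{y})).
\]
Because $\phi$ is injective, the two intersections have the same size, which gives the equality of cardinalities. The same identity holds more generally for $D_q(\mathbf{x},\mathbf{y};s+k,s)$ when $\mathbf{x}$ and $\mathbf{y}$ have different lengths.

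\emph{Step 3: the distance.} By definition, $d_L(\mathbf{x},\mathbf{y})$ is the minimum $s$ for which $D_s(\mathbf{x})\cap D_s(\mathbf{y})\neq\emptyset$; here $k=0$ since $\mathbf{x}$ and $\mathbf{y}$ have equal length. Step 2 shows that this intersection is nonempty for $(\mathbf{x},\mathbf{y})$ exactly when it is nonempty for $(\phi(\mathbf{x}),\phi(\mathbf{y}))$. So the two minima range over the same set of $s$, and the distances agree.

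There is no real obstacle in this argument. The only point needing care is the index bookkeeping under reversal in Step 1: the increasing index sequence $i_1<\dots<i_m$ must be turned into the increasing sequence $n+1-i_m<\dots<n+1-i_1$ in $\mathbf{x}^R$.
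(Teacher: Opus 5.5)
Your proposal is correct and follows essentially the same argument as the paper: both show that reversal and symbol permutation are bijections satisfying $F(D_t(\mathbf{z}))=D_t(F(\mathbf{z}))$, and conclude that the intersections are in bijection. For the distance, you read invariance directly off the nonemptiness of $D_s(\mathbf{x})\cap D_s(\mathbf{y})$ using the paper's own definition of $d_L$, whereas the paper goes through the longest-common-subsequence characterization; this is a minor variation, and if anything slightly more self-contained.
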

\begin{proof}
Let \(F\) denote either the reversal map \(\mathbf{z}\mapsto\mathbf{z}^R\) or a componentwise permutation \(\mathbf{z}\mapsto\pi(\mathbf{z})\) for some \(\pi\in S_q\) and any $\mathbf{z}\in \Sigma_q^n$. In both cases \(F:\Sigma_q^n\to\Sigma_q^n\) is a bijection and commutes with deletions in the following sense: for any \(\mathbf{z}\in\Sigma_q^n\),  
\(
F(D_t(\mathbf{z})) = D_t(F(\mathbf{z})).
\)
Hence \(F\) maps the intersection \(D_t(\mathbf{x})\cap D_t(\mathbf{y})\) bijectively onto \(D_t(F(\mathbf{x}))\cap D_t(F(\mathbf{y}))\), so their cardinalities are equal.

For the Levenshtein distance, recall that it equals the minimum number of deletions needed to make the two sequences identical, i.e.,  
\(
d_L(\mathbf{x},\mathbf{y}) = n - \max\{|\mathbf{z}| : \mathbf{z}\text{ is a common subsequence of }\mathbf{x}\text{ and }\mathbf{y}\}.
\)
Since \(F\) preserves subsequence relations (a subsequence of \(F(\mathbf{x})\) corresponds to the image under \(F\) of a subsequence of \(\mathbf{x}\)), the length of the longest common subsequence is unchanged. Therefore \(d_L(\mathbf{x},\mathbf{y}) = d_L(F(\mathbf{x}),F(\mathbf{y}))\).
\end{proof}

\subsection{Cyclic operation on prefixes or suffixes}
Let \(\mathbf{x} = (\mathbf{u},\mathbf{c},\mathbf{v})\) and \(\mathbf{y} = (\mathbf{u},\mathbf{d},\mathbf{v})\) be as in Lemma~\ref{Pro:lm3}. 
For fixed \(\mathbf{c}\) and \(\mathbf{d}\), the size of the intersection \(D_t(\mathbf{x}) \cap D_t(\mathbf{y})\) can be increased by an appropriate choice of \(\mathbf{u}\) and \(\mathbf{v}\). To explain this phenomenon, we first introduce several basic operations on sequences. To maximize the size of the intersection \(D_t(\mathbf{x}) \cap D_t(\mathbf{y})\), we introduce a cyclic operation on prefixes or suffixes of $q$-ary sequences. 

\begin{definition}[Cyclic operation on prefixes and suffixes] 
For any sequence \(\mathbf{u} = (u_1,\dots,u_s) \in \Sigma_q^s\) with $q\ge 2$ and $s\geq 1$, the \emph{cyclic operation on prefixes} transforms it into a periodic sequence \(\mathbf{u}^c = (c_1,\dots,c_s) \in \mathbf{C}_q(s)\) through the following procedure. 
Define a chain of vectors
\[
\mathbf{u} = \mathbf{u}^{(1)} \to \mathbf{u}^{(2)} \to \cdots \to \mathbf{u}^{(s)} = \mathbf{u}^c,
\]
where each transition \(\mathbf{u}^{(r)} \to \mathbf{u}^{(r+1)}\) ($1\leq r\leq s-1$) is defined according to the order of operations listed below. For each \(i\in[1,s]\), write \(\mathbf{u}^{(i)}=(u_1^{(i)},\dots,u_s^{(i)})\). (Note that \(\mathbf{u}^{(1)}=\mathbf{u}\) by definition.)

\begin{enumerate}
    \item \textbf{Distinctness of the initial segment.}
    For \(r=1,\dots,\min\{q,s\}-1\), set \(p = \min\{q,s\}-r\) (processing the first \(\min\{q,s\}\) positions from right to left).  Define \(\mathbf{u}^{(r+1)}\) from \(\mathbf{u}^{(r)}\) by modifying only position \(p\) as follows:
    \begin{itemize}
      \item if \(u_p^{(r)}\) already appears among \(\{u_{p+1}^{(r)}, \dots, u_{\min\{q,s\}}^{(r)}\}\), set \(u_p^{(r+1)}\) to an arbitrary symbol in 
            \(\Sigma_q \setminus \{u_{i}^{(r)}|1\leq i\leq \min\{q,s\}\}\) (such a symbol exists because the prefix contains at most \(q-1\) distinct symbols);
      \item otherwise, set \(u_p^{(r+1)} = u_p^{(r)}\).
    \end{itemize}
    All other positions remain unchanged. That is, \(u_k^{(r+1)} = u_k^{(r)}\) for \(k\in [p-1]\cup [p+1,s]\).
    \item \textbf{Periodicity (only when \(s>q\)).}
    For \(r=q,\dots,s-1\), set \(p=r+1\) (processing indices \(r+1\in [q+1,s]\) from left to right). Define \(\mathbf{u}^{(r+1)}\) from \(\mathbf{u}^{(r)}\) as follows:
    \begin{itemize}
      \item if \(u_{p}^{(r)} \neq u_{p-q}^{(r)}\), then \(\mathbf{u}_{[1,p-1]}^{(r+1)}\) is obtained by swapping the elements \(u_{p}^{(r)}\) and \(u_{p-q}^{(r)}\) within the prefix \(\mathbf{u}_{[1,p-1]}^{(r)}\). For all \(k\in [p,s]\), set \(u_k^{(r+1)}=u_k^{(r)}\);
      \item otherwise, \(\mathbf{u}^{(r+1)}=\mathbf{u}^{(r)}\).
    \end{itemize}
\end{enumerate}

Finally, set \(\mathbf{u}^c = \mathbf{u}^{(s)}\).

Moreover, the \emph{cyclic operation on suffixes} is defined analogously. Applying the cyclic operation on prefixes to \(\mathbf{u}^R\) yields a sequence \((\mathbf{u}^R)^c=(c'_1,\dots,c'_s)\in\mathbf{C}_q(s)\), via the chain
\( \mathbf{u}^R = (\mathbf{u}^R)^{(1)} \to \cdots \to (\mathbf{u}^R)^{(s)} = (\mathbf{u}^{R})^c.\) Define $\mathbf{u}^{cR}=((\mathbf{u}^R)^c)^R$.
Reversing this chain yields
\(
\mathbf{u} = \big((\mathbf{u}^R)^{(1)}\big)^R \to \cdots \to \big((\mathbf{u}^R)^{(s)}\big)^R = \big((\mathbf{u}^R)^{c}\big)^R=\mathbf{u}^{cR}.
\)
\label{Pro:def1}
\end{definition}

%$(\mathbf{u}^{cR})^R=$ where \(\mathbf{u}^{cR}=((\mathbf{u}^R)^c)^R=((\mathbf{u}^R)^{(s)})^R\),

We now illustrate the cyclic operation on prefixes and suffixes (Definition~\ref{Pro:def1}) with the following example: a non‑periodic prefix sequence \(\mathbf{u}\) is transformed into \(\mathbf{u}^{c}\), and a non‑periodic suffix sequence \(\mathbf{v}\) into \(\mathbf{v}^{cR}\).

\begin{example}
\label{Pro:ex1}
Let $q=4$, $s=8$, and take the initial vector
\(
\mathbf{v}=\mathbf{u}=\mathbf{u}^{(1)}=(1,1,3,3,3,2,3,2).
\)
The cyclic operation on prefixes transforms $\mathbf{u}^{(1)}$ into the periodic vector $\mathbf{u}^{c}=\mathbf{u}^{(8)}=(1,0,3,2,1,0,3,2)$ through the following chain:
\(
\mathbf{u}^{(1)} \to \mathbf{u}^{(2)} \to \cdots \to \mathbf{u}^{(8)}.
\)
The cyclic operation on suffixes transforms $\mathbf{v}= \big((\mathbf{v}^R)^{(1)}\big)^R$ into the periodic vector $\mathbf{v}^{cR}=\big((\mathbf{v}^R)^{(8)}\big)^R=(1,0,3,2,1,0,3,2)$ through the following chain:
\(
 \big((\mathbf{v}^R)^{(1)}\big)^R \to \big((\mathbf{v}^R)^{(2)}\big)^R \to \cdots \to \big((\mathbf{v}^R)^{(8)}\big)^R.
\)

For reference, we also define two fixed subsequences
\(
\mathbf{c}=(0,1,3,0,1,2,3,1,2,0,1,2)\) and 
\(\mathbf{d}=(1,0,3,1,0,2,3,1,2,0,2,1).
\)
We set $\mathcal{X}_{(i)} = D_7(\mathbf{u}^{(i)},\mathbf{c}) \cap D_7(\mathbf{u}^{(i)},\mathbf{d})$ and $\mathcal{Y}_{(i)} = D_7\big(\mathbf{c},\big((\mathbf{v}^R)^{(i)}\big)^R\big) \cap D_7\big(\mathbf{d},\bigl((\mathbf{v}^R)^{(i)}\bigr)^R\big)$ for each $i\in [8]$. The transforms of $\mathbf{u}^{(i)}$ and $\bigl((\mathbf{v}^R)^{(i)}\bigr)^R$ are detailed in Table~\ref{Pro:tab1}, which also records the sizes $|\mathcal{X}_{(i)}|$ and $|\mathcal{Y}_{(i)}|$ at each stage.

\begin{table}[htbp]
  \centering
  \caption{Transformation of \(\mathbf{u}^{(1)}\) into the periodic vector \(\mathbf{u}^{(8)}\) and of \(\bigl((\mathbf{v}^R)^{(1)}\bigr)^R\) into the periodic vector \(\bigl((\mathbf{v}^R)^{(8)}\bigr)^R\)}
  \label{Pro:tab1}
  \begin{tabular}{lcccr}
    \toprule
    $i$ & $\mathbf{u}^{(i)}$ &  $|\mathcal{X}_{(i)}|$  & $\big((\mathbf{v}^R)^{(i)}\big)^R$ & $|\mathcal{Y}_{(i)}|$ \\
    \midrule
    $1$ & $(1,1,3,3,3,2,3,2)$ & $5891$ & $(1,1,3,3,3,2,3,2)$ & $4842$ \\
    $2$ & $(1,1,2,3,3,2,3,2)$ & $8219$ & $(1,1,3,3,3,2,3,2)$ & $4842$ \\
    $3$ & $(1,1,2,3,3,2,3,2)$ & $8219$ & $(1,1,3,3,3,2,1,2)$ & $5336$ \\
    $4$ & $(0,1,2,3,3,2,3,2)$ & $10614$& $(1,1,3,3,3,2,1,0)$ & $5730$ \\
    $5$ & $(3,1,2,0,3,2,3,2)$ & $13813$& $(1,1,3,3,0,2,1,3)$ & $8023$ \\
    $6$ & $(3,2,1,0,3,2,3,2)$ & $13931$& $(1,1,3,1,0,2,3,1)$ & $9064$ \\
    $7$ & $(1,2,3,0,1,2,3,2)$ & $14687$& $(1,1,3,2,0,1,3,2)$ & $10298$ \\
    $8$ & $(1,0,3,2,1,0,3,2)$ & $15380$& $(1,0,3,2,1,0,3,2)$ & $13524$ \\
    \bottomrule\end{tabular}
\end{table}
\end{example}

\begin{remark}
By Definition~\ref{Pro:def1}, the cyclic operation on prefixes maps \(\mathbf{u}\) to a periodic sequence \(\mathbf{u}^c \in \mathbf{C}_q(s)\) via the chain
\[
\mathbf{u} = \mathbf{u}^{(1)} \to \mathbf{u}^{(2)} \to \cdots \to \mathbf{u}^{(s)} = \mathbf{u}^c .
\]
Let \(p =\min\{s,q\}\). The operation ensures that the prefix \(\mathbf{u}^{(p)}_{[1,p]}\) consists of distinct symbols (the first step towards periodicity).  

For \(i \ge q+1\) (which requires \(s \ge q+1\)), the prefix \(\mathbf{u}^{(i-1)}_{[1,i-1]}\) is already a periodic sequence.  
If \(u_i^{(i-1)} \neq u_{i-q}^{(i-1)}\), swapping the two elements \(u_i^{(i-1)}\) and \(u_{i-q}^{(i-1)}\) makes \(\mathbf{u}^{(i)}_{[1,i]}\) periodic as well, because after the swap we have \(u_{i-q}^{(i)} = u_i^{(i-1)} = u_i^{(i)}\).  
If \(u_i^{(i-1)} = u_{i-q}^{(i-1)}\), then \(\mathbf{u}^{(i)} = \mathbf{u}^{(i-1)}\) and the prefix \(\mathbf{u}^{(i)}_{[1,i]}\) is already periodic.  
Thus for every \(i \ge q+1\), the prefix \(\mathbf{u}^{(i)}_{[1,i]}\) becomes a periodic sequence. Consequently, for \(i = s\) the whole sequence \(\mathbf{u}^{(s)} = \mathbf{u}^c\) is periodic, i.e., \(\mathbf{u}^c \in \mathbf{C}_q(s)\).

By Definition~\ref{Pro:def1}, the cyclic operation on suffixes maps \(\mathbf{u}\) to a periodic sequence \(\mathbf{u}^{cR}\in\mathbf{C}_q(s)\). Equivalently, this operation is obtained by applying the cyclic operation on prefixes to the reversal \(\mathbf{u}^R\): it produces \((\mathbf{u}^R)^c\) via the chain
\[
\mathbf{u}^{R}=(\mathbf{u}^{R})^{(1)}\to(\mathbf{u}^{R})^{(2)}\to\cdots\to(\mathbf{u}^{R})^{(s)}=(\mathbf{u}^{R})^{c}.
\]
\label{Pro:rmk1}
\end{remark}

The preceding remark shows that the cyclic operation on prefixes eventually yields a periodic sequence. The following lemma establishes that along this chain the size of the deletion ball intersection does not decrease.

\begin{lemma}[Cyclic operation on prefixes increases the intersection size]
\label{Pro:lm5}
Let \(n > t \ge 1\), \(q\geq 2, s\geq 1\), and \(k \ge 0\) be integers.
Consider two sequences
\(
(\mathbf{u}, \mathbf{c}) \in \Sigma_q^{n+k}, 
(\mathbf{u}, \mathbf{d}) \in \Sigma_q^{n},
\)
where
\(
\mathbf{u} = (u_1,\dots,u_s) \in \Sigma_q^s, 
\mathbf{c}=(c_1,\dots,c_{n+k-s}) \in \Sigma_q^{n+k-s}, 
\mathbf{d}=(d_1,\cdots,d_{n-s})\in \Sigma_q^{n-s}.
\)
Let \(\mathbf{u}^c \in \mathbf{C}_q(s)\) be the periodic sequence obtained from \(\mathbf{u}\) via the cyclic operation (Definition~\ref{Pro:def1}), with the intermediate chain
\[
\mathbf{u} = \mathbf{u}^{(1)} \to \mathbf{u}^{(2)} \to \cdots \to \mathbf{u}^{(s)} = \mathbf{u}^c .
\]
Then for every $i\in [s-1]$,
\begin{equation}
\bigl| D_{t+k}(\mathbf{u}^{(i)},\mathbf{c}) \cap D_t(\mathbf{u}^{(i)},\mathbf{d}) \bigr|
\le
\bigl| D_{t+k}(\mathbf{u}^{(i+1)}, \mathbf{c}) \cap D_t(\mathbf{u}^{(i+1)}, \mathbf{d}) \bigr|.\nonumber
\end{equation}
In particular,
\begin{equation}
\bigl| D_{t+k}(\mathbf{u},\mathbf{c}) \cap D_t(\mathbf{u},\mathbf{d}) \bigr|
\le
\bigl| D_{t+k}(\mathbf{u}^c, \mathbf{c}) \cap D_t(\mathbf{u}^{c}, \mathbf{d}) \bigr|.\nonumber
\end{equation}
\end{lemma}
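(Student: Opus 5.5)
The plan is to treat the two kinds of elementary steps in Definition~\ref{Pro:def1} separately, and in each case build an injection
\[
\mathcal{X}_{(i)} := D_{t+k}(\mathbf{u}^{(i)},\mathbf{c}) \cap D_t(\mathbf{u}^{(i)},\mathbf{d}) \;\hookrightarrow\; \mathcal{X}_{(i+1)} .
\]
(Computing both sides exactly is too messy.) The second claimed inequality then follows by chaining the $s-1$ step inequalities. By Lemma~\ref{Pro:lm4} no symmetry reduction is needed, since only prefixes are modified.

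\emph{Step 1: a canonical decomposition of intersection elements.} For a word $\mathbf{w}$ and a tail pair $(\mathbf{c},\mathbf{d})$, I would parse each $\mathbf{z}\in D_{t+k}(\mathbf{w},\mathbf{c})\cap D_t(\mathbf{w},\mathbf{d})$ greedily.
\begin{itemize}
\item Write $\mathbf{z}=\mathbf{z}_1\circ\mathbf{z}_2$, where $\mathbf{z}_1$ is the longest prefix of $\mathbf{z}$ that is a subsequence of $\mathbf{w}$, embedded leftmost.
\item Iterating the first-symbol formula for $|\mathcal{X}^a|$ (the $\ell,\ell^*$ bookkeeping after Lemma~\ref{Pro:lm2}) along $\mathbf{z}_1$ shows the following. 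Given $\mathbf{z}_1$, the admissible $\mathbf{z}_2$ are exactly the elements of an intersection $D_{t+k-e}(\mathbf{c}')\cap D_{t-e}(\mathbf{d}')$.
\item Here $\mathbf{c}',\mathbf{d}'$ are suffixes of $\mathbf{c},\mathbf{d}$ determined by the first letter of $\mathbf{z}_2$ (by maximality of $\mathbf{z}_1$, that letter does not occur in $\mathbf{w}$ after the embedding of $\mathbf{z}_1$).
\item The number $e$ counts the letters of $\mathbf{w}$ skipped by the greedy embedding.
\end{itemize}
Hence $|\mathcal{X}|$ is a sum over pairs $(\mathbf{z}_1,\text{end position of its embedding in }\mathbf{w})$ of tail quantities that depend only on the number of skipped prefix letters and on which symbols remain available. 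Each tail quantity is nonincreasing in that number.

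\emph{Step 2: the periodicity step (phase 2).} Here the prefix $\mathbf{u}^{(r)}_{[1,p-1]}$ is periodic. Its modification is the symbol transposition $\tau=(u_p\ u_{p-q})$ applied to that prefix, as confirmed by Example~\ref{Pro:ex1}. Since $\tau$ maps subsequences of the prefix bijectively, I map $\mathbf{z}=\mathbf{z}_1\circ\mathbf{z}_2$ to $\tau(\mathbf{z}_1)\circ\mathbf{z}_2$. The point to verify is that the skip count does not increase after the swap. After the swap the last $q$ letters of the prefix together with $u_p$ are distinct and cyclic, so the symbol $u_p$ now lies exactly $q$ positions back. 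Greedy embeddings that continue into $u_p,\dots,u_s,\mathbf{c}$ therefore reach the next occurrence of any symbol with at most as many skips as before. Where the map is not directly well-defined (the continuation uses the old $u_{p-q}$-symbol), I would compose with a second local relabeling of $\mathbf{z}_2$'s first letter. The worry is that this breaks injectivity; if so, I would instead compare the two sums term by term through a majorization of the skip-count distributions.

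\emph{Step 3: the distinctness step (phase 1).} Position $p$ holds a symbol repeated later in the first $\min\{q,s\}$ positions, and it is replaced by a symbol $b$ absent from that window. Subsequences through the old $u_p$ can be rerouted to the later copy of that symbol, which is present in both sequences. Subsequences avoiding position $p$ are unchanged. This gives an injection, with extra new words starting with $b$.

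I expect Step~2 to be the main obstacle. It requires showing that the transposition never makes the greedy embedding into $(\mathbf{u}^{(r)}_{[p,s]},\mathbf{c})$ and $(\mathbf{u}^{(r)}_{[p,s]},\mathbf{d})$ worse simultaneously for both sequences, which must be checked with the same $k$ offset on the two sides.
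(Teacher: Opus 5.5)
Your Step 3 does not work as stated. Rerouting the letter matched at the old $u_p$ to the later copy of that symbol is impossible whenever the word also uses window letters lying between $p$ and that copy. Such words need not belong to $\mathcal{X}_{(i+1)}$ at all, so there is no ``identity plus extra words starting with $b$'' injection.

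For example, take $q=s=3$, $k=0$, $n=4$, $t=1$, $\mathbf{c}=\mathbf{d}=(2)$. The chain step $(0,1,0)\to(2,1,0)$ gives $\{010,012,002,102\}$ before and $\{102,202,212,210\}$ after. The words $010,012,002$ are not subsequences of $(2,1,0,2)$; the sizes agree only because these words pair off with the new words beginning with $2$.

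The missing idea is a swap of $a=u_p$ and the new symbol $b$ at a branch point. Set $\mathbf{A}=\mathbf{u}^{(i)}_{[1,p-1]}$ and $\mathbf{B}=\mathbf{u}^{(i)}_{[p+1,\min\{q,s\}]}$. For $\mathbf{z}$, let $\mathbf{z}_A$ be its longest prefix embeddable in $\mathbf{A}$, with greedy end $e_A$, and let $x$ be the next letter. If $a$ does not occur in $\mathbf{A}$ after $e_A$ and $x\in\{a,b\}$, replace $x$ by the other symbol; otherwise leave $\mathbf{z}$ alone.
\begin{itemize}
\item A word continuing with $a$ is matched at $p$ with remainder in $(\mathbf{B},\dots)$. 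It goes to the word with $b$ matched at $p$ and the identical remainder.
\item A word continuing with $b$ has its first $b$ beyond the window, hence after $\mathbf{B}$. It goes to the word with $a$ matched at the copy of $a$ in $\mathbf{B}$, whose remainder contains the old one.
\item In every other case $\mathbf{z}$ itself lies in $\mathcal{X}_{(i+1)}$.
\end{itemize}
Since $b\notin\mathbf{A}$, images keep the same branch point, so the map is injective. This is exactly the paper's comparison $|\mathcal{X}_{(i)}^{a_i}|=|\mathcal{X}_{(i+1)}^{a'_i}|$, $|\mathcal{X}_{(i)}^{a'_i}|\le|\mathcal{X}_{(i+1)}^{a_i}|$, with first letters from $A^{(i)}$ handled by induction on the prefix.

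In Step 2, your fallback of relabelling the first letter of $\mathbf{z}_2$ is the right map, and it is injective. However, the check you propose targets the wrong quantity: $\tau$ preserves greedy end positions exactly, so skip counts never change. What changes is the set $S$ of admissible first letters of $\mathbf{z}_2$ (symbols absent from the prefix after the greedy end), which becomes $\tau(S)$.

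By periodicity, $S$ is an initial segment of the last period of $\mathbf{u}_{[1,p-1]}$, and that period begins with $\alpha=u_{p-q}$. Hence $\beta=u_p\in S$ forces $\alpha\in S$, and the only discrepancy is $\alpha\in S$, $\beta\notin S$, where you send a leading $\alpha$ to $\beta$. This is valid because $\beta$ is the first letter of both tails $(\mathbf{u}_{[p,s]},\mathbf{c})$ and $(\mathbf{u}_{[p,s]},\mathbf{d})$. So the parts after their first $\alpha$ are suffixes of the parts after that leading $\beta$. It is injective because $\beta\notin S$, and no majorization fallback is needed.

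Finally, the monotonicity claimed at the end of Step 1 is false in general: the number of common subsequences of $\mathbf{c},\mathbf{d}$ of a given length is not monotone in that length. The only monotonicity you need is inclusion when the tails are replaced by suffixes at fixed word length.
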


\begin{proof}
For convenience, for $i \in [s]$ set $\mathbf{u}^{(i)} = (u_1^{(i)},\dots,u_s^{(i)})$ and
$\mathcal{X}_{(i)} = D_{t+k}(\mathbf{u}^{(i)}, \mathbf{c}) \cap D_t(\mathbf{u}^{(i)}, \mathbf{d})$.

\emph{Case $1\leq s\leq q$:} 

If $s = 1$ then $\mathbf{u}^c = \mathbf{u}$ and the claim is trivial; hence we assume $s \ge 2$ in the following.

We prove by induction on $2 \le s \le q$ that for all $i \in [s-1]$,
\begin{equation}
|\mathcal{X}_{(i)}| \le |\mathcal{X}_{(i+1)}|. \label{Pro:eq1}
\end{equation}
For $i \in [s-1]$, we define $A^{(i)} = \{ u_j^{(i)} \mid j \in [s-1-i] \}$,
$B^{(i)} = \{ u_j^{(i)} \mid j \in [s+1-i,\,s] \}$,
$a_i = u^{(i)}_{s-i}$, and $a'_i = u^{(i+1)}_{s-i}$.
By the cyclic operation, if $a_i \notin B^{(i)}$ then $a'_i = a_i$ and $\mathbf{u}^{(i+1)} = \mathbf{u}^{(i)}$;
otherwise $a_i\in B^{(i)}$, $a'_i \in \Sigma_q \setminus \big(B^{(i)}\cup A^{(i)}\big)$, and $u_j^{(i+1)} = u_j^{(i)}$ for all $j \in [s] \setminus \{s-i\}$.

\emph{Base case $s = 2$.}

For $s = 2$, $B^{(1)} = \{u_2^{(1)}\}$ and $A^{(1)}=\emptyset$.

If $a_1 \notin B^{(1)}$ (i.e., $u_1^{(1)} \neq u_2^{(1)}$), then $\mathbf{u}^{(2)} = \mathbf{u}^{(1)}$ and Eq.~\eqref{Pro:eq1} holds trivially.

If $a_1 \in B^{(1)}$ (i.e., $u_1^{(1)} = u_2^{(1)}$), then $a'_1 \neq a_1$ and $u_2^{(2)} = u_2^{(1)}$.
Decompose $\mathcal{X}_{(1)}$ and $\mathcal{X}_{(2)}$ according to the first symbol:
\begin{align*}
|\mathcal{X}_{(1)}| = |\mathcal{X}_{(1)}^{a_1}| + \sum_{j \in \Sigma_q \setminus \{a_1\}} |\mathcal{X}_{(1)}^{j}|,\qquad
|\mathcal{X}_{(2)}| = |\mathcal{X}_{(2)}^{a'_1}| + |\mathcal{X}_{(2)}^{a_1}| + \sum_{j \in \Sigma_q \setminus \{a_1,a'_1\}} |\mathcal{X}_{(2)}^{j}|.
\end{align*}
By Lemmas~\ref{Pro:lm1} and \ref{Pro:lm2}, for every $j \in \Sigma_q \setminus \{a_1,a'_1\}$ we have $|\mathcal{X}_{(1)}^{j}| = |\mathcal{X}_{(2)}^{j}|$.
Moreover,
\begin{align*}
|\mathcal{X}_{(1)}^{a'_1}| &= \bigl| D_{t+k-1-\ell_0}(\mathbf{c}_{[\ell_0+2,\,n+k-2]}) \cap D_{t-1-\ell_1}(\mathbf{d}_{[\ell_1+2,\,n-2]}) \bigr|,\qquad
|\mathcal{X}_{(1)}^{a_1}| = \bigl| D_{t+k}(u_2^{(1)},\mathbf{c}) \cap D_t(u_2^{(1)},\mathbf{d}) \bigr|,\\
|\mathcal{X}_{(2)}^{a'_1}| &= \bigl| D_{t+k}(u_2^{(2)},\mathbf{c}) \cap D_t(u_2^{(2)},\mathbf{d}) \bigr|,\qquad
|\mathcal{X}_{(2)}^{a_1}| = \bigl| D_{t+k-1}(\mathbf{c}) \cap D_{t-1}(\mathbf{d}) \bigr|,
\end{align*}
where $\ell_0,\ell_1 \ge 0$ are the smallest indices with $c_{1+\ell_0} = a'_1$ and $d_{1+\ell_1} = a'_1$, and $u_2^{(1)} = u_2^{(2)}$.

Thus $|\mathcal{X}_{(1)}^{a'_1}| \le |\mathcal{X}_{(2)}^{a_1}|$ and $|\mathcal{X}_{(1)}^{a_1}| = |\mathcal{X}_{(2)}^{a'_1}|$.
Combining these relations with the equality of the common terms yields
$|\mathcal{X}_{(1)}| \le |\mathcal{X}_{(2)}|$, which is exactly Eq.~\eqref{Pro:eq1} for $s = 2$. This establishes the base case.

\emph{Induction step}

Now assume that Eq.~\eqref{Pro:eq1} holds for all smaller sequence lengths, i.e., for every $l$ with $2 \le l \le s-1$ and for all $i \in [l-1]$, where $3 \le s \le q$. We prove it for $l = s$ and $i \in [s-1]$. Let $\mathbf{u} = (u_1,\dots,u_s)$.

Fix $i \in [s-1]$. If $a_i \notin B^{(i)}$, then $\mathbf{u}^{(i+1)} = \mathbf{u}^{(i)}$ and Eq.~\eqref{Pro:eq1} holds trivially.  

Assume now $a_i \in B^{(i)}$. By the cyclic operation, $a'_i \in \Sigma_q \setminus (A^{(i)} \cup B^{(i)})$ and $u_j^{(i+1)} = u_j^{(i)}$ for all $j \neq s-i$. Decompose $\mathcal{X}_{(i)}$ and $\mathcal{X}_{(i+1)}$ according to the first symbol:
\begin{align*}
|\mathcal{X}_{(i)}|   &= \sum_{j \in A^{(i)}\setminus\{a_i\}}\!\!|\mathcal{X}_{(i)}^{j}| \;+\; |\mathcal{X}_{(i)}^{a_i}| \;+\; |\mathcal{X}_{(i)}^{a'_i}| \;+\!\! \sum_{j \in \Sigma_q \setminus (A^{(i)}\cup\{a_i,a'_i\})}\!\!|\mathcal{X}_{(i)}^{j}|,\\
|\mathcal{X}_{(i+1)}| &= \sum_{j \in A^{(i)}\setminus\{a_i\}}\!\!|\mathcal{X}_{(i+1)}^{j}| \;+\; |\mathcal{X}_{(i+1)}^{a_i}| \;+\; |\mathcal{X}_{(i+1)}^{a'_i}| \;+\!\! \sum_{j \in \Sigma_q \setminus (A^{(i)}\cup\{a_i,a'_i\})}\!\!|\mathcal{X}_{(i+1)}^{j}|.
\end{align*}
We compare the corresponding terms in four groups.

\begin{itemize}
\item \emph{Symbols not in $A^{(i)}\cup\{a_i,a'_i\}$.}
For every $j \in \Sigma_q \setminus (A^{(i)}\cup\{a_i,a'_i\})$, the suffix $\mathbf{u}^{(i+1)}_{[s+1-i,s]}$ equals $\mathbf{u}^{(i)}_{[s+1-i,s]}$, and the first occurrence of $j$ is the same in both sequences (if it exists). Hence
\[
|\mathcal{X}_{(i)}^{j}| = |\mathcal{X}_{(i+1)}^{j}|.
\]

\item \emph{Symbols in $A^{(i)}$.}
Take $j \in A^{(i)}\setminus\{a_i\}$ and let $\ell_j \ge 0$ be the smallest index such that $u^{(i)}_{1+\ell_j}=u^{(i+1)}_{1+\ell_j}=j$. Then
\begin{align*}
\mathcal{X}_{(i)}^{j} &=
D_{t+k-\ell_j}\bigl(\mathbf{u}^{(i)}_{[\ell_j+2,\,s-1-i]},\, a_i,\, \mathbf{u}^{(i)}_{[s+1-i,s]},\, \mathbf{c}\bigr)
\cap
D_{t-\ell_j}\bigl(\mathbf{u}^{(i)}_{[\ell_j+2,\,s-1-i]},\, a_i,\, \mathbf{u}^{(i)}_{[s+1-i,s]},\, \mathbf{d}\bigr),\\[4pt]
\mathcal{X}_{(i+1)}^{j} &=
D_{t+k-\ell_j}\bigl(\mathbf{u}^{(i+1)}_{[\ell_j+2,\,s-1-i]},\, a'_i,\, \mathbf{u}^{(i+1)}_{[s+1-i,s]},\, \mathbf{c}\bigr)
\cap
D_{t-\ell_j}\bigl(\mathbf{u}^{(i+1)}_{[\ell_j+2,\,s-1-i]},\, a'_i,\, \mathbf{u}^{(i+1)}_{[s+1-i,s]},\, \mathbf{d}\bigr).
\end{align*}
The middle parts $\mathbf{u}^{(i)}_{[\ell_j+2,\,s-1-i]}$ and $\mathbf{u}^{(i+1)}_{[\ell_j+2,\,s-1-i]}$ are identical, and the suffixes $\mathbf{u}^{(i)}_{[s+1-i,s]}$, $\mathbf{u}^{(i+1)}_{[s+1-i,s]}$ are also identical. The only difference is that the symbol following that middle part is $a_i$ in $\mathbf{u}^{(i)}$ and $a'_i$ in $\mathbf{u}^{(i+1)}$.  
After removing the common prefix of length $\ell_j+1$, we reduce the problem to a pair of sequences of length $s-\ell_j-1 < s$. Applying the induction hypothesis to these shorter sequences (with the two candidates $a_i$ and $a'_i$ playing the role of the distinguished symbols) yields
\[
|\mathcal{X}_{(i)}^{j}| \le |\mathcal{X}_{(i+1)}^{j}|.
\]

\item \emph{The new symbol $a'_i$.}
Since $a'_i$ does not appear in $\mathbf{u}^{(i)}$, its first occurrence in $\mathbf{u}^{(i)}\circ\mathbf{c}$ and $\mathbf{u}^{(i)}\circ\mathbf{d}$ must be inside $\mathbf{c}$ and $\mathbf{d}$ respectively. Let $\ell_0,\ell_1 \ge 0$ be the smallest indices with $c_{1+\ell_0}=a'_i$ and $d_{1+\ell_1}=a'_i$. Then
\[
\mathcal{X}_{(i)}^{a'_i} =
D_{t+k-s-\ell_0}\bigl(\mathbf{c}_{[\ell_0+2,\,n+k-s]}\bigr)
\cap
D_{t-s-\ell_1}\bigl(\mathbf{d}_{[\ell_1+2,\,n-s]}\bigr).
\]
In $\mathbf{u}^{(i+1)}$, the symbol $a'_i$ appears at position $s-i$, so after extracting it we are left with the suffix $\mathbf{u}^{(i+1)}_{[s+1-i,s]}$:
\[
\mathcal{X}_{(i+1)}^{a'_i} =
D_{t+k-s+i+1}\bigl(\mathbf{u}^{(i+1)}_{[s+1-i,s]},\, \mathbf{c}\bigr)
\cap
D_{t-s+i+1}\bigl(\mathbf{u}^{(i+1)}_{[s+1-i,s]},\, \mathbf{d}\bigr).
\]

\item \emph{The original symbol $a_i$.}
If $a_i \in A^{(i)}$, then the same induction argument as for $j \in A^{(i)}$ already gives $|\mathcal{X}_{(i)}^{a_i}| \le |\mathcal{X}_{(i+1)}^{a_i}|$.  
Otherwise $a_i \in B^{(i)}$, which means $a_i$ occurs in the suffix of length $i$ of $\mathbf{u}^{(i)}$. By definition,
\[
\mathcal{X}_{(i)}^{a_i} =
D_{t+k-s+i+1}\bigl(\mathbf{u}^{(i)}_{[s+1-i,s]},\, \mathbf{c}\bigr)
\cap
D_{t-s+i+1}\bigl(\mathbf{u}^{(i)}_{[s+1-i,s]},\, \mathbf{d}\bigr).
\]
In $\mathbf{u}^{(i+1)}$, since $a_i\in B^{(i)}$, we let $\ell_2 \ge 0$ be the smallest index such that $u^{(i+1)}_{s+1+\ell_2-i}=a_i$. Then
\[
\mathcal{X}_{(i+1)}^{a_i} =
D_{t+k-s+i-\ell_2}\bigl(\mathbf{u}^{(i+1)}_{[s+2+\ell_2-i,\,s]},\, \mathbf{c}\bigr)
\cap
D_{t-s+i-\ell_2}\bigl(\mathbf{u}^{(i+1)}_{[s+2+\ell_2-i,\,s]},\, \mathbf{d}\bigr).
\]
From the above estimates we obtain
\[
|\mathcal{X}_{(i)}^{a'_i}| + |\mathcal{X}_{(i)}^{a_i}|
\;\le\;
|\mathcal{X}_{(i+1)}^{a'_i}| + |\mathcal{X}_{(i+1)}^{a_i}|.
\]
\end{itemize}

Summing the inequalities for all four groups, we conclude
\[
|\mathcal{X}_{(i)}| \le |\mathcal{X}_{(i+1)}|,
\]
which proves Eq.~\eqref{Pro:eq1} for $l = s$ and $i \in [s-1]$.

\emph{Case $s\geq q+1$}:

For $1 \le i \le q-1$, the cyclic operation only involves the first $q$ coordinates of $\mathbf{u}^{(i)}$.
We may view the pair $\bigl(\mathbf{u}_{[q+1,\,s]},\mathbf{c}\bigr)$ as a new suffix $\mathbf{c}'$ and $\bigl(\mathbf{u}_{[q+1,\,s]},\mathbf{d}\bigr)$ as $\mathbf{d}'$.
With this identification the situation reduces exactly to the case $s = q$ (already proved), so we obtain
\[
|\mathcal{X}_{(i)}| \le |\mathcal{X}_{(i+1)}|,
\]
for $1\leq i\leq q-1$.

We prove by induction on \(s\) that for all \(i \in [q,s-1]\),
\begin{equation}
|\mathcal{X}_{(i)}| \le |\mathcal{X}_{(i+1)}|. \label{Pro:eq2}
\end{equation}

Assume that Eq. (\ref{Pro:eq2}) holds for all \(l\) with \(q+1 \le l \le s-1\) and for all \(i \in [q,l-1]\). We prove it for \(l = s\) and \(i \in [q,s-1]\) (the base case of the induction can be obtained when $s=q+1$ in the following).  To this end, write $\mathbf{u} = (u_1,\dots,u_{s})$.  

For \(i\in[q,s-1]\) and \(s\geq q+1\), the construction of the chain allows us to write  
\[
\mathbf{u}^{(i)} = \bigl(\mathbf{e}_i,\; u_{i+1}^{(i)},\; u_{[i+2,s]}^{(i)}\bigr),
\]
where \(\mathbf{e}_i=(e_1,\dots,e_i)\in \mathbf{C}_q(i)\) is a periodic sequence. Consequently, \(\{e_1,\dots,e_q\}=\Sigma_q\).

If \(u_{i+1}^{(i)} = u_{i+1}^{(i+1-q)} = e_{i+1-q}\), then \(\mathbf{u}^{(i)}=\mathbf{u}^{(i+1)}\), and hence Eq.~(\ref{Pro:eq2}) holds trivially.

Now suppose \(u_{i+1}^{(i)} \neq e_{i+1-q}\). Then \(\mathbf{u}^{(i+1)}\) is obtained from \(\mathbf{u}^{(i)}\) by swapping the two elements \(e_{i-q+1}\) and \(u_{i+1}^{(i)}\) within the prefix \(\mathbf{u}_{[1,i]}^{(i)}\), so that \(\mathbf{u}_{[1,i+1]}^{(i+1)}\) is periodic. Moreover, there exist indices \(p,r\in[q]\) such that \(e_p=u_{i+1}^{(i)}\), \(e_r=e_{i+1-q}\), and \(r\ne p\). Under this swap, within the prefix \(\mathbf{u}_{[1,i]}^{(i)}\), the element \(e_r\) is replaced by \(e_p\), and \(e_p\) is replaced by \(e_r\); the suffix \(\mathbf{u}_{[i+1,s]}^{(i)}\) remains unchanged, i.e., \(\mathbf{u}_{[i+1,s]}^{(i)}=\mathbf{u}_{[i+1,s]}^{(i+1)}\).

Define
\(
m_0=\min\{p,r\}, m_1=\max\{p,r\}, r_1=i+1-q,
\)
and let \(p_1=i+1-q+\hat p\) for some \(\hat p\in[q]\) such that \(p_1\equiv p\pmod q\). In other words, \(r_1\) and \(p_1\) are the rightmost indices such that \(u_{r_1}=e_r\) and \(u_{p_1}=e_p\) within the prefix \(\mathbf{u}_{[1,i]}^{(i)}\). Furthermore, $r_1<p_1$.

Decomposing \(\mathcal{X}_{(i)}\) and \(\mathcal{X}_{(i+1)}\) according to the first symbol yields
\(
|\mathcal{X}_{(i)}| = \sum_{j \in [q]} |\mathcal{X}_{(i)}^{e_j}|~\text{and}~
|\mathcal{X}_{(i+1)}| = \sum_{j \in [q]} |\mathcal{X}_{(i+1)}^{e_j}|,
\)
where
\[
\begin{aligned}
\mathcal{X}_{(i)}^{e_j} &= D_{t+k-j+1}\bigl(\mathbf{u}_{[j+1,s]}^{(i)},\; \mathbf{c}\bigr)
\cap D_{t-j+1}\bigl(\mathbf{u}_{[j+1,s]}^{(i)},\; \mathbf{d}\bigr), \\[4pt]
\mathcal{X}_{(i+1)}^{e_{m_0}} &= D_{t+k-m_1+1}\bigl(\mathbf{u}_{[m_1+1,s]}^{(i+1)},\; \mathbf{c}\bigr)
\cap D_{t-m_1+1}\bigl(\mathbf{u}_{[m_1+1,s]}^{(i+1)},\; \mathbf{d}\bigr), \\[4pt]
\mathcal{X}_{(i+1)}^{e_{m_1}} &= D_{t+k-m_0+1}\bigl(\mathbf{u}_{[m_0+1,s]}^{(i+1)},\; \mathbf{c}\bigr)
\cap D_{t-m_0+1}\bigl(\mathbf{u}_{[m_0+1,s]}^{(i+1)},\; \mathbf{d}\bigr), \\[4pt]
\mathcal{X}_{(i+1)}^{e_h} &= D_{t+k-h+1}\bigl(\mathbf{u}_{[h+1,s]}^{(i+1)},\; \mathbf{c}\bigr)
\cap D_{t-h+1}\bigl(\mathbf{u}_{[h+1,s]}^{(i+1)},\; \mathbf{d}\bigr), 
\end{aligned}
\]
for each \(j \in [q]\), \(h \in [q]\setminus\{m_0,m_1\}\). 

We compare the corresponding terms.
\begin{enumerate}
  \item For \(h\in[q]\setminus\{m_0,m_1\}\), we distinguish three subcases according to the position of \(h+1\) relative to \(r_1\) and \(p_1\).
\begin{itemize}
  \item If \(h+1\le r_1\), then \(\mathbf{u}_{[h+1,i]}^{(i)}\) contains both \(e_p\) and \(e_r\), and \(\mathbf{u}_{[h+1,i]}^{(i+1)}\) is obtained by swapping them. Hence, by the induction hypothesis on length \(s-h\) (\(<s\)), we have  
  \(
  |\mathcal{X}_{(i)}^{e_h}|\le |\mathcal{X}_{(i+1)}^{e_h}|.
  \)
  \item If \(r_1 < h+1 \le p_1\), then \(\mathbf{u}_{[h+1,i+1]}^{(i)}\) contains only \(e_p\) among \(\{e_r,e_p\}\). That is, we have \(u_{i+1}^{(i)}=u_{p_1}^{(i)}=e_p\) in the sequence \(\mathbf{u}_{[h+1,i+1]}^{(i)}\), and its length is \(i+1-(h+1)+1 \le i-r_1+1 = q\), since \(r_1=i+1-q\) and \(r_1<h+1<p_1<i+1\). In this sequence, \(\mathbf{u}_{[h+1,i+1]}^{(i+1)}\) is obtained by changing the element at position \(p_1\) in \(\mathbf{u}_{[h+1,i+1]}^{(i)}\)  (which is the same as the element at position \(i+1\)) from \(e_{p_1}\) to \(e_r\), so that all elements in \(\mathbf{u}_{[h+1,i+1]}^{(i+1)}\) are distinct (this is precisely the cyclic operation on prefixes for \emph{Case $s\leq q$}). Thus, by the result of \emph{Case \(s\le q\)}, we get
\[
|\mathcal{X}_{(i)}^{e_h}|\le |\mathcal{X}_{(i+1)}^{e_h}|.
\]
  \item If \(h+1>p_1\), then \(\mathbf{u}_{[h+1,s]}^{(i)}=\mathbf{u}_{[h+1,s]}^{(i+1)}\), and consequently  
  \(
  |\mathcal{X}_{(i)}^{e_h}|=|\mathcal{X}_{(i+1)}^{e_h}|.
  \)
\end{itemize}
  \item For \(j\in\{m_0,m_1\}\), a similar analysis based on the relative order of \(m_0,m_1\) with respect to \(r_1,p_1\) yields  
\[
|\mathcal{X}_{(i)}^{e_{m_1}}|\le |\mathcal{X}_{(i+1)}^{e_{m_0}}|,\qquad
|\mathcal{X}_{(i)}^{e_{m_0}}|\le |\mathcal{X}_{(i+1)}^{e_{m_1}}|.
\]
\end{enumerate}

Combining all these inequalities, we obtain
\[
|\mathcal{X}_{(i)}| \le |\mathcal{X}_{(i+1)}|.
\]
This establishes Eq.~(\ref{Pro:eq2}) for \(l=s\) and every \(i\in[q,s-1]\). Hence the lemma follows.
\end{proof}

The following corollary is an immediate application of Lemma~\ref{Pro:lm5} to the cyclic operation on suffixes via reversal.

\begin{corollary}
\label{Pro:cor1}
Let \(r,s\ge 1, q\geq 2,\) and \(m,k,t \ge 0\) be integers.
Consider sequences
\(
\mathbf{u} = (u_1,\dots,u_s) \in \Sigma_q^s, 
\mathbf{v} = (v_1,\dots,v_r) \in \Sigma_q^r,
\mathbf{c}=(c_1,\dots,c_{m+k}) \in \Sigma_q^{m+k}, 
\mathbf{d}=(d_1,\cdots,d_{m})\in \Sigma_q^{m}.
\)
Then
\begin{align*}
\bigl| D_{t+k}(\mathbf{c},\mathbf{v}) \cap D_t(\mathbf{d},\mathbf{v}) \bigr|
&\le
\bigl| D_{t+k}(\mathbf{c},\mathbf{v}^{cR}) \cap D_t(\mathbf{d},\mathbf{v}^{cR}) \bigr|,\\
\bigl| D_{t+k}(\mathbf{u},\mathbf{c},\mathbf{v}) \cap D_t(\mathbf{u},\mathbf{d},\mathbf{v}) \bigr|
&\le
\bigl| D_{t+k}(\mathbf{u}^c, \mathbf{c},\mathbf{v}^{cR}) \cap D_t(\mathbf{u}^{c}, \mathbf{d},\mathbf{v}^{cR}) \bigr|.
\end{align*}
\end{corollary}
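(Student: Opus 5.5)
The plan is to derive both inequalities from Lemma~\ref{Pro:lm5}, using the reversal invariance of Lemma~\ref{Pro:lm4}. The only care needed is that Lemma~\ref{Pro:lm5} is stated for pairs of sequences of possibly different lengths with deletion radii $t+k$ and $t$. So we must check that reversal also preserves $|D_{t+k}(\mathbf{x})\cap D_t(\mathbf{y})|$ when $|\mathbf{x}|=|\mathbf{y}|+k$. This holds by the same argument as in Lemma~\ref{Pro:lm4}: the map $\mathbf{z}\mapsto\mathbf{z}^R$ is a bijection that satisfies $D_r(\mathbf{z})^R=D_r(\mathbf{z}^R)$ for every length and radius. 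Hence it maps $D_{t+k}(\mathbf{x})\cap D_t(\mathbf{y})$ bijectively onto $D_{t+k}(\mathbf{x}^R)\cap D_t(\mathbf{y}^R)$. No distance condition is involved, so nothing more is needed.

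\emph{First inequality.} Reverse both sequences to get
\[
|D_{t+k}(\mathbf{c},\mathbf{v})\cap D_t(\mathbf{d},\mathbf{v})| = |D_{t+k}(\mathbf{v}^R,\mathbf{c}^R)\cap D_t(\mathbf{v}^R,\mathbf{d}^R)|.
\]
Now apply Lemma~\ref{Pro:lm5} with prefix $\mathbf{v}^R\in\Sigma_q^r$, tails $\mathbf{c}^R$ and $\mathbf{d}^R$, and total lengths $n+k$ and $n$ where $n=r+m$. The right-hand side is then at most $|D_{t+k}((\mathbf{v}^R)^c,\mathbf{c}^R)\cap D_t((\mathbf{v}^R)^c,\mathbf{d}^R)|$. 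Reversing back and using $((\mathbf{v}^R)^c)^R=\mathbf{v}^{cR}$ gives the first inequality. One technical point is that Lemma~\ref{Pro:lm5} assumes $n>t\ge1$. If $t=0$, or if $t\ge n$, the statement should be checked directly: when $t\ge n$ the set $D_t$ is either empty or $\{\varepsilon\}$ on both sides, and the sets $D_{t+k}$ behave the same way. Alternatively, one can observe that the proof of Lemma~\ref{Pro:lm5} never used $n>t$ beyond keeping the radii meaningful, so I would add a short remark to that effect.

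\emph{Second inequality.} Chain two applications. First apply Lemma~\ref{Pro:lm5} with prefix $\mathbf{u}$, and treat $(\mathbf{c},\mathbf{v})$ and $(\mathbf{d},\mathbf{v})$ as the tails. This gives
\[
|D_{t+k}(\mathbf{u},\mathbf{c},\mathbf{v})\cap D_t(\mathbf{u},\mathbf{d},\mathbf{v})|\le |D_{t+k}(\mathbf{u}^c,\mathbf{c},\mathbf{v})\cap D_t(\mathbf{u}^c,\mathbf{d},\mathbf{v})|.
\]
Then apply the first inequality with $(\mathbf{u}^c,\mathbf{c})$ and $(\mathbf{u}^c,\mathbf{d})$ as the new $\mathbf{c},\mathbf{d}$; their lengths still differ by $k$. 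This replaces $\mathbf{v}$ by $\mathbf{v}^{cR}$ without changing the prefix, which yields the claim.

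The main obstacle is only bookkeeping. We must make sure the roles of the $(t+k)$-ball and the $t$-ball are preserved under reversal, and that the hypotheses of Lemma~\ref{Pro:lm5} are met, including the edge cases $t=0$ and $t\ge n$ and the case where $\mathbf{c}$ or $\mathbf{d}$ is empty. Everything else is direct substitution.
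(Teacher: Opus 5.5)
Your argument is correct and follows the paper's route: the paper gives no separate proof and calls the corollary an immediate application of Lemma~\ref{Pro:lm5} via reversal, which is exactly your reverse, apply, reverse-back step followed by chaining the prefix and suffix operations. The edge case $t=0$, which you flag but do not check, is harmless: by left and right cancellation of the subsequence order, both sides equal $1$ if $\mathbf{d}$ is a subsequence of $\mathbf{c}$ and $0$ otherwise.
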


\subsection{Properties of deletion balls}

We begin with two elementary inequalities for $D_q(n,t)$ with $q\geq 2$,
\begin{align}
D_q(n,t) &\le D_q(n+1,t+1), \label{Pro:eq3}\\
D_q(n,t) &\le D_q(n+1,t). \label{Pro:eq4}
\end{align}

The following corollary is a direct consequence of Eqs. \eqref{def:eq1}, \eqref{Pro:eq3}, and \eqref{Pro:eq4}.

\begin{corollary}
Let $t\geq 1, q \geq 2$, and $m\geq 1$. Define 
\(
N(n,t) = \sum_{i=1}^{m} c_i D_q(n-a_i, t-b_i),
\)
where $a_i, b_i, c_i$ are integers for $i \in [m]$.
For all $n>\max\{t+a_i-b_i\mid i\in[m]\}$, the following recurrence holds:
\[
N(n,t) = N(n-1,t) + N(n-2,t-1) + \cdots + N(n-q, t+1-q).
\]
Moreover, for $n\geq t+1$ we have 
\(
D_q(n,t) \le q \cdot D_q(n-1,t).
\)
\label{Pro:cor2}
\end{corollary}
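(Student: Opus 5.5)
The corollary has two parts: a recurrence for $N(n,t)$ and the bound $D_q(n,t)\le q\,D_q(n-1,t)$. My plan is to prove the recurrence term by term and then read the bound off Eq.~\eqref{def:eq1} together with Eq.~\eqref{Pro:eq3}.

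\emph{Recurrence.} Since $N(n,t)$ is an integer linear combination $\sum_i c_i D_q(n-a_i,t-b_i)$, it suffices to show that each summand obeys the recurrence. Fix $i$ and set $n'=n-a_i$, $t'=t-b_i$. The hypothesis $n>t+a_i-b_i$ is exactly $n'\ge t'+1$, which is the range in which Eq.~\eqref{def:eq1} holds. Applying it gives
\[
D_q(n-a_i,t-b_i)=\sum_{j=1}^{q} D_q\bigl((n-j+1)-a_i,\,(t-j+1)-b_i\bigr).
\]
The $j$-th term is the $i$-th summand of $N(n-j+1,t-j+1)$. Multiplying by $c_i$ and summing over $i\in[m]$ then yields $N(n,t)=\sum_{j=1}^q N(n-j+1,t-j+1)$, as claimed.

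One point needs checking: Eq.~\eqref{def:eq1} must remain valid when some arguments fall outside the normal range, i.e.\ when $t'<0$, $n'<0$, or $t'-j+1<0$. In those cases the conventions set the out-of-range terms to zero. I will verify directly that the identity still holds at these boundary values.
\begin{itemize}
\item If $t'<0$, then every term on both sides is zero, so the identity is trivial.
\item If $t'\ge 0$, then $n'\ge t'+1$ places us in the stated range of Eq.~\eqref{def:eq1}, which already includes the zero conventions for its terms.
\end{itemize}
This boundary bookkeeping is the only step that needs care.

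\emph{Bound.} For $n\ge t+1$, Eq.~\eqref{def:eq1} gives $D_q(n,t)=\sum_{j=1}^{q}D_q(n-j,t-j+1)$. I will show that each term satisfies $D_q(n-j,t-j+1)\le D_q(n-1,t)$. To do this, apply Eq.~\eqref{Pro:eq3} $j-1$ times:
\[
D_q(n-j,t-j+1)\le D_q(n-j+1,t-j+2)\le\cdots\le D_q(n-1,t).
\]
Summing the $q$ terms gives $D_q(n,t)\le q\,D_q(n-1,t)$.

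The individual steps of Eq.~\eqref{Pro:eq3} must be valid when $t-j+1<0$, but then the left side is $0$ and the inequality is trivial. Eq.~\eqref{Pro:eq4} is not needed for the bound itself; it is stated as a supporting fact.
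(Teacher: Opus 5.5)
Your argument is correct and is the direct derivation the paper intends; the paper gives no written proof, only citing Eqs.~\eqref{def:eq1}, \eqref{Pro:eq3}, and \eqref{Pro:eq4}. You apply \eqref{def:eq1} summand by summand under $n-a_i\ge t-b_i+1$, handling the $t-b_i<0$ case trivially, and you bound each of the $q$ terms by $D_q(n-1,t)$ via repeated use of \eqref{Pro:eq3}.

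One indexing slip needs fixing in the recurrence display: the first argument should be $(n-j)-a_i$, not $(n-j+1)-a_i$. The conclusion should then read $N(n,t)=\sum_{j=1}^{q}N(n-j,t-j+1)$. As written, the $j=1$ term reproduces the left-hand side. You already use the correct indexing in the bound part.
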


We next establish a difference estimate that will be useful later.
\begin{lemma}
\label{Pro:lm6}
Let $t,s,i\ge 0$ be integers, and let $q\geq 2$.
For \(n \ge t+s\) and $t\ge i$,
\begin{align}
D_q(n,t)-D_q(n-s,t) &\ge D_q(n-i,t-i)-D_q(n-s-i,t-i). \nonumber
\end{align}
\end{lemma}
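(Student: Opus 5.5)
The plan is to reduce to the case $i=1$ and then iterate. Set $f(n,t)=D_q(n,t)-D_q(n-s,t)$. If I can show $f(n,t)\ge f(n-1,t-1)$ whenever $n\ge t+s$ and $t\ge 1$, then applying it $i$ times gives the lemma. Each application keeps the hypothesis, since $n-1\ge (t-1)+s$. The case $i=0$ is trivial, and so is $s=0$, where both sides vanish.

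For the one-step inequality I will telescope in $n$:
\[
f(n,t)=\sum_{m=n-s+1}^{n}\bigl[D_q(m,t)-D_q(m-1,t)\bigr],\qquad
f(n-1,t-1)=\sum_{m=n-s+1}^{n}\bigl[D_q(m-1,t-1)-D_q(m-2,t-1)\bigr].
\]
It then suffices to compare the two sums term by term. Write $\Delta(m,t)=D_q(m,t)-D_q(m-1,t)$; the goal becomes $\Delta(m,t)\ge\Delta(m-1,t-1)$ for every $m\in[n-s+1,n]$.

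Since $m\ge n-s+1\ge t+1$, the recurrence \eqref{def:eq1} applies to $(m,t)$ and gives $\Delta(m,t)=\sum_{j=2}^{q}D_q(m-j,t-j+1)$. Since $m-1\ge t=(t-1)+1$, it also applies to $(m-1,t-1)$ and gives $\Delta(m-1,t-1)=\sum_{j=2}^{q}D_q(m-1-j,t-j)$. Comparing summand by summand, I need $D_q(m-1-j,t-j)\le D_q(m-j,t-j+1)$, which is exactly \eqref{Pro:eq3} applied at $(m-1-j,t-j)$. Summing over $j$ and then over $m$ finishes the step.

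The only delicate point is boundary conventions. When $t-j<0$ or $m-1-j<0$, the relevant $D_q$ values are $0$ by convention, so \eqref{Pro:eq3} still holds there (the left side is $0$). I also need the recurrence used only in its valid range $n\ge t+1$, which is where the hypothesis $n\ge t+s$ is used. I expect checking these degenerate cases of \eqref{Pro:eq3}, together with $D_q(n,n)=1$, to be the main thing to verify carefully; the rest is bookkeeping.
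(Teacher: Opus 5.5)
Your proof is correct and follows essentially the same route as the paper: use the recurrence \eqref{def:eq1} to write the first difference $D_q(m,t)-D_q(m-1,t)$ as a sum of $D_q$ terms, compare these summands termwise via \eqref{Pro:eq3}, and telescope over $s$ consecutive values of $n$ (where $n\ge t+s$ keeps the recurrence in its valid range). The only cosmetic difference is that you reduce to $i=1$ and iterate, whereas the paper handles general $i$ in a single step, implicitly applying \eqref{Pro:eq3} $i$ times to each summand, before summing.
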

\begin{proof}
By the recurrence relation for \(D_q(n,t)\) and Eq. \((\ref{Pro:eq3})\), we have
\[
D_q(n,t) - D_q(n-1,t) \ge D_q(n-i,t-i) - D_q(n-1-i,t-i).
\]
For each \(j = 0, 1, \dots, s-1\), shifting the indices gives
\[
D_q(n-j,t) - D_q(n-j-1,t) \ge D_q(n-j-i,t-i) - D_q(n-j-1-i,t-i).
\]
Summing these inequalities for \(j = 0, 1, \dots, s-1\) yields 
\[
D_q(n,t) - D_q(n-s,t) \ge D_q(n-i,t-i) - D_q(n-s-i,t-i).
\]
Hence the lemma follows.
\end{proof}

We will need several known bounds for non-periodic sequences, as well as some auxiliary ones. It is well known from \cite{Hirschberg} that $D_q(n,t)=|D_t(\mathbf{x})|$ for any periodic sequence $\mathbf{x}\in\mathbf{C}_q(n)$ with $q\ge2$ and $n\ge t$. The next three lemmas give upper bounds when the sequence is not periodic.

\begin{lemma}[Levenshtein \cite{L0}]
\label{Pro:lm7}
Let $q\geq 2$. For any $\mathbf{x}\in \Sigma_q^n$ and $n\geq t$,
\begin{align*}
\binom{r(\mathbf{x})-t+1}{t}\leq \left|D_t(\mathbf{x})\right|\leq \binom{r(\mathbf{x})+t-1}{t}.
\end{align*}
\end{lemma}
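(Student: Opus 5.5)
The plan is to prove the two bounds separately. Write $\mathbf{x}$ as a concatenation of its runs, $\mathbf{x}=(a_1^{L_1},a_2^{L_2},\dots,a_r^{L_r})$, where $r=r(\mathbf{x})$, $a_j\neq a_{j+1}$ and $L_j\ge 1$. Both bounds come from comparing $D_t(\mathbf{x})$ with a family of \emph{deletion patterns} defined at the level of runs.

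\emph{Upper bound.} Every $\mathbf{y}\in D_t(\mathbf{x})$ arises by deleting some set of $t$ positions of $\mathbf{x}$. Within a run all symbols are equal, so the result depends only on how many symbols are removed from each run. Hence $\mathbf{y}$ is determined by a vector $(k_1,\dots,k_r)$ of nonnegative integers with $k_1+\cdots+k_r=t$ and $k_j\le L_j$. The map sending such a vector to the resulting subsequence is therefore a surjection onto $D_t(\mathbf{x})$. The number of nonnegative integer solutions of $k_1+\cdots+k_r=t$ is $\binom{r+t-1}{t}$, even before imposing $k_j\le L_j$. This gives $|D_t(\mathbf{x})|\le\binom{r(\mathbf{x})+t-1}{t}$.

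\emph{Lower bound.} Consider the $t$-subsets $S\subseteq[r]$ containing no two consecutive indices; there are exactly $\binom{r-t+1}{t}$ of them. To each such $S$ associate $\mathbf{y}_S\in D_t(\mathbf{x})$, obtained by deleting the last symbol of run $j$ for every $j\in S$. The key step, and the only real obstacle, is showing that $S\mapsto\mathbf{y}_S$ is injective. The difficulty is that deleting a run of length $1$ may merge its neighbours, so run structure alone cannot identify $S$. I would instead argue via the first disagreement.

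Let $j_1=\min S$, and let $p$ be the last position of run $j_1$ in $\mathbf{x}$. Then $\mathbf{y}_S$ agrees with $\mathbf{x}$ on positions $1,\dots,p-1$. Moreover, $(\mathbf{y}_S)_p$ is the first symbol of run $j_1+1$, which equals $a_{j_1+1}\neq a_{j_1}=x_p$. This uses that run $j_1+1$ exists and is not in $S$ (non-adjacency). It also uses that run $j_1+1$ has at least one surviving symbol at its front: this holds because run $j_1+1$ is untouched, or, if $j_1=r$, because $\mathbf{y}_S$ simply ends earlier and the comparison is by length.

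So the first index at which $\mathbf{y}_S$ and $\mathbf{x}$ differ is the last position of the first chosen run, and from $\mathbf{y}_S$ alone we recover $j_1$. Removing the common prefix, we compare $\mathbf{x}_{[p+1,n]}$ with $(\mathbf{y}_S)_{[p,n-t]}$. This is the same situation for the runs $j_1+1,\dots,r$ and the set $S\setminus\{j_1\}$, which still has no two consecutive elements. Induction on $t$ then recovers all of $S$. Thus distinct $S$ give distinct $\mathbf{y}_S$, and $|D_t(\mathbf{x})|\ge\binom{r(\mathbf{x})-t+1}{t}$. When $r-t+1<t$ the binomial is $0$ and there is nothing to prove.
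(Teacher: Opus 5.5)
Your proof is correct. The paper gives no proof of this lemma: it is quoted from Levenshtein and only invoked in the proof of Lemma~\ref{Pro:lm11}, so there is no in-paper argument to compare against. What you give is the standard self-contained proof. For the upper bound, the run-profile vector $(k_1,\dots,k_r)$ surjects onto $D_t(\mathbf{x})$. For the lower bound, the non-adjacent $t$-subsets of runs inject into $D_t(\mathbf{x})$.

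Your first-disagreement injectivity argument is sound. It also correctly handles the case where removing a run of length one merges its neighbours, because non-adjacency guarantees that the first symbol of run $j_1+1$ survives.

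The one spot worth stating precisely is the boundary case $j_1=r$. Set $f(\mathbf{y})$ to be the first index $i\le|\mathbf{y}|$ with $y_i\neq x_i$, and $f(\mathbf{y})=|\mathbf{y}|+1$ if $\mathbf{y}$ is a prefix of $\mathbf{x}$. Since $\min S=r$ forces $S=\{r\}$, this case occurs only when one deletion remains, and then $f(\mathbf{y}_S)=n$, the last position of run $r$. Hence $f(\mathbf{y}_S)$ is the last position of run $\min S$ in every case, and your induction on $t$ goes through as described.

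Finally, the lower bound requires reading $\binom{m}{t}=0$ for every integer $m<t$, as you do. With the generalized binomial coefficient it fails: a constant $\mathbf{x}$ with $t=4$ would give $\binom{-2}{4}=5>1=|D_4(\mathbf{x})|$.
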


\begin{lemma}[Gabrys and Yaakobi \cite{Gabrys}]
\label{Pro:lm8}
Let $\mathbf{x}\in \Sigma_2^n$. For all $n\geq t$, if $r(\mathbf{x})\leq n-1$, then
\begin{align*}
|D_t(\mathbf{x})|\leq D_2(n-2,t)+D_2(n-2,t-1)+D_2(n-4,t-2).
\end{align*}
\end{lemma}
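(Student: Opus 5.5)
We prove Lemma~\ref{Pro:lm8} by strong induction on $n$ (and $t$), splitting $D_t(\mathbf{x})$ by first symbol as in Lemmas~\ref{Pro:lm1} and~\ref{Pro:lm2}. The hypothesis $r(\mathbf{x})\le n-1$ says $\mathbf{x}$ is not the alternating sequence, so it has at least one run of length $\ge 2$. By Lemma~\ref{Pro:lm4} we may take $x_1=0$. For $t=0$ the bound reads $1\le D_2(n-2,0)$, which is true. For $n=t$ both sides are at least $1$, and small base cases can be checked directly. Denote the right-hand side by $F(n,t)=D_2(n-2,t)+D_2(n-2,t-1)+D_2(n-4,t-2)$. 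Since $F$ is a linear combination of shifted $D_2$'s, Corollary~\ref{Pro:cor2} with $q=2$ gives the recurrence $F(n,t)=F(n-1,t)+F(n-2,t-1)$ for $n$ large relative to $t$.

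The key identity is $|D_t(\mathbf{x})|=|D_t(\mathbf{x})^0|+|D_t(\mathbf{x})^1|$. Here $|D_t(\mathbf{x})^0|=|D_t(x_2,\dots,x_n)|$. Also $|D_t(\mathbf{x})^1|=|D_{t-\ell}(x_{\ell+2},\dots,x_n)|$, where $\ell\ge1$ is the length of the first run. We distinguish two cases.

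\emph{Case 1: $\ell\ge 2$.} Then $D_t(\mathbf{x})^1$ has size at most $D_2(n-\ell-1,t-\ell)\le D_2(n-3,t-2)$, using the periodic maximum and Eq.~\eqref{Pro:eq3}. The tail $(x_2,\dots,x_n)$ has length $n-1$. If it still contains a run of length $\ge2$, induction bounds its ball by $F(n-1,t)$. Otherwise the tail is alternating, so $\mathbf{x}=0\,0\,1\,0\,1\cdots$ and $|D_t(\mathbf{x})|$ can be computed exactly; for that exceptional sequence we verify the bound by a direct computation. In the generic subcase we still need $F(n-1,t)+D_2(n-3,t-2)\le F(n,t)$, i.e.\ $D_2(n-3,t-2)\le F(n-2,t-1)$. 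This holds because $F(n-2,t-1)\ge D_2(n-4,t-1)+D_2(n-4,t-2)=D_2(n-3,t-1)\ge D_2(n-3,t-2)$, using recurrence~\eqref{def:eq1} for $q=2$ and monotonicity in $t$ for $t$ below half the length.

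\emph{Case 2: $\ell=1$, i.e.\ $x_1\ne x_2$.} Then the tail $(x_2,\dots,x_n)$ contains the long run, so induction gives $|D_t(\mathbf{x})^0|\le F(n-1,t)$. The second tail $(x_3,\dots,x_n)$ has length $n-2$. If it has a long run, induction gives $|D_{t-1}(x_3,\dots,x_n)|\le F(n-2,t-1)$, and the recurrence for $F$ closes the step. If instead the only long run sits at positions $2,3$, the second tail is alternating of length $n-2$. In that case $|D_t(\mathbf{x})^1|\le D_2(n-2,t-1)$, and we must exploit that the first tail is of the special form $0\,0\,1\,0\cdots$.

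\textbf{Main obstacle.} The real difficulty is the family of sequences with exactly one run of length $2$, located near an end, e.g.\ $\mathbf{x}=a\,a\,b\,a\,b\cdots$ or $b\,a\,a\,b\,a\cdots$. For these, induction on the tail loses the hypothesis and the crude estimates do not close. I would handle them by an explicit formula. If the doubled symbol is at position $j$, then $\mathbf{x}$ is obtained from an alternating sequence of length $n-1$ by duplicating one symbol. Its deletion ball can be written via Lemma~\ref{Pro:lm3} as a union over where the duplicate is deleted, giving $D_2(n-1,t-1)$ plus the new strings. Computing this yields $D_2(n-2,t)+D_2(n-2,t-1)+D_2(n-4,t-2)$ exactly when the pair is at the front, which is the extremal case, and something no larger otherwise, which we would show by moving the pair toward the end via Corollary~\ref{Pro:cor1}-type monotonicity. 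Verifying this identity with the $D_2$ recurrences is the step I expect to consume the most care.
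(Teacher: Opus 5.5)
The paper does not prove this lemma (it is quoted from Gabrys and Yaakobi), so I assess your argument on its own terms. Your skeleton is sound and in fact suffices: split by first symbol, take Case~1 ($\ell\ge2$) and Case~2 ($\ell=1$), and close with $F(n,t)=F(n-1,t)+F(n-2,t-1)$. However, several steps are wrong as written.

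\textbf{Base case.} The base case is false. Under the paper's convention $D_2(m,s)=0$ for $s>m$, the right-hand side is $0$ at $n=t$. At $n=t+1$ it equals $1$, while $\mathbf{x}=(0,0,1)$ with $t=2$ has a ball of size $2$. So the inequality (printed as ``for all $n\ge t$'') holds only for $n\ge t+2$. You should anchor the induction at $n=t+2$ and at $t\le1$:
\begin{itemize}
\item at $n=t+2$ the right-hand side is $4$ for $t\ge2$ and $2$ for $t=1$, which bounds the ball trivially;
\item for $t=1$, $|D_1(\mathbf{x})|=r(\mathbf{x})\le n-1=F(n,1)$.
\end{itemize}
This is exactly the anchoring the induction needs, since Corollary~\ref{Pro:cor2} gives the recurrence for $F$ only when $n\ge t+3$.

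\textbf{Case~1 inequality.} The identity $D_2(n-4,t-1)+D_2(n-4,t-2)=D_2(n-3,t-1)$ is not the $q=2$ recurrence. Monotonicity in $t$ also fails near the boundary: at $n=t+3$ you would need $D_2(t,t-1)\ge D_2(t,t-2)$, i.e.\ $2\ge 4$ for $t\ge4$. Instead, expand $D_2(n-3,t-2)$ twice with \eqref{def:eq1} and apply \eqref{Pro:eq3} three times. This gives $F(n-2,t-1)-D_2(n-3,t-2)=D_2(n-4,t-1)-D_2(n-7,t-4)\ge0$, valid for all $n\ge t+3$.

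\textbf{The ``main obstacle'' paragraph.} This rests on a false claim: the pair at the front is not extremal. For $\mathbf{x}=(0,0,1,0,1,\dots)$ the first-symbol split gives
$|D_t(\mathbf{x})|=D_2(n-1,t)+D_2(n-3,t-2)=F(n,t)-\bigl[D_2(n-4,t-2)-D_2(n-5,t-3)\bigr]$,
which is in general strictly below $F(n,t)$. The value $F(n,t)$ is attained by $(0,1,1,0,1,0,\dots)$, with the pair at positions $2,3$, and by its reversal. For instance, with $n=5$ and $t=2$, the balls of $(0,0,1,0,1)$ and $(0,1,1,0,1)$ have sizes $5$ and $6=F(5,2)$. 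So the ball size is not monotone in the position of the pair. Corollary~\ref{Pro:cor1}, which only compares a block with its periodic replacement, cannot supply such a monotonicity. Hence the proposed union computation via Lemma~\ref{Pro:lm3} plus ``moving the pair toward the end'' would not go through.

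\textbf{What actually closes the proof.} The detour is unnecessary. Your own case split leaves only two sequences outside the inductive hypothesis:
\begin{itemize}
\item $(0,0,1,0,1,\dots)$ in Case~1, handled by the display above;
\item $(0,1,1,0,1,\dots)$ in Case~2, where $|D_t(\mathbf{x})^0|=|D_t(1,1,0,1,0,\dots)|=D_2(n-2,t)+D_2(n-4,t-2)$ and $|D_t(\mathbf{x})^1|=D_2(n-2,t-1)$, which sum to exactly $F(n,t)$.
\end{itemize}
With these two one-line computations and the corrections above, your induction proves the lemma for all $n\ge t+2$.
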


\begin{lemma}[Wang, Li, and Fu \cite{Wang}]
\label{Pro:lm9}
Let $\mathbf{x}\in \Sigma_3^n$.  For $n\geq t+2$ and $\mathbf{x}\notin \mathbf{C}_3(n)$,
\begin{align*}
 |D_t(\mathbf{x})|\leq D_3(n-2,t)+D_3(n-2,t-1)+D_3(n-3,t-1)+D_3(n-3,t-2)+D_3(n-5,t-3).
\end{align*}
\end{lemma}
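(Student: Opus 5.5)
The plan is to prove the bound by induction on $n$, splitting each ball by its first symbol (Lemmas~\ref{Pro:lm1} and~\ref{Pro:lm2}). Write
\[
B(n,t)=D_3(n-2,t)+D_3(n-2,t-1)+D_3(n-3,t-1)+D_3(n-3,t-2)+D_3(n-5,t-3).
\]
By Corollary~\ref{Pro:cor2}, $B$ satisfies the ternary recurrence $B(n,t)=B(n-1,t)+B(n-2,t-1)+B(n-3,t-2)$ for $n$ large relative to $t$. So the induction closes whenever the first-symbol decomposition of $D_t(\mathbf{x})$ produces three pieces, each controlled by the correspondingly shifted $B$.

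For $\mathbf{x}\in\Sigma_3^n\setminus\mathbf{C}_3(n)$, let $i$ be the first index at which periodicity fails. This means either $x_{i}=x_{i-1}$, or $x_{i-2},x_{i-1}$ are distinct and $x_i=x_{i-2}$; equivalently, $\mathbf{x}_{[1,i-1]}$ is periodic but $\mathbf{x}_{[1,i]}$ is not. By Lemma~\ref{Pro:lm4} we may also reverse $\mathbf{x}$, so we may assume the defect sits in the left half.

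\emph{Inductive step ($i\ge 4$).} Here $x_1,x_2,x_3$ are distinct, so the first occurrences of the three symbols are at positions $1,2,3$. This gives
\[
|D_t(\mathbf{x})|=|D_t(\mathbf{x}_{[2,n]})|+|D_{t-1}(\mathbf{x}_{[3,n]})|+|D_{t-2}(\mathbf{x}_{[4,n]})|.
\]
When $i\ge 6$, each of the suffixes $\mathbf{x}_{[2,n]},\mathbf{x}_{[3,n]},\mathbf{x}_{[4,n]}$ still contains the defect, hence is non‑periodic. The induction hypothesis then bounds the three terms by $B(n-1,t)$, $B(n-2,t-1)$ and $B(n-3,t-2)$, and the recurrence finishes the step. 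When $i\in\{4,5\}$, some suffix may have become periodic, and those boundary configurations are folded into the base analysis below.

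\emph{Base configurations (defect within the first five positions).} I would enumerate these up to the permutation symmetry of Lemma~\ref{Pro:lm4}: $\mathbf{x}$ begins with $aa$, $aba$, $abcc$, $abcb$, $abcab b$ or $abcaa$, and so on. For each, I compute the first-symbol decomposition one or two levels deep and bound every resulting piece by the maximal ball $D_3(\cdot,\cdot)$, which is attained by periodic sequences. For example, take $\mathbf{x}=(a,a,\mathbf{y})$ with $|\mathbf{y}|=n-2$. Splitting by first symbol, the $a$-branch is $|D_t(a,\mathbf{y})|$, and the $b$- and $c$-branches delete at least two symbols. Expanding the $a$-branch once more gives terms of the form $D_3(n-2,t)$, $D_3(n-2,t-1)$, $D_3(n-3,t-1)$, and so on. I expect the worst case to be the configuration $(a,b,c,a,a,\dots)$ or its neighbour $(a,b,a,\dots)$. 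That case should produce exactly the five-term sum; the tail term $D_3(n-5,t-3)$ comes from the first-occurrence shift by $3$ after a $5$-symbol prefix.

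Where a suffix becomes periodic, I use $|D_s(\mathbf{z})|=D_3(|\mathbf{z}|,s)$ together with the monotonicity inequalities~\eqref{Pro:eq3},~\eqref{Pro:eq4} and Lemma~\ref{Pro:lm6}. These convert mismatched terms such as $D_3(n-4,t-1)$ into the required ones. Small $n$ close to $t+2$ are checked directly.

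The main obstacle is this finite but fiddly case analysis near the defect. The induction only works if every boundary configuration yields a bound of precisely the form $B(n,t)$, not merely something asymptotically equal to it. I would first identify the extremal configuration, most likely $\mathbf{x}=(a,b,c,a,a)\circ\mathbf{c}_3(n-5,\sigma)$ with a suitable periodic tail, and verify that its ball size equals $B(n,t)$. I would then show the other configurations are dominated by it, using the cyclic operation (Lemma~\ref{Pro:lm5} and Corollary~\ref{Pro:cor1}, with empty $\mathbf{c},\mathbf{d}$ side) to make the prefix and suffix around the defect periodic without decreasing the ball size.
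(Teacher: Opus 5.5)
The paper does not prove this lemma; it imports it from \cite{Wang}. So I am judging your argument on its own terms.

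Your skeleton is sound. $B$ satisfies $B(n,t)=B(n-1,t)+B(n-2,t-1)+B(n-3,t-2)$ for $n\ge t+3$ by Corollary~\ref{Pro:cor2}. When the first failure of periodicity is at $i\ge 6$, all three suffixes are non-periodic, so the induction closes.

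\textbf{The gap.} All the content sits in the base configurations, and your plan for them rests on a false premise. The sequence $(a,b,c,a,a)\circ\mathbf{c}_3(n-5,\sigma)$ is far from extremal. Following the branches $a\to b\to c$ down to the run gives, for every tail, $|D_t(\mathbf{x})|\le D_3(n,t)-\bigl(D_3(n-5,t-1)-D_3(n-7,t-3)\bigr)$. That is a loss of order $n^{t-1}$, whereas $D_3(n,t)-B(n,t)=D_3(n-4,t-2)-D_3(n-5,t-3)$ is only of order $n^{t-2}$. Already for $n=5$, $t=3$: $(a,b,c,a,a)$ has $6$ distinct subsequences of length $2$ and $(a,b,a,c,b)$ has $7$, while $B(5,3)=8$. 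So ``verify that this configuration attains $B(n,t)$, then show all others are dominated by it'' cannot be carried out. For the same reason, the term $D_3(n-5,t-3)$ does not arise the way you describe.

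\textbf{The missing observation.} $B(n,t)$ is exactly the two-level first-symbol bound for sequences with $x_1,x_2,x_3$ distinct and $x_4\in\{x_2,x_3\}$ (your case $i=4$). Write $\mathbf{x}=(a,b,c,x_4,\dots)$. The branches contribute at most:
\begin{itemize}
\item $ab$: $D_3(n-2,t)$;
\item $ac$: $D_3(n-3,t-1)$;
\item $aa$: $D_3(n-5,t-3)$, since the second $a$ is at position $\ge 5$;
\item $b$: $D_3(n-2,t-1)$;
\item $c$: $D_3(n-3,t-2)$.
\end{itemize}
These sum to $B(n,t)$ for every tail. Equality holds for $(a,b,c,b,a,c,b,a,c,\dots)$, which gives $8$ at $n=5$, $t=3$.

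This is precisely where your fallback breaks down, namely using the induction hypothesis on non-periodic suffixes and the exact value $D_3$ on periodic ones. For that extremal sequence it yields $B(n-1,t)+D_3(n-2,t-1)+D_3(n-3,t-2)>B(n,t)$, because $B(n-1,t)$ is too weak for $(b,c,b,\dots)$. At $i=4$ you must not invoke the hypothesis.

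\textbf{The remaining cases then close.}
\begin{itemize}
\item For $(a,a,\dots)$, the one-level crude bound $D_3(n-1,t)+D_3(n-3,t-2)+D_3(n-4,t-3)$ undercuts $B(n,t)$ by $D_3(n-3,t-1)+D_3(n-5,t-3)-D_3(n-6,t-4)-D_3(n-7,t-5)\ge 0$.
\item For $(a,b,a,\dots)$ and $(a,b,b,\dots)$, the one-level bound $D_3(n-1,t)+D_3(n-2,t-1)+D_3(n-4,t-3)$ undercuts $B(n,t)$ by $D_3(n-5,t-3)-D_3(n-7,t-5)\ge 0$.
\item For $i=5$ with $x_5=x_4$, all three suffixes are non-periodic, so the hypothesis applies directly.
\item For $i=5$ with $x_5=x_3$, use the hypothesis on $\mathbf{x}_{[2,n]}$, the one-level bound on $\mathbf{x}_{[3,n]}=(c,a,c,\dots)$, and $D_3(n-3,t-2)$ on $\mathbf{x}_{[4,n]}$. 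This leaves slack $D_3(n-8,t-5)-D_3(n-9,t-6)\ge 0$.
\end{itemize}

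Finally, drop the cyclic-operation step. Lemma~\ref{Pro:lm5} applied to a prefix that contains the defect can erase it; for example $(a,b,a)^c=(c,b,a)$. That leaves only the useless bound $D_3(n,t)$, and the step is not needed on the induction route.
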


We now derive the asymptotic expansions of \(D_q(n,t)\), which will be used repeatedly in the characterizations below. %For convenience, we adopt the convention that \(1/k! = 0\) for any integer \(k<0\) in the asymptotic expansion.

\begin{lemma}
\label{Pro:lm10}
Let $q\geq 4$. For $t\geq 3$ and sufficiently large $n$,
\begin{align*}
D_q(n,t)&=\frac{n^t}{t!}-\frac{1}{2}\cdot\frac{n^{t-1}}{(t-2)!}+\frac{(3t-1)}{24}\cdot\frac{n^{t-2}}{(t-3)!}+O(n^{t-3}),\\
D_3(n,t)&=\frac{n^t}{t!}-\frac{1}{2}\cdot\frac{n^{t-1}}{(t-2)!}+\frac{(3t-25)}{24}\cdot\frac{n^{t-2}}{(t-3)!}+O(n^{t-3}),\\
D_2(n,t)&=\frac{n^t}{t!}-\frac{3}{2}\cdot\frac{n^{t-1}}{(t-2)!}+\frac{(27t^2-55t+50)}{24}\cdot\frac{n^{t-2}}{(t-2)!}+O(n^{t-3}).
\end{align*}
When $t=2$ and $n\geq 2$,
\begin{align*}
D_q(n,2)=\frac{n^2}{2}-\frac{n}{2},\qquad D_3(n,2)=\frac{n^2}{2}-\frac{n}{2}, \qquad D_2(n,2)= \frac{n^2}{2}-\frac{3n}{2}+2.
\end{align*}
Moreover, for $q\geq t+2$ and $n\geq t$, we have $D_q(n,t)=D_{q-1}(n,t)$. For \(q\le t+1\), we have
\begin{align*}
D_q(n,t)-D_{q-1}(n,t)=\frac{n^{t-q+2}}{(t-q+1)!}+O(n^{t-q+1}),
\end{align*}
for sufficiently large $n$.
\end{lemma}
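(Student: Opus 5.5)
We will work directly from the closed form
\[
D_q(n,t)=\sum_{i=0}^{t}\binom{n-t}{i}D_{q-1}(t,t-i)
\]
in Eq.~\eqref{intr:eq2}. For fixed $t$ and large $n$ this is a polynomial in $n$. Only the terms with $i=t,t-1,t-2$ contribute to the orders $n^t,n^{t-1},n^{t-2}$. The coefficients needed are $D_{q-1}(t,0)=1$, $D_{q-1}(t,1)$ and $D_{q-1}(t,2)$, which count the distinct subsequences of length $0,1,2$ of a periodic $(q-1)$-ary word of length $t$.

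\emph{Leading coefficients.} A periodic word of length $t$ over $q-1$ symbols has $\min\{q-1,t\}$ distinct single symbols. Its distinct length-$2$ subsequences number $\min\{q-1,t\}^2$ once $t$ exceeds $q-1$, and are computed directly for small $t$.

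\emph{Expansion.} We expand
\[
\binom{n-t}{t}=\frac{n^t}{t!}-\frac{t^2}{t!}\,n^{t-1}+\cdots,\qquad \binom{n-t}{t-1}=\frac{n^{t-1}}{(t-1)!}-\cdots,
\]
and collect powers. For $q\ge4$ and $t\ge3$ we have $D_{q-1}(t,1)=t$, since $q-1\ge3$ symbols and $t\ge3$; precisely, it equals $\min\{q-1,t\}$. The coefficient of $n^{t-1}$ is
\[
-\frac{t^2}{t!}+\frac{t}{(t-1)!}-\cdots,
\]
which must reduce to $-\tfrac{1}{2(t-2)!}$. This is a check in which the third-order term of $\binom{n-t}{t}$ also enters. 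The general-$q$ formula suggests that the displayed expansion is really uniform once $q$ is large relative to the orders involved. I would therefore verify the three stated expansions by comparing with a cleaner route.

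\emph{Cleaner route via the recurrence.} For $q$ large the recurrence \eqref{def:eq1} becomes $D_q(n,t)=\sum_{j\ge0}D_q(n-1-j,t-j)$. Taking differences in $n$ gives $\Delta_n D_q(n,t)=\sum_{j\ge1}D_q(n-1-j,t-j)$. Here the $j=1$ term supplies the main order $n^{t-1}/(t-1)!$, and the remaining terms are of lower order. Solving this order by order, with the ansatz $D_q(n,t)=\frac{n^t}{t!}+\alpha_t\frac{n^{t-1}}{(t-2)!}+\beta_t\frac{n^{t-2}}{(t-3)!}+\cdots$, yields linear recursions in $t$ for $\alpha_t$ and $\beta_t$. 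These are anchored by the exact small values. The truncation of the recurrence at $j\le q-1$ only affects orders $n^{t-q+1}$ and below, so for $q\ge4$ the first three orders agree across all $q\ge4$. For $q=3$ the missing $j=3$ term changes $\beta_t$, which explains the shift from $3t-1$ to $3t-25$. For $q=2$ the truncation already affects the second order; there I would simply use $D_2(n,t)=\sum_i\binom{n-t}{i}$ directly.

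\emph{Case $t=2$.} Direct counting: a length-$2$ word has $\binom{n}{2}$ distinct subsequences if $q\ge3$, and the binary formula follows from $\sum_{i\le2}\binom{n-2}{i}$.

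\emph{Alphabet-difference statements.} If $q\ge t+2$, a periodic word over $q$ symbols keeps all its length-$(n-t)$ subsequences distinct in the same way as over $q-1$ symbols. Concretely, $D_{q-1}(t,t-i)=D_{q-2}(t,t-i)$ when $q-1\ge t+1$, since a periodic word of length $t$ over at least $t$ symbols has all symbols distinct. So the closed forms coincide by induction.

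For $q\le t+1$, subtracting the two closed forms gives
\[
D_q(n,t)-D_{q-1}(n,t)=\sum_i\binom{n-t}{i}\bigl[D_{q-1}(t,t-i)-D_{q-2}(t,t-i)\bigr].
\]
The bracket vanishes for $t-i\le q-3$, and at $t-i=q-2$ it equals $1$: the all-distinct subsequence becomes available with $q-1$ letters. This yields
\[
\binom{n-t}{t-q+2}=\frac{n^{t-q+2}}{(t-q+2)!}+\cdots.
\]
That does not match the stated $(t-q+1)!$, so the indexing of the first nonvanishing bracket needs checking; I expect a factor of $(t-q+2)$ counting positions to fix it.

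\emph{Main obstacle.} The main obstacle is exactly this bookkeeping: computing $D_{q-1}(t,k)-D_{q-2}(t,k)$ for the critical $k$, and the $\beta_t$ recursion for the three main expansions.
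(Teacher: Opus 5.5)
Your proposal does not establish the three expansions yet. The direct route rests on a misreading and a wrong expansion, and the recurrence route is never carried out.

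First, $D_{q-1}(t,k)$ is the maximal size of a $k$-deletion ball of a length-$t$ word. It therefore counts subsequences of length $t-k$, not of length $k$. The correct values are:
\begin{itemize}
\item $D_{q-1}(t,1)=t$ for all $q\ge3$, since a periodic word over at least two letters has $t$ runs. You do write this once, but your stated value $\min\{q-1,t\}$ is wrong.
\item $D_{q-1}(t,2)=\binom{t}{2}$ for $q\ge4$, since no pattern $aba$ occurs; also $D_2(t,2)=\binom t2-(t-2)$ and $D_1(t,2)=1$.
\end{itemize}
Your $\min\{q-1,t\}^2$ would make the coefficients depend on $q$.

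Second, the roots of $\binom{n-t}{t}$ are $t,t+1,\dots,2t-1$, so
\[
\binom{n-t}{t}=\frac{n^t}{t!}-\frac{t(3t-1)}{2\,t!}\,n^{t-1}+\frac{t(t-1)(27t^2-19t+2)}{24\,t!}\,n^{t-2}+O(n^{t-3}).
\]
It is not $\frac{n^t}{t!}-\frac{t^2}{t!}n^{t-1}+\cdots$; that version would make the $n^{t-1}$ coefficient vanish. With the correct data:
\begin{itemize}
\item The $n^{t-1}$ coefficient is $\frac{1}{(t-1)!}\bigl(t-\frac{3t-1}{2}\bigr)=-\frac{1}{2(t-2)!}$ at once. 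No third-order term enters.
\item Using $\binom{n-t}{t-1}=\frac{n^{t-1}}{(t-1)!}-\frac{(t-1)(3t-2)}{2(t-1)!}n^{t-2}+\cdots$, the $n^{t-2}$ coefficient is $\frac{3t^2-7t+2}{24(t-2)!}=\frac{3t-1}{24(t-3)!}$.
\item Replacing $\binom t2$ by $\binom t2-(t-2)$ subtracts $\frac{1}{(t-3)!}$, giving $3t-25$.
\item The binary weights $1,1,1$ give $\frac{27t^2-55t+50}{24(t-2)!}$.
\end{itemize}
This bookkeeping is exactly the paper's proof.

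Your recurrence ansatz never produces $\alpha_t$ or $\beta_t$. The claims that truncation at $j\le q-1$ leaves three orders intact for $q\ge4$, and shifts $\beta_t$ by exactly $1$ for $q=3$, are asserted, not derived. A different route along your lines does exist. For $q\ge t+1$, the closed form plus Vandermonde gives $D_q(n,t)=\binom nt$, whose expansion is precisely the $q\ge4$ formula. The difference statement then transfers it to $4\le q\le t$ and to $q=3$. But this route presupposes the difference statement and does not give the third binary coefficient.

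For the alphabet difference, you correctly locate the first non-vanishing bracket at $t-i=q-2$. Its value is $t-q+2$, not $1$, and that is exactly your missing factor. Apply the closed form once more:
\begin{itemize}
\item $D_{q-1}(t,q-2)=\sum_j\binom{t-q+2}{j}D_{q-2}(q-2,q-2-j)$;
\item $D_{q-2}(t,q-2)=\sum_j\binom{t-q+2}{j}D_{q-3}(q-2,q-2-j)$.
\end{itemize}
The inner values agree for $j=0$ (both equal $1$). They also agree for $j\ge2$: both equal $\binom{q-2}{j}$, since at most $q-4$ deletions are made over at least $q-3$ letters. They differ only at $j=1$, where $D_{q-2}(q-2,q-3)-D_{q-3}(q-2,q-3)=(q-2)-(q-3)=1$ counts distinct single letters. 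So the bracket is $\binom{t-q+2}{1}=t-q+2$, and $(t-q+2)\binom{n-t}{t-q+2}=\frac{n^{t-q+2}}{(t-q+1)!}+O(n^{t-q+1})$, as claimed.

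Your $q\ge t+2$ argument is fine. For $t=2$, the relevant count is the number of $2$-deletion results of a periodic length-$n$ word, not the number of length-$2$ subsequences. That count is $\binom n2$ for $q\ge3$, because all runs have length $1$ and no $aba$ occurs.
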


\begin{proof}
Since $D_q(n,t)=\sum\limits_{i=0}^{t}\binom{n-t}{i}D_{q-1}(t,t-i)$, we have 
\begin{align*}
 D_q(n,t)&=\binom{n-t}{t}+D_{q-1}(t,1)\cdot\binom{n-t}{t-1}+D_{q-1}(t,2)\cdot\binom{n-t}{t-2}+\sum\limits_{i=0}^{t-3}\binom{n-t}{i}D_{q-1}(t,t-i)\\
 &=\frac{\prod_{i=0}^{t-1}(n-t-i)}{t!}+\frac{t\cdot\prod_{i=0}^{t-2}(n-t-i)}{(t-1)!}+\frac{\binom{t}{2}\cdot\prod_{i=0}^{t-3}(n-t-i)}{(t-2)!}+O(n^{t-3})\\
 &=\frac{n^t-\frac{(3t-1)t}{2}\cdot n^{t-1}+\frac{t(t-1)(27t^2-19t+2)}{24}\cdot n^{t-2}+O(n^{t-3})}{t!}+\frac{t\cdot n^{t-1}-\frac{t(t-1)(3t-2)}{2}\cdot n^{t-2}+O(n^{t-3})}{(t-1)!}\\
 &~~~~~+\frac{\frac{t(t-1)}{2}\cdot n^{t-2}+O(n^{t-3})}{(t-2)!}+O(n^{t-3})\\
 &=\frac{n^t}{t!}-\frac{1}{2}\cdot\frac{n^{t-1}}{(t-2)!}+\frac{(3t-1)}{24}\cdot\frac{n^{t-2}}{(t-3)!}+O(n^{t-3}),
\end{align*}
where $D_{q-1}(t,2)=\binom{t}{2}$ for $t\geq3$ and $q\geq 4$. When $t=2$, $q\geq 4$, and $n\geq 2$, we have
$D_q(n,t)=\frac{n^2}{2}-\frac{n}{2}.$

Similarly, since $D_3(n,t)=\sum\limits_{i=0}^{t}\binom{n-t}{i}\sum\limits_{j=0}^{t-i}\binom{i}{j}$, it follows that
\begin{equation*}
D_3(n,t)=\binom{n-t}{t}+t\binom{n-t}{t-1}+D_2(t,2)\binom{n-t}{t-2}+O(n^{t-3})=\frac{n^t}{t!}-\frac{1}{2}\cdot\frac{n^{t-1}}{(t-2)!}+\frac{(3t-25)}{24}\cdot\frac{n^{t-2}}{(t-3)!}+O(n^{t-3}),
\end{equation*}
where $D_2(t,2)=\frac{t^2-3t+4}{2}$ for $t\geq3$.
When $t=2$ and $n\geq 2$, we have
$D_3(n,t)=\frac{n^2}{2}-\frac{n}{2}.$

Furthermore, since $D_2(n,t)=\sum\limits_{i=0}^{t}\binom{n-t}{i}$, we have 
\begin{equation*}
D_2(n,t)=\binom{n-t}{t}+\binom{n-t}{t-1}+\binom{n-t}{t-2}+O(n^{t-3})=\frac{n^t}{t!}-\frac{3}{2}\cdot\frac{n^{t-1}}{(t-2)!}+\frac{(27t^2-55t+50)}{24}\cdot\frac{n^{t-2}}{(t-2)!}+O(n^{t-3}),
\end{equation*}
for $t\geq2$.

Now suppose $q\geq t+1$, $k\geq 1$ and $n\geq t$. Since $t\le q-1\le q+k-1$, for any $0\le i\le t$ we have
$D_{q+k-1}(t,t-i)=D_{q-1}(t,t-i)=|D_{t-i}(\mathbf{x})|$,
where $\mathbf{x}=(0,1,\dots,t-1)$ is a sequence of distinct symbols. Because 
$D_q(n,t)=\sum\limits_{i=0}^{t}\binom{n-t}{i}D_{q-1}(t,t-i)$ and $D_{q+k}(n,t)=\sum\limits_{i=0}^{t}\binom{n-t}{i}D_{q+k-1}(t,t-i)$, it follows that $D_{q+k}(n,t)=D_q(n,t)$
for all $q\geq t+1$ and $k\geq 1$. Consequently, $D_q(n,t)-D_{q-1}(n,t)=0$ for $q\geq t+2$ and $n\geq t$.

Moreover, for $q\leq t+1$, we obtain
\begin{align*}
D_q(n,t)&-D_{q-1}(n,t)=\sum\limits_{i=0}^{t}\binom{n-t}{i}D_{q-1}(t,t-i)-\sum\limits_{i=0}^{t}\binom{n-t}{i}D_{q-2}(t,t-i)\\
&=\sum\limits_{i=t-q+3}^{t}\big(D_{q-1}(t,t-i)-D_{q-2}(t,t-i)\big)\cdot\binom{n-t}{i}+\big(D_{q-1}(t,q-2)-D_{q-2}(t,q-2)\big)\cdot\binom{n-t}{t-q+2}\\
&~~~~+\sum\limits_{i=0}^{t-q+1}\big(D_{q-1}(t,t-i)-D_{q-2}(t,t-i)\big)\cdot\binom{n-t}{i}\\
&\overset{(a)}{=}\big(D_{q-1}(t,q-2)-D_{q-2}(t,q-2)\big)\frac{n^{t-q+2}}{(t-q+2)!}+O(n^{t-q+1})\\
&\overset{(b)}{=}\frac{n^{t-q+2}}{(t-q+1)!}+O(n^{t-q+1}),
\end{align*}
where $(a)$ holds for sufficiently large $n$ since $D_{q-1}(t,t-i)-D_{q-2}(t,t-i)=0$ for $i\in [t-q+3,t]$, and $(b)$ follows from the fact that $D_{q-1}(t,q-2) - D_{q-2}(t,q-2) = t-q+2$, which can be seen from the following expressions:
\[
D_{q-1}(t,q-2) = \sum_{i=0}^{q-2} \binom{t-q+2}{i} D_{q-2}(q-2, q-2-i),\quad
D_{q-2}(t,q-2) = \sum_{i=0}^{q-2} \binom{t-q+2}{i} D_{q-3}(q-2, q-2-i),
\]
together with the boundary values $D_{q-2}(q-2,q-2)=D_{q-3}(q-2,q-2)=1$, $D_{q-2}(q-2,q-3)=q-2$, $D_{q-3}(q-2,q-3)=q-3$, and $D_{q-2}(q-2,q-2-i)=D_{q-3}(q-2,q-2-i)$ for $i\in[2,q-2]$.
\end{proof}

With the expansions and auxiliary bounds established above, we now characterize sequences whose deletion ball size attains the asymptotic value of $D_q(n,t)$. The following lemma summarizes the results.

\begin{lemma}
\label{Pro:lm11}
Let $q\geq 2$, $t\geq 1$, and $\mathbf{x}\in \Sigma_q^n$, and let $n$ be sufficiently large. 
\begin{enumerate}
\item If $|D_t(\mathbf{x})|$ has the same leading term as $D_q(n,t)$, i.e., $|D_t(\mathbf{x})| = \frac{n^t}{t!} - O(n^{t-1})$, then $r(\mathbf{x}) = n -O(1)$.
\item If $\mathbf{x}\in \Sigma_2^n$ and $|D_t(\mathbf{x})|$ matches the first two  terms of $D_2(n,t)$, i.e., $|D_t(\mathbf{x})| = \frac{n^t}{t!} - \frac{3}{2}\frac{n^{t-1}}{(t-2)!}+O(n^{t-2})$, then $\mathbf{x}\in \mathbf{C}_2(n)$.
\item If $\mathbf{x}\in \Sigma_3^n$ and $|D_t(\mathbf{x})|$ matches the first three  terms of $D_3(n,t)$, i.e., $|D_t(\mathbf{x})| = \frac{n^t}{t!} - \frac{1}{2}\frac{n^{t-1}}{(t-2)!} +\frac{(3t-25)}{24}\cdot\frac{n^{t-2}}{(t-3)!}+O(n^{t-3})$, then $\mathbf{x}\in \mathbf{C}_3(n)$.
\item If $\mathbf{x}\notin \mathbf{C}_q(n)$, then $D_q(n,t)-|D_t(\mathbf{x})|\geq n^{t-q+1}+O(n^{t-q})$. That is, if  $|D_t(\mathbf{x})|$ matches the first $q$  terms of $D_q(n,t)$, then $\mathbf{x}\in \mathbf{C}_q(n)$.
\end{enumerate}
\end{lemma}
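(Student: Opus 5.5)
Part 1 follows from Lemma~\ref{Pro:lm7}. The upper bound there gives $|D_t(\mathbf{x})|\le\binom{r(\mathbf{x})+t-1}{t}=\frac{r(\mathbf{x})^t}{t!}+O(r(\mathbf{x})^{t-1})$. Suppose $r(\mathbf{x})=n-f(n)$ with $f(n)\to\infty$ along some subsequence. Then $\binom{r+t-1}{t}\le\frac{n^t}{t!}-\frac{f(n)n^{t-1}}{(t-1)!}+O(n^{t-1})$, so the deficit from $\frac{n^t}{t!}$ is not $O(n^{t-1})$. This contradicts the hypothesis, so $f(n)=O(1)$.

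Parts 2 and 3 use the two non-periodic bounds. For part 2, if $\mathbf{x}\notin\mathbf{C}_2(n)$ then $r(\mathbf{x})\le n-1$. Lemma~\ref{Pro:lm8} gives $|D_t(\mathbf{x})|\le D_2(n-2,t)+D_2(n-2,t-1)+D_2(n-4,t-2)$. Expand each term with Lemma~\ref{Pro:lm10}, substituting $n-a$ for $n$ and using $(n-a)^t=n^t-atn^{t-1}+\cdots$. This should give $\frac{n^t}{t!}-\bigl(\frac32+1\bigr)\frac{n^{t-1}}{(t-2)!}\cdot c+\dots$, with the $n^{t-1}$ coefficient strictly more negative than $-\frac{3}{2(t-2)!}$. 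The reason is that the first term alone loses $\frac{2n^{t-1}}{(t-1)!}$ and the second term only adds back $\frac{n^{t-1}}{(t-1)!}$. That contradicts the hypothesis. Part 3 is identical in shape: apply Lemma~\ref{Pro:lm9} and expand to order $n^{t-2}$ with the $D_3$ expansion of Lemma~\ref{Pro:lm10}. The $n^t$ and $n^{t-1}$ coefficients should match, and the $n^{t-2}$ coefficient should fall short by a positive constant, which I expect to be $1/(t-2)!$, consistent with part 4 for $q=3$. This is a routine coefficient check.

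Part 4 for general $q$ is the main work, and I would do it by induction on $n$ via the first-symbol decomposition. Write $\mathbf{x}=(x_1,\dots,x_n)$. By Lemmas~\ref{Pro:lm1} and \ref{Pro:lm2},
\[
|D_t(\mathbf{x})|=\sum_{a\in\Sigma_q}|D_{t-\ell_a}(x_{\ell_a+2},\dots,x_n)|,
\]
where $\ell_a+1$ is the first occurrence of $a$. Order the distinct first occurrences $\ell_{(1)}<\ell_{(2)}<\cdots$. Since $\ell_{(j)}\ge j-1$ and $D_q$ is monotone (Eqs.~\eqref{Pro:eq3}, \eqref{Pro:eq4}), each term is at most $D_q(n-j,t-j+1)$. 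Summed over $j\in[q]$, the bounds equal $D_q(n,t)$ by the recurrence \eqref{def:eq1}. Equality in every term forces the first $q$ symbols to be distinct, and each suffix $\mathbf{x}_{[j+1,n]}$ to be nearly extremal.

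If $\mathbf{x}\notin\mathbf{C}_q(n)$, take the first position $p$ breaking periodicity. One of two things happens. Either $p\le q$, so some symbol is missing from the first $q$ positions or has a gap. Then the loss is at least $D_q(n-q,t-q+1)-D_q(n-q-1,t-q)$, or a term $D_q(n-q,t-q+1)$ disappears entirely when a symbol is absent; either way the loss is $\gtrsim n^{t-q+1}/(t-q+1)!$. Or $p>q$, and the defect lies in each suffix $\mathbf{x}_{[j+1,n]}$. The induction hypothesis applied to the suffix starting at the symbol at position $p-q$ then gives a loss of order $(n-j)^{t-j+1-q+1}$. This is worse unless $j=1$, and the case $j=1$ gives exactly $n^{t-q+1}+O(n^{t-q})$.

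The obstacle is making the leading constant exactly $1$, that is $n^{t-q+1}$ rather than $n^{t-q+1}/(t-q+1)!$. For this I would do the exact-difference bookkeeping with Lemma~\ref{Pro:lm6}, comparing against the periodic sequence one term at a time. I would also use Lemma~\ref{Pro:lm10}, which identifies the $n^{t-q+1}$ scale as the first place where the $q$-th symbol matters, together with the base cases $q=2,3$ from parts 2 and 3. If the constant cannot be pinned down, I would fall back on the weaker bound $\Omega(n^{t-q+1})$, which already suffices for the "that is" conclusion.
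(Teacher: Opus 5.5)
Parts 1--3 follow the paper (Lemma~\ref{Pro:lm7}; Lemmas~\ref{Pro:lm8} and~\ref{Pro:lm9} expanded via Lemma~\ref{Pro:lm10}), and your predicted shortfalls are right. In part 2 the bound is $|D_t(\mathbf{x})|\le D_2(n,t)-\frac{n^{t-1}}{(t-1)!}+O(n^{t-2})$; your display there is garbled, but the verbal count is correct. In part 3 the shortfall is $\frac{n^{t-2}}{(t-2)!}$ at order $n^{t-2}$.

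Part 4 has two real problems.

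First, the constant $1$ you want to pin down does not exist: for $t\ge q+1$ the first sentence of item 4 is false. Take $\mathbf{x}=(q-2,\mathbf{a}_{n-1})$. The first-symbol decomposition gives exactly
\[
D_q(n,t)-|D_t(\mathbf{x})|=D_q(n-q,t-q+1)-D_q(n-q-1,t-q)=\frac{n^{t-q+1}}{(t-q+1)!}+O(n^{t-q}).
\]
For $q=3$ this equals your part-3 shortfall, so Lemma~\ref{Pro:lm9} is sharp there. Your remark that part 3 is ``consistent with part 4'' is therefore backwards: it refutes the constant $1$ once $t\ge 4$, and the case $q=2$ refutes it for $t\ge 3$. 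The paper's proof obtains $n^{t-1-a}$ in step (a) of its base case only by dropping the factor $1/(t-1-a)!$.

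Moreover, for $q\ge t+2$ the item fails outright. The sequence $(0,1,\dots,t,0,1,\dots,t,\dots)$ attains $D_{t+1}(n,t)=D_q(n,t)$ (Lemma~\ref{Pro:lm10}) without lying in $\mathbf{C}_q(n)$. So no bookkeeping with Lemma~\ref{Pro:lm6} can succeed. The correct target is your fallback, $D_q(n,t)-|D_t(\mathbf{x})|\ge c\,n^{t-q+1}$ with $c>0$ for $t\ge q-1$, which is all the ``that is'' clause needs.

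Second, the induction for $p>q$ does not close, so it does not even give the fallback. Put $\delta(\mathbf{z})=D_q(|\mathbf{z}|,t)-|D_t(\mathbf{z})|$. You use only the $x_1$-term, $\delta(\mathbf{x})\ge\delta(\mathbf{x}_{[2,n]})$. A hypothesis $\delta\ge c(n-1)^{t-q+1}-Cn^{t-q}$ at length $n-1$ does not return $c\,n^{t-q+1}-Cn^{t-q}$ at length $n$ with the same $C$; the hidden constant grows at every one of up to $n$ steps. Unrolled honestly, the argument yields only about $(n-p)^{t-q+1}/(t-q+1)!$. That is $O(1)$ when the first break $p$ is near the right end, e.g.\ $\mathbf{x}=(\mathbf{a}_{n-1},b)$ with a wrong last symbol $b$.

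You can repair this with your own ingredients:
\begin{itemize}
\item For every $\mathbf{x}$ with $n\ge t+1$, the symbols other than $x_1$ have distinct first occurrences at positions $\ge 2$. So Eqs.~\eqref{Pro:eq3} and~\eqref{def:eq1} give the exact inequality $\delta(\mathbf{x})\ge\delta(\mathbf{x}_{[2,n]})$.
\item Let $p$ be minimal with $\mathbf{x}_{[1,p]}\notin\mathbf{C}_q(p)$. Then $x_p=x_i$ for some $i\in[\max\{1,p-q+1\},p-1]$.
\item Iterate down to the suffix $\mathbf{z}$ starting at $\max\{1,p-q+1\}$. It has length $m\ge n-p+1$ and a repeat among its first $q$ symbols. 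Your $p\le q$ bound then gives $\delta(\mathbf{z})\ge D_q(m-q,t-q+1)-D_q(m-q-1,t-q)$.
\item In $\mathbf{x}^R$ the same repeat sits at positions $n+1-p<n+1-i$, so the analogous suffix of $\mathbf{x}^R$ has length $\ge i\ge p-q+1$.
\item By Lemma~\ref{Pro:lm4} you can work from whichever end gives $m\ge(n-q)/2$. Hence $\delta(\mathbf{x})\ge\frac{((n-q)/2)^{t-q+1}}{(t-q+1)!}+O(n^{t-q})$.
\end{itemize}
This route needs neither an asymptotic induction nor the cyclic-operation reduction (Corollary~\ref{Pro:cor1}) on which the paper's proof is built. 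The paper's induction on $i$ passes an unquantified $O(n^{t-q})$ through up to $n$ steps exactly as yours does, so the repair improves on it too.
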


\begin{proof}
We prove the four statements in order.

\emph{1)} By Lemma \ref{Pro:lm7}, it follows that $\binom{r(\mathbf{x})-t+1}{t}\leq \left|D_t(\mathbf{x})\right|\leq \binom{r(\mathbf{x})+t-1}{t},$ 
which yields the asymptotic expansion
\begin{align}
|D_t(\mathbf{x})|=\frac{r(\mathbf{x})^t}{t!}-O(r(\mathbf{x})^{t-1}). \label{Pro:eq5}
\end{align}
If $|D_t(\mathbf{x})| = \frac{n^t}{t!} - O(n^{t-1})$, then from Eq. \eqref{Pro:eq5} we obtain $r(\mathbf{x})=n-O(1)$.

\emph{2)} If $\mathbf{x}\in \Sigma_2^n$ and $r(\mathbf{x})\leq n-1$, then by Lemmas \ref{Pro:lm8} and \ref{Pro:lm10} we have
\begin{align*}
|D_t(\mathbf{x})|&\leq D_2(n-2,t)+D_2(n-2,t-1)+D_2(n-4,t-2)=\frac{(n-2)^t}{t!}-\frac{3}{2}\cdot\frac{(n-2)^{t-1}}{(t-2)!}+\frac{(n-2)^{t-1}}{(t-1)!}+O(n^{t-2})\\
&=\frac{n^t}{t!}-2t\cdot \frac{n^{t-1}}{t!}-\frac{3}{2}\cdot\frac{n^{t-1}}{(t-2)!}+\frac{n^{t-1}}{(t-1)!}+O(n^{t-2})=\frac{n^t}{t!}-\frac{3t-1}{2}\cdot\frac{n^{t-1}}{(t-1)!}+O(n^{t-2})\\
&\leq \frac{n^t}{t!}-\frac{3}{2}\cdot\frac{n^{t-1}}{(t-1)!}+O(n^{t-2}),
\end{align*}
for $t\geq 2$. Thus, if $\mathbf{x}\in \Sigma_2^n$ and $|D_t(\mathbf{x})|$ matches the first two  terms of $D_2(n,t)$, then $r(\mathbf{x})=n$, which implies that $\mathbf{x} \in \mathbf{C}_2(n)$.

\emph{3)} If $\mathbf{x}\in \Sigma_3^n$ and $\mathbf{x}\notin \mathbf{C}_3(n)$, then by Lemmas \ref{Pro:lm9} and \ref{Pro:lm10} we obtain that
\begin{align*}
|D_t(\mathbf{x})|&\leq D_3(n-2,t)+D_3(n-2,t-1)+D_3(n-3,t-1)+D_3(n-3,t-2)+D_3(n-5,t-3)\\
&=\frac{(n-2)^t}{t!}-\frac{1}{2}\cdot\frac{(n-2)^{t-1}}{(t-2)!}+\frac{3t-25}{24}\cdot\frac{(n-2)^{t-2}}{(t-3)!}+\frac{(n-2)^{t-1}}{(t-1)!}-\frac{1}{2}\cdot\frac{(n-2)^{t-2}}{(t-3)!}+\frac{(n-3)^{t-1}}{(t-1)!}-\frac{1}{2}\cdot\frac{(n-3)^{t-2}}{(t-3)!}\\
&~~+\frac{(n-3)^{t-2}}{(t-2)!}+O(n^{t-3})\\
&=\frac{n^t}{t!}-\frac{t-1}{2}\cdot\frac{n^{t-1}}{(t-1)!}+\frac{3t^2-31t+26}{24}\cdot\frac{n^{t-2}}{(t-2)!}+O(n^{t-3})\\
&\leq \frac{n^t}{t!}-\frac{t-1}{2}\cdot\frac{n^{t-1}}{(t-1)!}+\frac{3t^2-31t+50}{24}\cdot\frac{n^{t-2}}{(t-2)!}+O(n^{t-3})\\
&= \frac{n^t}{t!}-\frac{1}{2}\cdot\frac{n^{t-1}}{(t-2)!}+\frac{(3t-25)}{24}\cdot\frac{n^{t-2}}{(t-3)!}+O(n^{t-3}),
\end{align*}
for $t\geq 2$. Therefore, if $|D_t(\mathbf{x})|$ matches the first three  terms of $D_3(n,t)$, then $\mathbf{x} \in \mathbf{C}_3(n)$.

\emph{4)}  Consider $\mathbf{x}\notin \mathbf{C}_q(n)$. Then there exists some index $i\in [1,n-1]$ at which the sequence breaks periodicity, i.e., $\mathbf{x}_{[i,n]}\notin \mathbf{C}_{q}(n-i+1)$. By Corollary \ref{Pro:cor1}, we have
\begin{align*}
|D_t(\mathbf{x})|\leq |D_t(\mathbf{x}_{[1,i]}^{c},\mathbf{x}_{[i+1,n]}^{cR})|,
\end{align*}
where \(\mathbf{x}_{[1,i]}^{c}=\mathbf{c}_q(i,\sigma)\) and \(\mathbf{x}_{[i+1,n]}^{cR}=\mathbf{c}_q(n-i,\hat{\sigma})\), with \(\sigma=(\sigma_0,\dots,\sigma_{q-1})\) and \(\hat{\sigma}=(\hat{\sigma}_0,\dots,\hat{\sigma}_{q-1})\) being orderings of \(\Sigma_q\) such that the last element of \(\mathbf{c}_q(i,\sigma)\) is \(x_i\) and the first element of \(\mathbf{c}_q(n-i,\hat{\sigma})\) is \(x_{i+1}\). For convenience, define \(\mathbf{y}_{(i)}=(\mathbf{c}_q(i,\sigma),\mathbf{c}_q(n-i,\hat{\sigma}))\) for \(i\in[1,n-1]\).

We prove by induction on \(i\) that for all \(i \in [1,n-1]\),
\begin{align}
D_q(n,t)-|D_t(\mathbf{y}_{(i)})|\geq n^{t-q+1}+O(n^{t-q}).\label{Pro:eq6}
\end{align}

\emph{Base case.} 
For \(i = 1\), let \(\mathbf{y}_{(1)} = (\mathbf{c}_q(1,\sigma),\mathbf{c}_q(n-1,\hat{\sigma})) = (a,\mathbf{c}_q(n-1,\hat{\sigma}))\) with $\sigma_0=\hat{\sigma}_a$ for some $a\in[q-2]\cup\{0\}$. Then, by Lemmas~\ref{Pro:lm1} and~\ref{Pro:lm2},
\[
\begin{aligned}
|D_t(\mathbf{y}_{(1)})| &= |D_t(\mathbf{y}_{(1)})^{\hat{\sigma}_a}|+\sum_{j\in \Sigma_q\setminus\{\hat{\sigma}_a\}} |D_t(\mathbf{y}_{(1)})^{j}|\\
&= D_q(n-1,t) +D_q(n-2,t-1)+\cdots + D_q(n-q-1,t-q)-D_q(n-2-a,t-1-a) \\
&\overset{(a)}{\leq} D_q(n,t)-n^{t-1-a}+O(n^{t-2-a}) \overset{(b)}{\leq} D_q(n,t)-n^{t-q+1}+O(n^{t-q}),
\end{aligned}
\]
where $(a)$ follows from Eqs. \eqref{intr:eq2} and \eqref{def:eq1} with $0\leq a\leq q-2$,  $(b)$ follows from $0\leq a\leq q-2$. Thus, the base case \(i = 1\) satisfies Eq. \eqref{Pro:eq6}.

\emph{Inductive step.} Assume that Eq. \eqref{Pro:eq6} holds for every \(i\) with \(1\le i\le s\), where \(s\le n-2\). We prove it for \(i = s+1\). For convenience, write
\(
\mathbf{y}_{(s+1)} = (\mathbf{c}_q(s+1,\sigma),\mathbf{c}_q(n-s-1,\hat{\sigma})) = (a_1,a_2,a_3,\dots,a_{s+1},\mathbf{c}_q(n-s-1,\hat{\sigma})).
\)

If $s+1\leq q-1$, then applying Lemmas~\ref{Pro:lm1} and~\ref{Pro:lm2},
\begin{align*}
|D_t(\mathbf{y}_{(s+1)})| &= \sum_{j\in[s+1]} |D_t(\mathbf{y}_{(s+1)})^{a_j}|+\sum_{i\in \Sigma_q\setminus\{a_k|k\in[s+1]\}} |D_t(\mathbf{y}_{(s+1)})^{i}| \\
&\overset{(a)}{\leq} \sum_{j=0}^{s-1} \big(D_q(n-1-j,t-j)-(n-1-j)^{t-j-q+1}+O((n-1-j)^{t-j-q})\big)+D_q(n-s-1,t-s)\\
&+\sum_{i\in \Sigma_q\setminus\{a_k|k\in[s+1]\}} |D_t(\mathbf{y}_{(s+1)})^{i}| \\
&\overset{(b)}{\leq} \sum_{j=0}^{s-1} \big(D_q(n-1-j,t-j)-(n-1-j)^{t-j-q+1}+O((n-1-j)^{t-j-q})\big)+D_q(n-s-1,t-s)\\
&+\sum_{i\in [s+2,q]} D_q(n-i,t-i+1) \\
& \overset{(c)}{\leq} D_q(n,t)-n^{t-q+1}+O(n^{t-q}),
\end{align*}
where $(a)$ follows from the induction hypothesis, $(b)$ follows from $i\notin \{a_k|k\in[s+1]\}$, $(c)$ follows from Eq. \eqref{def:eq1}. Similarly, if $s+1\geq q$, then we also have the above result.  

By induction, for every \(i \in [n-1]\), we have
\(
|D_t(\mathbf{y}_{(i)})|\leq D_q(n,t)-n^{t-q+1}+O(n^{t-q}).
\)

Now, if \(\mathbf{x} \notin \mathbf{C}_q(n)\), then
\(
|D_t(\mathbf{x})| \le \max_{1 \le i \le n-1} |D_t(\mathbf{y}_{(i)})|\leq D_q(n,t)-n^{t-q+1}+O(n^{t-q}).
\)
Therefore, if \(|D_t(\mathbf{x})|\) has the first $q$  terms of \(D_q(n,t)\), then we have $\mathbf{x}\in \mathbf{C}_q(n)$.
\end{proof}

\section{Lower Bounds on $N_q(n,2,t)$}
\label{sec4}

In this section, we explicitly construct pairs of $q$-ary sequences whose deletion balls have large intersection, thereby establishing a lower bound on $N_q(n,2,t)$. 
Specifically, for $q \ge 3$ and $n\geq q+3$, we consider the sequences
\begin{equation}\label{Low:eq1}
\mathbf{x}^{(0)} = (0,1,q-1,0,1,\mathbf{w}^{(0)}), \qquad
\mathbf{y}^{(0)} = (1,0,q-1,1,0,\mathbf{w}^{(0)}),
\end{equation}
where $\mathbf{w}^{(0)} = (w_1,\dots,w_{n-5})$ has entries satisfying $w_j \equiv j+1 \pmod q$ for $j \in [n-5]$. For $i \in [1, q-3]$,  we also consider the sequences
\begin{equation}\label{Low:eq2}
\mathbf{x}^{(i)} = (q-2-i+1,\dots,q-2,0,1,q-1,0,1,\mathbf{w}^{(i)}), \qquad
\mathbf{y}^{(i)} = (q-2-i+1,\dots,q-2,1,0,q-1,1,0,\mathbf{w}^{(i)}),
\end{equation}
where $\mathbf{w}^{(i)} = (w_1,\dots,w_{n-5-i})$ has entries satisfying $w_j \equiv j+1 \pmod q$ for $j \in [n-5-i]$. Here, the index $i$ (with $0 \le i \le q-3$) denotes the length of the common prefix of $\mathbf{x}^{(i)}$ and $\mathbf{y}^{(i)}$. When \(q=3\), the construction reduces to \(\mathbf{x}^{(0)}\) and \(\mathbf{y}^{(0)}\), which is precisely the construction in \cite{Wang}, with the convention that the set \([1,q-3]\) is empty. We will show that $d_L(\mathbf{x}^{(i)},\mathbf{y}^{(i)}) = 2$ and compute $|D_t(\mathbf{x}^{(i)}) \cap D_t(\mathbf{y}^{(i)})|$ for $0\leq i\leq q-3$, which yields a lower bound on \(N_q(n,2,t)\).

To compute the size of this intersection and verify the distance condition, we rely on combinatorial decompositions in Lemmas \ref{Pro:lm1} and \ref{Pro:lm2}. We first confirm that the constructed sequences have Levenshtein distance exactly $2$.

\begin{lemma}
Let $q \ge 3$ and $n\geq q+3$. For every $0\leq i\leq q-3$, the sequences \(\mathbf{x}^{(i)},\mathbf{y}^{(i)}\) defined in Eqs.~(\ref{Low:eq1}) and~(\ref{Low:eq2}) satisfy \(d_L(\mathbf{x}^{(i)},\mathbf{y}^{(i)})=2\).
\label{Low:lm1}
\end{lemma}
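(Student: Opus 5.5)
Since $\mathbf{x}^{(i)}$ and $\mathbf{y}^{(i)}$ have the same length $n$, we use the characterization from the proof of Lemma~\ref{Pro:lm4}:
\[
d_L(\mathbf{x}^{(i)},\mathbf{y}^{(i)}) = n - \mathrm{LCS}(\mathbf{x}^{(i)},\mathbf{y}^{(i)}),
\]
where $\mathrm{LCS}$ denotes the length of a longest common subsequence. It therefore suffices to show that this length is exactly $n-2$. Both sequences have the form $(\mathbf{p},\mathbf{c},\mathbf{w}^{(i)})$ and $(\mathbf{p},\mathbf{d},\mathbf{w}^{(i)})$, with common prefix $\mathbf{p}=(q-1-i,\dots,q-2)$ of length $i$, $\mathbf{c}=(0,1,q-1,0,1)$ and $\mathbf{d}=(1,0,q-1,1,0)$. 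The proof has two steps.

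\emph{Upper bound, $d_L\le 2$.} We exhibit a common subsequence of length $n-2$. Delete the first and fourth symbols of the middle block: $\mathbf{c}$ becomes $(1,q-1,1)$ and $\mathbf{d}$ becomes $(0,q-1,0)$. These do not match, so this choice fails. Instead delete $c_1=0$ and $c_5=1$ from $\mathbf{c}$, leaving $(1,q-1,0)$. From $\mathbf{d}$ delete $d_2=0$ and $d_4=1$, leaving $(1,q-1,0)$. Keeping $\mathbf{p}$ and $\mathbf{w}^{(i)}$ intact gives a common subsequence of length $n-2$, so $d_L\le 2$.

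\emph{Lower bound, $d_L\ge 2$.} We show there is no common subsequence of length $n-1$, i.e. that $\mathbf{x}^{(i)}$ and $\mathbf{y}^{(i)}$ do not share a $1$-deletion descendant. Suppose $\mathbf{z}\in D_1(\mathbf{x}^{(i)})\cap D_1(\mathbf{y}^{(i)})$. Then $\mathbf{x}^{(i)}$ and $\mathbf{y}^{(i)}$ are each obtained from $\mathbf{z}$ by inserting one symbol.

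We first remove the common parts. Apply Lemma~\ref{Pro:lm3} with $t=s=1$:
\[
D_1(\mathbf{x}^{(i)})\cap D_1(\mathbf{y}^{(i)}) = \bigcup_{a+b\le 1} D_a(\mathbf{p})\circ\bigl(D_{1-a-b}(\mathbf{c})\cap D_{1-a-b}(\mathbf{d})\bigr)\circ D_b(\mathbf{w}^{(i)}).
\]
\begin{itemize}
\item If $a+b=1$, the middle factor is $D_0(\mathbf{c})\cap D_0(\mathbf{d})=\{\mathbf{c}\}\cap\{\mathbf{d}\}=\emptyset$, since $\mathbf{c}\neq\mathbf{d}$.
\item If $a=b=0$, the only contribution is $\mathbf{p}\circ\bigl(D_1(\mathbf{c})\cap D_1(\mathbf{d})\bigr)\circ\mathbf{w}^{(i)}$.
\end{itemize}
So it suffices to check $D_1(\mathbf{c})\cap D_1(\mathbf{d})=\emptyset$, which is a finite check. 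Deleting one symbol from $\mathbf{c}=(0,1,q-1,0,1)$ gives the following sequences, each starting with $0$ or $1$:
\[
(1,q-1,0,1),\quad (0,q-1,0,1),\quad (0,1,0,1),\quad (0,1,q-1,1),\quad (0,1,q-1,0).
\]
A quick invariant separates them from $D_1(\mathbf{d})$: the symbol $q-1$ occurs exactly once in both $\mathbf{c}$ and $\mathbf{d}$. In $\mathbf{c}$ it is preceded by $(0,1)$ and followed by $(0,1)$; in $\mathbf{d}$ it is preceded by $(1,0)$ and followed by $(1,0)$.
\begin{itemize}
\item If $q-1$ is deleted, we compare $(0,1,0,1)$ with $(1,0,1,0)$, which differ.
\item Otherwise the part before $q-1$ must agree after at most one deletion on the whole sequence. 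Both sides need the same prefix length and the same suffix length. The prefixes $(0,1)$ and $(1,0)$ are distinct, and so are the suffixes. Hence a deletion is needed on the prefix side and another on the suffix side, which is two deletions, a contradiction.
\end{itemize}
Hence $D_1(\mathbf{c})\cap D_1(\mathbf{d})=\emptyset$, and therefore $d_L(\mathbf{x}^{(i)},\mathbf{y}^{(i)})\ge 2$.

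The only step requiring care is applying Lemma~\ref{Pro:lm3} correctly. We must make sure the decomposition captures all common subsequences, which the lemma guarantees. Nothing in the argument depends on $n\ge q+3$ beyond $\mathbf{w}^{(i)}$ being well defined.
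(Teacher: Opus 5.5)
Your proof is correct and follows the same route as the paper. Both arguments strip the common prefix $(q-1-i,\dots,q-2)$ and the common suffix $\mathbf{w}^{(i)}$, then directly compare the blocks $(0,1,q-1,0,1)$ and $(1,0,q-1,1,0)$. You go further than the paper by justifying the stripping step, which the paper only asserts, via Lemma~\ref{Pro:lm3}, and by actually carrying out the finite check that $D_1(\mathbf{c})\cap D_1(\mathbf{d})=\emptyset$. The only cleanup needed is to delete the abandoned first attempt (deleting the first and fourth symbols) from the upper-bound paragraph.
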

\begin{proof}
For \(i=0\), the prefixes \((0,1,q-1,0,1)\) and \((1,0,q-1,1,0)\) occur at the beginning of \(\mathbf{x}^{(0)}\) and \(\mathbf{y}^{(0)}\), respectively. A direct computation shows that their Levenshtein distance is \(2\). Since the suffixes \(\mathbf{w}^{(0)}\) are identical, the distance between \(\mathbf{x}^{(0)}\) and \(\mathbf{y}^{(0)}\) is exactly \(2\).

For \(i>0\), the common prefix \((q-i-1,\dots,q-2)\) is identical in both sequences and therefore does not affect the Levenshtein distance. The parts immediately following this prefix are precisely the blocks \((0,1,q-1,0,1)\) and \((1,0,q-1,1,0)\), respectively, followed by identical suffixes \(\mathbf{w}^{(i)}\). Thus the distance is again \(2\). This completes the lemma.
\end{proof}

Having established that $d_L(\mathbf{x}^{(i)},\mathbf{y}^{(i)}) = 2$ for all $0\leq i \leq q-3$, we now compute the size of the intersection of their deletion balls of radius $t$. First, we need some auxiliary lemmas.

\begin{lemma}
Let $q\geq 3$. Let $\mathbf{w} = (w_1,\dots,w_{n-5})$ with $w_i \equiv i+1 \pmod q$.
For $n\geq q+t+3$,
\begin{align*}
|D_t(1,q-1,0,1,\mathbf{w})\cap D_{t-1}(q-1,1,0,\mathbf{w})|
&= D_q(n-2,t-1)+D_q(n-5,t-2)+D_q(n-5-q,t-2-q)\\
&\quad -D_q(n-4,t-1)-D_q(n-4-q,t-1-q),\\
|D_t(0,q-1,1,0,\mathbf{w})\cap D_{t-1}(q-1,0,1,\mathbf{w})|=&D_q(n-2,t-1)+D_q(n-5,t-2)-D_q(n-4,t-1).
\end{align*}
\label{Low:lm2}
\end{lemma}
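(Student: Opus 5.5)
The plan is to compute both intersections by decomposing on the first symbol, using Lemmas~\ref{Pro:lm1} and~\ref{Pro:lm2}, and to collapse the resulting sum with the recurrence~\eqref{def:eq1} (equivalently Corollary~\ref{Pro:cor2}). Write $\mathcal{X}=D_t(1,q-1,0,1,\mathbf{w})\cap D_{t-1}(q-1,1,0,\mathbf{w})$. Here $\mathbf{w}=(2,3,\dots,q-1,0,1,2,\dots)$, so $(0,1,\mathbf{w})\in\mathbf{C}_q(n-3)$. Both sequences produce outputs of length $n-1-t$. For every $a\in\Sigma_q$ we have $|\mathcal{X}^a|=|D_{t-\ell}(\cdot)\cap D_{t-1-\ell^*}(\cdot)|$, and we evaluate the pieces one symbol at a time.

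\emph{Symbol $a=q-1$} ($\ell=1,\ell^*=0$). This piece is $D_{t-1}(0,1,\mathbf{w})\cap D_{t-1}(1,0,\mathbf{w})$. This is exactly the configuration of Theorem~\ref{def:thm1} with length $n-3$: a periodic sequence and its first-two-symbol swap. It therefore contributes $D_q(n-3,t-1)-D_q(n-4,t-1)+D_q(n-5,t-2)$.

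\emph{Symbol $a=0$} ($\ell=2,\ell^*=2$). This piece is $D_{t-2}(1,\mathbf{w})\cap D_{t-3}(\mathbf{w})$. Since $\mathbf{w}$ is a subsequence of $(1,\mathbf{w})$ and the output lengths agree, it equals $D_{t-3}(\mathbf{w})$, of size $D_q(n-5,t-3)$.

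\emph{Symbols $a\in[2,q-2]$.} Their first occurrences lie inside $\mathbf{w}$, at position $a-1$ of $\mathbf{w}$. By the same containment argument, each piece equals the smaller ball of the common periodic tail of $\mathbf{w}$, namely $D_q(n-5-(a-1),\,t-2-(a-1))$ up to the exact index bookkeeping.

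\emph{Symbol $a=1$} ($\ell=0,\ell^*=1$). This piece is $D_t(q-1,0,1,\mathbf{w})\cap D_{t-2}(0,\mathbf{w})$. Again $(0,\mathbf{w})$ is a subsequence of $(q-1,0,1,\mathbf{w})$, so the piece equals $D_{t-2}(0,\mathbf{w})$. The difficulty is that $(0,\mathbf{w})=(0,2,3,\dots,q-1,0,1,2,\dots)$ is not periodic: the symbol $1$ is skipped once. I compute $|D_{t-2}(0,\mathbf{w})|$ by one more first-symbol decomposition. Every symbol except $1$ behaves exactly as in the periodic sequence $(0,1,\mathbf{w})$. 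The symbol $1$, however, first appears at position $q+1$ instead of position $2$, which yields a term $D_q(n-5-q,\,t-2-q)$. Comparing with the periodic count, this gives
\[
|D_{t-2}(0,\mathbf{w})| = D_q(n-4,t-2)-D_q(n-6,t-3)+D_q(n-5-q,t-2-q)-\cdots,
\]
and this is exactly where the extra terms $D_q(n-5-q,t-2-q)-D_q(n-4-q,t-1-q)$ in the first formula come from.

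Finally, I sum all the pieces. The terms for $a\in\{0,2,\dots,q-2\}$ together with the periodic parts form a full recurrence row, and~\eqref{def:eq1} turns $D_q(n-3,t-1)$ plus the row into $D_q(n-2,t-1)$. The subtracted term $D_q(n-4,t-1)$ and the term $D_q(n-5,t-2)$ survive from the Theorem~\ref{def:thm1} piece. The condition $n\ge q+t+3$ guarantees that every recurrence used is in its valid range $n\ge t+1$.

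The second identity is handled in exactly the same way, with the roles of $0$ and $1$ exchanged. Now the common tail after removing the first symbol is $(1,\mathbf{w})$, and this sequence is periodic, because $\mathbf{w}$ starts with $2$. Hence no correction term arises, and one obtains $D_q(n-2,t-1)+D_q(n-5,t-2)-D_q(n-4,t-1)$.

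I expect the main obstacle to be the exact count of $|D_{t-2}(0,\mathbf{w})|$ for the non-periodic sequence, together with the index bookkeeping needed to telescope everything into the stated closed form. I would first verify the result numerically for $q=4$ and small $t$ to fix the offsets.
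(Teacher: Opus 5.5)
Your plan is the paper's proof: decompose on the first symbol, evaluate the $q-1$ piece by Theorem~\ref{def:thm1}, and reduce the $0$ and $1$ pieces to single balls by subsequence containment. Then compute $|D_{t-2}(0,\mathbf{w})|$ by one more first-symbol decomposition, and collapse the row $\sum_{i=0}^{q-1}D_q(n-3-i,t-1-i)$ into $D_q(n-2,t-1)$ via~\eqref{def:eq1}; the second identity works the same way with $(1,\mathbf{w})$ periodic.

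There are two index slips to fix when you write it out:
\begin{enumerate}
\item For $a\in[2,q-2]$, both sequences continue with $\mathbf{w}_{[a,n-5]}$ under the same budget $t-2-a$. The piece is therefore $D_q(n-4-a,t-2-a)$, not one with $t-1-a$; with $t-1-a$ the row would not close.
\item The correct count is $|D_{t-2}(0,\mathbf{w})|=D_q(n-4,t-2)-D_q(n-4-q,t-1-q)+D_q(n-5-q,t-2-q)$. It is cleanest to compare with $\mathbf{c}_q(n-4,(0,2,3,\dots,q-1,1))$, where only the symbol $1$ moves (from position $q$ to $q+1$). Against $(0,1,\mathbf{w})$ the $0$-piece also differs, which is why your intermediate display with $-D_q(n-6,t-3)$ does not come out right.
\end{enumerate}
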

\begin{proof}
Set $\mathcal{X}=D_t(\mathbf{x})\cap D_{t-1}(\mathbf{y})$ with $\mathbf{x}=(1,q-1,0,1,\mathbf{w})$ and $\mathbf{y}=(q-1,1,0,\mathbf{w})$. For $n\geq t+3$, by Lemmas \ref{Pro:lm1} and \ref{Pro:lm2}, we have $|\mathcal{X}| = \sum\limits_{i=0}^{q-1}|\mathcal{X}^{i}|$ with $|\mathcal{X}^{0}|= |D_{t-3}(\mathbf{w})| = D_q(n-5,t-3),$ and
\begin{align*}
|\mathcal{X}^{1}| &= |D_{t-2}(0,\mathbf{w})|
= D_q(n-4,t-2)-D_q(n-4-q,t-1-q)+D_q(n-5-q,t-2-q), \\
|\mathcal{X}^{q-1}| &= |D_{t-1}(0,1,\mathbf{w})\cap D_{t-1}(1,0,\mathbf{w})| \overset{(a)}{=} D_q(n-3,t-1)-D_q(n-4,t-1)+D_q(n-5,t-2), \\
|\mathcal{X}^{i}| &= |D_{t-2-i}(\mathbf{w}_{[i,n-5]})| = D_q(n-4-i,t-2-i) \quad \text{for } 2\le i\le q-2, 
\end{align*}
where $(a)$ follows from Theorem \ref{def:thm1}. Thus, we have 
\begin{align*}
|\mathcal{X}| &= \Bigl(\sum_{i=1}^{q} D_q(n-2-i,t-i)\Bigr) + D_q(n-5,t-2) + D_q(n-5-q,t-2-q)- D_q(n-4,t-1) - D_q(n-4-q,t-1-q).
\end{align*}
By Eq. \eqref{def:eq1}, for $n\ge q+t+3$ and $q\geq 3$, we have
\[
|\mathcal{X}| = D_q(n-2,t-1) + D_q(n-5,t-2) + D_q(n-5-q,t-2-q)
- D_q(n-4,t-1) - D_q(n-4-q,t-1-q).
\]
A similar argument yields $|D_t(0,q-1,1,0,\mathbf{w})\cap D_{t-1}(q-1,0,1,\mathbf{w})|=D_q(n-2,t-1)+D_q(n-5,t-2)-D_q(n-4,t-1)$, which completes the proof.
\end{proof}

By Lemmas~\ref{Pro:lm1},~\ref{Pro:lm2}, and~\ref{Low:lm2}, we obtain the following results, which give explicit expressions for $|D_t(\mathbf{x}^{(i)}) \cap D_t(\mathbf{y}^{(i)})|$ for $t \ge 2$, $q \ge 3$, and $0\leq i \leq q-3$. For convenience, define $N_q^{(i)}(n,t) = |D_t(\mathbf{x}^{(i)}) \cap D_t(\mathbf{y}^{(i)})|$ for $0\leq i \leq q-3$ and $n\geq q+t+3$.

\begin{lemma}
\label{Low:lm3}
Let $t \ge 2$, $q \ge 3$, and $n\geq q+t+3$. Then,
\begin{align*}
N_q&^{(0)}(n,t)\\
=&D_q(n-4,t-2)+5D_q(n-5,t-2)+4D_q(n-5,t-3)+3D_q(n-6,t-3)+D_q(n-6,t-4)+D_q(n-5-q,t-2-q)\\
&+\sum\limits_{i=3}^{q-1}\big(2D_q(n-3-i,t-1-i)+3D_q(n-4-i,t-1-i)+D_q(n-3-i,t-2-i)+D_q(n-4-i,t-2-i)\big).
\end{align*}
Moreover, we have 
\begin{align}
N_q&^{(0)}(n,t)\geq D_q(n-2,t-2).\label{Low:eq3}
\end{align}
For all $t \ge 2$ and sufficiently large $n$,
\begin{align}
N_3^{(0)}(n,t)&=\frac{6}{(t-2)!}n^{t-2}-\frac{3t+13}{(t-3)!}n^{t-3}+\frac{3t^2+t+140}{4(t-4)!}n^{t-4}+O(n^{t-5}),\nonumber\\
N_q^{(0)}(n,t)&=\frac{6}{(t-2)!}n^{t-2}-\frac{3t+13}{(t-3)!}n^{t-3}+\frac{3t^2+25t+64}{4(t-4)!}n^{t-4}+O(n^{t-5}),\qquad \text{for $q\geq 4$}.\nonumber
 %\label{Low:eq4}
\end{align}
%where we adopt the convention that \(1/k! = 0\) for any integer \(k<0\) in the asymptotic expansion.
\end{lemma}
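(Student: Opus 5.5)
The plan has three parts: an exact first-symbol decomposition, a dominance argument for \eqref{Low:eq3}, and an asymptotic expansion. Set $\mathcal{X}=D_t(\mathbf{x}^{(0)})\cap D_t(\mathbf{y}^{(0)})$ with $\mathbf{x}^{(0)}=(0,1,q-1,0,1,\mathbf{w}^{(0)})$ and $\mathbf{y}^{(0)}=(1,0,q-1,1,0,\mathbf{w}^{(0)})$. By Lemmas~\ref{Pro:lm1} and~\ref{Pro:lm2}, $|\mathcal{X}|=\sum_{a\in\Sigma_q}|\mathcal{X}^a|$, and each $\mathcal{X}^a$ equals the intersection of the balls of the two suffixes after the first occurrence of $a$. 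The deletion budget drops by the number of skipped symbols, separately in each sequence.

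\textbf{Step 1: the terms $a\in\{0,1\}$.} For $a=0$ we skip nothing in $\mathbf{x}^{(0)}$ and one symbol in $\mathbf{y}^{(0)}$, giving
\[
D_t(1,q-1,0,1,\mathbf{w})\cap D_{t-1}(q-1,1,0,\mathbf{w}).
\]
For $a=1$ we get the mirror pair $D_{t-1}(q-1,0,1,\mathbf{w})\cap D_t(0,q-1,1,0,\mathbf{w})$. Both are evaluated exactly by Lemma~\ref{Low:lm2}.

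\textbf{Step 2: the term $a=q-1$.} Both sequences skip two symbols, giving $D_{t-2}(0,1,\mathbf{w})\cap D_{t-2}(1,0,\mathbf{w})$. Since $\mathbf{w}$ starts with $2,3,\dots$, the sequence $(0,1,\mathbf{w})$ is periodic, so Theorem~\ref{def:thm1} applies (with shifted parameters) and gives $D_q(n-3,t-2)-D_q(n-4,t-2)+D_q(n-5,t-3)$.

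\textbf{Step 3: the terms $2\le a\le q-2$.} The first occurrence of $a$ lies in $\mathbf{w}$ at the same position in both sequences, so the two remaining suffixes coincide and the term is $D_q(n-4-a,t-4-a)$.

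Summing Steps~1--3 gives a closed form. To reach the stated shape, I would repeatedly apply the recurrence \eqref{def:eq1} (via Corollary~\ref{Pro:cor2}) to expand $D_q(n-2,t-1)$, $D_q(n-3,t-2)$ and so on into the $q$-term sums. The negative terms $-D_q(n-4,t-1)$ and $-D_q(n-4,t-2)$ should cancel against leading pieces of those expansions, leaving exactly the displayed sum over $i\in[3,q-1]$. This reorganization is the main obstacle: it is bookkeeping-heavy, and I would verify it by checking small cases such as $q=3$, where the formula must reduce to $M_1$ of Theorem~\ref{def:thm4}. The assumption $n\ge q+t+3$ guarantees every application of the recurrence is valid.

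\textbf{Step 4: the bound \eqref{Low:eq3}.} The cleanest route is probably an injection: $D_{t-2}$ of the common suffix starting at position $3$ embeds in $\mathcal{X}$. However, the two sequences have different prefixes, so I would instead compare termwise. Expand $D_q(n-2,t-2)$ by \eqref{def:eq1} into $D_q(n-3,t-2)+D_q(n-4,t-3)+\cdots$. Then bound each piece by a term of the formula using \eqref{Pro:eq3}, \eqref{Pro:eq4} and Lemma~\ref{Pro:lm6}; for example, $D_q(n-3,t-2)\le D_q(n-4,t-2)+D_q(n-5,t-2)+\cdots$ is absorbed by $D_q(n-4,t-2)+5D_q(n-5,t-2)$.

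\textbf{Step 5: the asymptotic expansion.} I would substitute the expansions of Lemma~\ref{Pro:lm10} into each term, keeping orders $n^{t-2}$, $n^{t-3}$ and $n^{t-4}$. Only terms with second argument $t-2$, $t-3$ or $t-4$ contribute. For $q\ge4$ these are:
\begin{itemize}
\item $D_q(n-4,t-2)$, $5D_q(n-5,t-2)$, $4D_q(n-5,t-3)$, $3D_q(n-6,t-3)$ and $D_q(n-6,t-4)$;
\item from the sum, with $i=3$: $2D_q(n-6,t-4)$, $3D_q(n-7,t-4)$ and further lower terms.
\end{itemize}
This gives leading coefficient $6$, then $-(3t+13)$, then $(3t^2+25t+64)/4$. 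For $q=3$, the $D_3$ expansion changes only the $n^{t-2}$ correction, replacing $3t-1$ by $3t-25$. This shifts the $n^{t-4}$ coefficient to $(3t^2+t+140)/4$, which must be checked against Theorem~\ref{def:thm4}.
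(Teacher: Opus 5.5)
Your overall route is the paper's: split by first symbol, Lemma~\ref{Low:lm2} for the symbols $0,1$, Theorem~\ref{def:thm1} for $q-1$, direct counting for $2\le a\le q-2$, reorganization via \eqref{def:eq1}, a termwise comparison for \eqref{Low:eq3}, and substitution of Lemma~\ref{Pro:lm10}. However, Step~3 has an off-by-one error that breaks the closed form.

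For $a\in[2,q-2]$, the first occurrence of $a$ in both sequences is $w_{a-1}$, at position $4+a$. So $3+a$ symbols are skipped, and the term is $D_q(n-4-a,\,t-3-a)$, not $D_q(n-4-a,\,t-4-a)$. A quick check: every summand describing some $\mathcal{X}^a$ must have (length) minus (deletions) equal to $n-t-1$, whereas yours gives $n-t$.

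With the correct value these symbols contribute $\sum_{i=3}^{q-1}D_q(n-3-i,t-2-i)$, and the total before reorganization is
\begin{align*}
&2D_q(n-2,t-1)+D_q(n-3,t-2)+2D_q(n-5,t-2)+D_q(n-5,t-3)+D_q(n-5-q,t-2-q)\\
&\quad-2D_q(n-4,t-1)-D_q(n-4,t-2)-D_q(n-4-q,t-1-q)+\sum_{i=3}^{q-1}D_q(n-3-i,t-2-i).
\end{align*}
Expand $D_q(n-2,t-1)-D_q(n-4,t-1)$ and $D_q(n-3,t-2)-D_q(n-4,t-2)$ by \eqref{def:eq1}, then expand the one leftover copy of $D_q(n-4,t-2)$; this gives exactly the stated formula. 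With your value the result would be off by $\sum_{i=3}^{q-1}\bigl(D_q(n-3-i,t-3-i)-D_q(n-3-i,t-2-i)\bigr)$, which does not vanish once $t\ge5$. Your planned sanity check against $M_1$ of Theorem~\ref{def:thm4} cannot catch this, because $[2,q-2]$ is empty when $q=3$; test $q=4$ instead.

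Two smaller points. First, in Step~4 your sample absorption replaces $D_q(n-5,t-3)$, the true second term of the recurrence for $D_q(n-3,t-2)$, by $D_q(n-5,t-2)$. That is monotonicity in the deletion parameter, which fails when $n-t$ is small. For $q=3$, $t=20$, $n=26$ (allowed, since $n\ge q+t+3$) one has $D_3(23,18)=243>216=D_3(22,18)+5D_3(21,18)$. Do the comparison as the paper does:
\begin{itemize}
\item expand $D_q(n-2,t-2)$ by \eqref{def:eq1};
\item expand each of $D_q(n-3,t-2)$, $D_q(n-4,t-3)$, $D_q(n-5,t-4)$ once more;
\item absorb the tail terms $2D_q(n-3-q,t-q-1)$, $2D_q(n-4-q,t-q-2)$, $D_q(n-5-q,t-q-3)$ using only \eqref{Pro:eq3} and Corollary~\ref{Pro:cor2}.
\end{itemize}
Second, in Step~5 the change from $3t^2+25t+64$ to $3t^2+t+140$ for $q=3$ has two sources. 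The changed $D_3$ coefficient contributes $-24(t-4)$, and the missing $i=3$ terms $2D_q(n-6,t-4)+3D_q(n-7,t-4)$ contribute $-20$. The first alone would give $3t^2+t+160$.
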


\begin{proof}
From Eq. \eqref{Low:eq1}, we have
\(\mathbf{x}^{(0)} = (0,1,q-1,0,1,\mathbf{w}^{(0)})\) and 
\(\mathbf{y}^{(0)} = (1,0,q-1,1,0,\mathbf{w}^{(0)}),\)
where $\mathbf{w}^{(0)} = (w_1,\dots,w_{n-5})$ satisfies $w_j \equiv j+1 \pmod q$ for $j\in [n-5]$. For brevity, write \(\mathbf{w}:=\mathbf{w}^{(0)}\). Define \(\mathcal{X}=D_t(\mathbf{x}^{(0)})\cap D_t(\mathbf{y}^{(0)})\), so \(N_q^{(0)}(n,t)=|\mathcal{X}|\).

By Lemmas \ref{Pro:lm1} and \ref{Pro:lm2}, we have
\(
|\mathcal{X}| = \sum_{i=0}^{q-1} |\mathcal{X}^i|
\)
with $|\mathcal{X}^0|= |D_t(1,q-1,0,1,\mathbf{w})\cap D_{t-1}(q-1,1,0,\mathbf{w})|$, and 
\begin{align*}
|\mathcal{X}^1| &= |D_{t-1}(q-1,0,1,\mathbf{w})\cap D_t(0,q-1,1,0,\mathbf{w})|,\\
|\mathcal{X}^{q-1}| &= |D_{t-2}(0,1,\mathbf{w})\cap D_{t-2}(1,0,\mathbf{w})|
\overset{(a)}{=} D_q(n-3,t-2)-D_q(n-4,t-2)+D_q(n-5,t-3),\\
%|\mathcal{X}^2| &= |D_{t-5}(\mathbf{w}_{[2,n-5]})| = D_q(n-6,t-5),\\
%&\;\;\vdots\\
|\mathcal{X}^i| &= |D_{t-3-i}(\mathbf{w}_{[i,n-5]})| = D_q(n-4-i,t-3-i) \quad \text{for } 2\le i\le q-2,
%&\;\;\vdots\\
%|\mathcal{X}^{q-2}| &= |D_{t-q}(\mathbf{w}_{[q-2,n-5]})| = D_q(n-2-q,t-q-1).
\end{align*}
where $(a)$ follows from Theorem \ref{def:thm1}. Applying Lemma \ref{Low:lm2} to $|\mathcal{X}^0|$ and $|\mathcal{X}^1|$ yields
\begin{align*}
|\mathcal{X}^0| &= D_q(n-2,t-1)+D_q(n-5,t-2)+D_q(n-5-q,t-2-q) -D_q(n-4,t-1)-D_q(n-4-q,t-1-q),\\[2mm]
|\mathcal{X}^1| &= D_q(n-2,t-1)+D_q(n-5,t-2)-D_q(n-4,t-1).
\end{align*}

Thus, we obtain
\begin{align}
|\mathcal{X}| &=2D_q(n-2,t-1)+D_q(n-3,t-2)+2D_q(n-5,t-2)+D_q(n-5,t-3)+D_q(n-5-q,t-2-q)\nonumber\\
&-2D_q(n-4,t-1)-D_q(n-4,t-2)-D_q(n-4-q,t-1-q)+\sum\limits_{i=3}^{q-1}D_q(n-3-i,t-2-i)\nonumber\\
&=2D_q(n-2,t-1)+D_q(n-2,t-2)+2D_q(n-5,t-2)+D_q(n-5,t-3)+D_q(n-5-q,t-2-q)\nonumber\\
&-2D_q(n-4,t-1)-D_q(n-4,t-2)-D_q(n-4,t-3)-D_q(n-5,t-4)-D_q(n-4-q,t-1-q)\label{Low:eq5}\\
&=D_q(n-4,t-2)+5D_q(n-5,t-2)+4D_q(n-5,t-3)+3D_q(n-6,t-3)+D_q(n-6,t-4)\nonumber\\
&~~+D_q(n-5-q,t-2-q)+\sum\limits_{i=3}^{q-1}\big(2D_q(n-3-i,t-1-i)+3D_q(n-4-i,t-1-i)\nonumber\\
&~~~~~~~~~~~~~~~~~~~~~~~~~~~~~~~~~~~~~~~~~~~~~~~~~+D_q(n-3-i,t-2-i)+D_q(n-4-i,t-2-i)\big).\nonumber
\end{align}

By Eq. \eqref{def:eq1}, for $n\geq t+3$, we have 
\begin{align*}
D_q&(n-2,t-2)=D_q(n-3,t-2)+D_q(n-4,t-3)+D_q(n-5,t-4)+\sum\limits_{i=3}^{q-1}D_q(n-3-i,t-2-i)\\
&=D_q(n-4,t-2)+2D_q(n-5,t-3)+D_q(n-6,t-4)+2D_q(n-3-q,t-q-1)+2D_q(n-4-q,t-q-2)\\
&~~~+D_q(n-5-q,t-q-3)+\sum\limits_{i=3}^{q-1}\big(D_q(n-3-i,t-2-i)+2D_q(n-3-i,t-1-i)+D_q(n-4-i,t-2-i)\big)\\
&\overset{(a)}{\leq} D_q(n-4,t-2)+2D_q(n-5,t-3)+D_q(n-6,t-4)+2D_q(n-3-q,t-q-1)+2qD_q(n-5-q,t-q-2)\\
&~~~+qD_q(n-6-q,t-q-3)+\sum\limits_{i=3}^{q-1}\big(D_q(n-3-i,t-2-i)+2D_q(n-3-i,t-1-i)+D_q(n-4-i,t-2-i)\big)\\
&\overset{(b)}{\leq} D_q(n-4,t-2)+4D_q(n-5,t-3)+D_q(n-6,t-4)+5D_q(n-5,t-2)+3D_q(n-6,t-3)\\
&~~~+D_q(n-5-q,t-2-q)+\sum\limits_{i=3}^{q-1}\big(2D_q(n-3-i,t-1-i)+3D_q(n-4-i,t-1-i)\\
&~~~~~~~~~~~~~~~~~~~~~~~~~~~~~~~~~~~~~~~~~~~~~~~~~+D_q(n-3-i,t-2-i)+D_q(n-4-i,t-2-i)\big)=N_q^{(0)}(n,t).
\end{align*}
Here, $(a)$ follows from Corollary \ref{Pro:cor2}. For $(b)$, we use Eq. \eqref{Pro:eq3} to obtain that $2D_q(n-3-q,t-q-1)\leq 2D_q(n-5,t-3)$ and $2qD_q(n-5-q,t-q-2)+qD_q(n-6-q,t-q-3)\leq 5D_q(n-5,t-2)+3D_q(n-6,t-3)+D_q(n-5-q,t-2-q)+\sum\limits_{i=3}^{q-1}3D_q(n-4-i,t-1-i)$.

Since $D_3(n,t)=\frac{n^t}{t!}-\frac{1}{2}\cdot\frac{n^{t-1}}{(t-2)!}+\frac{(3t-25)}{24}\cdot\frac{n^{t-2}}{(t-3)!}+O(n^{t-3})$,  we have 
$$N_3^{(0)}(n,t)=\frac{6}{(t-2)!}n^{t-2}-\frac{3t+13}{(t-3)!}n^{t-3}+\frac{3t^2+t+140}{4(t-4)!}n^{t-4}+O(n^{t-5}).$$  
Since $D_q(n,t)=\frac{n^t}{t!}-\frac{1}{2}\cdot\frac{n^{t-1}}{(t-2)!}+\frac{(3t-1)}{24}\cdot\frac{n^{t-2}}{(t-3)!}+O(n^{t-3})$ for $q\geq 4$,  we conclude that
$$N_q^{(0)}(n,t)=\frac{6}{(t-2)!}n^{t-2}-\frac{3t+13}{(t-3)!}n^{t-3}+\frac{3t^2+25t+64}{4(t-4)!}n^{t-4}+O(n^{t-5}).$$  
\end{proof}

We now analyze the difference between $N_q^{(0)}(n,t)$ and $N_{q-1}^{(0)}(n,t)$.
Recall that both $N_q^{(0)}(n,t)$ and $N_{q-1}^{(0)}(n,t)$ can be written in the unified form $\sum_{a,b} c_{a,b} D_k(n-a,t-b)$ for $k = q$ and $k = q-1$, respectively. 
A direct comparison shows that $N_q^{(0)}(n,t)$ differs from $N_{q-1}^{(0)}(n,t)$ by exactly four additional terms:
\[
2D_q(n-2-q,t-q) + 3D_q(n-3-q,t-q) + D_q(n-2-q,t-1-q) + D_q(n-3-q,t-1-q).
\]

Let $q\ge3$, $t\ge2$, and $n\geq q+t+3$. We consider the following three cases.
\begin{enumerate}
\item If $q\ge t+1$, then all four additional terms vanish (since their $t$-indices are negative).
By Lemma \ref{Pro:lm10}, $D_q(n-a,t-b)=D_{q-1}(n-a,t-b)$ for all $b\ge2$.
Thus, $N_q^{(0)}(n,t)-N_{q-1}^{(0)}(n,t)=0$.

\item If $q=t$, then the four additional terms reduce to constants with total sum $5$.
By Lemma \ref{Pro:lm10}, $D_q(n-a,t-b)=D_{q-1}(n-a,t-b)$ for all $b\ge2$.
Therefore, $N_q^{(0)}(n,t)-N_{q-1}^{(0)}(n,t)=5$.

\item If $q\le t-1$, then the leading term of the four additional terms is $\frac{5n^{t-q}}{(t-q)!}$ for sufficiently large $n$.
By Lemma \ref{Pro:lm10}, the difference between the common terms satisfies
\[
D_q(n-a,t-b)-D_{q-1}(n-a,t-b)=\frac{n^{t-b-q+2}}{(t-b-q+1)!} + O(n^{t-b-q+1})
\]
for all $b$ satisfying $q \le t - b + 1$. Combining these contributions yields
\[
N_q^{(0)}(n,t)-N_{q-1}^{(0)}(n,t)=\frac{(6t-6q+5)n^{t-q}}{(t-q)!} + O(n^{t-q-1}),
\]
for sufficiently large $n$.
\end{enumerate}

\begin{remark}\label{Low:rmk1}
Let $q\ge4$, $t\ge2$, and $n\geq q+t+3$. Then
\[
N_q^{(0)}(n,t)-N_{q-1}^{(0)}(n,t) =
\begin{cases}
0, & \text{if } q\ge t+1, \\
5, & \text{if } q=t, \\
\dfrac{(6t-6q+5)n^{t-q}}{(t-q)!} + O(n^{t-q-1}), & \text{if } q\le t-1 \text{ and } n \text{ is sufficiently large}.
\end{cases}
\]
\end{remark}

\begin{lemma}
\label{Low:lm4}
Let $t\ge2$ and $q\ge4$.
\begin{itemize}
  \item For each $i \in [q-3]$ and $n\geq t+2q$,
\begin{align*}
  N_q^{(i)}(n,t)=\sum\limits_{j=0}^{i-1}N_q^{(j)}(n-i+j,t-i+j+1)+N_q^{(0)}(n-i,t-i)-\sum\limits_{j=1}^{i}D_q(n-3-i-q+j,t-2-i-q+j).
\end{align*}
  Moreover, for \(n\ge t+2q\), the explicit closed-form expression for \(N_q^{(q-3)}(n,t)\) is given by
\begin{align*}
N_q^{(q-3)}(n,t) &=
\sum_{j=0}^{q-3} \binom{q-3}{j} \Bigl[ 2D_q(n-q+1,\,t-j-1) + D_q(n-q+1,\,t-j-2)+ 2D_q(n-q-2,\,t-j-2) \\
&\quad  + D_q(n-q-2,\,t-j-3)+ D_q(n-2q-2,\,t-j-2-q) - 2D_q(n-q-1,\,t-j-1) \\
&\quad - D_q(n-q-1,\,t-j-2) - D_q(n-q-1,\,t-j-3) - D_q(n-q-2,\,t-j-4) \\
&\quad - D_q(n-2q-1,\,t-j-1-q) \Bigr] - \sum_{m=0}^{q-4} \sum_{j=0}^{m} \binom{m}{j} D_q(n-3-q-m,\,t-2-q-j).
\end{align*}
  \item  For \(i\in[q-3]\) and \(n\ge t+3q\), we have \(N_q^{(i)}(n,t)\ge N_q^{(i-1)}(n,t)\) when $t\geq q+2$; moreover, equality holds when \(t\le q+1\).
  \item For \(i\in[q-3]\), \(t\ge q+2\), and sufficiently large \(n\), the difference between $N_q^{(i)}(n,t)$ and $N_q^{(i-1)}(n,t)$ satisfies
\[
N_q^{(i)}(n,t)-N_q^{(i-1)}(n,t)=\dfrac{5}{(t-q-2)!}n^{t-q-2}+O(n^{t-q-3}).
\]
Moreover, the  asymptotic expansion of $N_q^{(i)}(n,t)$ is 
\begin{equation}
N_q^{(i)}(n,t)=\frac{6}{(t-2)!}n^{t-2}-\frac{3t+13}{(t-3)!}n^{t-3}+\frac{3t^2+25t+64}{4(t-4)!}n^{t-4}+O(n^{t-5}),
 \nonumber%\label{Low:eq6}
\end{equation}
for $i\in [q-3]$
\end{itemize}
\end{lemma}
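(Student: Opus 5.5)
The plan is to obtain the recursion for $N_q^{(i)}(n,t)$ by decomposing $\mathcal{X}_i=D_t(\mathbf{x}^{(i)})\cap D_t(\mathbf{y}^{(i)})$ according to the first symbol, using Lemmas~\ref{Pro:lm1} and~\ref{Pro:lm2}. We then unfold the recursion into the closed form and compare consecutive values through their asymptotic expansions.

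\emph{Deriving the recursion.} The sequences $\mathbf{x}^{(i)}$ and $\mathbf{y}^{(i)}$ share the prefix $(q-1-i,\dots,q-2)$. This prefix consists of $i$ distinct symbols, and none of them lies in $\{0,1,q-1\}$. For a first symbol $a=q-1-i+j$ with $0\le j\le i-1$, we delete the first $j$ prefix symbols. What remains is exactly the pair $\mathbf{x}^{(i-j-1)},\mathbf{y}^{(i-j-1)}$ of length $n-j-1$ with radius $t-j$.

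To make this match the index pattern $N_q^{(j)}(n-i+j,t-i+j+1)$ stated in the lemma, we reindex by $j\mapsto i-1-j$. One subtlety needs checking: the suffix $\mathbf{w}^{(i)}$ must coincide with $\mathbf{w}^{(i-j-1)}$ of the shorter pair. This holds because the suffix is defined by $w_j\equiv j+1 \pmod q$ relative to the start of $\mathbf{w}$, so shifting does not change it.

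For the first symbols $0$, $1$ and $q-1$, the first occurrence lies at position $i+1$ or later, in both sequences. For $0$ and $1$ the positions are swapped between the two sequences, but both lie at offset $i$ or $i+1$. These three symbols together contribute what $N_q^{(0)}(n-i,t-i)$ counts. There is one exception: in the $i=0$ computation, the symbols $2,\dots,q-2$ were found early, inside $\mathbf{w}$. Here those symbols instead first occur in the prefix or in $\mathbf{w}^{(i)}$ at different offsets.

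The remaining symbols $q-1-i+j'$ with $j'\ge i$, that is $2,\dots,q-2-i$, have first occurrence in $\mathbf{w}^{(i)}$. We compare their contributions with the corresponding terms $D_q(n-4-s,t-3-s)$ of $N_q^{(0)}(n-i,t-i)$. The discrepancy is exactly the correction sum $-\sum_{j=1}^{i}D_q(n-3-i-q+j,t-2-i-q+j)$: the prefix symbols that occur in $N_q^{(0)}$ as early symbols of $\mathbf{w}$ get counted once too often, each at an offset of order $q$. I expect this bookkeeping to be the main obstacle, in particular checking that $n\ge t+2q$ keeps every index in the range where Eq.~\eqref{def:eq1} applies.

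\emph{Closed form for $i=q-3$.} We iterate the recursion starting from Eq.~\eqref{Low:eq5}, which is written in the ten-term bracket form. We apply Corollary~\ref{Pro:cor2} to collapse the sums $N^{(j)}(n-i+j,\cdot)$. An induction on $i$ shows that $N_q^{(i)}$ equals $\sum_{j}\binom{i}{j}[\cdots](n-i,t-j)$ minus $\sum_{m=0}^{i-1}\sum_{j}\binom{m}{j}D_q(\cdots)$. The binomial coefficients come from Pascal's rule applied to the recursion. Setting $i=q-3$ gives the stated formula.

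\emph{Comparing $N_q^{(i)}$ with $N_q^{(i-1)}$.} From the recursion we form the difference $N_q^{(i)}-N_q^{(i-1)}$. After applying Eq.~\eqref{def:eq1}, the bulk terms cancel. What remains is a nonnegative combination of terms of the form $D_q(n-O(q),t-q-2-\cdot)$, together with corrections dominated by these terms via Lemma~\ref{Pro:lm6} and Eq.~\eqref{Pro:eq3}.

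When $t\le q+1$, every remaining $t$-index is negative, so the difference is $0$. This gives equality. When $t\ge q+2$, the surviving leading terms are five copies of degree $t-q-2$. This matches the $2+3$ pattern seen in Remark~\ref{Low:rmk1}, so the leading term of the difference is $\frac{5}{(t-q-2)!}n^{t-q-2}$, and the difference is nonnegative for $n\ge t+3q$.

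Finally, $q\ge4$ gives $t-q-2\le t-6<t-5$. So the first three terms of the expansion of $N_q^{(i)}$ coincide with those of $N_q^{(0)}$ in Lemma~\ref{Low:lm3}, which gives the stated asymptotic expansion.
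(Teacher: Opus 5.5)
Your derivation of the recursion has the same idea as the paper, but the bookkeeping needs fixing. You correctly split by first symbol, and a prefix symbol $q-1-i+j$ leaves the pair $\mathbf{x}^{(i-j-1)},\mathbf{y}^{(i-j-1)}$. The non-prefix symbols are counted by $N_q^{(0)}(n-i,t-i)$ on $\mathbf{x}^{(0)}_{[1,n-i]},\mathbf{y}^{(0)}_{[1,n-i]}$. However, the symbols $2,\dots,q-2-i$ contribute identically in both counts. The whole correction is the removal of the prefix symbols' contributions in the shorter pair, namely $\sum_{k=q-1-i}^{q-2}D_q(n-4-i-k,t-3-i-k)$.

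The genuine gap is in the second and third items: you never identify the difference $N_q^{(i)}-N_q^{(i-1)}$. Saying ``the bulk terms cancel'' is not a mechanism, because the two full recursions involve different families of shifted $N_q^{(j)}$. The missing step is the two-term recurrence
\[
N_q^{(i)}(n,t)=N_q^{(i-1)}(n-1,t)+N_q^{(i-1)}(n-1,t-1)-D_q(n-3-q,t-2-q).
\]
For $i=1$ this is the recursion itself; for $i\ge2$ it follows by subtracting the recursion for $N_q^{(i-1)}(n-1,t-1)$ from that for $N_q^{(i)}(n,t)$. Next, $N_q^{(i-1)}$ is a linear combination of $D_q$-terms, so it obeys $N_q^{(i-1)}(n,t)=\sum_{k=0}^{q-1}N_q^{(i-1)}(n-1-k,t-k)$ by Corollary~\ref{Pro:cor2}. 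Combining the two gives
\[
N_q^{(i)}(n,t)-N_q^{(i-1)}(n,t)=N_q^{(i-1)}(n-1-q,t-q)-D_q(n-3-q,t-2-q).
\]
The same two-term recurrence is also what makes the closed form fall out: Pascal's rule for the commuting shifts gives $(S_1+S_2)^i$. Corollary~\ref{Pro:cor2} does not ``collapse'' sums over different $N_q^{(j)}$; with the full recursion you would need hockey-stick sums instead.

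Without this identity, two of your conclusions are unsupported.

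\begin{itemize}
\item \emph{Nonnegativity.} You claim it for all $n\ge t+3q$, but a positive leading coefficient only gives it for sufficiently large $n$. What is needed is the explicit bound $N_q^{(i-1)}(n-1-q,t-q)\ge D_q(n-3-q,t-2-q)$. The paper gets this from \eqref{Low:eq3}, $N_q^{(0)}(m,s)\ge D_q(m-2,s-2)$, carried forward by induction on $i$: $N_q^{(i)}\ge N_q^{(i-1)}\ge D_q(n-2,t-2)$.
\item \emph{The coefficient $5$.} It does not come from the ``$2+3$'' pattern of Remark~\ref{Low:rmk1}, which compares $N_q^{(0)}$ with $N_{q-1}^{(0)}$ across alphabet sizes and is irrelevant here. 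It is $6-1$: the leading coefficient $6/(t-q-2)!$ of $N_q^{(i-1)}(n-1-q,t-q)$ (inductively from Lemma~\ref{Low:lm3}) minus the $1/(t-q-2)!$ from $D_q(n-3-q,t-2-q)$.
\end{itemize}

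Once the difference is in this form, the equality case $t\le q+1$ and the final three-term expansion follow as you describe.
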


\begin{proof}
We proceed in several steps.

\emph{1) Derivation of the recurrence relation for $N_q^{(i)}(n,t)$.}

For convenience, let $\mathbf{u}=(1,2,\cdots,q-1,q)$, where we adopt the convention that \(\mathbf{u}_{[a,b]}\) denotes the empty sequence for all positive integers \(a,b\) satisfying \(q\ge a > b \ge 1\). 

From Eqs. \eqref{Low:eq1} and  \eqref{Low:eq2}, for \(0\le i\le q-3\),
\(
\mathbf{x}^{(i)} = \mathbf{u}_{[q-1-i,q-2]}\circ(0,1,q-1,0,1,\mathbf{w}^{(i)}), 
\mathbf{y}^{(i)} = \mathbf{u}_{[q-1-i,q-2]}\circ(1,0,q-1,1,0,\mathbf{w}^{(i)}),
\)
where $\mathbf{w}^{(i)} = (w_1,\dots,w_{n-i-5})$ satisfies $w_j \equiv j+1 \pmod q$ for all $j\in [n-i-5]$.
Then, for $i\in [q-3]$, we have $\mathbf{x}^{(i)}=\mathbf{u}_{[q-1-i,q-2]}\circ \mathbf{x}^{(0)}_{[1,n-i]}$ and  $\mathbf{y}^{(i)}=\mathbf{u}_{[q-1-i,q-2]}\circ \mathbf{y}^{(0)}_{[1,n-i]}$. For convenience, set $\hat{\mathcal{X}}_{(i)}=D_{t-i}(\mathbf{x}^{(0)}_{[1,n-i]})\cap D_{t-i}(\mathbf{y}^{(0)}_{[1,n-i]})$ for $i\in [q-3]$. 

For each \(0\le i\le q-3\) and $n\geq t+q+3$, define $\mathcal{X}_{(i)} = D_t(\mathbf{x}^{(i)}) \cap D_t(\mathbf{y}^{(i)})$, so that $N_q^{(i)}(n,t) = |\mathcal{X}_{(i)}|$. For $i\in [q-3]$, by Lemmas \ref{Pro:lm1} and \ref{Pro:lm2}, we have $|\mathcal{X}_{(i)}|=\sum\limits_{k=0}^{q-1}|\mathcal{X}_{(i)}^{k}|$ with
\begin{align}
|\mathcal{X}_{(i)}^k| &= |D_{t-k+q-1-i}(\mathbf{u}_{[k+1,q-2]}\circ \mathbf{x}^{(0)}_{[1,n-i]})\cap D_{t-k+q-1-i}(\mathbf{u}_{[k+1,q-2]}\circ \mathbf{y}^{(0)}_{[1,n-i]})|, \qquad\text{for $q-1-i \leq k\leq q-2$,}\nonumber\\
|\mathcal{X}_{(i)}^k| &= |\big(D_{t-i}(\mathbf{x}^{(0)}_{[1,n-i]})\cap D_{t-i}(\mathbf{y}^{(0)}_{[1,n-i]})\big)^{k}|, \qquad\text{for $k\in \Sigma_q\setminus [q-1-i,q-2]$,}\label{Low:eq7}
\end{align}
where $|\mathcal{X}_{(i)}^k|=N_q^{(q-2-k)}(n+q-2-i-k,t+q-1-i-k)$ for $k\in [q-1-i,q-2]$ and $n\geq t+2q$, and $|\mathcal{X}_{(i)}^k|=|\hat{\mathcal{X}}_{(i)}^{k}|$ for $k\in \Sigma_q\setminus [q-1-i,q-2]$.

For $n\geq t+2q$ and $i\in [q-3]$, we have $n-i\geq t-i+q+3$. By Lemma \ref{Low:lm3}, we have  $|\hat{\mathcal{X}}_{(i)}|=N_q^{(0)}(n-i,t-i)$. Moreover, for $q-1-i\le j\le q-2$, it follows that 
\begin{align}
|\hat{\mathcal{X}}_{(i)}^j|= |D_{t-3-i-j}(\mathbf{w}_{[j,n-5-i]}^{(i)})| = D_q(n-4-i-j,t-3-i-j).\label{Low:eq8}
\end{align}

Therefore, by Eqs. (\ref{Low:eq7}) and (\ref{Low:eq8}), for $i\in [q-3]$, we have 
\begin{align}
N_q^{(i)}(n,t)&=|\mathcal{X}_{(i)}|=\sum\limits_{k=q-1-i}^{q-2}|\mathcal{X}_{(i)}^{k}|+\sum\limits_{k\in \Sigma_q\setminus[q-1-i,q-2]}|\mathcal{X}_{(i)}^{k}|=\sum\limits_{k=q-1-i}^{q-2}|\mathcal{X}_{(i)}^{k}|+\sum\limits_{k\in \Sigma_q\setminus[q-1-i,q-2]}|\hat{\mathcal{X}}_{(i)}^{k}|\nonumber\\
&=\sum\limits_{k=q-1-i}^{q-2}|\mathcal{X}_{(i)}^{k}|+|\hat{\mathcal{X}}_{(i)}|-\sum\limits_{k\in [q-1-i,q-2]}|\hat{\mathcal{X}}_{(i)}^{k}|\nonumber\\
&=\sum\limits_{j=0}^{i-1}N_q^{(j)}(n-i+j,t-i+j+1)+N_q^{(0)}(n-i,t-i)-\sum\limits_{j=1}^{i}D_q(n-3-i-q+j,t-2-i-q+j). \label{Low:eq9}
\end{align}

\emph{2) Explicit closed-form expression of $N_q^{(i)}(n,t)$.}

From the recurrence, we derive a simpler relation for consecutive terms
\[
N_q^{(i+1)}(n,t)=N_q^{(i)}(n-1,t)+N_q^{(i)}(n-1,t-1)-D_q(n-3-q,t-2-q).
\]
To solve this, we introduce shift operators $S_1,S_2$ defined by $S_1D_q(n,t)=D_q(n-1,t)$ and $S_2D_q(n,t)=D_q(n-1,t-1)$.
By Lemma~\ref{Low:lm3}, \(N_q^{(0)}(n,t)\) is a linear combination of terms \(D_q(n-a_k,t-b_k)\) for some integers \(a_k,b_k\). Combining this with Eq.~\eqref{Low:eq9}, it follows that for every \(i\in[q-3]\), \(N_q^{(i)}(n,t)\) is also a linear combination of terms \(D_q(n-a_k,t-b_k)\) for some integers \(a_k,b_k\). Hence, by the definitions of $S_1$ and $S_2$, we have \(S_1N_q^{(i)}(n,t)=N_q^{(i)}(n-1,t)\) and  \( S_2N_q^{(i)}(n,t)=N_q^{(i)}(n-1,t-1)\).
The recurrence becomes
\(
N_q^{(i+1)}(n,t)=(S_1+S_2)N_q^{(i)}(n,t)-F(n,t),
\)
where $F(n,t)=D_q(n-3-q,t-2-q)$ and $0\leq i\leq q-4$. Iterating this recurrence \(i\) times gives
\[
N_q^{(i)}(n,t)=(S_1+S_2)^i N_q^{(0)}(n,t)-\sum_{k=0}^{i-1}(S_1+S_2)^{k}F(n,t).
\]

Note that \(S_1\) and \(S_2\) commute (they act on different variables, shifting \(n\) and \(t\) respectively), so we have the binomial expansion
\(
(S_1+S_2)^m f(n,t)=\sum_{j=0}^{m}\binom{m}{j}f(n-m,t-j).
\)
Thus we obtain the general expression
\begin{align}\nonumber%\label{Low:eq10}
N_q^{(i)}(n,t)=\sum_{j=0}^{i}\binom{i}{j}N_q^{(0)}(n-i,t-j)-\sum_{m=0}^{i-1}\sum_{j=0}^{m}\binom{m}{j}D_q(n-3-q-m,t-2-q-j),
\end{align}
for \(1\le i\le q-3\), where \(n\ge t+2q\) is chosen so that \(n-i\ge t-j+q+3\) for all \(i,j\), ensuring the applicability of Lemma~\ref{Low:lm3}.

By the expression for \(N_q^{(0)}(n,t)\) from Eq. (\ref{Low:eq5}), we obtain the following closed-form expression for \(N_q^{(q-3)}(n,t)\)
\begin{align*}
N_q^{(q-3)}(n,t)
&=\sum_{j=0}^{q-3}\binom{q-3}{j}\Big\{
2D_q(n-q+1,t-j-1)+D_q(n-q+1,t-j-2)+2D_q(n-q-2,t-j-2)\\
&\quad+D_q(n-q-2,t-j-3)+D_q(n-2q-2,t-j-2-q)-2D_q(n-q-1,t-j-1)\\
&\quad-D_q(n-q-1,t-j-2)-D_q(n-q-1,t-j-3)-D_q(n-q-2,t-j-4)\\
&\quad-D_q(n-2q-1,t-j-1-q)\Big\}-\sum_{m=0}^{q-4}\sum_{j=0}^{m}\binom{m}{j}D_q(n-3-q-m,t-2-q-j).
\end{align*}

\emph{3) Monotonicity proof.}

We show $N_q^{(i)}(n,t)\ge N_q^{(i-1)}(n,t)$ for $i\in[q-3]$ and $n\geq t+3q$. For general \(i\in [q-3]\), the expression for \(N_q^{(i)}(n,t)\) can be written as a finite linear combination
\(
N_q^{(i)}(n,t) = \sum_{a,b} c_{a,b}^{(i)} \, D_q(n-a,\,t-b),
\)
where the coefficients \(c_{a,b}^{(i)}\) are obtained by matching the shifted arguments of \(D_q(\cdot,\cdot)\).  
Since \(D_q(m,s)\) satisfies the recurrence
\(
D_q(m,s) = \sum_{j=0}^{q-1} D_q(m-1-j,\,s-j)
\)
for $m>s$, and \(N_q^{(i)}(n,t)\) is a linear combination of such terms, it follows that \(N_q^{(i)}(n,t)\) obeys the same recurrence
\[
N_q^{(i)}(n,t) = \sum_{j=0}^{q-1} N_q^{(i)}(n-1-j,\,t-j),
\]
for all \(i\in [q-3]\) and \(n\geq t+2q+1\).

From the recurrence relation, for all \(i\in[q-3]\) and \(n\ge t+2q+1\), we have
\[
N_q^{(i)}(n,t)-N_q^{(i-1)}(n,t)=N_q^{(i-1)}(n-1,t)+N_q^{(i-1)}(n-1,t-1)-N_q^{(i-1)}(n,t)-D_q(n-3-q,t-2-q),
\]
Note that $N_q^{(i-1)}(n,t)$ satisfies the following recurrence
\(
N_q^{(i-1)}(n,t)=\sum_{k=0}^{q-1}N_q^{(i-1)}(n-1-k,t-k)
\)
for $1\leq i\leq q-3$ and $n\geq t+2q+1$.
Substituting this into the difference gives
\[
N_q^{(i)}(n,t)-N_q^{(i-1)}(n,t)=N_q^{(i-1)}(n-1-q,t-q)-D_q(n-3-q,t-2-q),
\]
for $i\in[q-3]$ and $n\geq t+2q+1$. 

It remains to prove that \(N_q^{(i-1)}(n-1-q,t-q) \ge D_q(n-3-q,t-2-q)\) for all $i\in [q-3]$ and $n\geq t+2q+i$.
\begin{itemize}
  \item If \(t\le q+1\), then $N_q^{(i-1)}(n-1-q,t-q)=D_q(n-3-q,t-2-q)=0$, so \(N_q^{(i)}(n,t)=N_q^{(i-1)}(n,t)\) for all \(i\in[q-3]\) and $n\geq t+3q$.

  \item Assume \(t\ge q+2\). By Eq. \eqref{Low:eq3} in Lemma \ref{Low:lm3}, for any \(m\ge s+2q\) we have \(N_q^{(0)}(m,s)\ge D_q(m-2,s-2)\).
  Consequently,
  \(
  N_q^{(0)}(n-1-q,t-q) \ge D_q(n-3-q,t-2-q)
  \)
  and
  \(
  N_q^{(1)}(n,t)\ge N_q^{(0)}(n,t)\ge D_q(n-2,t-2)
  \)
  for \(n\ge t+2q+1\).

  By induction, assume that \(N_q^{(i-1)}(n,t)\ge D_q(n-2,t-2)\) for \(n\ge t+2q+i-1\). Then the same argument gives
  \[
  N_q^{(i)}(n,t)\ge N_q^{(i-1)}(n,t)\ge D_q(n-2,t-2)
  \]
  for \(n\ge t+2q+i\). Thus, \(N_q^{(i)}(n,t)\ge N_q^{(i-1)}(n,t)\) for all \(i\in[q-3]\) and \(n\ge t+3q\).
\end{itemize}

\emph{4) Asymptotic expansion for $N_q^{(i)}(n,t)$.}

For sufficiently large $n$, we have $N_q^{(i)}(n,t)-N_q^{(i-1)}(n,t)=N_q^{(i-1)}(n-1-q,t-q)-D_q(n-3-q,t-2-q)$ for $i\in[q-3]$. By Lemma \ref{Low:lm3}, when $i=1$, for $t\geq q+2$,
\begin{align*}
N_q^{(1)}(n,t)-N_q^{(0)}(n,t)=\frac{6}{(t-q-2)!}n^{t-q-2}-\frac{1}{(t-q-2)!}n^{t-q-2}+O(n^{t-q-3})=\frac{5}{(t-q-2)!}n^{t-q-2}+O(n^{t-q-3}),
\end{align*}
and $N_q^{(1)}(n,t)=\frac{6}{(t-2)!}n^{t-2}+O(n^{t-3})$.
Similarly, for $i\in [2, q-3]$ and $t\geq q+2$,
\begin{align*}
N_q^{(i)}(n,t)-N_q^{(i-1)}(n,t)=\frac{6}{(t-q-2)!}n^{t-q-2}-\frac{1}{(t-q-2)!}n^{t-q-2}+O(n^{t-q-3})=\frac{5}{(t-q-2)!}n^{t-q-2}+O(n^{t-q-3}),
\end{align*}
and $N_q^{(i)}(n,t)=\frac{6}{(t-2)!}n^{t-2}+O(n^{t-3})$. Moreover, we have
\begin{align}
N_q^{(i)}(n,t)-N_q^{(0)}(n,t)=\frac{5i}{(t-q-2)!}n^{t-q-2}+O(n^{t-q-3}), \label{Low:eq11}
\end{align}
for $i\in [q-3]$ and $t\geq q+2$. 

By Eq. (\ref{Low:eq11}) and Lemma \ref{Low:lm3}, we have $N_q^{(i)}(n,t)=\frac{6}{(t-2)!}n^{t-2}-\frac{3t+13}{(t-3)!}n^{t-3}+\frac{3t^2+25t+64}{4(t-4)!}n^{t-4}+O(n^{t-5})$
for $q\geq 4$ and $i\in [q-3]$.

\end{proof}

By the monotonicity property established in Lemma~\ref{Low:lm4}, namely $N_q^{(i+1)}(n,t)\ge N_q^{(i)}(n,t)$, $N_q^{(q-3)}(n,t)$ is the maximum value among all $N_q^{(i)}(n,t)$ for $0\leq i\leq q-3$ and $n\geq t+3q$. For notational convenience, we define
\(
N_q(n,t) = N_q^{(q-3)}(n,t)
\)
for all $q\ge3$ and $n\geq t+3q$. 
Using the formula for the difference between \(N_q^{(0)}(n,t)\) and \(N_{q-1}^{(0)}(n,t)\) from Remark~\ref{Low:rmk1}, together with the recurrence relation in Lemma~\ref{Low:lm4}, we obtain the corresponding result for \(N_q(n,t)\).
\begin{remark}\label{Low:rmk2}
Let $q\ge4$, $t\ge2$, and $n\geq t+3q$. Then
\[
N_q(n,t)-N_{q-1}(n,t) =
\begin{cases}
0, & \text{if } q\ge t+1, \\
5, & \text{if } q=t, \\
\dfrac{(6t-6q+5)n^{t-q}}{(t-q)!} + O(n^{t-q-1}), & \text{if } q\le t-1~\text{and $n$ is sufficiently large}.
\end{cases}
\]
\end{remark}

Furthermore, by the explicit asymptotic expansion of $N_q^{(q-3)}(n,t)$ in Lemma \ref{Low:lm4}, a lower bound on $N_q(n,2,t)$ follows immediately, as summarized in the theorem below. 

\begin{theorem}
\label{Low:thm1}
Let $q\ge3$, $t\ge2$, and $n\geq t+3q$. Then
\begin{equation}
N_q(n,2,t)\geq N_q(n,t).\nonumber
\end{equation}
Moreover, for $q\geq 4$ and sufficiently large $n$, 
\begin{align*}
N_q(n,2,t)\geq \frac{6}{(t-2)!}n^{t-2}-\frac{3t+13}{(t-3)!}n^{t-3}+\frac{3t^2+25t+64}{4(t-4)!}n^{t-4}+O(n^{t-5}).
\end{align*}
\end{theorem}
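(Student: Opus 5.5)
The proof is essentially an assembly of the lower-bound results already established in this section. The plan is to exhibit the explicit pair $(\mathbf{x}^{(q-3)},\mathbf{y}^{(q-3)})$ from Eq.~\eqref{Low:eq2} (or $(\mathbf{x}^{(0)},\mathbf{y}^{(0)})$ from Eq.~\eqref{Low:eq1} when $q=3$). By Lemma~\ref{Low:lm1}, these sequences satisfy $d_L(\mathbf{x}^{(q-3)},\mathbf{y}^{(q-3)})=2\ge 2$, so they are admissible in the maximum defining $N_q(n,2,t)$. This gives
\[
N_q(n,2,t)\ge |D_t(\mathbf{x}^{(q-3)})\cap D_t(\mathbf{y}^{(q-3)})| = N_q^{(q-3)}(n,t)=N_q(n,t).
\]
I will check that the hypothesis $n\ge t+3q$ covers all earlier requirements. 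We need $n\ge q+3$ for the construction and Lemma~\ref{Low:lm1}, and $n\ge q+t+3$ for Lemma~\ref{Low:lm3}. We also need $n\ge t+2q$ for the recurrence and closed form of Lemma~\ref{Low:lm4}, so that $N_q^{(q-3)}(n,t)$ equals the displayed expression defining $N_q(n,t)$. All of these follow from $n\ge t+3q$.

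For $q=3$, the index set $[1,q-3]$ is empty. In this case $N_3(n,t)=N_3^{(0)}(n,t)$, and the closed form reduces (the binomial sum has the single term $j=0$ and the double sum is empty) to the expression in Eq.~\eqref{Low:eq5} with $q=3$. That value is computed directly in Lemma~\ref{Low:lm3}, so the first claim holds in this case as well.

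For the asymptotic claim with $q\ge 4$, I will invoke the last item of Lemma~\ref{Low:lm4} with $i=q-3\in[q-3]$. It gives $N_q^{(q-3)}(n,t)=\frac{6}{(t-2)!}n^{t-2}-\frac{3t+13}{(t-3)!}n^{t-3}+\frac{3t^2+25t+64}{4(t-4)!}n^{t-4}+O(n^{t-5})$ for sufficiently large $n$. Combining this with the first inequality yields the stated bound.

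The one point needing care is that Lemma~\ref{Low:lm4}'s asymptotic statement was derived through Eq.~\eqref{Low:eq11} under $t\ge q+2$. I will handle the remaining range $2\le t\le q+1$ separately. Here the monotonicity part of Lemma~\ref{Low:lm4} gives equality $N_q^{(q-3)}(n,t)=N_q^{(0)}(n,t)$, so the expansion follows from Lemma~\ref{Low:lm3} directly. When $t\ge q+2$, the differences $N_q^{(i)}-N_q^{(0)}$ are of order $n^{t-q-2}$. Since $q\ge 4$, this is $O(n^{t-6})\subseteq O(n^{t-5})$, so they do not affect the first three terms. With the convention $1/k!=0$ for $k<0$, this covers small $t$ uniformly and completes the proof.
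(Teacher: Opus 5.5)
Your proposal is correct and follows the paper's own argument. The paper likewise takes the explicit pair $\mathbf{x}^{(q-3)},\mathbf{y}^{(q-3)}$, uses Lemma~\ref{Low:lm1} for admissibility, and uses Lemma~\ref{Low:lm4} (Lemma~\ref{Low:lm3} when $q=3$) for the value $N_q^{(q-3)}(n,t)=N_q(n,t)$ and its expansion. Your check of the parameter ranges and your separate treatment of $2\le t\le q+1$, via the equality case of the monotonicity in Lemma~\ref{Low:lm4}, simply spell out what the paper leaves implicit when it says the bound ``follows immediately.''
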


\section{The upper bound}
\label{sec5}

In this section, we prove that the lower bound on $N_q(n,2,t)$ derived in the previous section is tight for all $t\ge 2$, $q\ge 3$, and sufficiently large $n$. Specifically, we establish the equality $N_q(n,2,t)=N_q(n,t)$ under these conditions, thereby showing that the lower bound is also an upper bound. The proof is based on an exhaustive case analysis of all pairs of vectors $\mathbf{x},\mathbf{y}\in\Sigma_q^n$.

To carry out this analysis, we first need several auxiliary results on $N_q(n,1,t+1,t)$ for certain values of $n$ and $t$. These are stated in Lemmas \ref{Upp:lm1} and \ref{Upp:lm3}, whose proofs are deferred to Appendix~\ref{APP-A}.

\begin{lemma}
\label{Upp:lm1}
For $n\geq 4$ and $q\geq 3$, we have
$N_q(n,1,2,1)\le 3.$
\end{lemma}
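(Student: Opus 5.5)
We need to show: for $\mathbf{x}\in\Sigma_q^{n+1}$, $\mathbf{y}\in\Sigma_q^n$ with $d_L(\mathbf{x},\mathbf{y})\ge 1$, i.e.\ $\mathbf{y}$ is not a subsequence of $\mathbf{x}$, we have $|D_2(\mathbf{x})\cap D_1(\mathbf{y})|\le 3$. Here the intersection consists of common subsequences of length $n-1$, i.e.\ elements of $D_1(\mathbf{y})$ that are also subsequences of $\mathbf{x}$. The plan is an induction on $n$ using the first-symbol decomposition from Lemmas~\ref{Pro:lm1} and~\ref{Pro:lm2}, with Lemma~\ref{Pro:lm3} available to strip any common prefix or suffix.

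\textbf{Reduction.} If $x_1=y_1$, then every common subsequence starting with $x_1$ can be read from the first position. Using Lemma~\ref{Pro:lm3} with $\mathbf{u}=(x_1)$, the intersection is controlled by $D_2(\mathbf{x}_{[2,n+1]})\cap D_1(\mathbf{y}_{[2,n]})$, together with the sequences of the form $a\circ\cdots$ with $a\ne x_1$. The latter correspond to deleting $y_1$ in $\mathbf{y}$, which gives exactly one candidate $\mathbf{y}_{[2,n]}$. One then argues that $\mathbf{y}_{[2,n]}$ is not a subsequence of $\mathbf{x}_{[2,n+1]}$ (otherwise $\mathbf{y}$ would embed in $\mathbf{x}$). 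So the pair stays at distance $\ge1$, and the induction hypothesis applies to the shorter pair. By reversal (Lemma~\ref{Pro:lm4}) the same holds for a common last symbol.

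The difficulty is that the naive bound becomes $3+1$. So the bookkeeping must be sharper: an element of $D_1(\mathbf{y})$ is determined by the run of $\mathbf{y}$ from which a symbol is deleted. Hence $|D_1(\mathbf{y})|=r(\mathbf{y})$, and the task is to count how many of these $r(\mathbf{y})$ candidates embed in $\mathbf{x}$.

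\textbf{Direct counting, which I would make the main route.} Let $\mathbf{z}^{(j)}$ denote $\mathbf{y}$ with a symbol deleted from run $j$. Greedy-embed $\mathbf{y}$ into $\mathbf{x}$ from the left, and let $p$ be the first position where the embedding fails. Since $\mathbf{y}\not\preceq\mathbf{x}$, the prefix $\mathbf{y}_{[1,p-1]}$ embeds and uses up a prefix of $\mathbf{x}$. Symmetrically, greedy-embed from the right and let $p'$ be the failure point, so that $\mathbf{y}_{[p'+1,n]}$ embeds into a suffix of $\mathbf{x}$, and the two greedy embeddings overlap inconsistently. A candidate $\mathbf{z}^{(j)}$ embeds into $\mathbf{x}$ only if its deleted run lies within the window of positions between $p'$ and $p$ in $\mathbf{y}$. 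The argument then shows that $\mathbf{x}$ has length only $n+1$: the left greedy embedding of $\mathbf{y}_{[1,p-1]}$ and the right greedy embedding of $\mathbf{y}_{[p'+1,n]}$ together occupy at least $(p-1)+(n-p')$ positions, which forces $p-p'$ to be small. Each deletion in the window costs slack, and $\mathbf{x}$ has only one extra symbol beyond $\mathbf{y}$'s length minus one, namely two spare positions for a length-$(n-1)$ subsequence. A counting argument of this kind should show that the window contains at most three runs of $\mathbf{y}$ usable for deletion. I would formalize this by the following chain of claims:
\begin{enumerate}
\item a candidate from run $j$ embeds iff the left-greedy embedding of the part before run $j$ and the right-greedy embedding of the part after run $j$ leave a gap of length at least $(\text{run length})-1$;
\item as $j$ increases, the left endpoint is monotone and the right endpoint is monotone;
\item hence the set of valid $j$ is an interval of runs, and the total slack of $2$ limits its length to at most $3$.
\end{enumerate}

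\textbf{Main obstacle.} Step (iii) is the hard part. I need to show that consecutive runs each consume at least one unit of the combined slack, except at the two ends, which gives the bound $2+1=3$. I expect this to need care when runs have length greater than one, and when symbols of $\mathbf{x}$ between runs can be reused. The role of the assumption $q\ge3$ is only to permit sharpness, not to prove the bound. If the slack argument proves too delicate, the fallback is the prefix-stripping induction with an explicit small-case check at $n=4$, followed by the case split on whether $x_1=y_1$. In the case $x_1\ne y_1$, the decomposition by first symbol gives, for each $a\in\{x_1,y_1\}$, the term $|D_{2-\ell}(\cdot)\cap D_{1-\ell^*}(\cdot)|$ with $\ell+\ell^*\ge1$, and each such term is a single ball of radius $\le1$ intersected with a radius-$0$ set, so it has size at most $1$. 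Symbols other than $x_1$ and $y_1$ require deleting $y_1$ (a radius-$0$ remainder) and contribute at most $1$ in total. This again gives the total bound $3$.
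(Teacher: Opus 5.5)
\textbf{Main route.} Your main route does not go through. Step (iii) carries the whole content of the lemma, and you leave it unproven. Worse, its premise that the valid runs form an interval is false. Take $q=3$, $n=4$, $\mathbf{x}=(1,2,0,1,0)$ and $\mathbf{y}=(0,1,2,0)$. Then $\mathbf{y}$ is not a subsequence of $\mathbf{x}$. Deleting the first run of $\mathbf{y}$ gives $(1,2,0)$ and deleting the third run gives $(0,1,0)$; both are subsequences of $\mathbf{x}$. Deleting the second run gives $(0,2,0)$, which is not. The monotonicity in (ii) does not give an interval, because validity depends on the difference of the two greedy endpoints, and both endpoints increase with $j$. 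Step (i) is also misstated. If the run is $b^m$, the gap must contain at least $m-1$ copies of $b$; it is not enough for the gap to have length $m-1$.

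\textbf{Fallback route.} Your fallback is the paper's strategy: induction on $n$ with base case $n=4$, split on whether $x_1=y_1$. However, your count in the case $x_1\neq y_1$ is wrong.
\begin{itemize}
\item For $a=y_1$ you have $\ell^*=0$, so the ball on the $\mathbf{y}$-side has radius $1$, not $0$. When $x_2=y_1$ the term is $|D_1(\mathbf{x}_{[3,n+1]})\cap D_1(\mathbf{y}_{[2,n]})|$, and this can equal $2$.
\item It does equal $2$ for the extremal pair $\mathbf{x}=(1,0,2,1,\mathbf{w})$, $\mathbf{y}=(0,1,2,\mathbf{w})$: both $(0,2,\mathbf{w})$ and $(0,1,\mathbf{w})$ start with $y_1=0$. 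So the count ``$1+1+1$'' is not a valid argument.
\end{itemize}
The correct accounting has two parts.
\begin{itemize}
\item Bound the $a=y_1$ term by $N_q(n-1,1,1)=2$, using Theorem~\ref{def:thm1}. This applies because $x_2=y_1$ and $\mathbf{y}\notin D_1(\mathbf{x})$ together force $\mathbf{x}_{[3,n+1]}\neq\mathbf{y}_{[2,n]}$.
\item Every element of $D_1(\mathbf{y})$ that does not start with $y_1$ equals $\mathbf{y}_{[2,n]}$. So the $a=x_1$ term and the terms for all other symbols contribute at most $1$ in total. The paper says this as follows: $|\mathcal{X}^{x_1}|=1$ forces $y_2=x_1$, so the bounds $1,2,1$ cannot all be attained.
\end{itemize}
This gives $2+1=3$.

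The ``$3+1$'' you feared in the case $x_1=y_1$ is really $3+0$, by an observation you already made. Suppose $\mathbf{y}_{[2,n]}$ begins with a symbol different from $x_1$ and is a subsequence of $\mathbf{x}$. Its embedding then avoids $x_1$, so it is a subsequence of $\mathbf{x}_{[2,n+1]}$, which you showed is impossible. With these two corrections, your fallback becomes exactly the paper's proof.
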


For convenience, we define
\(
M_q(n,t):= D_q(n,t) - D_q(n-2,t) + D_q(n-3,t-1)
\)
for all $t\ge1$ and $n\ge t+2$. Using the definition of $M_q(n,t)$ and the recurrence relation in Eq. \eqref{def:eq1}, we derive the following lemma.

\begin{lemma}
\label{Upp:lm2}
Let $n\ge t+2, t\geq 1$, and $q\geq 3$. Then $M_q(n,t) \le D_q(n,t)$. Moreover, for every $l\in[q-1]$,
\(
N_q(n-1,1,t) + \sum_{i=1}^{l} D_q(n-1-i,t-i)
\le
\sum_{i=0}^{l} M_q(n-1-i,t-i).
\)
\end{lemma}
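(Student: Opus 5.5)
The first claim, $M_q(n,t)\le D_q(n,t)$, is equivalent to $D_q(n-3,t-1)\le D_q(n-2,t)$. This follows from Eq.~\eqref{Pro:eq3} with $(n,t)$ replaced by $(n-3,t-1)$, which gives $D_q(n-3,t-1)\le D_q(n-2,t)$, valid since $n\ge t+2$.

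For the second claim, the plan is to use Theorem~\ref{def:thm1} to write
\[
N_q(n-1,1,t)=D_q(n-1,t)-D_q(n-2,t)+D_q(n-3,t-1),
\]
which applies because $n-1\ge t+1$. This is almost $M_q(n-1,t)$; the difference is
\[
M_q(n-1,t)-N_q(n-1,1,t)=D_q(n-2,t)-D_q(n-3,t)-D_q(n-4,t-1)+D_q(n-3,t-1).
\]
Denote the $i$-th summand on the right by $M_q(n-1-i,t-i)=D_q(n-1-i,t-i)-D_q(n-3-i,t-i)+D_q(n-4-i,t-i-1)$. After cancelling the $D_q(n-1-i,t-i)$ terms on both sides, it suffices to prove
\begin{align*}
0\le\;& \bigl[D_q(n-2,t)-D_q(n-3,t)\bigr]-\bigl[D_q(n-4,t-1)-D_q(n-3,t-1)\bigr]\\
&+\sum_{i=1}^{l}\bigl[D_q(n-4-i,t-i-1)-D_q(n-3-i,t-i)\bigr].
\end{align*}

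The second bracket is $\le 0$ by Eq.~\eqref{Pro:eq4}, so subtracting it only helps. Each summand in the sum is also nonpositive by Eq.~\eqref{Pro:eq3}, so the sum must be absorbed by the first bracket. The recurrence~\eqref{def:eq1}, applied at $(n-2,t)$ with $n-2\ge t$, gives (with boundary care when $n-2=t$)
\[
D_q(n-2,t)-D_q(n-3,t)=\sum_{i=1}^{q-1}D_q(n-3-i,t-i).
\]
Since $l\le q-1$, this is at least $\sum_{i=1}^{l}D_q(n-3-i,t-i)$. That dominates $\sum_{i=1}^{l}D_q(n-3-i,t-i)$ and thus the negative part of the sum, so the inequality follows.

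The only delicate point is the edge case $n=t+2$ (so $n-2=t$), where~\eqref{def:eq1} requires $n\ge t+1$ at $(n-2,t)$. There I would check directly: $D_q(t,t)-D_q(t-1,t)=1$, and the terms $D_q(t-1-i,t-i)$ vanish. The inequality then reduces to comparing small constants, which is settled using the conventions $D_q(m,s)=0$ for $s>m$.
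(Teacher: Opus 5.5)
The first inequality is fine. The second part, however, contains a sign error, and it sits exactly at the step your argument relies on. From $M_q(n-1,t)=D_q(n-1,t)-D_q(n-3,t)+D_q(n-4,t-1)$ and $N_q(n-1,1,t)=D_q(n-1,t)-D_q(n-2,t)+D_q(n-3,t-1)$ one gets
\[
M_q(n-1,t)-N_q(n-1,1,t)=\bigl[D_q(n-2,t)-D_q(n-3,t)\bigr]+\bigl[D_q(n-4,t-1)-D_q(n-3,t-1)\bigr].
\]
You wrote $-D_q(n-4,t-1)+D_q(n-3,t-1)$ instead. So the nonpositive bracket $D_q(n-4,t-1)-D_q(n-3,t-1)$ actually enters with a plus sign, and your claim that dropping it ``only helps'' is backwards. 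Set
\[
\Delta=\sum_{i=0}^{l}M_q(n-1-i,t-i)-N_q(n-1,1,t)-\sum_{i=1}^{l}D_q(n-1-i,t-i).
\]
The quantity you showed to be nonnegative exceeds $\Delta$ by $2\bigl[D_q(n-3,t-1)-D_q(n-4,t-1)\bigr]\ge 0$. Its nonnegativity therefore does not imply the lemma.

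The crude estimate cannot be rescued as written. For $n\ge t+3$ and $l=q-1$ the inequality is an equality, so there is no slack. You may not discard the positive terms $\sum_{i=1}^{l}D_q(n-4-i,t-1-i)$, nor the surplus $\sum_{j=l+1}^{q-1}D_q(n-3-j,t-j)$ of the first bracket.

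The missing step is to expand the offending bracket by the recurrence as well. For $n\ge t+3$, applying \eqref{def:eq1} at $(n-3,t-1)$ gives $D_q(n-3,t-1)-D_q(n-4,t-1)=\sum_{j=1}^{q-1}D_q(n-4-j,t-1-j)$. Combined with your expansion at $(n-2,t)$, this yields
\[
\Delta=\sum_{j=l+1}^{q-1}\bigl[D_q(n-3-j,t-j)-D_q(n-4-j,t-1-j)\bigr]\ge 0
\]
by \eqref{Pro:eq3}. This telescopes to $D_q(n-4-l,t-1-l)-D_q(n-3-q,t-q)$, which is exactly the paper's expression.

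The edge case needs the same correction. At $n=t+2$ the correct value is $\Delta=1-1+0=0$, so there is no slack there either, and your ``small constants'' check must be redone with the right sign.
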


\begin{proof}
The first inequality follows directly from 
\(
M_q(n,t) = D_q(n,t) - \bigl[D_q(n-2,t) - D_q(n-3,t-1)\bigr],
\)
and the fact that $D_q(n-2,t) \ge D_q(n-3,t-1)$ for all $n\geq t+2$.

We define
\(
\Delta:= \sum_{i=0}^{l} M_q(n-1-i,t-i) - N_q(n-1,1,t) - \sum_{i=1}^{l} D_q(n-1-i,t-i).
\)
Since \(M_q(n,t) = D_q(n,t) - D_q(n-2,t) + D_q(n-3,t-1)\) and  \(N_q(n-1,1,t) = D_q(n-1,t) - D_q(n-2,t) + D_q(n-3,t-1)\),
we obtain
\(
\Delta = D_q(n-2,t) - D_q(n-3,t) - D_q(n-3,t-1) + D_q(n-4-l,t-1-l)
\)
for $l\in [q-1]$.

We now consider two cases:
\begin{itemize}
  \item For $l\in [q-1]$, if $n \ge t+3$, then, by Eqs. \eqref{def:eq1} and \eqref{Pro:eq3}, we have
  \(
  \Delta=D_q(n-4-l,t-1-l)-D_q(n-3-q,t-q) \ge 0.
  \)
  \item For $l\in [q-1]$, if $n = t+2$, then 
  \(
  \Delta = D_q(t,t) - D_q(t-1,t-1) = 1 - 1 = 0 \ge 0.
  \)
\end{itemize}

In both cases, $\Delta \ge 0$, which implies that
\(
\sum_{i=0}^{l} M_q(n-1-i,t-i) \ge N_q(n-1,1,t) + \sum_{i=1}^{l} D_q(n-1-i,t-i).
\)
This completes the proof.
\end{proof}

\begin{lemma}
\label{Upp:lm3}
Let $q\ge 3$, $t\ge1$, and $n\ge t+3$. Then $N_q(n,1,t+1,t) = M_q(n,t)$.
\end{lemma}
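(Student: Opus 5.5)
The proof has two parts: a construction giving $N_q(n,1,t+1,t)\ge M_q(n,t)$, and an induction on $n+t$ giving the matching upper bound.

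\emph{Lower bound.} Take $\mathbf{y}=\mathbf{c}_q(n,\sigma)$ periodic. Let $\mathbf{x}\in\Sigma_q^{n+1}$ be obtained by inserting one symbol near the front so that $\mathbf{y}$ is not a subsequence of $\mathbf{x}$; this guarantees $d_L\ge 1$. A natural candidate is $\mathbf{x}=(c_2,c_1,c_1,c_3,\dots)$ or a similar local modification. We then compute $|D_{t+1}(\mathbf{x})\cap D_t(\mathbf{y})|$ by splitting on the first symbol, using Lemmas~\ref{Pro:lm1} and~\ref{Pro:lm2}. The aim is to reach
\[
D_q(n,t)-D_q(n-2,t)+D_q(n-3,t-1).
\]
In that decomposition, the branches for symbols other than the first two reduce to periodic suffixes. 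This gives $\sum_{i\ge 2}D_q(n-1-i,t-i)$, which by Eq.~\eqref{def:eq1} combines with the remaining branches into $D_q(n,t)-D_q(n-2,t)$. The extra $D_q(n-3,t-1)$ comes from the one branch where the swapped pair still lets both sequences continue.

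\emph{Upper bound.} Let $\mathbf{x}\in\Sigma_q^{n+1}$ and $\mathbf{y}\in\Sigma_q^n$ with $d_L(\mathbf{x},\mathbf{y})\ge 1$, and set $\mathcal{X}=D_{t+1}(\mathbf{x})\cap D_t(\mathbf{y})$. Decompose $|\mathcal{X}|=\sum_a|\mathcal{X}^a|$ over the first symbol $a$. For each $a$, the branch is an intersection $D_{t+1-\ell}(\cdot)\cap D_{t-\ell^*}(\cdot)$ of suffixes, and we split according to $\ell$ versus $\ell^*$:
\begin{itemize}
\item If $x_1=y_1$, the first-symbol branch equals the same problem on $(\mathbf{x}_{[2,n+1]},\mathbf{y}_{[2,n]})$. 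The distance condition is inherited, so induction gives at most $M_q(n-1,t)$. Every other branch is bounded crudely by $D_q(n-1-i,t-i)$ for the symbol whose first occurrence is $i$-th. Summing and using Eq.~\eqref{def:eq1} together with the monotonicity bounds \eqref{Pro:eq3} and \eqref{Pro:eq4} should give at most $M_q(n,t)$.
\item If $x_1\ne y_1$, at most $q-1$ branches remain nontrivial after deleting prefixes. In the branch $a=y_1$, the first occurrence of $a$ in $\mathbf{x}$ is at position $\ell+1\ge 2$, so the deletion budget drops by at least one. That branch is then bounded by $N_q(n-1,1,t)$ or by $D_q(n-1,t-1)$-type terms.
\end{itemize}
Lemma~\ref{Upp:lm2} is designed for exactly this: it gives $N_q(n-1,1,t)+\sum_{i=1}^{l}D_q(n-1-i,t-i)\le\sum_{i=0}^{l}M_q(n-1-i,t-i)$. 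So I would bound each branch by an $M_q$ or a $D_q$ term and then collapse the sum using the recurrence that $M_q$ inherits from Corollary~\ref{Pro:cor2}.

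\emph{Base cases and the main obstacle.} The base cases are $t=1$, covered by Lemma~\ref{Upp:lm1} (which gives $3=M_q(n,1)$ for small $n$), and the boundary $n=t+3$, checked directly. I expect the main obstacle to be the case $x_1\ne y_1$ when the second symbols also interleave: the branch $a=x_1$ in $\mathbf{y}$ and the branch $a=y_1$ in $\mathbf{x}$ both survive, and the crude bounds overshoot by a lower-order term. There I would first try the cyclic-operation Corollary~\ref{Pro:cor1} to replace the common suffix by a periodic one. That reduces the problem to a finite set of length-3 or length-4 prefix patterns, which I would enumerate and verify against $M_q(n,t)$.
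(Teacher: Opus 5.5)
\textbf{The upper bound does not close in the case $x_1=y_1$, which is not the case you flagged.} With the leading branch bounded by $M_q(n-1,t)$ via induction, the other $q-1$ branches are bounded crudely through $\mathbf y$. In the worst configuration, where the other symbols first appear in $\mathbf y$ at positions $2,\dots,q$, Eq.~\eqref{def:eq1} gives
\[
M_q(n-1,t)+\sum_{i=1}^{q-1}D_q(n-1-i,t-i)=M_q(n,t)+D_q(n-4,t-1)-D_q(n-q-3,t-q).
\]
This exceeds $M_q(n,t)$ by $\frac{n^{t-1}}{(t-1)!}+O(n^{t-2})$, so already the leading coefficient is $4$ instead of $3$. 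Since $M_q\le D_q$ (Lemma~\ref{Upp:lm2}), the recurrence $M_q(n,t)=\sum_{i=0}^{q-1}M_q(n-1-i,t-i)$ closes only if \emph{every} branch is bounded by an $M_q$-term. The monotonicity bounds \eqref{Pro:eq3} and \eqref{Pro:eq4} only enlarge bounds, so they cannot help.

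The missing idea is to use $d_L\ge 1$ a second time, on the branch where the two sequences first diverge. The paper does this as follows.
\begin{itemize}
\item It takes the longest common prefix $\mathbf p$, made periodic by Lemma~\ref{Pro:lm5}.
\item It bounds the branches of the symbols of $\mathbf p$ by $M_q(n-i,t-i+1)$, by induction on $|\mathbf p|$.
\item It examines the first symbol $w_1'$ of $\mathbf y$ after $\mathbf p$. If $w_1'$ occurs in $\mathbf p$, every new symbol is pushed one position further back.
\item Otherwise the $w_1'$-branch is at most $N_q(n-|\mathbf p|-1,1,t-|\mathbf p|)$. Either $\mathbf x$ has $w_1'$ right after its own divergent symbol, and then the remaining tails must differ (else $\mathbf y\in D_1(\mathbf x)$), or the branch is bounded by a smaller $D_q$-term.
\item Lemma~\ref{Upp:lm2} then converts $N_q(\cdot,1,\cdot)+\sum D_q$ into $\sum M_q$.
\end{itemize}

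Your case $x_1\ne y_1$ is already fine; it is the paper's Case~1. The $y_1$-branch is at most $N_q(n-1,1,t)$, or at most $D_q(n-2,t-1)$ followed by Lemma~\ref{Pro:lm6}. All other branches, including the $x_1$-branch, are bounded crudely by $\sum_{i=1}^{q-1}D_q(n-1-i,t-i)$. Lemma~\ref{Upp:lm2} with $l=q-1$ absorbs this exactly, with no overshoot. So no enumeration is needed. Your proposed reduction to finitely many prefix patterns via Corollary~\ref{Pro:cor1} could not work anyway, because the segment where $\mathbf x$ and $\mathbf y$ differ has unbounded length.

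\textbf{Your lower bound also needs repair.} Inserting a symbol into $\mathbf y$ always leaves $\mathbf y$ a subsequence of $\mathbf x$, so $d_L=0$; a transposition is needed as well. Your named candidate $\mathbf x=(c_2,c_1,c_1,c_3,c_4,\dots)$ has $d_L\ge1$ but is not extremal. Its $c_1$-branch is the substitution pair $(c_1,c_3,c_4,\dots)$, $(c_2,c_3,c_4,\dots)$, and the total is $D_q(n,t)-D_q(n-2,t)+D_q(n-q-2,t-q)$. That falls short of $M_q(n,t)$ by $D_q(n-3,t-1)-D_q(n-q-2,t-q)$; for $t=1$ you get $2$, not $3$.

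The paper uses $\mathbf x=(c_2,c_1,c_3,c_2,c_4,\dots)$. Its $c_1$-branch is exactly Levenshtein's extremal pair $(c_3,c_2,c_4,\dots)$, $(c_2,c_3,c_4,\dots)$ from Theorem~\ref{def:thm1}, contributing $N_q(n-1,1,t)$. The remaining branches contribute $\sum_{i=1}^{q-1}D_q(n-1-i,t-i)$, and the two together sum to $M_q(n,t)$.
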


We now consider pairs of vectors $\mathbf{x}=(x_1,\dots,x_n),\mathbf{y}=(y_1,\dots,y_n)\in\Sigma_q^n$ with $d_L(\mathbf{x},\mathbf{y})\ge2$ and $x_1\neq y_1$. Our goal is to prove that $|D_q(\mathbf{x},\mathbf{y};t,t)|\le N_q(n,t)$ for all sufficiently large $n$. To this end, we first prove several auxiliary lemmas.

\begin{lemma}
\label{Upp:lm4}
Let $q\ge3$, $n\geq t+1$, and let $\mathbf{x}=(x_1,\dots,x_n),\mathbf{y}=(y_1,\dots,y_n)\in\Sigma_q^n$ be sequences with $x_1\neq y_1$ and $\mathbf{x}_{[2,n]}=\mathbf{y}_{[2,n]}$. Then
\[
|D_t(\mathbf{x})| + |D_t(\mathbf{y})| \le 2D_q(n,t) + D_q(n-q-1,t-q) - D_q(n-q,t-q+1).
\]
Moreover, if $\mathbf{x}_{[2,n]}\notin \mathbf{C}_q(n-1)$, then
\(
|D_t(\mathbf{x})| + |D_t(\mathbf{y})| \le 2D_q(n,t)-2n^{t-q+1}+O(n^{t-q}).
\) 
\end{lemma}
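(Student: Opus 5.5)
Write $\mathbf{z}=\mathbf{x}_{[2,n]}=\mathbf{y}_{[2,n]}\in\Sigma_q^{n-1}$, $a=x_1$ and $b=y_1$, with $a\neq b$. I would first turn the left-hand side into a function of $\mathbf{z}$ alone by splitting on the first symbol (Lemmas~\ref{Pro:lm1} and~\ref{Pro:lm2}).

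\emph{Step 1: a closed form in terms of $\mathbf{z}$.} Subsequences of $a\circ\mathbf{z}$ of length $n-t$ that start with $a$ are exactly $a\circ D_t(\mathbf{z})$. For $j\neq a$, the first occurrence of $j$ lies inside $\mathbf{z}$. So those starting with $j$ are in bijection with $D_{t-1}(\mathbf{z})^{j}$. Hence
\[
|D_t(a\circ\mathbf{z})| = |D_t(\mathbf{z})| + |D_{t-1}(\mathbf{z})| - |D_{t-1}(\mathbf{z})^{a}|,
\]
and the same holds with $b$ in place of $a$. The left-hand side of the lemma is therefore
\[
2|D_t(\mathbf{z})|+2|D_{t-1}(\mathbf{z})|-|D_{t-1}(\mathbf{z})^{a}|-|D_{t-1}(\mathbf{z})^{b}|.
\]

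\emph{Step 2: reduction to a periodic $\mathbf{z}$.} Apply the cyclic operation on suffixes to $\mathbf{z}$, and invoke Corollary~\ref{Pro:cor1} (equivalently Lemma~\ref{Pro:lm5} after reversal) with $\mathbf{c}=\mathbf{d}=(a)$, $k=0$, and then again with $(b)$. This gives $|D_t(a\circ\mathbf{z})|\le|D_t(a\circ\mathbf{z}^{cR})|$ and $|D_t(b\circ\mathbf{z})|\le|D_t(b\circ\mathbf{z}^{cR})|$. The key point is that $\mathbf{z}^{cR}$ depends only on $\mathbf{z}$, so the same replacement serves both sequences. We may therefore assume $\mathbf{z}=\mathbf{c}_q(n-1,\sigma)$, and write $\mathbf{z}=(\sigma_1,\sigma_2,\dots)$.

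\emph{Step 3: the periodic computation.} Let $\sigma_{k}$ be the symbol whose first occurrence in $\mathbf{z}$ is at position $k\in[q]$. Then $a\circ\mathbf{z}$ is periodic exactly when $a=\sigma_q$, in which case $|D_t(a\circ\mathbf{z})|=D_q(n,t)$. If $a=\sigma_k$ with $k<q$, the recurrence \eqref{def:eq1} together with Lemma~\ref{Pro:lm2} gives
\[
|D_t(a\circ\mathbf{z})| = D_q(n-1,t)+\sum_{j\neq k}D_q(n-1-j,t-j),
\]
so the deficiency from $D_q(n,t)$ is $D_q(n-1-k,t-k)-D_q(n-1-q,t-q)$. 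This mirrors the base case in the proof of Lemma~\ref{Pro:lm11}(4). By Eq.~\eqref{Pro:eq3} the deficiency is nonincreasing in $k$, so it is smallest for $k=q-1$, where it equals $D_q(n-q,t-q+1)-D_q(n-q-1,t-q)$.

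Since $a\neq b$, at most one of them equals $\sigma_q$. Hence the total deficiency of the pair is at least $D_q(n-q,t-q+1)-D_q(n-q-1,t-q)$, which is the first inequality. The main technical care lies in the boundary cases: $t<q$ or $n$ close to $t$, where some $D_q$ terms vanish. Here I would check that the conventions ($D_q=0$ for negative arguments, and \eqref{def:eq1} holding for $n\ge t+1$) keep every identity valid.

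\emph{Step 4: the non-periodic case.} If $\mathbf{z}\notin\mathbf{C}_q(n-1)$, then neither $a\circ\mathbf{z}$ nor $b\circ\mathbf{z}$ is periodic, since a periodic sequence has periodic suffixes. Lemma~\ref{Pro:lm11}(4) then gives $|D_t(a\circ\mathbf{z})|\le D_q(n,t)-n^{t-q+1}+O(n^{t-q})$, and likewise for $b$. Adding the two bounds yields the second claim. I expect the main obstacle to be justifying Step 2 rigorously, namely confirming that Corollary~\ref{Pro:cor1} applies with a one-symbol prefix and identical $\mathbf{c}=\mathbf{d}$. If that is awkward, the fallback is to argue directly from the formula in Step 1, by showing that the cyclic chain on $\mathbf{z}$ does not decrease $|D_t(\mathbf{z})|+|D_{t-1}(\mathbf{z})|-|D_{t-1}(\mathbf{z})^{a}|$.
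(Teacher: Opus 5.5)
Your proof is correct, but it reaches the first inequality by a different route from the paper.

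\textbf{The paper's route.} The paper never makes $\mathbf{z}$ periodic. It writes the left-hand side as
\[
2|D_t(\mathbf{z})|+|D_{t-1}(\mathbf{z})^{a}|+|D_{t-1}(\mathbf{z})^{b}|+2\sum_{j\neq a,b}|D_{t-1}(\mathbf{z})^{j}|,
\]
which is your Step~1 identity with $2|D_{t-1}(\mathbf{z})|$ split by first symbol. It bounds each piece by $D_q(n-2-\ell_j,t-1-\ell_j)$, where $\ell_j$ is the offset of the first occurrence of $j$ in $\mathbf{z}$. Since the $\ell_j$ are distinct and \eqref{Pro:eq3} gives monotonicity, the coefficient-$2$ terms can take at best the offsets $0,\dots,q-3$ and the two coefficient-$1$ terms the offsets $q-2,q-1$. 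The recurrence \eqref{def:eq1} then collapses the sum.

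\textbf{Your route.} You replace this rearrangement argument with the reduction in Step~2, followed by an exact evaluation in the periodic case. The reduction is legitimate. Corollary~\ref{Pro:cor1} applies verbatim with $\mathbf{c}=\mathbf{d}=(a)$ and $k=0$, exactly as the paper uses it for a single sequence in Lemma~\ref{Pro:lm11}, so your fallback is unnecessary.

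\textbf{What the paper's route buys.} For the first inequality it needs only the trivial bound $|D_s(\mathbf{v})|\le D_q(|\mathbf{v}|,s)$, so it avoids Lemma~\ref{Pro:lm5}, the most delicate ingredient in the paper. It also needs no special handling of short $\mathbf{z}$ in which some symbols are missing. Your claim that $a\circ\mathbf{z}$ is periodic exactly when $a=\sigma_q$ fails for $n<q$. Nothing breaks there, though, because the correction term $D_q(n-q-1,t-q)-D_q(n-q,t-q+1)$ vanishes when $n<q$.

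\textbf{What your route buys.} It gives the exact deficiency $D_q(n-1-k,t-k)-D_q(n-1-q,t-q)$ in the periodic case. This shows the bound is tight: it is attained when $\mathbf{z}$ is periodic and $\{a,b\}=\{\sigma_{q-1},\sigma_q\}$. It also makes clear where the correction term comes from.

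Your handling of the second claim, via Lemma~\ref{Pro:lm11}(4), is the same as the paper's.
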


\begin{proof}
For convenience, let $\mathbf{u}=(u_1,\cdots,u_{n-1})=\mathbf{x}_{[2,n]}=\mathbf{y}_{[2,n]}$, and set $\mathcal{X}=D_{t-1}(\mathbf{u})$. Without loss of generality, assume $x_1=0$ and $y_1=1$. By Lemmas~\ref{Pro:lm1} and~\ref{Pro:lm2}, we have \(|\mathcal{X}| = \sum_{i=0}^{q-1} |\mathcal{X}^i|\), where
\[
\begin{aligned}
\mathcal{X}^i &= i \circ  D_{t-1-\ell_i}(\mathbf{u}_{[2+\ell_i,\,n-1]}) , \quad i = 0,1,\dots,q-1,
\end{aligned}
\]
and $\ell_i$ is non‑negative integers defined as the smallest indices such that $u_{1+\ell_i} = i$ (if a required symbol does not appear, the corresponding term vanishes). Moreover, the indices $\ell_i$ are distinct.

Since $\mathbf{x}=(0,\mathbf{u})$ and $\mathbf{y}=(1,\mathbf{u})$, it follows that
\begin{align*}
|D_t(\mathbf{x})^{i}|=|D_{t-1}(\mathbf{u})^i|\leq D_q(n-2-\ell_i,t-1-\ell_i)~\text{and}~  |D_t(\mathbf{y})^{j}|=|D_{t-1}(\mathbf{u})^j|\leq D_q(n-2-\ell_j,t-1-\ell_j),
\end{align*}
for $i\in [q-1]$ and $j\in \{0\}\cup [2,q-1]$. Moreover, $|D_t(\mathbf{x})^{0}|=|D_t(\mathbf{y})^{1}|=|D_{t}(\mathbf{u})|\leq D_q(n-1,t).$ Thus,
\begin{align*}
|D_t(\mathbf{x})|+|D_t(\mathbf{y})|&=\sum\limits_{i=0}^{q-1}\big(|D_t(\mathbf{x})^{i}|+|D_t(\mathbf{y})^i|\big)
=2|D_t(\mathbf{u})|+|D_{t-1}(\mathbf{u})^1|+|D_{t-1}(\mathbf{u})^0|+2\sum\limits_{i=2}^{q-1}|D_{t-1}(\mathbf{u})^i|\\
&\leq 2D_q(n-1,t)+D_q(n-2-\ell_0,t-1-\ell_0)+D_q(n-2-\ell_1,t-1-\ell_1)+2\sum\limits_{i=2}^{q-1}D_q(n-2-\ell_i
,t-1-\ell_i)\\
&\overset{(a)}{\leq} D_q(n-q-1,t-q)+D_q(n-q,t-q+1)+2\sum\limits_{i=0}^{q-2}D_q(n-1-i
,t-i)\\
&\overset{(b)}{=}2D_q(n,t)+D_q(n-q-1,t-q)-D_q(n-q, t-q+1),
\end{align*}
where $(a)$ follows from Eq. \eqref{Pro:eq3} together with the distinctness of the $\ell_i$, and $(b)$ follows from Eq. \eqref{def:eq1} for $n\geq t+1$.

Moreover, if $\mathbf{x}_{[2,n]}\notin \mathbf{C}_q(n-1)$ then $\mathbf{x},\mathbf{y}\notin \mathbf{C}_q(n)$. Combining this with Lemma \ref{Pro:lm11}, we have 
$|D_t(\mathbf{x})|\leq D_q(n,t)-n^{t-q+1}+O(n^{t-q})$ and $|D_t(\mathbf{y})|\leq D_q(n,t)-n^{t-q+1}+O(n^{t-q})$.
Thus, 
\(
|D_t(\mathbf{x})| + |D_t(\mathbf{y})| \le 2D_q(n,t)-2n^{t-q+1}+O(n^{t-q}).
\) 
\end{proof}

\begin{lemma}
\label{Upp:lm5}
Let $q\ge3$ and $n$ be sufficiently large. For sequences $\mathbf{x}=(x_1,\dots,x_n),\mathbf{y}=(y_1,\dots,y_n)\in\Sigma_q^n$ satisfying
$x_1\neq y_1$, $x_1=y_2$, $x_2=y_1$, and $\mathbf{x}_{[3,n]}\neq\mathbf{y}_{[3,n]}$, we have
\[
|D_t(\mathbf{x})\cap D_t(\mathbf{y})| \le 2D_q(n-2,t-1) + \sum_{i=2}^{q-1}D_q(n-1-i,t-i) + D_q(n-5,t-2)-D_q(n-3,t-2).
\]
\end{lemma}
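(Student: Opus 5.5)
Write $a=x_1=y_2$ and $b=x_2=y_1$ with $a\neq b$, and set $\mathbf{u}=\mathbf{x}_{[3,n]}$ and $\mathbf{v}=\mathbf{y}_{[3,n]}$, so that $\mathbf{u}\neq\mathbf{v}$. Let $\mathcal{X}=D_t(\mathbf{x})\cap D_t(\mathbf{y})$. The plan is to decompose $\mathcal{X}$ by first symbol, using Lemmas~\ref{Pro:lm1} and~\ref{Pro:lm2}: $|\mathcal{X}|=|\mathcal{X}^a|+|\mathcal{X}^b|+\sum_{c\neq a,b}|\mathcal{X}^c|$.

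\textbf{The terms $\mathcal{X}^a$ and $\mathcal{X}^b$.} In $\mathbf{x}$ the symbol $a$ occurs first at position 1, and in $\mathbf{y}$ it occurs first at position 2. Hence
\[
|\mathcal{X}^a|=|D_t(b,\mathbf{u})\cap D_{t-1}(\mathbf{v})|\le|D_{t-1}(\mathbf{v})|\le D_q(n-2,t-1).
\]
Symmetrically, $|\mathcal{X}^b|=|D_{t-1}(\mathbf{u})\cap D_t(a,\mathbf{v})|\le D_q(n-2,t-1)$. Together these give the $2D_q(n-2,t-1)$ term.

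\textbf{The terms $\mathcal{X}^c$ for $c\notin\{a,b\}$.} Let $\ell_c$ and $\ell_c^*$ be the first-occurrence offsets of $c$ in $\mathbf{u}$ and $\mathbf{v}$. Then
\[
|\mathcal{X}^c|=|D_{t-2-\ell_c}(\mathbf{u}_{[\ell_c+2,n-2]})\cap D_{t-2-\ell^*_c}(\mathbf{v}_{[\ell_c^*+2,n-2]})|\le D_q(n-3-m_c,t-2-m_c),
\]
where $m_c=\max\{\ell_c,\ell_c^*\}$. The offsets $\ell_c$ are distinct across the $q-2$ symbols $c$, so after ordering, the $k$-th largest term has $m_c\ge k-1$. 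By Eq.~\eqref{Pro:eq3} we get $\sum_{c\neq a,b}|\mathcal{X}^c|\le\sum_{i=2}^{q-1}D_q(n-1-i,t-i)$. This is the crude bound, and it is already the middle sum in the target.

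\textbf{The correction term.} It remains to gain $D_q(n-3,t-2)-D_q(n-5,t-2)$, and this must come from $\mathbf{u}\neq\mathbf{v}$. The crude bound $D_q(n-2,t-1)$ for $|\mathcal{X}^a|$ ignores the intersection with $D_t(b,\mathbf{u})$. I would refine $\mathcal{X}^a$ (and, where needed, $\mathcal{X}^b$) by decomposing one level further on the second symbol. Writing $D_q(n-2,t-1)$ via the recurrence~\eqref{def:eq1} as the sum over second symbols, the term indexed by $b$ is $\le D_q(n-3,t-1)$ in the periodic case. The branches for symbols that force a common occurrence in both $(b,\mathbf{u})$ and $\mathbf{v}$ are where $\mathbf{u}\neq\mathbf{v}$ bites. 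I would take the first index $p$ with $u_p\neq v_p$ and split into two cases.

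\emph{Case 1: $p$ is small, say $p\le 2$.} Compare the branches directly: the deletion budget for one sequence drops by one extra deletion in some branch. This yields a loss of about $D_q(n-3,t-2)$, partially refunded by a term $D_q(n-5,t-2)$, similarly to how Theorem~\ref{def:thm1} produces $D_q(n-1,t)-D_q(n-2,t)+D_q(n-3,t-1)$.

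\emph{Case 2: $p$ is large.} The common prefix of $\mathbf{u}$ and $\mathbf{v}$ is long. I would apply Lemma~\ref{Pro:lm3} with $\mathbf{x}=(a,b,\mathbf{w},\ldots)$ and $\mathbf{y}=(b,a,\mathbf{w},\ldots)$, together with Corollary~\ref{Pro:cor1}, to reduce to $\mathbf{w}$ periodic. The intersection is then bounded by a configuration in which the difference occurs at the end. Lemma~\ref{Upp:lm3} (that is, $N_q(n,1,t+1,t)=M_q(n,t)$) and Lemma~\ref{Upp:lm2} then give the needed saving.

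\textbf{Main obstacle.} The hard step is extracting the exact saving $D_q(n-3,t-2)-D_q(n-5,t-2)$ uniformly in the position of the first disagreement. I would try first to show
\[
|\mathcal{X}^a|+|\mathcal{X}^b|\le 2D_q(n-2,t-1)-D_q(n-3,t-2)+D_q(n-5,t-2)
\]
by bounding the sub-branch of $\mathcal{X}^a$ (or $\mathcal{X}^b$) starting with $ab$ by $N_q(n-3,1,t-1)$-type quantities rather than by $D_q(n-3,t-1)$. Lemma~\ref{Pro:lm6} would then absorb the index shifts, using that $n$ is sufficiently large.
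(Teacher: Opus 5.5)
Your first-symbol decomposition and the crude bounds on the three groups of branches are correct. The trouble is the step you call the main obstacle: it is not merely hard, it is false for every $t\ge3$. The saving cannot always be found in the $a$- and $b$-branches.

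\textbf{A counterexample to your intermediate claim.} Take $\mathbf{x}=(a,b,a,\mathbf{z})$ and $\mathbf{y}=(b,a,b,\mathbf{z})$, where $\mathbf{z}=\mathbf{c}_q(n-3,\sigma)$ and $\sigma$ ends in $(b,a)$. All hypotheses of the lemma hold. Since $(b,\mathbf{z})\in D_1(b,a,\mathbf{z})$ and $(a,\mathbf{z})\in D_1(a,b,\mathbf{z})$, you get $|\mathcal{X}^a|=|D_{t-1}(b,\mathbf{z})|$ and $|\mathcal{X}^b|=|D_{t-1}(a,\mathbf{z})|$. The recurrence \eqref{def:eq1} then gives
\[
|\mathcal{X}^a|+|\mathcal{X}^b|=2D_q(n-2,t-1)-D_q(n-q-2,t-q)+D_q(n-q-3,t-q-1),
\]
which is the extremal case of Lemma~\ref{Upp:lm4}. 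This deficit is $O(n^{t-q})$. Its $n^{t-3}$-coefficient is $1/(t-3)!$ for $q=3$ and $0$ for $q\ge4$. You need $D_q(n-3,t-2)-D_q(n-5,t-2)=\frac{2}{(t-3)!}n^{t-3}+O(n^{t-4})$. Concretely, for $q=t=3$ and $\mathbf{z}=(c,b,a,c,b,a,\dots)$, the sum is $2D_3(n-2,2)-1>2D_3(n-2,2)-2$.

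Here $\mathbf{u}$ and $\mathbf{v}$ already differ at $p=1$, so this sits in your Case~1. Yet no refinement of $\mathcal{X}^a$ or $\mathcal{X}^b$ can produce the saving, because it lies in the branches $c\notin\{a,b\}$. Your plan bounds those crudely and never returns to them. The first-disagreement index $p$ is not the right parameter.

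\textbf{The missing idea: split on $u_1$ and $v_1$, i.e.\ on $x_3$ and $y_3$.}

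\emph{Case $u_1\neq a$.} Then $\mathbf{u}\notin D_1(a,\mathbf{v})$. Deleting the leading $a$ leaves $\mathbf{v}\neq\mathbf{u}$, and any other deletion leaves a word beginning with $a$. So Lemma~\ref{Upp:lm3} applies directly to $D_{t-1}(\mathbf{u})\cap D_t(a,\mathbf{v})$ and gives
\[
|\mathcal{X}^b|\le M_q(n-2,t-1)=D_q(n-2,t-1)-D_q(n-4,t-1)+D_q(n-5,t-2).
\]
The case $v_1\neq b$ is symmetric. Since $a\neq b$, these two cases already cover every pair with $u_1=v_1$. Hence your Case~2, with its periodicity reduction via Lemma~\ref{Pro:lm3} and Corollary~\ref{Pro:cor1}, is unnecessary.

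\emph{Case $u_1=a$ and $v_1=b$.} Keep the crude bounds $|\mathcal{X}^a|,|\mathcal{X}^b|\le D_q(n-2,t-1)$. Now every $c\notin\{a,b\}$ first occurs at offset $\ell_c\ge1$ in both $\mathbf{u}$ and $\mathbf{v}$. So $\sum_c|\mathcal{X}^c|\le\sum_{i=3}^{q}D_q(n-1-i,t-i)$, a saving of $D_q(n-3,t-2)-D_q(n-q-1,t-q)$.

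For large $n$, both savings are at least $D_q(n-3,t-2)-D_q(n-5,t-2)$, because $D_q(n-4,t-1)\ge D_q(n-3,t-2)$ and $D_q(n-q-1,t-q)\le D_q(n-5,t-2)$. This is exactly the paper's argument.
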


\begin{proof}
Let \(\mathcal{X} = D_t(\mathbf{x}) \cap D_t(\mathbf{y})\). Without loss of generality, assume \(x_1 = 1\) and \(y_1 = 0\). Write \(\mathbf{x} = (1,0,\mathbf{u})\) and \(\mathbf{y} = (0,1,\mathbf{w})\), where \(\mathbf{u} = (u_1,\dots,u_{n-2})\) and \(\mathbf{w} = (w_1,\dots,w_{n-2})\) with \(\mathbf{u} \neq \mathbf{w}\).

 By Lemmas~\ref{Pro:lm1} and~\ref{Pro:lm2}, we have \(|\mathcal{X}| = \sum_{i=0}^{q-1} |\mathcal{X}^i|\), where
\(\mathcal{X}^0 = 0 \circ \bigl( D_{t-1}(\mathbf{u}) \cap D_t(1,\mathbf{w}) \bigr),\)
\(\mathcal{X}^1 = 1 \circ \bigl( D_{t}(0,\mathbf{u}) \cap D_{t-1}(\mathbf{w}) \bigr),\)
\[\mathcal{X}^i = i \circ \bigl( D_{t-2-\ell_i}(\mathbf{x}_{[4+\ell_i,\,n]}) \cap D_{t-2-\ell_i^*}(\mathbf{y}_{[4+\ell_i^*,\,n]}) \bigr), \quad i = 2,\dots,q-1,
\]
with non‑negative integers \(\ell_i,\ell_i^*\) defined as the smallest indices such that $x_{3+\ell_i}=i$ and $y_{3+\ell_i^*}=i$ for $i\ge2$. Moreover, the indices $\ell_i$ for $i\geq 2$ are  pairwise distinct.

Since the indices $\ell_i$ for $i\geq 2$ are pairwise distinct, it follows that
\(
\sum\limits_{i=2}^{q-1}|\mathcal{X}^i|\leq \sum\limits_{i=2}^{q-1}|D_{t-2-\ell_i}(\mathbf{x}_{[4+\ell_i,n]})|\leq \sum\limits_{i=2}^{q-1}D_q(n-1-i,t-i).
\)

If $u_1 \neq 1$, then $\mathbf{u} \notin D_1(1,\mathbf{w})$ because $\mathbf{u} \neq \mathbf{w}$. Consequently, by Lemma \ref{Upp:lm3},
\begin{align*}
|\mathcal{X}^{0}|\leq N_q(n-2,1,t,t-1)=D_q(n-2,t-1)-D_q(n-4,t-1)+D_q(n-5,t-2).
\end{align*}
Since $|\mathcal{X}^1| \le |D_{t-1}(\mathbf{w})| \le D_q(n-2,t-1)$, for $q\ge 3$ and $n\geq t+3$,
\[
|\mathcal{X}| = |\mathcal{X}^{0}| + |\mathcal{X}^{1}| + \sum_{i=2}^{q-1} |\mathcal{X}^i| 
\le 2D_q(n-2,t-1) + \sum_{i=2}^{q-1} D_q(n-1-i,t-i) + D_q(n-5,t-2) - D_q(n-4,t-1).
\]
A symmetric argument holds for the case $w_1 \neq 0$.

Now consider the case $u_1 = 1$ and $w_1 = 0$. For all $i \ge 2$, we have $\ell_i \ge 1$ and the indices $\ell_i$ are pairwise distinct. Hence
\[
\sum_{i=2}^{q-1} |\mathcal{X}^i|
\le \sum_{i=2}^{q-1} \bigl| D_{t-2-\ell_i}(\mathbf{x}_{[4+\ell_i,\,n]}) \bigr|
\le \sum_{i=2}^{q-1} D_q(n-2-i,t-1-i).
\]
Since $|\mathcal{X}^i| \le D_q(n-2,t-1)$ for $i \in \{0,1\}$, we get
\(
|\mathcal{X}|
\le 2D_q(n-2,t-1) + \sum_{i=2}^{q-1} D_q(n-1-i,t-i)+D_q(n-1-q,t-q)-D_q(n-3,t-2).
\)

For sufficiently large $n$, we have  $D_q(n-1-q,t-q)\leq D_q(n-5,t-2)$ and  $D_q(n-3,t-2) \leq D_q(n-4,t-1)$.
Thus, for sufficiently large $n$, we have $|\mathcal{X}|\le 2D_q(n-2,t-1) + \sum_{i=2}^{q-1} D_q(n-1-i,t-i) + D_q(n-5,t-2) - D_q(n-3,t-2)$.
\end{proof}

Combining Lemmas \ref{Upp:lm4} and \ref{Upp:lm5}, we now focus on pairs of sequences $\mathbf{x}=(x_1,\dots,x_n),\mathbf{y}=(y_1,\dots,y_n)\in\Sigma_q^n$ that differ in their first coordinate, i.e., $x_1\neq y_1$. Building on the recurrence relations established in Lemma \ref{Low:lm4}, we prove the following upper bound for $t\ge2$ and sufficiently large $n$.

\begin{lemma}
\label{Upp:lm6}
Let $q\ge 3, t\ge2$, and $n$ be sufficiently large. For all $\mathbf{x}=(x_1,...,x_n),\mathbf{y}=(y_1,...,y_n)\in\Sigma_q^n$ satisfying $d_L(\mathbf{x},\mathbf{y})\ge2$ and $x_1\neq y_1$,
\(
|D_q(\mathbf{x},\mathbf{y};t,t)| \le N_q^{(0)}(n,t).
\)
\end{lemma}

\begin{proof}
Let $\mathcal{X} = D_t(\mathbf{x}) \cap D_t(\mathbf{y})$. Without loss of generality, assume $x_1 = 1$ and $y_1 = 0$. By Lemmas \ref{Pro:lm1} and \ref{Pro:lm2}, we have
\(
|\mathcal{X}| = \sum_{i=0}^{q-1} |\mathcal{X}^i|,
\)
where $\mathcal{X}^0 = 0 \circ \bigl( D_{t-1-\ell_0}(\mathbf{x}_{[3+\ell_0,n]}) \cap D_t(\mathbf{y}_{[2,n]}) \bigr),$ 
$\mathcal{X}^1 = 1 \circ \bigl( D_t(\mathbf{x}_{[2,n]}) \cap D_{t-1-\ell_1^*}(\mathbf{y}_{[3+\ell_1^*,n]}) \bigr),$ and 
\begin{align*}
\mathcal{X}^i = i \circ \bigl( D_{t-1-\ell_i}(\mathbf{x}_{[3+\ell_i,n]}) \cap D_{t-1-\ell_i^*}(\mathbf{y}_{[3+\ell_i^*,n]}) \bigr), \quad i = 2,\dots,q-1.
\end{align*}
Here $\ell_0, \ell_1^*, \ell_i, \ell_i^*$ are the smallest non-negative integers such that
\(
x_{2+\ell_0}=0, y_{2+\ell_1^*}=1, x_{2+\ell_i}=i, y_{2+\ell_i^*}=i\ \ (2\leq i\leq q-1).
\)
Note that $\ell_0=0$ if and only if  $x_2=0$, $\ell_1^*=0$ if and only if  $y_2=1$, and for $2\leq i\le q-1$, $\ell_i=0$ if and only if  $x_2=i$, and $\ell_i^*=0$ if and only if  $y_2=i$. 

We analyze $|\mathcal{X}|$ in the following four cases according to the values of $\ell_0$ and $\ell^*_1$. \textbf{Case A:} \(\ell_0 = \ell_1^* = 0\); \textbf{Case B:} \(\ell_0 \ge 1\) and \(\ell_1^* \ge 1\); \textbf{Case C:} \(\ell_0 = 0\) and \(\ell_1^* \ge 1\); \textbf{Case D:} \(\ell_0 \ge 1\) and \(\ell_1^* = 0\).

\textbf{Case A:} \(\ell_0 = \ell_1^* = 0\). 

Then \(x_2 = 0\), \(y_2 = 1\), and $\ell_i, \ell_i^* \ge 1$ for all $i\in[2,q-1]$. Since $d_L(\mathbf{x},\mathbf{y})\ge2$, we have $\mathbf{x}_{[3,n]}\notin D_1(\mathbf{y}_{[2,n]})$ and $\mathbf{y}_{[3,n]}\notin D_1(\mathbf{x}_{[2,n]})$. By Lemma \ref{Upp:lm3}, for $n\geq t+4$ and $t\geq 2$,
\begin{align}
|\mathcal{X}^0|&\leq N_q(n-2,1,t,t-1)=D_q(n-2,t-1)-D_q(n-4,t-1)+D_q(n-5,t-2),\nonumber\\%\label{Upp::eq1}\\
|\mathcal{X}^1|&\leq N_q(n-2,1,t,t-1)=D_q(n-2,t-1)-D_q(n-4,t-1)+D_q(n-5,t-2).\label{Upp:eq2}
\end{align}

We further distinguish two subcases based on the values of \(\ell_i\) and \(\ell_i^*\).
\textbf{Case A1:}  \(\min\{\ell_i\mid i\in [2,q-1]\}\geq 2\) or \(\min\{\ell_i^*\mid i\in [2,q-1]\}\geq 2\);
\textbf{Case A2:}  \(\min\{\ell_i\mid i\in [2,q-1]\}=\min\{\ell_i^*\mid i\in [2,q-1]\}= 1\).  

\textbf{Case A1:}   \(\min\{\ell_i\mid i\in [2,q-1]\}\geq 2\) or \(\min\{\ell_i^*\mid i\in [2,q-1]\}\geq 2\).

For each $i\in[2,q-1]$, we have
\begin{align}\label{Upp:eq3}
|\mathcal{X}^i| \le \min\bigl\{ |D_{t-1-\ell_i}(\mathbf{x}_{[3+\ell_i,\,n]})|,\; |D_{t-1-\ell_i^*}(\mathbf{y}_{[3+\ell_i^*,\,n]})| \bigr\}\leq D_q(n-4,t-3).
\end{align}

%Since for every \(i \in [2, q-1]\) we have \(\ell_i \ge 2\) or \(\ell_i^* \ge 2\), and the indices \(\ell_i\) (or \(\ell_i^*\)) are distinct, it follows from Equation \eqref{intr::eq3} under the condition \(n \ge t+2\) that
%\[
%\sum_{i=2}^{q-1} |\mathcal{X}^i| \le \sum_{i=2}^{q-1} D_q(n-2-i,\; t-1-i).
%\]
%For convenience, let $x_3=a$ and $y_3=b$. Here, there exists at least one element $0$ or $1$ in $a,b$. Without loss of gen, consider $a=0$ or $1$. We only discuss $a=0$ because of $a=1$ similarly.

Let $x_3=a$ and $y_3=b$. Without loss of generality, assume that \(\min\{\ell_i\mid i\in [2,q-1]\}\geq 2\). Then $a=0$ or $1$. We only discuss $a=0$ (the case $a=1$ is symmetric). We further decompose $\mathcal{X}^0$ and $\mathcal{X}^1$ with $(x_1,x_2,x_3)=(1,0,0)$ and $(y_1,y_2,y_3)=(0,1,b)$:
$|\mathcal{X}^0|=\sum\limits_{i=0}^{q-1}|\mathcal{X}^{(0,i)}|$ and $|\mathcal{X}^1|=\sum\limits_{i=0}^{q-1}|\mathcal{X}^{(1,i)}|$ 
 with $|\mathcal{X}^{(0,0)}|=|D_{t-1}(x_4,x_{5},...,x_n)\cap D_{t-1-\ell_{(0,0)}^*}(y_{\ell_{(0,0)}^*+4},...,y_n)|$, and
\begin{align}
|\mathcal{X}^{(0,i)}|=|D_{t-2-\ell_{(0,i)}}(x_{5+\ell_{(0,i)}},...,x_n)\cap D_{t-\ell_{(0,i)}^*}(y_{3+\ell_{(0,i)}^*},...,y_n)|,\label{Upp:eq4}
\end{align}
where  $\ell_{(0,i)}, \ell_{(0,i)}^* \ge 0$ for $i \in \Sigma_q \setminus \{0\}$, $\ell_{(0,0)}^* \ge 0$, and the indices $\ell_{(0,i)}$ are pairwise distinct. 

We now consider two subcases according to the value of $b$. 
\begin{itemize}
  \item If $b\neq 0$, then $\ell_{(0,0)}^*\geq 1$. Since the indices $\ell_{(0,i)}$ are pairwise distinct for $i\in \Sigma_q\setminus\{0\}$,
\begin{align}
|\mathcal{X}^0|&=\sum\limits_{i=0}^{q-1}|\mathcal{X}^{(0,i)}|\leq \sum\limits_{i\in \Sigma_q\setminus\{0\}} |D_{t-2-\ell_{(0,i)}}(x_{5+\ell_{(0,i)}},...,x_n)|+|D_{t-1-\ell_{(0,0)}^*}(y_{4+\ell_{(0,0)}^*},...,y_n)|\nonumber\\
&\leq  2D_q(n-4,t-2)+\sum\limits_{i=1}^{q-2}D_q(n-4-i,t-2-i).\label{Upp:eq5}
\end{align}
For sufficiently large $n$, by Eqs. \eqref{Upp:eq2}, \eqref{Upp:eq3}, and \eqref{Upp:eq5} we have
\begin{align}
|\mathcal{X}|=\sum\limits_{i=0}^{q-1}|\mathcal{X}^i|&\leq 2D_q(n-4,t-2)+(q-2)D_q(n-4,t-3)+\sum_{i=1}^{q-1} D_q(n-4-i,\; t-2-i)\nonumber\\
&~~~~+D_q(n-2,t-1)-D_q(n-4,t-1)+D_q(n-5,t-2)\nonumber\\
&\overset{(a)}{\leq}\frac{5}{(t-2)!}n^{t-2}+O(n^{t-3})\overset{(b)}{<}N_q^{(0)}(n,t),\nonumber
\end{align}
where $(a)$ follows from Eqs. \eqref{intr:eq2} and \eqref{def:eq1}, and $(b)$ follows from Lemma \ref{Low:lm3}.
  \item If $b=0$, then $\ell_{(0,0)}^*=0$, $\mathbf{x}_{[1,3]}=(1,0,0)$, and $\mathbf{y}_{[1,3]}=(0,1,0)$. Since $d_L(\mathbf{x},\mathbf{y})\geq 2$, we have $\mathbf{x}_{[4,n]}\neq \mathbf{y}_{[4,n]}$. Thus,
\begin{align}
|\mathcal{X}^{(0,0)}|\leq N_q(n-3,1,t-1)=D_q(n-3,t-1)-D_q(n-4,t-1)+D_q(n-5,t-2).\label{Upp:eq6}
\end{align}

We further distinguish two subcases based on the value of $x_4$.
\textbf{Case A1-1:}  $x_4\neq 1$;
\textbf{Case A1-2:}  $x_4=1$.  
\begin{enumerate}
  \item \textbf{Case A1-1:}  $x_4\neq 1$. 

We divide this case into two subcases based on the value of $x_4$.  
\begin{itemize}
  \item \textit{Subcase 1:} $x_4 = 0$. Then $\ell_{(0,i)}\geq 1$ for $i\in [q-1]$. By Eq. \eqref{Upp:eq4}, we have \(|\mathcal{X}^{(0,i)}|\leq D_q(n-5,t-3)\)
for $i\in [q-1]$.
\item \textit{Subcase 2:} $x_4 \notin \{0,1\}$. We further consider two possibilities according to whether $x_4$ equals $y_4$.
\begin{itemize}
  \item[$(1)$] When $x_4=y_4$, we have $|\mathcal{X}^{(0,x_4)}|=|D_{t-2}(x_{5},...,x_n)\cap D_{t-2}(y_{5},...,y_n)|\leq N_q(n-4,1,t-2)=D_q(n-4,t-2)-D_q(n-5,t-2)+D_q(n-6,t-3)$ and $|\mathcal{X}^{(0,i)}|\leq D_q(n-5,t-3)$
for $i\in [q-1]\setminus\{x_4\}$;
  \item[$(2)$] When $x_4\neq y_4$, we have  $\ell_{(0,x_4)}^*\geq 3$ and $\ell_{(0,i)}\geq 1$ for $i\in [q-1]\setminus\{x_4\}$. Thus, by Eq. \eqref{Upp:eq4}, we have 
$|\mathcal{X}^{(0,i)}|\leq D_q(n-5,t-3)$ for $i\in [q-1]$.
\end{itemize}
\end{itemize}

In both subcases, for sufficiently large $n$, combining the above discussion with Eqs.~\eqref{Upp:eq2}, \eqref{Upp:eq3}, and~\eqref{Upp:eq6}, we obtain
\[
|\mathcal{X}| = \sum_{i=0}^{q-1} |\mathcal{X}^i|
= |\mathcal{X}^{(0,0)}| + |\mathcal{X}^1| + \sum_{i=1}^{q-1} |\mathcal{X}^{(0,i)}| +\sum_{i=2}^{q-1} |\mathcal{X}^i| 
\overset{(a)}{\leq}\frac{5}{(t-2)!}n^{t-2}+O(n^{t-3})
\overset{(b)}{<}N_q^{(0)}(n,t),
\]
where $(a)$ follows from Eqs. \eqref{intr:eq2} and \eqref{def:eq1}, and $(b)$ follows from Lemma \ref{Low:lm3}.
  \item \textbf{Case A1-2:}  $x_4=1$.  For convenience, set $y_4=c$. Then $\mathbf{x}_{[1,4]}=(1,0,0,1)$ and $\mathbf{y}_{[1,4]}=(0,1,0,c)$. For $i\in [2,q-1]$, by the definition of \(\mathcal{X}^i\), we have
\(
|\mathcal{X}^i| \le \min\bigl\{ |D_{t-1-\ell_i}(\mathbf{x}_{[3+\ell_i,\,n]})|,\; |D_{t-1-\ell_i^*}(\mathbf{y}_{[3+\ell_i^*,\,n]})| \bigr\}\leq D_q(n-5,t-4).
\)
By Lemmas \ref{Pro:lm1} and \ref{Pro:lm2}, we have $|\mathcal{X}^{0}|=\sum\limits_{i=0}^{q-1}|\mathcal{X}^{(0,i)}|$ and $|\mathcal{X}^{1}|=\sum\limits_{i=0}^{q-1}|\mathcal{X}^{(1,i)}|$ with 
\begin{align*}
|\mathcal{X}^{(0,0)}|&=|D_{t-1}(1,x_{5},...,x_n)\cap D_{t-1}(c,y_{5},...,y_n)|\overset{(a)}{\leq} N_q(n-3,1,t-1)\\
&=D_q(n-3,t-1)-D_q(n-4,t-1)+D_q(n-5,t-2),\\
|\mathcal{X}^{(0,1)}|&=|D_{t-2}(x_{5},...,x_n)\cap D_{t}(0,c,y_{5},...,y_n)|\leq D_q(n-4,t-2),\\
\sum\limits_{i=2}^{q-1}|\mathcal{X}^{(0,i)}|&=\sum\limits_{i=2}^{q-1}|D_{t-3-\ell_{(0,i)}}(x_{6+\ell_{(0,i)}},...,x_n)\cap D_{t-2-\ell_{(0,i)}^*}(y_{5+\ell_{(0,i)}^*},...,y_n)|\overset{(b)}{\leq} \sum\limits_{i=1}^{q-2} D_q(n-4-i,\; t-2-i),\\
|\mathcal{X}^{(1,0)}|&=|D_{t}(0,1,x_{5},...,x_n)\cap D_{t-1}(c,y_{5},...,y_n)|\overset{(c)}{\leq} N_q(n-3,1,t,t-1)\\
&=D_q(n-3,t-1)-D_q(n-5,t-1)+D_q(n-6,t-2),\\
|\mathcal{X}^{(1,1)}|&=|D_{t-2}(x_{5},...,x_n)\cap D_{t-2-\ell_{(1,1)}^*}(y_{5+\ell_{(1,1)}^*},...,y_n)|\overset{(d)}{\leq} \max\{N_q(n-4,1,t-2),D_q(n-5,t-3)\}\\
&=D_q(n-4,t-2)-D_q(n-5,t-2)+D_q(n-6,t-3),\\
\sum\limits_{i=2}^{q-1}|\mathcal{X}^{(1,i)}|&=\sum\limits_{i=2}^{q-1}|D_{t-3-\ell_{(1,i)}}(x_{6+\ell_{(1,i)}},...,x_n)\cap D_{t-2-\ell_{(1,i)}^*}(y_{5+\ell_{(1,i)}^*},...,y_n)|\overset{(e)}{\leq} \sum\limits_{i=1}^{q-2} D_q(n-4-i,\; t-2-i).
\end{align*}
Here $\ell_{(1,1)}^*, \ell_{(0,i)},\ell_{(0,i)}^*,\ell_{(1,i)},\ell_{(1,i)}^* \ge 0$ for $i \in \Sigma_q\setminus\{0,1\}$, and the indices $\ell_{(0,i)}$ (respectively $\ell_{(0,i)}^*$, $\ell_{(1,i)}$, $\ell_{(1,i)}^*$) are pairwise distinct. Moreover, $(a)$ follows from $(1,x_{5},...,x_n)\neq (c,y_{5},...,y_n)$; $(b)$ and $(e)$ follow from  the indices $\ell_{(0,i)}$ (or $\ell_{(1,i)}$) are pairwise distinct; $(c)$ follows from $(c,y_{5},...,y_n)\notin D_1(0,1,x_{5},...,x_n)$; $(d)$ follows from the fact that if $\ell_{(1,1)}^* = 0$, then $c = 1$ and $(x_5,\dots,x_n) \neq (y_5,\dots,y_n)$; otherwise, $|\mathcal{X}^{(1,1)}|\leq D_q(n-5,t-3)$. Thus, for sufficiently large $n$,
\begin{align}
|\mathcal{X}|&=\sum\limits_{i=0}^{q-1}|\mathcal{X}^{(0,i)}|+\sum\limits_{i=0}^{q-1}|\mathcal{X}^{(1,i)}|+\sum\limits_{i=2}^{q-1}|\mathcal{X}^i|\nonumber\\
&\leq \big(D_q(n-3,t-1)-D_q(n-4,t-1)+D_q(n-5,t-2)\big)+\big(D_q(n-3,t-1)-D_q(n-5,t-1)\nonumber\\
&~~~~+D_q(n-6,t-2)\big)+2D_q(n-4,t-2)-D_q(n-5,t-2)+D_q(n-6,t-3)\nonumber\\
&~~~~+2\sum_{i=1}^{q-2} D_q(n-4-i,\; t-2-i)+(q-2)D_q(n-5,t-4)\nonumber\\
&\overset{(a)}{=}\frac{6}{(t-2)!}n^{t-2}-\frac{3t+15}{(t-3)!}n^{t-3}+O(n^{t-4})\overset{(b)}{<}N_q^{(0)}(n,t),\label{Upp:eq7}
\end{align}
where $(a)$ follows from Lemma \ref{Pro:lm10}, and $(b)$ follows from Lemma \ref{Low:lm3}.
\end{enumerate}
 
\end{itemize}

\textbf{Case A2:}  \(\min\{\ell_i\mid i\in [2,q-1]\}=\min\{\ell_i^*\mid i\in [2,q-1]\}= 1\). That is, there exist indices \(i, j \in [2, q-1]\) (not necessarily distinct) such that \(\ell_i = 1\) and \(\ell_j^* = 1\).

\textit{Subcase 1:} $i \neq j$. Without loss of generality, we let \(i=2\) and \(j=3\). Then \((x_1,x_2,x_3)=(1,0,2)\), \((y_1,y_2,y_3)=(0,1,3)\), \(\ell_2^*\geq 2\), \(\ell_3\geq 2\), and \(\ell_i\geq 2\) for every \(i\in [4,q-1]\). Thus, $|\mathcal{X}^i|\leq D_q(n-4,t-3)$
for all \(i\in [2,q-1]\).

By Lemmas \ref{Pro:lm1} and \ref{Pro:lm2}, we have 
$|\mathcal{X}^0|=\sum\limits_{i=0}^{q-1}|\mathcal{X}^{(0,i)}|$ and $|\mathcal{X}^1|=\sum\limits_{i=0}^{q-1}|\mathcal{X}^{(1,i)}|$ with
\begin{align}
|\mathcal{X}^{(0,i)}|&=|D_{t-2-\ell_{(0,i)}}(x_{5+\ell_{(0,i)}},...,x_n)\cap D_{t-\ell_{(0,i)}^*}(y_{3+\ell_{(0,i)}^*},...,y_n)|,\nonumber\\
|\mathcal{X}^{(0,2)}|&=|D_{t-1}(x_4,x_{5},...,x_n)\cap D_{t-2-\ell_{(0,2)}^*}(y_{\ell_{(0,2)}^*+5},...,y_n)|,\nonumber\\
|\mathcal{X}^{(0,3)}|&=|D_{t-2-\ell_{(0,3)}}(x_{\ell_{(0,3)}+5},...,x_n)\cap D_{t-1}(y_4,y_{5},...,y_n)|,\nonumber\\
|\mathcal{X}^{(1,i)}|&=|D_{t-\ell_{(1,i)}}(x_{3+\ell_{(1,i)}},...,x_n)\cap D_{t-2-\ell_{(1,i)}^*}(y_{5+\ell_{(1,i)}^*},...,y_n)|,\nonumber\\
|\mathcal{X}^{(1,2)}|&=|D_{t-1}(x_4,x_{5},...,x_n)\cap D_{t-2-\ell_{(1,2)}^*}(y_{\ell_{(1,2)}^*+5},...,y_n)|,\nonumber\\
|\mathcal{X}^{(1,3)}|&=|D_{t-2-\ell_{(1,3)}}(x_{\ell_{(1,3)}+5},...,x_n)\cap D_{t-1}(y_4,y_{5},...,y_n)|,\nonumber
\end{align}
where
$\ell_{(0,i)},\ell_{(1,i)},\ell_{(0,i)}^*,\ell_{(1,i)}^*\geq 0$  for $i\in \Sigma_q\setminus\{2,3\}$, $\ell_{(0,3)},\ell_{(0,2)}^*,\ell_{(1,3)},\ell_{(1,2)}^*\geq 0$, and the indices $\ell_{(0,i)}$ (respectively $\ell_{(1,i)}$, $\ell_{(0,i)}^*$, $\ell_{(1,i)}^*$) are pairwise distinct. Thus, 
\[|\mathcal{X}^0|\leq \sum\limits_{i\in \Sigma_q\setminus\{2\}} |D_{t-2-\ell_{(0,i)}}(x_{5+\ell_{(0,i)}},...,x_n)|+|D_{t-2-\ell_{(0,2)}^*}(y_{5+\ell_{(0,2)}^*},...,y_n)|\leq  2D_q(n-4,t-2)+\sum\limits_{i=1}^{q-2}D_q(n-4-i,t-2-i),\] 
\[|\mathcal{X}^1|\leq \sum\limits_{i\in \Sigma_q\setminus\{3\}} |D_{t-2-\ell_{(1,i)}^*}(y_{5+\ell_{(1,i)}^*},...,y_n)|+|D_{t-2-\ell_{(1,3)}}(x_{5+\ell_{(1,3)}},...,x_n)|\leq   2D_q(n-4,t-2)+\sum\limits_{i=1}^{q-2}D_q(n-4-i,t-2-i).\] 
For sufficiently large $n$, when $i\neq j$, we have 
\begin{align}
|\mathcal{X}|&=\sum\limits_{i=0}^{q-1}|\mathcal{X}^i|
\leq 4D_q(n-4,t-2)+(q-2)D_q(n-4,t-3)+2\sum_{i=1}^{q-1} D_q(n-4-i,\; t-2-i)\nonumber\\
&\overset{(a)}{\leq}\frac{4}{(t-2)!}n^{t-2}+O(n^{t-3})
\overset{(b)}{\leq}N_q^{(0)}(n,t),\nonumber
\end{align}
where $(a)$ follows from Eq. \eqref{intr:eq2}, and $(b)$ follows from Lemma \ref{Low:lm3}.

\textit{Subcase 2:} $i = j$. Without loss of generality, set $i = j = q-1$. Then $x_3 = y_3 = q-1$. That is,
\(
\mathbf{x}_{[1,3]} = (1,0,q-1)\) and \(\mathbf{y}_{[1,3]} = (0,1,q-1).
\)
Hence,
\begin{align}
|\mathcal{X}^{(0,q-1)}|&=|D_{t-1}(\mathbf{x}_{[4,n]})\cap D_{t-1}(\mathbf{y}_{[4,n]})|\overset{(a)}{\leq} N_q(n-3,1,t-1)=D_q(n-3,t-1)-D_q(n-4,t-1)+D_q(n-5,t-2),\nonumber\\
|\mathcal{X}^{(0,1)}|&=|D_{t-2-\ell_{(0,1)}}(\mathbf{x}_{[5+\ell_{(0,1)},n]})\cap D_{t-\ell_{(0,1)}^*}(\mathbf{y}_{[3+\ell_{(0,1)}^*,n]})|,\nonumber\\
|\mathcal{X}^{(0,j)}|&=|D_{t-2-\ell_{(0,j)}}(\mathbf{x}_{[5+\ell_{(0,j)},n]})\cap D_{t-2-\ell_{(0,j)}^*}(\mathbf{y}_{[5+\ell_{(0,j)}^*,n]})|,\qquad \text{for $j\in [0,q-2]\setminus\{1\}$},\nonumber\\
|\mathcal{X}^{(1,q-1)}|&=|D_{t-1}(\mathbf{x}_{[4,n]})\cap D_{t-1}(\mathbf{y}_{[4,n]})|\overset{(b)}{\leq} N_q(n-3,1,t-1)=D_q(n-3,t-1)-D_q(n-4,t-1)+D_q(n-5,t-2),\nonumber\\
|\mathcal{X}^{(1,0)}|&=|D_{t-\ell_{(1,0)}}(\mathbf{x}_{[3+\ell_{(1,0)},n]})\cap D_{t-2-\ell_{(1,0)}^*}(\mathbf{y}_{[5+\ell_{(1,0)}^*,n]})|,\nonumber\\
|\mathcal{X}^{(1,k)}|&=|D_{t-2-\ell_{(1,k)}}(\mathbf{x}_{[5+\ell_{(1,k)},n]})\cap D_{t-2-\ell_{(1,k)}^*}(\mathbf{y}_{[5+\ell_{(1,k)}^*,n]})|,\qquad \text{for $k\in [q-2]$},\nonumber\\
|\mathcal{X}^{q-1}|&=|D_{t-2}(\mathbf{x}_{[4,n]})\cap D_{t-2}(\mathbf{y}_{[4,n]})|\overset{(c)}{\leq} N_q(n-3,1,t-2)=D_q(n-3,t-2)-D_q(n-4,t-2)+D_q(n-5,t-3),\nonumber\\
|\mathcal{X}^i|& \le \min\bigl\{ |D_{t-1-\ell_i}(\mathbf{x}_{[3+\ell_i,\,n]})|,\; |D_{t-1-\ell_i^*}(\mathbf{y}_{[3+\ell_i^*,\,n]})| \bigr\}\leq D_q(n-4,t-3),\qquad \text{for $i\in [2,q-2]$}\nonumber%\label{Upp:eq8}
\end{align}
where $\ell_i, \ell_i^{*}\geq 2$ for $i\in [2,q-2]$, and $\ell_{(0,j)},\ell_{(0,j)}^*\geq 0$ for $j\in [0,q-2]$, and $\ell_{(1,k)},\ell_{(1,k)}^*\geq 0$ for $k\in [0,q-2]$. Here, $\ell_{(0,1)}^*=\ell_{(1,0)}=0$, $\ell_{(0,j)}=\ell_{(1,j)}, \ell_{(0,k)}^*=\ell_{(1,k)}^*$ for any $j\in [q-2]$ and $k\in [0,q-2]\setminus\{1\}$. Moreover, $(a)$, $(b)$, and $(c)$ follow from $\mathbf{x}_{[4,n]}\neq \mathbf{y}_{[4,n]}$.

Thus, for sufficiently large $n$, we have
\begin{align}
|\mathcal{X}|&=\sum\limits_{i=0}^{q-1}|\mathcal{X}^{(0,i)}|+\sum\limits_{i=0}^{q-1}|\mathcal{X}^{(1,i)}|+\sum\limits_{i=2}^{q-1}|\mathcal{X}^i|\nonumber\\
&\leq \sum\limits_{i=0}^{q-2}(|\mathcal{X}^{(0,i)}|+|\mathcal{X}^{(1,i)}|)+|\mathcal{X}^{(0,q-1)}|+|\mathcal{X}^{(1,q-1)}|+2D_q(n-5,t-3)+(q-3)D_q(n-4,t-3)\nonumber\\
&~~~+\sum_{i=2}^{q-1} D_q(n-4-i,\; t-2-i)
\overset{(a)}{=}A_1+A_2+B_1+B_2+O(n^{t-3}),\label{Upp:eq9}
\end{align}
where $(a)$ follows from Eq. \eqref{intr:eq2}, and we set $A_1=\sum\limits_{i=0}^{q-2}|\mathcal{X}^{(0,i)}|$, $A_2=\sum\limits_{i=0}^{q-2}|\mathcal{X}^{(1,i)}|$, $B_1=|\mathcal{X}^{(0,q-1)}|$, $B_2=|\mathcal{X}^{(1,q-1)}|$. Since the indices $\ell_{(0,i)}$ (respectively, $\ell_{(1,i)}^*$) for $i\in[0,q-2]$ are each pairwise distinct nonnegative integers, we have
\begin{align}
A_1&=\sum\limits_{i=0}^{q-2}|\mathcal{X}^{(0,i)}|\leq \sum\limits_{i=0}^{q-2}|D_{t-2-\ell_{(0,i)}}(\mathbf{x}_{[5+\ell_{(0,i)},n]})|\leq \sum\limits_{i=0}^{q-2}D_q(n-4-i,t-2-i)\overset{(a)}{=}\frac{1}{(t-2)!}n^{t-2}+O(n^{t-3}),\label{Upp:eq10}\\
A_2&=\sum\limits_{i=0}^{q-2}|\mathcal{X}^{(1,i)}|\leq \sum\limits_{i=0}^{q-2}|D_{t-2-\ell_{(1,i)}^*}(\mathbf{y}_{[5+\ell_{(1,i)}^*,n]})|\leq \sum\limits_{i=0}^{q-2}D_q(n-4-i,t-2-i)\overset{(b)}{=}\frac{1}{(t-2)!}n^{t-2}+O(n^{t-3}),\label{Upp:eq11}
\end{align}
where $(a)$ and $(b)$ follow from Eq. \eqref{intr:eq2}. Moreover, 
\begin{align}
B_1=|\mathcal{X}^{(0,q-1)}|&\leq D_q(n-3,t-1)-D_q(n-4,t-1)+D_q(n-5,t-2)\overset{(a)}{=} 2D_q(n-5,t-2)+\sum\limits_{i=3}^{q}D_q(n-3-i,t-i)\nonumber\\
&\overset{(b)}{=}\frac{2}{(t-2)!}n^{t-2}+O(n^{t-3}),\label{Upp:eq12}\\
B_2=|\mathcal{X}^{(1,q-1)}|&\leq D_q(n-3,t-1)-D_q(n-4,t-1)+D_q(n-5,t-2)\overset{(c)}{=} 2D_q(n-5,t-2)+\sum\limits_{i=3}^{q}D_q(n-3-i,t-i)\nonumber\\
&\overset{(d)}{=}\frac{2}{(t-2)!}n^{t-2}+O(n^{t-3}),\label{Upp:eq13}
\end{align}
where $(a)$ and $(c)$ follow from Eq. \eqref{def:eq1}, and $(b)$ and $(d)$ follow from Eq. \eqref{intr:eq2}.

In this case, based on the values of \(\ell_{(0,0)},\ell_{(0,1)},\ell_{(1,0)}^*\), and \(\ell_{(1,1)}^*\), we discuss the value of \(A_1+A_2+B_1+B_2\) in the following three cases:
\textbf{Case A2-1:} \(\ell_{(0,1)}\geq 1\) or \(\ell_{(1,0)}^*\geq 1\);
\textbf{Case A2-2:} \(\ell_{(0,1)}=\ell_{(1,0)}^*=0,\) and either \(\ell_{(0,0)}\geq 2\) or \(\ell_{(1,1)}^*\geq 2\);
\textbf{Case A2-3:} \(\ell_{(0,1)}=\ell_{(1,0)}^*=0\) and \(\ell_{(0,0)}=\ell_{(1,1)}^*=1\).

\textbf{Case A2-1:} \(\ell_{(0,1)}\geq 1\) or \(\ell_{(1,0)}^*\geq 1\). Without loss of generality, we assume $\ell_{(0,1)} \ge 1$. Note that $\{\ell_{(0,i)} \mid i\in[0,q-2]\}$ is a set of pairwise distinct nonnegative integers, so there exists at most one index $a\in[0,q-2]\setminus\{1\}$ such that $\ell_{(0,a)} = 0$.
It follows that $x_4 = a$, and 
\[
\big|\mathcal{X}^{(0,a)}\big|
= \Big|D_{t-2-\ell_{(0,a)}}\big(\mathbf{x}_{[5+\ell_{(0,a)},n]}\big)
\cap D_{t-2-\ell_{(0,a)}^*}\big(\mathbf{y}_{[5+\ell_{(0,a)}^*,n]}\big)\Big|.
\]
For all $i\in[0,q-2]\setminus\{a\}$, we have $\ell_{(0,i)} \ge 1$, hence
\[
\sum_{i\in[0,q-2]\setminus\{a\}} \big|\mathcal{X}^{(0,i)}\big|
\le \sum_{i\in[0,q-2]\setminus\{a\}} \big|D_{t-2-\ell_{(0,i)}}\big(\mathbf{x}_{[5+\ell_{(0,i)},n]}\big)\big|
\le (q-2)\,D_q(n-5,t-3).
\]

We further consider two subcases according to the value of $\ell^*_{(0,a)}$.

\begin{itemize}
  \item If $\ell_{(0,a)}^* = 0$, then $y_4 = a = x_4$. Since $d_L(\mathbf{x},\mathbf{y})\ge 2$, we have $\mathbf{x}_{[5,n]} \ne \mathbf{y}_{[5,n]}$. It follows that
\[
\big|\mathcal{X}^{(0,a)}\big|
= \Big|D_{t-2}(\mathbf{x}_{[5,n]}) \cap D_{t-2}(\mathbf{y}_{[5,n]})\Big|
\le N_q(n-4,1,t-2)
= D_q(n-4,t-2) - D_q(n-5,t-2) + D_q(n-6,t-3).
\]
  \item If $\ell_{(0,a)}^* \ge 1$, then
\[
\big|\mathcal{X}^{(0,a)}\big|
\le \big|D_{t-2-\ell_{(0,a)}^*}\big(\mathbf{y}_{[5+\ell_{(0,a)}^*,n]}\big)\big|
\le D_q(n-5,t-3)
\overset{(a)}{\le} D_q(n-4,t-2) - D_q(n-5,t-2) + D_q(n-6,t-3),
\]
where $(a)$ follows from Eq. (\ref{def:eq1}).
\end{itemize}

Combining these bounds, we obtain
\begin{align}
A_1&=\sum\limits_{i=0}^{q-2}|\mathcal{X}^{(0,i)}|\leq  D_q(n-4,t-2)-D_q(n-5,t-2)+D_q(n-6,t-3)+(q-2)D_q(n-5,t-3)\nonumber\\
&\overset{(a)}{=}2D_q(n-6,t-3)+\sum\limits_{i=3}^{q}D_q(n-4-i,t-i-1)+(q-2)D_q(n-5,t-3)\overset{(b)}{=}O(n^{t-3}),\label{Upp:eq14}
\end{align}
where $(a)$ follows from Eq. \eqref{def:eq1}, and $(b)$ follows from Eq. \eqref{intr:eq2}.

For sufficiently large \(n\), by Eqs. \eqref{Upp:eq9}, \eqref{Upp:eq11}, \eqref{Upp:eq12}, \eqref{Upp:eq13}, and \eqref{Upp:eq14}, we have 
\begin{align}
|\mathcal{X}|\leq\frac{5}{(t-2)!}n^{t-2}+O(n^{t-3})\overset{(a)}{<}N_q^{(0)}(n,t),\nonumber
\end{align}
where $(a)$ follows from Lemma \ref{Low:lm3}.

\textbf{Case A2-2:} \(\ell_{(0,1)}=\ell_{(1,0)}^*=0,\) and \(\ell_{(0,0)}\geq 2\) or \(\ell_{(1,1)}^*\geq 2\).

Without loss of generality, we assume $\ell_{(0,0)}\geq 2$, which implies $x_5\neq 0$. For simplicity, let $x_5=a$. Since \(\ell_{(0,1)}=\ell_{(1,0)}^*=0,\) we have $\mathbf{x}_{[1,4]}=(1,0,q-1,1)$ and $\mathbf{y}_{[1,4]}=(0,1,q-1,0)$. We decompose $\mathcal{X}^{(0,q-1)}$ as $|\mathcal{X}^{(0,q-1)}|=\sum\limits_{i=0}^{q-1}|\mathcal{X}^{(0,q-1,i)}|$, with
\begin{align*}
|\mathcal{X}^{(0,q-1,0)}|&=|D_{t-2-\ell_{(0,q-1,0)}}(\mathbf{x}_{[6+\ell_{(0,q-1,0)},n]})\cap D_{t-1}(\mathbf{y}_{[5,n]})|\leq D_q(n-6,t-3), \\
|\mathcal{X}^{(0,q-1,1)}|&=|D_{t-1}(\mathbf{x}_{[5,n]})\cap D_{t-2-\ell_{(0,q-1,1)}^*}(\mathbf{y}_{[6+\ell_{(0,q-1,1)}^*,n]})|, \\
|\mathcal{X}^{(0,q-1,a)}|&=|D_{t-2}(\mathbf{x}_{[6,n]})\cap D_{t-2-\ell_{(0,q-1,a)}^*}(\mathbf{y}_{[6+\ell_{(0,q-1,a)}^*,n]})|, \\
|\mathcal{X}^{(0,q-1,i)}|&=|D_{t-2-\ell_{(0,q-1,i)}}(\mathbf{x}_{[6+\ell_{(0,q-1,i)},n]})\cap D_{t-2-\ell_{(0,q-1,i)}^*}(\mathbf{y}_{[6+\ell_{(0,q-1,i)}^*,n]})|\leq D_q(n-6,t-3), ~~\text{for \(i\in[2,q-1]\setminus\{a\}\),}
\end{align*}
where \(\ell_{(0,q-1,j)}\ge 1\) for \(j\in[0,q-1]\setminus\{1,a\}\) and \(\ell_{(0,q-1,k)}^*\ge 0\) for \(k\in[q-1]\).
Since $\ell_{(0,q-1,1)}^*$ and $\ell_{(0,q-1,a)}^*$ are distinct, 
\begin{align*}
|\mathcal{X}^{(0,q-1,1)}|+|\mathcal{X}^{(0,q-1,a)}|\leq D_q(n-5,t-2)+D_q(n-6,t-3).
\end{align*}
Combining these bounds and applying Eq. \eqref{intr:eq2}, we conclude for sufficiently large $n$ that
\begin{align}\label{Upp:eq15}
B_1= |\mathcal{X}^{(0,q-1)}|\leq \frac{1}{(t-2)!}n^{t-2}+O(n^{t-3}).
\end{align}

For sufficiently large $n$,  by Eqs. \eqref{Upp:eq9}, \eqref{Upp:eq10}, \eqref{Upp:eq11}, \eqref{Upp:eq13}, and \eqref{Upp:eq15}, we obtain
\begin{align*}
|\mathcal{X}|\leq A_1+A_2+B_1+B_2+O(n^{t-3})\leq \frac{5}{(t-2)!}n^{t-2}+O(n^{t-3})\overset{(a)}{<} N_q^{(0)}(n,t),
\end{align*}
where $(a)$ follows from Lemma \ref{Low:lm3}.

\textbf{Case A2-3:} \(\ell_{(0,1)}=\ell_{(1,0)}^*=0\) and \(\ell_{(0,0)}=\ell_{(1,1)}^*=1\).

In this case, we have $\mathbf{x}_{[1,5]} = (1,0,q-1,1,0)$ and $\mathbf{y}_{[1,5]} = (0,1,q-1,0,1)$. Then  $\mathbf{x}_{[1,5]}$ and $\mathbf{y}_{[1,5]}$ are precisely the prefixes of $\mathbf{x}^{(0)}$ and $\mathbf{y}^{(0)}$ from Eq. \eqref{Low:eq1}. That is, $\mathbf{x}^{(0)}=\mathbf{x}_{[1,5]}\circ \mathbf{w}^{(0)}$ and $\mathbf{y}^{(0)}=\mathbf{y}_{[1,5]}\circ \mathbf{w}^{(0)}$, with $\mathbf{w}^{(0)} = (w_1,\dots,w_{n-5})$ satisfying $w_j \equiv j+1 \pmod q$ for $j \in [n-5]$. For simplicity, denote $\mathbf{u}=(u_1,\cdots,u_{n-5}) = \mathbf{x}_{[6,n]}$ and $\mathbf{w}=(w'_1,\cdots,w'_{n-5}) = \mathbf{y}_{[6,n]}$. Since $d_L(\mathbf{x}_{[1,5]},\mathbf{y}_{[1,5]}) \ge 2$, we always have $d_L(\mathbf{x},\mathbf{y}) \ge 2$ regardless of whether $\mathbf{u}=\mathbf{w}$.
By Lemmas \ref{Pro:lm1} and \ref{Pro:lm2}, we have
\begin{align*}
|\mathcal{X}^{(0,q-1)}|&=|D_{t-1}(1,0,\mathbf{u})\cap D_{t-1}(0,1,\mathbf{w})|\overset{(a)}{\leq} N_q(n-3,1,t-1)=D_q(n-3,t-1)-D_q(n-4,t-1)+D_q(n-5,t-2),\\
|\mathcal{X}^{(0,1)}|&=|D_{t-2}(0,\mathbf{u})\cap D_{t}(q-1,0,1,\mathbf{w})|\leq |D_{t-2}(0,\mathbf{u})|\leq D_q(n-4,t-2)\\
|\mathcal{X}^{(0,0)}|&=|D_{t-3}(\mathbf{u})\cap D_{t-2}(1,\mathbf{w})|\leq |D_{t-3}(\mathbf{u})|\leq D_q(n-5,t-3)\\
|\mathcal{X}^{(0,j)}|&=|D_{t-2-\ell_{(0,j)}}(\mathbf{x}_{[5+\ell_{(0,j)},n]})\cap D_{t-2-\ell_{(0,j)}^*}(\mathbf{y}_{[5+\ell_{(0,j)}^*,n]})|,\qquad \text{for $j\in [2,q-2]$}\\
|\mathcal{X}^{(1,q-1)}|&=|D_{t-1}(1,0,\mathbf{u})\cap D_{t-1}(0,1,\mathbf{w})|\overset{(b)}{\leq} N_q(n-3,1,t-1)=D_q(n-3,t-1)-D_q(n-4,t-1)+D_q(n-5,t-2),\\
|\mathcal{X}^{(1,1)}|&=|D_{t-2}(0,\mathbf{u})\cap D_{t-3}(\mathbf{w})|\leq |D_{t-3}(\mathbf{w})|\leq D_q(n-5,t-3)\\
|\mathcal{X}^{(1,0)}|&=|D_{t}(q-1,1,0,\mathbf{u})\cap D_{t-2}(1,\mathbf{w})|\leq |D_{t-2}(1,\mathbf{w})|\leq D_q(n-4,t-2)\\
|\mathcal{X}^{(1,j)}|&=|D_{t-2-\ell_{(1,j)}}(\mathbf{x}_{[5+\ell_{(1,j)},n]})\cap D_{t-2-\ell_{(1,j)}^*}(\mathbf{y}_{[5+\ell_{(1,j)}^*,n]})|,\qquad \text{for $j\in [2,q-2]$}\\
|\mathcal{X}^{q-1}|&=|D_{t-2}(1,0,\mathbf{u})\cap D_{t-2}(0,1,\mathbf{w})|\overset{(c)}{\leq} N_q(n-3,1,t-2)=D_q(n-3,t-2)-D_q(n-4,t-2)+D_q(n-5,t-3),\\
|\mathcal{X}^i|& \le \min\bigl\{ |D_{t-1-\ell_i}(\mathbf{x}_{[3+\ell_i,\,n]})|,\; |D_{t-1-\ell_i^*}(\mathbf{y}_{[3+\ell_i^*,\,n]})| \bigr\},\qquad \text{for $i\in [2,q-2]$}
\end{align*}
where $\ell_i, \ell_i^{*}\geq 4$ for $i\in [2,q-2]$, and $\ell_{(0,j)},\ell_{(0,j)}^*,\ell_{(1,j)},\ell_{(1,j)}^*\geq 2$ for $j\in [2,q-2]$.  Moreover, the indices $\ell_i$ are pairwise distinct, and the same holds for $\ell_i^*$, $\ell_{(0,j)}$, $\ell_{(0,j)}^*$, $\ell_{(1,j)}$, and $\ell_{(1,j)}^*$ within their respective index ranges $i,j \in [2,q-2]$. Here, $(a)$, $(b)$, and $(c)$ follow from $(1,0,\mathbf{u})\neq (0,1,\mathbf{w})$ and Theorem \ref{def:thm1}.

Since the indices $\ell_i\geq 4$ and $\ell_{(0,i)}, \ell_{(1,i)}\geq 2$ for $i\in [2,q-2]$ are pairwise distinct, we have
\begin{align}
\sum\limits_{i=2}^{q-2}|\mathcal{X}^{i}|&\leq \sum\limits_{i=2}^{q-2}|D_{t-1-\ell_i}(\mathbf{x}_{[3+\ell_i,n]})|\leq \sum\limits_{i=3}^{q-1}D_q(n-3-i,t-2-i),\label{Upp:eq16}\\
\sum\limits_{i=2}^{q-2}|\mathcal{X}^{(0,i)}|&\leq \sum\limits_{i=2}^{q-2}|D_{t-2-\ell_{(0,i)}}(\mathbf{x}_{[5+\ell_{(0,i)},n]})|\leq \sum\limits_{i=3}^{q-1}D_q(n-3-i,t-1-i),\label{Upp:eq17}\\
\sum\limits_{i=2}^{q-2}|\mathcal{X}^{(1,i)}|&\leq \sum\limits_{i=2}^{q-2}|D_{t-2-\ell_{(1,i)}}(\mathbf{x}_{[5+\ell_{(1,i)},n]})|\leq \sum\limits_{i=3}^{q-1}D_q(n-3-i,t-1-i).\nonumber
\end{align}
Thus, for $j\in \{0,1\}$, we have 
\begin{align}\label{Upp:eq18}
|\mathcal{X}^{j}|\leq D_q(n-2,t-1)+D_q(n-5,t-2)-D_q(n-4,t-1).
\end{align}

Note that, in the proof of Lemma \ref{Low:lm2}, for the decomposition of  \(D_t(\mathbf{x}^{(0)})\cap D_t(\mathbf{y}^{(0)})\) (which satisfies $|D_t(\mathbf{x}^{(0)})\cap D_t(\mathbf{y}^{(0)})|=N_q^{(0)}(n,t)$), all components match their maximal values except for \(|(D_t(\mathbf{x}^{(0)})\cap D_t(\mathbf{y}^{(0)}))^{(0,1)}|=D_q(n-4,t-2)-D_q(n-4-q,t-1-q)+D_q(n-5-q,t-2-q)\).

We next analyze $|\mathcal{X}|$ in the following two subcases, depending on whether $\mathbf{u}=\mathbf{w}$.
\textbf{Case A2-3a:}  $\mathbf{u}=\mathbf{w}$;
\textbf{Case A2-3b:}  $\mathbf{u}\neq \mathbf{w}$.  

\begin{itemize}
  \item \textbf{Case A2-3a:}  $\mathbf{u}=\mathbf{w}$. Then
\begin{align}
|\mathcal{X}^{(0,1)}|+|\mathcal{X}^{(1,0)}|&\leq |D_{t-2}(0,\mathbf{u})|+|D_{t-2}(1,\mathbf{u})|\nonumber\\
&\overset{(a)}{\leq} 2D_q(n-4,t-2)+D_q(n-q-5,t-q-2)-D_q(n-q-4,t-q-1)\nonumber\\
&\overset{(b)}{=}2D_q(n-4,t-2)-n^{t-1-q} + O(n^{t-2-q}),\label{Upp:eq19}
\end{align}
where $(a)$ follows from Lemma \ref{Upp:lm4}, and $(b)$ follows from Eq. \eqref{intr:eq2}. This upper bound from $(a)$ matches exactly \(|(D_t(\mathbf{x}^{(0)})\cap D_t(\mathbf{y}^{(0)}))^{(0,1)}|+|(D_t(\mathbf{x}^{(0)})\cap D_t(\mathbf{y}^{(0)}))^{(1,0)}|\). Thus, we obtain
\begin{align*}
|\mathcal{X}| &=\sum\limits_{i=2}^{q-2}\big(|\mathcal{X}^i|+|\mathcal{X}^{(0,i)}|+|\mathcal{X}^{(1,i)}|\big)+|\mathcal{X}^{q-1}|+|\mathcal{X}^{(0,q-1)}|+|\mathcal{X}^{(1,q-1)}|+\big(|\mathcal{X}^{(0,0)}|+|\mathcal{X}^{(1,1)}|\big)+\big(|\mathcal{X}^{(0,1)}|+|\mathcal{X}^{(1,0)}|\big)\\
&\leq N_q^{(0)}(n,t).
\end{align*}
Although the bound $|\mathcal{X}|\le N_q^{(0)}(n,t)$ is already established, we need to characterize precisely which choices of $\mathbf{u}$ attain equality.
We further refine the bound by distinguishing the following subcases according to the structure of $\mathbf{u}$.

%Although the upper bound \(|\mathcal{X}|\le N_q^{(0)}(n,t)\) has already been obtained, we need to determine precisely which choices of \(\mathbf{u}\) can attain this bound. To this end, we now distinguish the following subcases according to the structure of \(\mathbf{u}\).
%Moreover, we now distinguish the following subcases according to \(\mathbf{u}\).
\begin{enumerate}
 \item If $\mathbf{u} \notin \mathbf{C}_q(n-5)$, applying Lemma \ref{Upp:lm4} yields the improved bound,
\begin{align}\label{Upp:eq20}
|\mathcal{X}^{(0,1)}| + |\mathcal{X}^{(1,0)}|
\le \big|D_{t-2}(0,\mathbf{u})\big| + \big|D_{t-2}(1,\mathbf{u})\big|
\le 2D_q(n-4,t-2) - 2n^{t-1-q} + O(n^{t-2-q}).
\end{align}
By comparing Eqs. \eqref{Upp:eq19} and \eqref{Upp:eq20}, we obtain
\begin{align}\label{Upp:eq21}
|\mathcal{X}| \le N_q^{(0)}(n,t) - n^{t-1-q} + O(n^{t-2-q}).
\end{align}
  \item  If $\mathbf{u} \in \mathbf{C}_q(n-5)$ and $\{u_i|i\in[q-3]\}\neq [2,q-2]$, then there exists some $j\in [q-2]$ such that $\ell_{(0,j)}\geq q-1$ and $|\mathcal{X}^{(0,j)}|\leq D_q(n-3-q,t-1-q)$. Since the indices $\ell_{(0,i)}\geq 2$ for $i\in [2,q-2]$ are pairwise distinct, we have
\begin{align}
\sum\limits_{i=2}^{q-2}|\mathcal{X}^{(0,i)}|\leq \sum\limits_{i=2}^{q-2}|D_{t-2-\ell_{(0,i)}}(\mathbf{x}_{[5+\ell_{(0,i)},n]})|\leq \sum\limits_{i=3}^{q-2}D_q(n-3-i,t-1-i)+D_q(n-3-q,t-1-q).\label{Upp:eq22}
\end{align}
Thus,  by comparing Eqs. \eqref{Upp:eq17} and \eqref{Upp:eq22}, we obtain
\begin{align}\label{Upp:eq23}
|\mathcal{X}| \le N_q^{(0)}(n,t) - n^{t-q} + O(n^{t-1-q}).
\end{align}
 \item If $\mathbf{u} \in \mathbf{C}_q(n-5)$, $\{u_i|i\in[q-3]\}=[2,q-2]$, and $u_{q-2}\neq q-1$, then we consider the value $|\mathcal{X}^{(0,q-1)}|$. From the preceding discussion,
 \begin{align}
 |\mathcal{X}^{(0,q-1)}|&=|D_{t-1}(1,0,\mathbf{u})\cap D_{t-1}(0,1,\mathbf{u})|\leq D_q(n-3,t-1)-D_q(n-4,t-1)+D_q(n-5,t-2)\nonumber\\
&\overset{(a)}{=}2D_q(n-5,t-2)+\sum\limits_{i=2}^{q-1}D_q(n-4-i,t-1-i),\label{Upp:eq24}
 \end{align}
 where $(a)$ follows from Eq. \eqref{def:eq1}. 
 
 By decomposing $\mathcal{X}^{(0,q-1)}$, we have $|\mathcal{X}^{(0,q-1)}|=\sum\limits_{i=0}^{q-1}|\mathcal{X}^{(0,q-1,i)}|$ with 
\begin{align*}
|\mathcal{X}^{(0,q-1,a)}|&=|D_{t-2}(\mathbf{u})|=D_q(n-5,t-2),\quad |\mathcal{X}^{(0,q-1,i)}|=|D_{t-2-\hat{\ell}_i}(\mathbf{u}_{[\hat{\ell}_i+1,n-5]})|=D_q(n-5-\hat{\ell}_i,t-2-\hat{\ell}_i),
\end{align*}
 where $a\in \{0,1\}$, $i\in [2,q-1]$, and \(\hat{\ell}_i\) denotes the position of the first occurrence of symbol \(i\) in \(\mathbf{u}\). Since $\mathbf{u} \in \mathbf{C}_q(n-5)$, $\{u_i|i\in[q-3]\}=[2,q-2]$, and $u_{q-2}\neq q-1$, we have $\{\hat{\ell}_i|i\in[2,q-2]\}=[1,q-3]$ and $\hat{\ell}_{q-1}> q-2$. Thus, in this case, we have 
 \begin{align}
|\mathcal{X}^{(0,q-1)}|=\sum\limits_{i=0}^{q-1}|\mathcal{X}^{(0,q-1,i)}|\leq 2D_q(n-5,t-2)+\sum\limits_{i=2}^{q-2}D_q(n-4-i,t-1-i)+D_q(n-4-q,t-1-q).\label{Upp:eq25}
\end{align}
Comparing Eqs.~\eqref{Upp:eq24} and~\eqref{Upp:eq25} gives
\begin{align}\label{Upp:eq26}
|\mathcal{X}| \le N_q^{(0)}(n,t) - n^{t-q} + O(n^{t-1-q}).
\end{align} 
\end{enumerate}
 
  \item \textbf{Case A2-3b:}  $\mathbf{u}\neq\mathbf{w}$. Then
\begin{align*}
|\mathcal{X}^{(0,q-1)}|&\leq|D_{t-1}(1,0,\mathbf{u})\cap D_{t-1}(0,1,\mathbf{w})|\overset{(a)}{\leq }2D_q(n-5,t-2)+\sum\limits_{i=2}^{q-1}D_q(n-4-i,t-1-i)\\
&~~+D_q(n-8,t-3)-D_q(n-6,t-3),\\
|\mathcal{X}^{(0,1)}|+|\mathcal{X}^{(1,0)}|&\leq |D_{t-2}(0,\mathbf{u})|+|D_{t-2}(1,\mathbf{w})|\leq  2D_q(n-4,t-2),
\end{align*}
where $(a)$ follows from Lemma \ref{Upp:lm5} for sufficiently large $n$. 

The difference between the sum of these three components $|\mathcal{X}^{(0,q-1)}|$, $|\mathcal{X}^{(0,1)}|$, $|\mathcal{X}^{(1,0)}|$ and the corresponding sum from the decomposition of $D_t(\mathbf{x}^{(0)}) \cap D_t(\mathbf{y}^{(0)})$ is given by
\begin{align*}
&\phantom{=} 2D_q(n-4,t-2)+D_q(n-q-5,t-q-2)-D_q(n-q-4,t-q-1)+D_q(n-3,t-1)-D_q(n-4,t-1)\\
&\quad+D_q(n-5,t-2)-\Bigl(2D_q(n-5,t-2)+\sum_{i=2}^{q-1}D_q(n-4-i,t-1-i)\\
&\qquad+D_q(n-8,t-3)-D_q(n-6,t-3)+2D_q(n-4,t-2)\Bigr)\\
&\overset{(a)}{=}D_q(n-6,t-3)-D_q(n-8,t-3)-D_q(n-q-4,t-q-1)+D_q(n-q-5,t-q-2),
\end{align*}
where $(a)$ follows from Eq. \eqref{def:eq1}. Therefore, we obtain
\begin{align}
|\mathcal{X}| &=\sum\limits_{i=2}^{q-2}\big(|\mathcal{X}^i|+|\mathcal{X}^{(0,i)}|+|\mathcal{X}^{(1,i)}|\big)+|\mathcal{X}^{q-1}|+|\mathcal{X}^{(0,q-1)}|+|\mathcal{X}^{(1,q-1)}|+\big(|\mathcal{X}^{(0,0)}|+|\mathcal{X}^{(1,1)}|\big)+\big(|\mathcal{X}^{(0,1)}|+|\mathcal{X}^{(1,0)}|\big)\nonumber\\
&\leq N_q^{(0)}(n,t)-\big(D_q(n-6,t-3)-D_q(n-8,t-3)-D_q(n-q-4,t-q-1)+D_q(n-q-5,t-q-2)\big)\nonumber\\
&\overset{(a)}{\leq} N_q^{(0)}(n,t)-n^{t-4}+O(n^{t-5}), \label{Upp:eq27}
\end{align}
where $(a)$ follows from Eqs. \eqref{intr:eq2} and \eqref{def:eq1}. 
\end{itemize}

\textbf{Case B:} \(\ell_0 \ge 1\) and \(\ell_1^* \ge 1\). 

Recall that 
\(
|\mathcal{X}| = \sum_{i=0}^{q-1} |\mathcal{X}^i|,
\)
where $\mathcal{X}^0 = 0 \circ \bigl( D_{t-1-\ell_0}(\mathbf{x}_{[3+\ell_0,n]}) \cap D_t(\mathbf{y}_{[2,n]}) \bigr),$ 
$\mathcal{X}^1 = 1 \circ \bigl( D_t(\mathbf{x}_{[2,n]}) \cap D_{t-1-\ell_1^*}(\mathbf{y}_{[3+\ell_1^*,n]}) \bigr),$ and 
\(
\mathcal{X}^i = i \circ \bigl( D_{t-1-\ell_i}(\mathbf{x}_{[3+\ell_i,n]}) \cap D_{t-1-\ell_i^*}(\mathbf{y}_{[3+\ell_i^*,n]}) \bigr) 
\)
for $i = 2,\dots,q-1$. Here $\ell_0, \ell_1^*, \ell_i, \ell_i^*$ are the smallest non-negative integers such that
\(
x_{2+\ell_0}=0, y_{2+\ell_1^*}=1, x_{2+\ell_i}=i, y_{2+\ell_i^*}=i\ \ (2\leq i\leq q-1).
\)

Since \(\ell_0, \ell_1^* \ge 1\), it follows that
\begin{align*}
|\mathcal{X}^0| \le \bigl| D_{t-1-\ell_0}(\mathbf{x}_{[3+\ell_0,\,n]}) \bigr| \le D_q(n-3, t-2),\qquad
|\mathcal{X}^1| \le \bigl| D_{t-1-\ell_1^*}(\mathbf{y}_{[3+\ell_1^*,\,n]}) \bigr| \le D_q(n-3, t-2).
\end{align*}
Since the indices \(\ell_i\) (respectively, \(\ell_i^*\)) are pairwise distinct for \(i\in [2,q-1]\), we now distinguish two subcases based on the values of \(\ell_i\) and \(\ell_i^*\) (with \(\ell_i,\ell_i^*\ge 0\)).
\begin{itemize}
    \item  \emph{Subcase 1:} For every \(i \in [2, q-1]\), either \(\ell_i \ge 1\) or \(\ell_i^* \ge 1\).

Then $|\mathcal{X}^i|\leq D_q(n-3,t-2)$ for all $i\in [2,q-1]$. Since the indices \(\ell_i\) are pairwise distinct, at most two of them can be \(\le 1\) (namely the values \(0\) and \(1\)). Consequently,
\begin{align*}
\sum_{i=2}^{q-1} |\mathcal{X}^i| \le 2D_q(n-3,\; t-2)+\sum_{i=3}^{q-2} D_q(n-1-i,\; t-i).
\end{align*}
    \item \emph{Subcase 2:} There exists some \(i \in [2, q-1]\) such that \(\ell_i = \ell_i^* = 0\).
     
Without loss of generality, let \(i=2\) be this index. Then $\mathbf{x}_{[3,n]}\neq \mathbf{y}_{[3,n]}$ because $d_L(\mathbf{x},\mathbf{y})\geq 2$. Hence,
\begin{align*}
|\mathcal{X}^2|=|D_{t-1}(\mathbf{x}_{[3,n]})\cap D_{t-1}(\mathbf{y}_{[3,n]})|&\leq N_q(n-2,1,t-1)=D_q(n-2,t-1)-D_q(n-3,t-1)+D_q(n-4,t-2)\\
&=2D_q(n-4,t-2)+\sum_{i=3}^{q} D_q(n-2-i,\; t-i).
\end{align*} 
Since \(\ell_2^*=0\), we must have \(\ell_i^* \ge 1\) for all \(i \in [3, q-1]\). Moreover, the indices \(\ell_i^*\) for \(i \in \{1\} \cup [3, q-1]\) are pairwise distinct. Thus,
\begin{align*}
\sum_{i\in \{1\}\cup [3,q-1]} |\mathcal{X}^i| \le D_q(n-3,\; t-2)+\sum_{i=3}^{q-1} D_q(n-1-i,\; t-i).
\end{align*}
\end{itemize}

Combining the two cases, we have
\begin{align}
|\mathcal{X}|=\sum\limits_{i=0}^{q-1}|\mathcal{X}^i|\overset{(a)}{\leq}\frac{4}{(t-2)!}n^{t-2}+O(n^{t-3})\overset{(b)}{<}N_q^{(0)}(n,t),\nonumber
\end{align}
where $(a)$ follows from Eq. \eqref{intr:eq2}, and $(b)$ follows from Lemma \ref{Low:lm3} for sufficiently large $n$.

\textbf{Case C:}  \(\ell_0 = 0\) and \(\ell_1^* \ge 1\).

When $\ell_0=0$ and $\ell_1^*\geq 1$, we have $x_2=0$, $y_2\neq 1$, and $\ell_i\geq 1$ for $i\in [2,q-1]$. Since $d_L(\mathbf{x},\mathbf{y}) \ge 2$, it follows that $\mathbf{x}_{[3,n]}\notin D_1(\mathbf{y}_{[2,n]})$. By Lemma~\ref{Upp:lm3}, we have 
\[
|\mathcal{X}^0| = |D_{t-1}(\mathbf{x}_{[3,n]})\cap D_{t}(\mathbf{y}_{[2,n]})| \leq N_q(n-2,1,t,t-1) = D_q(n-2,t-1)-D_q(n-4,t-1)+D_q(n-5,t-2).
\]

Since $\ell_1^*\geq 1$, we have
\(
|\mathcal{X}^1| \leq |D_{t-1-\ell_1^*}(\mathbf{y}_{[3+\ell_1^*,n]})| \leq D_q(n-3,t-2). 
\)
Since all indices $\ell_i$ for $i\in[2,q-1]$ are pairwise distinct and satisfy $\ell_i\ge 1$, we have 
\[
\sum_{i=2}^{q-1} |\mathcal{X}^i| \le D_q(n-3,\; t-2)+\sum_{i=3}^{q-1} D_q(n-1-i,\; t-i).
\]

Thus,
\begin{align*}
|\mathcal{X}| &= \sum_{i=0}^{q-1} |\mathcal{X}^i| \leq D_q(n-2,t-1)-D_q(n-4,t-1)+D_q(n-5,t-2)+D_q(n-3,t-2)+\sum_{i=2}^{q-1} D_q(n-1-i,\; t-i) \\
&\overset{(a)}{=} 2D_q(n-3,t-2)+D_q(n-4,t-2)+2D_q(n-5,t-2)+O(n^{t-3})  \\
&\overset{(b)}{\leq} \frac{5}{(t-2)!}\,n^{t-2}+O(n^{t-3})\overset{(c)}{<} N_q^{(0)}(n,t),
\end{align*}
where $(a)$ and $(b)$ follow from Eqs.~\eqref{intr:eq2} and \eqref{def:eq1}, and $(c)$ follows from Lemma \ref{Low:lm3} for sufficiently large $n$.

\textbf{Case D:} \(\ell_0 \ge 1\) and \( \ell_1^* = 0\).

When $\ell_1^*=0$ and $\ell_0\geq 1$, we have $y_2=1$, $x_2\neq 0$, and $\ell_2^*\geq 1$. By the same argument as in \textbf{Case C}, we obtain
$$|\mathcal{X}|\leq \frac{5}{(t-2)!}\,n^{t-2}+O(n^{t-3})< N_q^{(0)}(n,t),$$
for sufficiently large $n$.

Combining all four cases, we conclude that for sufficiently large \(n\) and all \(\mathbf{x},\mathbf{y}\in\Sigma_q^n\) satisfying \(x_1\neq y_1\) and \(d_L(\mathbf{x},\mathbf{y})\ge 2\),
\[
|D_t(\mathbf{x})\cap D_t(\mathbf{y})| \le N_q^{(0)}(n,t).
\]
\end{proof}

From the proof of Lemma~\ref{Upp:lm6}, we derive the following characterization of the cardinality $|D_t(\mathbf{x}) \cap D_t(\mathbf{y})|$ for arbitrary sequences $\mathbf{x},\mathbf{y}\in\Sigma_q^n$ with $x_1\neq y_1$ and $d_L(\mathbf{x},\mathbf{y})\ge 2$. In particular, in \emph{Case A1-2} and in \emph{Case A2-3}, the intersection size attains the same leading term as \(N_q^{(0)}(n,t)\). The precise situations are summarized in the following remark.

\begin{remark}
Let $\mathbf{x}=(x_1,\dots,x_n), \mathbf{y}=(y_1,\dots,y_n)\in \Sigma_q^n$ with $d_L(\mathbf{x},\mathbf{y})\geq 2$ and $x_1\neq y_1$.
For convenience, let $x_1=a,y_1=b$ with $a,b\in \Sigma_q$, and define $\mathcal{X}=D_t(\mathbf{x})\cap D_t(\mathbf{y})$.
\begin{enumerate}
\item If $|\mathcal{X}|$ has the same leading term as $N_q^{(0)}(n,t)$, i.e., $|\mathcal{X}| = \frac{6}{(t-2)!}n^{t-2}+O(n^{t-3})$, then, by the results of \emph{Case A1-2} and \emph{Case A2-3}, one of the following must hold: $\mathbf{x}_{[1,4]}=(a,b,a,c)$ and $\mathbf{y}_{[1,4]}=(b,a,a,b)$ with $c\in \Sigma_q$, or $\mathbf{x}_{[1,4]}=(a,b,b,a)$ and $\mathbf{y}_{[1,4]}=(b,a,b,c)$ with $c\in \Sigma_q$, or $\mathbf{x}_{[1,5]}=(a,b,c_1,a,b)$ and $\mathbf{y}_{[1,5]}=(b,a,c_1,b,a)$ with $c_1\in \Sigma_q\setminus\{a,b\}$.
\item For $t\geq 3$, if there are no two sequences $\mathbf{x},\mathbf{y}\in \Sigma_q^n$ such that $\mathbf{x}_{[1,5]}=(a,b,c,a,b)$ and $\mathbf{y}_{[1,5]}=(b,a,c,b,a)$ for any $c\in \Sigma_q\setminus\{a,b\}$, then, using Eq. \eqref{Upp:eq7} and item 1 above, $|\mathcal{X}|\leq  \frac{6}{(t-2)!}n^{t-2}-\frac{3t+15}{(t-3)!}n^{t-3} +O(n^{t-4})$.
\item For $t\geq 3$, if $|\mathcal{X}|$ matches the first two  terms of $N_q^{(0)}(n,t)$, i.e., $|\mathcal{X}| = \frac{6}{(t-2)!}n^{t-2}-\frac{3t+13}{(t-3)!}n^{t-3} +O(n^{t-4})$, then, by the results of \emph{Case A2-3}, we have  $\mathbf{x}_{[1,5]}=(a,b,c,a,b)$ and $\mathbf{y}_{[1,5]}=(b,a,c,b,a)$ for some $c\in \Sigma_q\setminus\{a,b\}$.
\item For $t\geq 4$, if $|\mathcal{X}|$ matches the first three  terms of $N_q^{(0)}(n,t)$, then, by \emph{Case A2-3b}, $\mathbf{x}_{[1,5]}=(a,b,c,a,b)$, $\mathbf{y}_{[1,5]}=(b,a,c,b,a)$, and $\mathbf{x}_{[6,n]}=\mathbf{y}_{[6,n]}$ for some $c\in \Sigma_q\setminus\{a,b\}$.
\item For $t\geq q+1$, $|\mathcal{X}|$ matches the first $q$  terms of $N_q^{(0)}(n,t)$ if and only if $\{\mathbf{x}_{[1,5]},\mathbf{y}_{[1,5]}\}=\{(c_q,c_{q-1},c_{q-2},c_q,c_{q-1}),\\(c_{q-1},c_{q},c_{q-2},c_{q-1},c_{q})\}$, and $\mathbf{x}_{[6,n]}=\mathbf{y}_{[6,n]}=\mathbf{c}_q(n-5,\sigma)$, where $\sigma$ is an ordering of $\Sigma_q$ of the form $\sigma= (c_1,\dots,c_{q-2},\\c_{q-1},c_q)$. This characterization follows from \emph{Case A2-3a} and Lemma \ref{Low:lm3}.
\end{enumerate}
\label{Upp:rmk1}
\end{remark}

We now state our main result. Recall that \(N_q(n,t)\) is defined as \(N_q^{(q-3)}(n,t)\) in Lemma~\ref{Low:lm4}. When $q=3$, \(N_q(n,t)=N_q^{(0)}(n,t)\). Moreover, for $i\in [q-3]$ and $q\geq 4$,
\begin{align}
N_q^{(i)}(n,t)=\sum\limits_{j=0}^{i-1}N_q^{(j)}(n-i+j,t-i+j+1)+N_q^{(0)}(n-i,t-i)-\sum\limits_{j=1}^{i}D_q(n-3-i-q+j,t-2-i-q+j).\label{Upp:eq28}
\end{align}  
Let
\begin{align}\label{Upp:eq29}
\mathcal{P}=\left\{ \{\pi(\mathbf{x}^{(q-3)}), \pi(\mathbf{y}^{(q-3)})\} : \pi\in S_q \right\}
\cup
\left\{ \{(\pi(\mathbf{x}^{(q-3)}))^R, (\pi(\mathbf{y}^{(q-3)}))^R\} : \pi\in S_q \right\},
\end{align}
where
\(\mathbf{x}^{(q-3)} = (2,\dots,q-2,0,1,q-1,0,1,\mathbf{w}^{(q-3)})\) and \(\mathbf{y}^{(q-3)} = (2,\dots,q-2,1,0,q-1,1,0,\mathbf{w}^{(q-3)})\)
with $\mathbf{w}^{(q-3)} = \mathbf{c}_q(n-q-2,\sigma)$ and $\sigma=(2,3,\dots,q-1,0,1)$.

\begin{theorem}
Let $q \geq 3$ and $t \geq 2$, and let $\mathcal{P}$ be defined as in Eq.~\eqref{Upp:eq29}. For sufficiently large $n$,  $$N_q(n,2,t)=N_q(n,t).$$ 
Furthermore, suppose $t\ge q+2$ and let $\mathbf{x},\mathbf{y}\in\Sigma_q^n$ with $d_L(\mathbf{x},\mathbf{y})\ge2$. Then, for sufficiently large $n$,
\(|D_t(\mathbf{x})\cap D_t(\mathbf{y})|=N_q(n,t)\) if and only if $\{\mathbf{x},\mathbf{y}\}\in\mathcal{P}$. Moreover, $|\mathcal{P}|=2(q!)$.
\label{Upp:thm1}
\end{theorem}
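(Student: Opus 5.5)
The lower bound $N_q(n,2,t)\ge N_q(n,t)$ is Theorem~\ref{Low:thm1}, so the plan is to prove the matching upper bound by peeling off a common prefix and inducting on its length, and then to read off the equality cases. Take any $\mathbf{x},\mathbf{y}\in\Sigma_q^n$ with $d_L(\mathbf{x},\mathbf{y})\ge2$. Let $p$ be the length of the longest common prefix, so $x_{p+1}\neq y_{p+1}$. By Lemma~\ref{Pro:lm4} we may reverse the pair if needed, which lets us treat the common suffix symmetrically.

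If $p=0$, Lemma~\ref{Upp:lm6} gives $|D_t(\mathbf{x})\cap D_t(\mathbf{y})|\le N_q^{(0)}(n,t)\le N_q(n,t)$, using the monotonicity in Lemma~\ref{Low:lm4}. For $p\ge1$, write $\mathbf{x}=(\mathbf{u},\mathbf{x}')$ and $\mathbf{y}=(\mathbf{u},\mathbf{y}')$ with $|\mathbf{u}|=p$. Corollary~\ref{Pro:cor1} lets us replace $\mathbf{u}$ by the periodic sequence $\mathbf{u}^c$ without decreasing the intersection, and the first symbol of $\mathbf{x}'$ and of $\mathbf{y}'$ is left untouched, so $\mathbf{u}^c$ ends compatibly.

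Now decompose by the first symbol using Lemmas~\ref{Pro:lm1}, \ref{Pro:lm2} and~\ref{Pro:lm3}, exactly as in the derivation of Eq.~\eqref{Upp:eq28}.
\begin{itemize}
\item Symbols whose first occurrence lies in the prefix at position $k+1\le\min\{p,q-1\}$ contribute an intersection of the same type, with a shorter common prefix and parameters $(n-k-1,t-k)$. By induction on $p$ each such term is at most $N_q^{(j)}$ at the shifted arguments.
\item The remaining symbols contribute at most the corresponding first-symbol components of a pair with $p=0$ and parameters $(n-p,t-p)$. These are bounded by the componentwise estimates in the proof of Lemma~\ref{Upp:lm6}, which amounts to $N_q^{(0)}(n-p,t-p)$ minus the components for symbols already used in the prefix, exactly as in Eq.~\eqref{Low:eq8}.
\end{itemize}
Summing these recovers the recurrence \eqref{Upp:eq28} as an inequality. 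Once $p\ge q-2$, periodicity of $\mathbf{u}^c$ forces all $q$ symbols to occur in the prefix. The recurrence then closes with $N_q^{(i)}$ for $i\le q-3$ together with terms like $N_q(n-1-q,\cdot)$ that are of lower order. The relation $N_q^{(i)}(n,t)-N_q^{(i-1)}(n,t)=N_q^{(i-1)}(n-1-q,t-q)-D_q(n-3-q,t-2-q)$ from Lemma~\ref{Low:lm4} should show that a prefix longer than $q-3$ cannot beat $N_q^{(q-3)}$: a prefix of length $q-2$ would have to reuse a symbol from the length-5 block, which costs a term of order $n^{t-q-1}$.

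For the characterization when $t\ge q+2$, trace the equality cases through this recursion. By Remark~\ref{Upp:rmk1}, items~3--5, matching $N_q(n,t)$ forces the first differing window to be $(a,b,c,a,b)$ versus $(b,a,c,b,a)$, the tails to be equal, and the tail to be $\mathbf{c}_q(\cdot,\sigma)$ with the prescribed ordering. By Lemma~\ref{Pro:lm11}, item~4, and the strict increments $5n^{t-q-2}/(t-q-2)!$ in Lemma~\ref{Low:lm4}, the common prefix must have length exactly $q-3$ and consist of the remaining $q-3$ symbols in the order that makes the whole sequence periodic. If the differing window is at the end instead, reversal handles it.

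Counting: a permutation $\pi$ together with the reversal choice gives $2\,q!$ unordered pairs. One must check these are distinct and that swapping $a\leftrightarrow b$ does not produce a duplicate; since $\{\mathbf{x},\mathbf{y}\}$ is unordered, the swap $\pi\circ(0\,1)$ may give the same pair, and this has to be checked carefully.

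The main obstacle is making the inductive inequality rigorous. This means controlling the lower-order boundary terms $D_q(n-3-i-q+j,\cdot)$ when prefix symbols coincide with symbols in the length-5 block, and proving that a common prefix of length $\ge q-2$ loses a positive amount of order $n^{t-q-1}$. My first attempt would be a direct comparison of $N^{(q-2)}$-type sums against $N_q^{(q-3)}$ via the $D_q$ recurrence \eqref{def:eq1} and Eq.~\eqref{Pro:eq3}.
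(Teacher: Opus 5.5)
Your architecture matches the paper's: dispose of $p=0$ by Lemma~\ref{Upp:lm6}, periodize the common prefix with Lemma~\ref{Pro:lm5}, run Eq.~\eqref{Upp:eq28} as an inequality by induction on $p$, treat long prefixes separately, and trace the equality cases. However, two load-bearing steps are missing or wrong.

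\textbf{First gap: the componentwise claim.} Your second bullet says the symbols outside the prefix are controlled ``componentwise'' by the proof of Lemma~\ref{Upp:lm6}, so that you may subtract the extremal components of the prefix symbols as in Eq.~\eqref{Low:eq8}. That proof gives no such control. Outside Case A2-3 it bounds only the total, e.g.\ by $\frac{5}{(t-2)!}n^{t-2}$. For a pair with $\hat x_1\neq\hat y_1$ and $\hat x_2=\hat y_2=2$, the single component $|\hat{\mathcal{X}}^{2}|$ can be of order $n^{t-p-2}$, whereas the corresponding component of the extremal pair is only of order $n^{t-p-5}$. The subtracted terms $D_q(n-3-p-q+j,\,t-2-p-q+j)$ in Eq.~\eqref{Upp:eq28} sit exactly at the order $n^{t-q-2}$ that separates the $N_q^{(i)}$, so this is not a harmless slip.

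The paper closes this with a step your plan lacks. For $p\ge1$, suppose the first differing window is not $(a,b,c,a,b)$ versus $(b,a,c,b,a)$. Lemma~\ref{Pro:lm3} gives $|\mathcal{X}|\le\sum_{s}|D_s(\mathbf{u})|\,|D_{t-s}(\hat{\mathbf{x}})\cap D_{t-s}(\hat{\mathbf{y}})|$. Use $|D_1(\mathbf{u})|\le p$, the $O(n^{t-s-2})$ bounds of Theorem~\ref{def:thm6}, and Remark~\ref{Upp:rmk1}(2). The term $p\cdot\frac{6}{(t-3)!}n^{t-3}$ then exactly cancels the shift in $\frac{6}{(t-2)!}(n-p)^{t-2}$, leaving second coefficient $-(3t+15)$, strictly below $N_q^{(0)}(n,t)$ (Eq.~\eqref{Upp:eq30}). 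A separate induction extends this to $p>q$. Only after this reduction, with the explicit block $(0,1,q-1,0,1)/(1,0,q-1,1,0)$, are the components known well enough to run the recursion. Even then you need \emph{lower} bounds on $|\hat{\mathcal{X}}^b|$, $b\in U$, in order to subtract them.

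\textbf{Second gap: prefixes of length $p\ge q-2$.} Periodicity forces all $q$ symbols into the prefix only when $p\ge q$. For $p=q-2$ and $p=q-1$, exactly two, respectively one, symbol is missing, and these are the delicate cases. The identity $N_q^{(i)}-N_q^{(i-1)}=N_q^{(i-1)}(n-1-q,t-q)-D_q(n-3-q,t-2-q)$ cannot show that a longer prefix loses. It is positive for $t\ge q+2$, so continuing it formally would make a length-$(q-2)$ prefix \emph{better}. The loss comes from the recurrence breaking down, because the prefix must contain one of $a,b,c$.

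To quantify this loss, the paper proceeds in three steps.
\begin{enumerate}
\item It bounds the missing-symbol components by $|\hat{\mathcal{X}}^0|+|\hat{\mathcal{X}}^1|$ via Eq.~\eqref{Upp:eq18}.
\item Comparing second-order coefficients against $N_q^{(q-3)}(n-q+1,t-q+2)+N_q^{(q-3)}(n-q,t-q+1)$ gives a deficit $\frac{2}{(t-q-1)!}n^{t-q-1}$ (Eq.~\eqref{Upp:eq46}), and then $\frac{5}{(t-q-1)!}n^{t-q-1}$ for $p=q-1$ (Eq.~\eqref{Upp:eq49}).
\item It carries this deficit in the $u_1$ term through an induction over $p\ge q$, using $N_q^{(q-3)}(n,t)=\sum_{i=1}^{q}N_q^{(q-3)}(n-i,t-i+1)$.
\end{enumerate}

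\textbf{Characterization.} Remark~\ref{Upp:rmk1} covers only $x_1\neq y_1$, and Lemma~\ref{Pro:lm11}(4) concerns single balls, so neither applies directly once there is a common prefix. You must propagate the refined deficits through the prefix induction: tails differ, tails equal but non-periodic, wrong cyclic order, wrong prefix. You must also separately exclude non-periodic prefixes of length $q-3$ (Eq.~\eqref{Upp:eq39c}), because Lemma~\ref{Pro:lm5} is only a non-strict inequality.

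Your counting worry resolves easily. With $\tau=(0\,1)$, $\tau$ changes the tail $\mathbf{c}_q(n-q-2,\sigma)$, so $\tau(\mathbf{x}^{(q-3)})\neq\mathbf{y}^{(q-3)}$ and no two permutations give the same unordered pair.
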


\begin{proof}
Let $\mathbf{x}=(\mathbf{u},\mathbf{v},\mathbf{w}), \mathbf{y}=(\mathbf{u},\mathbf{v}',\mathbf{w})\in \Sigma_q^n$, where $\mathbf{v}=(v_1,\dots,v_l)$, $\mathbf{v}'=(v_1',\dots,v_l')$, and define $\mathcal{X}=D_t(\mathbf{x})\cap D_t(\mathbf{y})$. Here $|\mathbf{u}|=p$, $|\mathbf{w}|=r$, $|\mathbf{v}|=|\mathbf{v}'|=l$, $v_1\neq v_1'$, $v_l\neq v_l'$, and $d_L(\mathbf{x},\mathbf{y})\geq 2$. A key observation from Lemma~\ref{Pro:lm4} is that analyzing the sequences from their left endpoints is equivalent to analyzing them from their right endpoints. If $p=0$ or $r=0$, then by Lemmas \ref{Low:lm4} and \ref{Upp:lm6}, we have $|\mathcal{X}|\leq N_q^{(0)}(n,t)\leq N_q(n,t)$ for all sufficiently large $n$.

We henceforth restrict our attention to the case $p,r\geq 1$. Without loss of generality, assume $p\leq r$. Since $p+r+l=n$, $n-p$ remains sufficiently large when $n$ is sufficiently large. For notational clarity, we write $\mathbf{x}=(x_1,x_2,\dots,x_n)$ and $\mathbf{y}=(y_1,y_2,\dots,y_n)$. We prove $|\mathcal{X}|\leq N_q(n,t)$ by induction on the length $p$ of the common prefix $\mathbf{u}$. 

We divide the proof into three cases based on the value of $p$: \textbf{Case A:} $1\leq p\leq q-3$; \textbf{Case B:} $p=q-2$ or $p=q-1$; \textbf{Case C:} $p\geq q$. Here, when $q=3$, we only need to consider \textbf{Cases B} and \textbf{C}.

For convenience, define the truncated sequences
\(
\hat{\mathbf{x}} = (\mathbf{v}, \mathbf{w})\) and \( \hat{\mathbf{y}} = (\mathbf{v}', \mathbf{w})
\) 
in \(\Sigma_q^{n-p}\). Let $\hat{\mathcal{X}}=D_{t-p}(\mathbf{v}, \mathbf{w})\cap D_{t-p}(\mathbf{v}', \mathbf{w})$. Note that $d_L(\hat{\mathbf{x}}, \hat{\mathbf{y}}) = d_L(\mathbf{x}, \mathbf{y}) \geq 2$ since the common prefix $\mathbf{u}$ does not affect the Levenshtein distance. By Lemma \ref{Pro:lm3}, we have
\[
|D_t(\mathbf{x}) \cap D_t(\mathbf{y})| \leq \sum_{s=0}^t |D_s(\mathbf{u})| \cdot |D_{t-s}(\hat{\mathbf{x}}) \cap D_{t-s}(\hat{\mathbf{y}})|.
\]

Consider the case where $p \leq q$. Since $|\mathbf{u}|=p$, we have $|D_s(\mathbf{u})| = O(1)$ for all $s \geq 0$. By Theorem \ref{def:thm5}, we have $|D_{t-s}(\hat{\mathbf{x}}) \cap D_{t-s}(\hat{\mathbf{y}})| = O(n^{t-s-2})$ for any $s \in [0, t]$. Thus,
\[
|D_t(\mathbf{x}) \cap D_t(\mathbf{y})| \leq |D_1(\mathbf{u})| \cdot |D_{t-1}(\hat{\mathbf{x}}) \cap D_{t-1}(\hat{\mathbf{y}})| + |D_t(\hat{\mathbf{x}}) \cap D_t(\hat{\mathbf{y}})| + O(n^{t-4}).
\]
Now consider the case where $\hat{\mathbf{x}}_{[1,5]}$ and $\hat{\mathbf{y}}_{[1,5]}$ are not of the form  $(a,b,c,a,b)$ and $(b,a,c,b,a)$, respectively, for any distinct $a,b,c\in\Sigma_q$. By Remark~\ref{Upp:rmk1}, for $p\leq q$, we obtain 
\begin{align}
|\mathcal{X}|
&\leq p \cdot \frac{6}{(t-3)!} (n-p)^{t-3} + \frac{6}{(t-2)!} (n-p)^{t-2} - \frac{3t+15}{(t-3)!} (n-p)^{t-3} + O(n^{t-4})\nonumber \\
&= \frac{6}{(t-2)!} n^{t-2} - \frac{3t+15}{(t-3)!} n^{t-3} + O(n^{t-4}) < N_q^{(0)}(n, t),\label{Upp:eq30}
\end{align}
since $N_q^{(0)}(n, t) = \frac{6}{(t-2)!} n^{t-2} - \frac{3t+13}{(t-3)!} n^{t-3} + O(n^{t-4})$. 

Therefore, in both \textbf{Cases A} and \textbf{B}, it suffices to consider the scenario where $\hat{\mathbf{x}}_{[1,5]} = (a, b, c, a, b)$ and $\hat{\mathbf{y}}_{[1,5]} = (b, a, c, b, a)$ for some distinct $a, b, c \in \Sigma_q$. By Lemma \ref{Pro:lm4}, for notational simplicity, we set $a=0$, $b=1$, and $c=q-1$. Finally, by Lemma \ref{Pro:lm5}, we have $|D_t(\mathbf{u},\mathbf{v},\mathbf{w})\cap D_t(\mathbf{u},\mathbf{v}',\mathbf{w})|\leq |D_t(\mathbf{u}^c,\mathbf{v},\mathbf{w})\cap D_t(\mathbf{u}^c,\mathbf{v}',\mathbf{w})|$, so we may further restrict to the case $\mathbf{u}=\mathbf{u}^c$. 

\textbf{Case A:}  
We prove by induction on \(p\) that for all \(1\le p\le q-3\) with $q\geq 4$,  the following assertions hold:
\begin{equation}\nonumber%\label{Upp:eq31}
|\mathcal{X}|\le N_q^{(p)}(n,t), 
\end{equation}
and moreover,
\begin{enumerate}
  \item if $\mathbf{x}_{[p+6,n]}\neq \mathbf{y}_{[p+6,n]}$, then $|\mathcal{X}|\leq  N_q^{(p)}(n,t)-n^{t-4}+O(n^{t-5})$;
  \item if $\mathbf{x}_{[p+6,n]}= \mathbf{y}_{[p+6,n]}\notin \mathbf{C}_q(n-p-5)$, then $|\mathcal{X}|\leq  N_q^{(p)}(n,t)-n^{t-1-q}+O(n^{t-2-q})$;
  \item if $\mathbf{x}_{[p+6,n]}= \mathbf{y}_{[p+6,n]}\in \mathbf{C}_q(n-p-5)$, and either $\{x_{i+p+5}\mid i\in[q-3]\}\neq [2,q-2]$ or $x_{p+q+3}\neq q-1$, then $|\mathcal{X}|\leq  N_q^{(p)}(n,t)-n^{t-q}+O(n^{t-1-q})$;
  \item suppose $\mathbf{x}_{[p+6,n]}= \mathbf{y}_{[p+6,n]}=\mathbf{c}_{q}(n-p-5,\sigma)$ with $\sigma=(2,3,\cdots,q-1,0,1)$, and $\mathbf{x}_{[1,p]}\in \mathbf{C}_q(p)$. If $\mathbf{x}_{[1,p]}\neq (q-1-p,\cdots,q-2)$, then $|\mathcal{X}|\leq  N_q^{(p)}(n,t)-n^{t-1-q}+O(n^{t-2-q})$.
\end{enumerate}

%Since the proofs of items 1-3 share the same idea, we only prove item 1; the others are analogous.
For convenience, we formalize the following two conditions.

\emph{Condition 3:}
\(
\mathbf{x}_{[p+6,n]}= \mathbf{y}_{[p+6,n]}\in \mathbf{C}_q(n-p-5),
\)
and either
\(
\{x_{i+p+5}\mid i\in[q-3]\}\neq [2,q-2]
~\text{or}~
x_{p+q+3}\neq q-1.
\)

\emph{Condition 4:}
\(
\mathbf{x}_{[p+6,n]}= \mathbf{y}_{[p+6,n]}=\mathbf{c}_{q}(n-p-5,\sigma)\) with $\sigma=(2,3,\cdots,q-1,0,1)$,
and
\(
\mathbf{x}_{[1,p]}\in \mathbf{C}_q(p)\) with
\(
\mathbf{x}_{[1,p]}\neq (q-1-p,\cdots,q-2).
\)

%For convenience, we define the \emph{Condition $3$} to be $\mathbf{x}_{[p+6,n]}= \mathbf{y}_{[p+6,n]}\in \mathbf{C}_q(n-p-5)$, and either $\{x_{i+p+5}\mid i\in[q-3]\}\neq [2,q-2]$ or $x_{p+q+3}\neq q-1$; define the \emph{Condition $4$} to be assume that $\mathbf{x}_{[p+6,n]}= \mathbf{y}_{[p+6,n]}\in \mathbf{C}_q(n-p-5)$, $\{x_{i+p+5}|i\in[q-3]\}=[2,q-2]$, $x_{p+q+3}=q-1$, $\mathbf{x}_{[1,p]}\in \mathbf{C}_q(p)$, and $\mathbf{x}_{[1,p]}\neq (q-1-p,\cdots,q-2)$.

\emph{The base case $p=1$}.

By the preceding discussion, we have
\(
\mathbf{x}_{[1,6]}=(x_1,0,1,q-1,0,1)\) and \( \mathbf{y}_{[1,6]}=(x_1,1,0,q-1,1,0).
\)
Let $a=x_1=y_1$. By Lemmas \ref{Pro:lm1} and \ref{Pro:lm2}, we have 
\(
|\mathcal{X}|=|\mathcal{X}^{a}|+\sum_{i\in \Sigma_q\setminus\{a\}}|\mathcal{X}^{i}|,
\)
where $\mathcal{X}^{a}=a\circ \big(D_t(\mathbf{x}_{[2,n]})\cap D_t(\mathbf{y}_{[2,n]})\big)$ and $\mathcal{X}^{i}=i\circ\big(D_{t-1-\ell_i}(\mathbf{x}_{[3+\ell_i,n]})\cap D_{t-1-\ell_i^*}(\mathbf{y}_{[3+\ell_i^*,n]})\big)$ with $\ell_i,\ell_i^*\geq 0$ for $i\in \Sigma_q\setminus\{a\}$. Here, the indices $\ell_i$ (respectively, $\ell_i^*$) are pairwise distinct.

When $p=1$, we have $\hat{\mathcal{X}}=D_{t-1}(\mathbf{x}_{[2,n]})\cap D_{t-1}(\mathbf{y}_{[2,n]})$. Then, $\hat{\mathcal{X}}^i=\mathcal{X}^i$ for all $i\neq a$, since $x_1=y_1=a$. Therefore,
\(
|\hat{\mathcal{X}}|=\sum_{i\in \Sigma_q\setminus\{a\}}|\mathcal{X}^{i}|+|\hat{\mathcal{X}}^a|.
\)
Since $d_L(\mathbf{x},\mathbf{y})\geq 2$ and $x_1=y_1$, we have $d_L(\mathbf{x}_{[2,n]},\mathbf{y}_{[2,n]})\geq 2$. Thus, for sufficiently large $n$, by Lemma \ref{Upp:lm6} we have
\begin{equation}
|\mathcal{X}^{a}|\leq N_q^{(0)}(n-1,t), \label{Upp:eq32}
\end{equation}
and we also obtain
\begin{align}
|\mathcal{X}^{a}|\overset{(a)}{\leq}& N_q^{(0)}(n-1,t)-n^{t-4}+O(n^{t-5}),~~\text{whenever $\mathbf{x}_{[7,n]}\neq \mathbf{y}_{[7,n]}$}, \label{Upp:eq33}\\
|\mathcal{X}^{a}|\overset{(b)}{\leq}& N_q^{(0)}(n-1,t)-n^{t-1-q}+O(n^{t-2-q}),~~\text{whenever $\mathbf{x}_{[7,n]}= \mathbf{y}_{[7,n]}\notin \mathbf{C}_q(n-6)$}, \label{Upp:eq33b}\\
|\mathcal{X}^{a}|\overset{(c)}{\leq}& N_q^{(0)}(n-1,t)-n^{t-q}+O(n^{t-1-q}),~~\text{whenever Condition $3$ holds}, \label{Upp:eq33c}
%|\mathcal{X}^{a}|\overset{(d)}{\leq}& N_q^{(0)}(n-1,t)-n^{t-1-q}+O(n^{t-5}),~~\text{whenever $\mathbf{x}, \mathbf{y}$ satisfy Condition $4$}, \label{Upp:eq33d}
\end{align}
where $(a)$ follows from Eq. \eqref{Upp:eq27}, $(b)$ follows from Eq. \eqref{Upp:eq21}, and $(c)$ follows from Eqs. \eqref{Upp:eq23} and \eqref{Upp:eq26}.

Combining the definition of $\hat{\mathcal{X}}$ with $\hat{\mathbf{x}}_{[1,5]}=(0,1,q-1,0,1)$ and $\hat{\mathbf{y}}_{[1,5]}=(1,0,q-1,1,0)$, and applying Eq.~\eqref{Upp:eq16} from the proof of \emph{Case A2-3} in Lemma~\ref{Upp:lm6}, we obtain 
\begin{align}\label{Upp:eq33d}
|\hat{\mathcal{X}}^0|+|\hat{\mathcal{X}}^1|+|\hat{\mathcal{X}}^{q-1}|\leq N_q^{(0)}(n-1,t-1)-\sum\limits_{i=3}^{q-1}D_q(n-4-i,t-3-i).
\end{align}
Note that $|\hat{\mathcal{X}}^{0}|$, $|\hat{\mathcal{X}}^{1}|$, and $|\hat{\mathcal{X}}^{q-1}|$ dominate the remaining terms $|\hat{\mathcal{X}}^{i}|$ for $2\le i\le q-2$, since $\hat{\mathbf{x}}_{[1,3]}=(0,1,q-1)$ and $\hat{\mathbf{y}}_{[1,3]}=(1,0,q-1)$. We choose \(b\in[2,q-2]\) as follows. If every symbol of \(\Sigma_q\) appears in \(\hat{\mathbf{x}}\), let \(b\) be the symbol whose first occurrence is the latest; otherwise, let \(b\) be any symbol absent from $\hat{\mathbf{x}}$. Then $|\hat{\mathcal{X}}^a|\geq |\hat{\mathcal{X}}^b|$. Hence, by the same argument as in the proof of Eq. \eqref{Upp:eq16}, we have
\begin{align}\label{Upp:eq33e}
\sum_{i\in [2,q-1]\setminus\{b\}}|\hat{\mathcal{X}}^i|\leq \sum\limits_{i=3}^{q-2}D_q(n-4-i,t-3-i).
\end{align}
Thus, by Eqs. \eqref{Upp:eq33d} and \eqref{Upp:eq33e},  we have 
\begin{align}
\sum_{i\in \Sigma_q\setminus\{a\}}|\mathcal{X}^{i}|&= |\hat{\mathcal{X}}|-|\hat{\mathcal{X}}^a|\leq|\hat{\mathcal{X}}|-|\hat{\mathcal{X}}^b|\leq  N_q^{(0)}(n-1,t-1)-D_q(n-q-3,t-q-2),\label{Upp:eq34}
\end{align}
Furthermore, if Condition~4 holds, then \(a\neq q-2\) and \(\mathbf{y}_{[p+6,n]}=\mathbf{c}_{q}(n-p-5,\sigma)\) with $\sigma=(2,3,\cdots,q-1,0,1)$, and we have the lower bound \(|\hat{\mathcal{X}}^a|\geq D_q(n-q-2,t-q-1)\). Consequently, since $|\hat{\mathcal{X}}|\leq N_q^{(0)}(n-1,t-1)$, we have 
\begin{align}
\sum_{i\in \Sigma_q\setminus\{a\}}|\mathcal{X}^{i}|&= |\hat{\mathcal{X}}|-|\hat{\mathcal{X}}^a|\leq  N_q^{(0)}(n-1,t-1)-D_q(n-q-2,t-q-1),~\text{whenever Condition $4$ holds}.~\label{Upp:eq34b}
\end{align}

Therefore, we conclude that
\begin{align}
|\mathcal{X}|&\overset{(a)}{\leq} N_q^{(0)}(n-1,t)+N_q^{(0)}(n-1,t-1)-D_q(n-q-3,t-q-2)=N_q^{(1)}(n,t), \label{Upp:eq35}\\
|\mathcal{X}|&\overset{(b)}{\leq} N_q^{(1)}(n,t)-n^{t-4}+O(n^{t-5}),~~\text{whenever $\mathbf{x}_{[7,n]}\neq \mathbf{y}_{[7,n]}$}, \nonumber\\%\label{Upp:eq36}
|\mathcal{X}|&\overset{(c)}{\leq} N_q^{(1)}(n,t)-n^{t-1-q}+O(n^{t-2-q}),~~\text{whenever $\mathbf{x}_{[7,n]}= \mathbf{y}_{[7,n]}\notin \mathbf{C}_q(n-6)$}, \nonumber\\
|\mathcal{X}|&\overset{(d)}{\leq} N_q^{(1)}(n,t)-n^{t-q}+O(n^{t-1-q}),~~\text{whenever Condition $3$ holds}, \nonumber\\
|\mathcal{X}|&\overset{(e)}{\leq} N_q^{(1)}(n,t)-n^{t-1-q}+O(n^{t-2-q}),~~\text{whenever Condition $4$ holds},\nonumber
\end{align}
where $(a)$ follows from Eqs. \eqref{Upp:eq28}, \eqref{Upp:eq32}, and \eqref{Upp:eq34}; $(b)$ follows from Eqs. \eqref{Upp:eq28}, \eqref{Upp:eq33}, and \eqref{Upp:eq34}; $(c)$ follows from Eqs. \eqref{Upp:eq28}, \eqref{Upp:eq33b}, and \eqref{Upp:eq34}; $(d)$ follows from Eqs. \eqref{Upp:eq28}, \eqref{Upp:eq33c}, and \eqref{Upp:eq34}; $(e)$ follows from Eqs. \eqref{intr:eq2}, \eqref{Upp:eq28}, \eqref{Upp:eq32}, and \eqref{Upp:eq34b}.

\emph{Inductive step.}

Fix an integer \(s\) with \(2\le s\le q-4\), and assume the assertion holds for all \(p\le s\). We then prove it for \(p=s+1\).

For convenience, let $U=\{u_1,u_2,\dots,u_{s+1}\}$ denote the set of symbols appearing in the common prefix $\mathbf{u}$. When $p=s+1$, we have 
\(
\hat{\mathcal{X}}=D_{t-s-1}(\mathbf{x}_{[s+2,n]})\cap D_{t-s-1}(\mathbf{y}_{[s+2,n]}),
\) 
\(x_{s+2}\neq y_{s+2}\),
\(
\mathbf{x}_{[s+2,s+6]}=(0,1,q-1,0,1),\) and \( \mathbf{y}_{[s+2,s+6]}=(1,0,q-1,1,0).
\)

By Lemmas \ref{Pro:lm1} and \ref{Pro:lm2}, we have
\(
|\mathcal{X}|=\sum_{i=1}^{s+1}|\mathcal{X}^{u_i}|+\sum_{a\in \Sigma_q\setminus U}|\mathcal{X}^{a}|,
\)
where $$|\mathcal{X}^{u_i}|=|D_{t-i+1}(\mathbf{u}_{[i+1,s+1]},\hat{\mathbf{x}})\cap D_{t-i+1}(\mathbf{u}_{[i+1,s+1]},\hat{\mathbf{y}})|$$ for $i\in [s+1]$. Since $\mathbf{x}_{[1,s+1]}=\mathbf{y}_{[1,s+1]}$ and $d_L(\mathbf{x},\mathbf{y})\geq 2$, we have $d_L(\mathbf{x}_{[s+2,n]},\mathbf{y}_{[s+2,n]})\geq 2$. When Condition $4$ holds, we distinguish the following two cases: \textbf{Case 4.1}: $x_1\neq q-2-s$ and $\mathbf{x}_{[2,s+1]}=(q-1-s,q-s,\cdots,q-2)$; \textbf{Case 4.2}: $\mathbf{x}_{[2,s+1]}\neq (q-1-s,q-s,\cdots,q-2)$. 

If \textbf{Case~4.2} of Condition~$4$ holds for \(\mathbf{x}\) and \(\mathbf{y}\), then Condition~$4$ also holds for the truncated sequences \(\mathbf{x}_{[2,n]}\) and \(\mathbf{y}_{[2,n]}\). Likewise, if Condition~$3$ holds for \(\mathbf{x}\) and \(\mathbf{y}\), then Condition~$3$ also holds for the truncated sequences \(\mathbf{x}_{[2,n]}\) and \(\mathbf{y}_{[2,n]}\). By the inductive hypothesis, we have
\begin{align}
|\mathcal{X}^{u_i}|&\leq N_q^{(s+1-i)}(n-i,t-i+1),~~\text{for each $i\in[1,s+1]$,}\label{Upp:eq37}\\
|\mathcal{X}^{u_1}|&\leq N_q^{(s)}(n-1,t)-n^{t-4}+O(n^{t-5}),~~\text{whenever $\mathbf{x}_{[s+7,n]}\neq \mathbf{y}_{[s+7,n]}$}, \label{Upp:eq38}\\
|\mathcal{X}^{u_1}|&\leq N_q^{(s)}(n-1,t)-n^{t-1-q}+O(n^{t-2-q}),~~\text{whenever $\mathbf{x}_{[s+7,n]}= \mathbf{y}_{[s+7,n]}\notin \mathbf{C}_q(n-s-6)$}, \label{Upp:eq38b}\\
|\mathcal{X}^{u_1}|&\leq N_q^{(s)}(n-1,t)-n^{t-q}+O(n^{t-1-q}),~~\text{whenever Condition $3$ holds}, \label{Upp:eq38c}\\
|\mathcal{X}^{u_1}|&\leq N_q^{(s)}(n-1,t)-n^{t-1-q}+O(n^{t-2-q}),~~\text{when \textbf{Case 4.2} of Condition $4$ holds}. \label{Upp:eq38d}
\end{align}
Observe that $\hat{\mathcal{X}}^a=\mathcal{X}^a$ for all $a\in \Sigma_q\setminus U$. This gives the decomposition
\[
|\hat{\mathcal{X}}|=\sum_{a\in \Sigma_q\setminus U}|\mathcal{X}^{a}|+\sum_{b\in U}|\hat{\mathcal{X}}^b|.
\]
For sufficiently large $n$, applying the same method as in the case \(p=1\) together with Eq. \eqref{Upp:eq16} from the proof of \textbf{Case A2-3} in Lemma~\ref{Upp:lm6}, 
we obtain 
\begin{align}
\sum_{a\in \Sigma_q\setminus U}|\mathcal{X}^{a}|=&|\mathcal{X}|-\sum_{b\in U}|\hat{\mathcal{X}}^b|\leq N_q^{(0)}(n-s-1,t-s-1)-\sum_{j=1}^{s+1}D_q(n-4-s-q+j,t-3-s-q+j),\label{Upp:eq39}
\end{align}
Moreover, if \textbf{Case 4.1} of Condition~$4$ holds, then \(x_1\neq q-2-s\), $\mathbf{x}_{[2,s+1]}=(q-1-s,q-s,\cdots,q-2)$, and $\mathbf{x}_{[s+7,n]}=\hat{\mathbf{x}}_{[6,n-s-7]}=\hat{\mathbf{y}}_{[6,n-s-7]}=\mathbf{y}_{[s+7,n]}=\mathbf{c}_{q}(n-s-6,\sigma)$ with $\sigma=(2,3,\cdots,q-1,0,1)$. Thus, $U=\{x_1,q-1-s,q-s,\cdots,q-2\}$ with $x_1\in [0,q-3-s]\cup \{q-1\}$. By the definition of $\hat{\mathcal{X}}$, we have the lower bound $\sum_{b\in U}|\hat{\mathcal{X}}^b|\geq D_q(n-2-q,t-1-q)+\sum_{j=1}^{s}D_q(n-4-s-q+j,t-3-s-q+j)$. Consequently, since $|\hat{\mathcal{X}}|\leq N_q^{(0)}(n-s-1,t-s-1)$, we have
\begin{align}
\sum_{a\in \Sigma_q\setminus U}|\mathcal{X}^{a}|=&|\mathcal{X}|-\sum_{b\in U}|\hat{\mathcal{X}}^b|\leq N_q^{(0)}(n-s-1,t-s-1)-D_q(n-2-q,t-1-q)\nonumber\\
&~-\sum_{j=1}^{s}D_q(n-4-s-q+j,t-3-s-q+j),~~~~~\text{when \textbf{Case 4.1} of Condition~$4$ holds}\label{Upp:eq39b}.
\end{align}

%\begin{align}
%N_q^{(i)}(n,t)=\sum\limits_{j=0}^{i-1}N_q^{(j)}(n-i+j,t-i+j+1)+N_q^{(0)}(n-i,t-i)-\sum\limits_{j=1}^{i}D_q(n-3-i-q+j,t-2-i-q+j).\label{Upp:eq28}
%\end{align}  

Therefore, we have
\begin{align}
|\mathcal{X}|&\overset{(a)}{\leq} \sum_{i=0}^{s}N_q^{(i)}(n-s-1+i,t-s+i)+N_q^{(0)}(n-s-1,t-s-1)-\sum_{i=0}^{s}D_q(n-s-q-3+i,t-s-q-2+i) \nonumber\\
&=N_q^{(s+1)}(n,t),\label{Upp:eq40}\\
|\mathcal{X}|&\overset{(b)}{\leq} N_q^{(s+1)}(n,t)-n^{t-4}+O(n^{t-5}),~~\text{whenever $\mathbf{x}_{[s+7,n]}\neq \mathbf{y}_{[s+7,n]}$}, \nonumber\\
|\mathcal{X}|&\overset{(c)}{\leq} N_q^{(s+1)}(n,t)-n^{t-1-q}+O(n^{t-2-q}),~~\text{whenever $\mathbf{x}_{[s+7,n]}=\mathbf{y}_{[s+7,n]}\notin \mathbf{C}_q(n-s-6)$}, \nonumber\\
|\mathcal{X}|&\overset{(d)}{\leq} N_q^{(s+1)}(n,t)-n^{t-q}+O(n^{t-1-q}),~~\text{whenever Condition $3$ holds}, \nonumber\\
|\mathcal{X}|&\overset{(e)}{\leq} N_q^{(s+1)}(n,t)-n^{t-1-q}+O(n^{t-2-q}),~~\text{whenever Condition $4$ holds}, \nonumber%\label{Upp:eq40}
\end{align}
where $(a)$ follows from Eqs. \eqref{Upp:eq28}, \eqref{Upp:eq37}, and \eqref{Upp:eq39}; $(b)$ follows from Eqs. \eqref{Upp:eq28}, \eqref{Upp:eq37}, \eqref{Upp:eq38}, and \eqref{Upp:eq39}; $(c)$ follows from Eqs. \eqref{Upp:eq28}, \eqref{Upp:eq37}, \eqref{Upp:eq38b}, and \eqref{Upp:eq39}; $(d)$ follows from Eqs. \eqref{Upp:eq28}, \eqref{Upp:eq37}, \eqref{Upp:eq38c}, and \eqref{Upp:eq39}; $(e)$ follows from Eqs. \eqref{intr:eq2}, \eqref{Upp:eq28}, \eqref{Upp:eq37}, \eqref{Upp:eq38d}, \eqref{Upp:eq39}, and \eqref{Upp:eq39b}.

By induction,  for all integers $p$ with $1\leq p\leq q-3$, we conclude that $|\mathcal{X}|\leq N_q^{(p)}(n,t)$; $|\mathcal{X}|\leq N_q^{(p)}(n,t)-n^{t-4}+O(n^{t-5})$ whenever $\mathbf{x}_{[p+6,n]}\neq \mathbf{y}_{[p+6,n]}$; $|\mathcal{X}|\leq  N_q^{(p)}(n,t)-n^{t-1-q}+O(n^{t-2-q})$ whenever $\mathbf{x}_{[p+6,n]}= \mathbf{y}_{[p+6,n]}\notin \mathbf{C}_q(n-p-5)$; $|\mathcal{X}|\leq  N_q^{(p)}(n,t)-n^{t-q}+O(n^{t-1-q})$ whenever Condition $3$ holds;  $|\mathcal{X}|\leq  N_q^{(p)}(n,t)-n^{t-1-q}+O(n^{t-2-q})$ whenever Condition $4$ holds.

To characterize the sequences achieving the bound $N_q^{(q-3)}(n,t)$, we must examine all possible common prefixes $\mathbf{u}$, including non-cyclic ones.
Consider the case where $p=q-3$. If $\mathbf{x}_{[1,q-3]}\notin \mathbf{C}_q(q-3)$, then there exist two distinct indices $i<j\in[q-3]$ such that $u_i=u_j$, and there exists an element $a\in \Sigma_q\setminus\big(\{u_i\mid i\in[q-3]\}\cup \{0,1,q-1\}\big)$. %Consequently, 
      %among the contributions of $\mathcal{X}^{u_i}$ and $\mathcal{X}^{u_j}$, only one can retain the original bound $N_q^{(q-3-i)}(n-i,t-i+1)$; 
Since \(u_i=u_j\) with \(i<j\), the two components \(\mathcal{X}^{u_i}\) and \(\mathcal{X}^{u_j}\) correspond to the same symbol, so they cannot both contribute the distinct upper bounds \(N_q^{(q-3-i)}(n-i,t-i+1)\) and \(N_q^{(q-3-j)}(n-j,t-j+1)\). More precisely, the bound corresponding to the later index \(j\) is lost. Because the symbol \(a\) does not appear among \(\{u_i\mid i\in[q-3]\}\cup\{0,1,q-1\}\), its first occurrence in \(\mathbf{x}\) occurs no earlier than position \(q+3\), and hence its contribution is bounded above by the smaller quantity \(D_q(n-q-3,t-q-2)\). Thus,
\begin{align}
|\mathcal{X}|&\leq N_q^{(q-3)}(n,t)-N_q^{(q-3-j)}(n-j,t-j+1)+D_q(n-q-3,t-q-2)\nonumber\\
&\overset{(a)}{\leq} N_q^{(q-3)}(n,t)-\frac{6}{(t-j-1)!}n^{t-j-1}+O(n^{t-j-2})\overset{(b)}{\leq} N_q^{(q-3)}(n,t)-\frac{6}{(t-q+2)!}n^{t-q+2}+O(n^{t-q+1}),\label{Upp:eq39c}
\end{align}
where $(a)$ follows from Lemma \ref{Low:lm4}, Eq. \eqref{intr:eq2}, and $2\leq j\leq q-3$, $(b)$ follows from $2\leq j\leq q-3$.

To summarize, we have the following conclusions.
\begin{enumerate}
  \item By Lemma~\ref{Low:lm4}, for all \(0\le p\le q-4\) and \(q\ge4\), we have 
\begin{equation}\label{Upp:eq41}
|\mathcal{X}| \le N_q^{(p)}(n,t) \leq N_q^{(q-3)}(n,t)-\frac{5}{(t-q-2)!}n^{t-q-2}+O(n^{t-q-3})\leq N_q^{(q-3)}(n,t).
\end{equation}
  \item Let $p=q-3$ and $t\geq q+2$. For sufficiently large $n$, by Eq. \eqref{Upp:eq39c} and the results of \textbf{Case A} for $p=q-3$,  we have
  \begin{align}
  |\mathcal{X}|\leq N_q^{(q-3)}(n,t),\label{Upp:eq42}
  \end{align}  
  and equality holds in Eq. \eqref{Upp:eq42} only if there exists an ordering $\sigma=(c_1,\dots,c_{q-2},c_{q-1},c_q)$ of $\Sigma_q$ such that
  $\mathbf{x}_{[1,q-3]}=\mathbf{y}_{[1,q-3]}=(c_1,\cdots,c_{q-3}),\{\mathbf{x}_{[q-2,q+2]},\mathbf{y}_{[q-2,q+2]}\}=\{(c_q,c_{q-1},c_{q-2},c_{q},c_{q-1}),
      (c_{q-1},c_{q},c_{q-2},c_{q-1},c_{q})\}$, and $\mathbf{x}_{[q+3,n]}=\mathbf{y}_{[q+3,n]}=\mathbf{c}_q(n-q-2,\sigma)$. 
\end{enumerate}

\textbf{Case B:} We consider two subcases based on the value of $p$: \textbf{Case B1:} $p=q-2$; \textbf{Case B2:} $p=q-1$.

\textbf{Case B1:} $p=q-2$.

Then we have $\mathbf{u}\in \mathbf{C}_q(q-2)$, $\mathbf{x}_{[q-1,q+3]}=(0,1,q-1,0,1)$, and $\mathbf{y}_{[q-1,q+3]}=(1,0,q-1,1,0)$. Let $U=\{u_1,u_2,\dots,u_{q-2}\}$ and $\{b,c\}=\Sigma_q\setminus U$. By Lemmas $\ref{Pro:lm1}$ and $\ref{Pro:lm2}$, we have $|\mathcal{X}|=\sum\limits_{i\in [q-2]}|\mathcal{X}^{u_i}|+|\mathcal{X}^{b}|+|\mathcal{X}^{c}|$.

Recall that $d_L(\hat{\mathbf{x}}, \hat{\mathbf{y}})=d_L(\mathbf{x}_{[q-1,n]},\mathbf{y}_{[q-1,n]}) = d_L(\mathbf{x}, \mathbf{y}) \geq 2$. By the result of \textbf{Case A} (with $p \leq q-3$), for each $i\in[q-2]$, we obtain
\begin{equation}
|\mathcal{X}^{u_i}|\leq N_q^{(q-2-i)}(n-i,t-i+1). \label{Upp:eq43}
\end{equation}

We now bound the sum $|\mathcal{X}^b|+|\mathcal{X}^c|$. Note that $\hat{\mathcal{X}}^a=\mathcal{X}^a$ for all $a\in\Sigma_q\setminus U$, where $\hat{\mathcal{X}}=D_{t-q+2}(\mathbf{x}_{[q-1,n]})\cap D_{t-q+2}(\mathbf{y}_{[q-1,n]})$. Since the prefix of $\mathbf{x}_{[q-1,n]}$ is $(1,0,q-1,1,0)$ and that of $\mathbf{y}_{[q-1,n]}$ is $(0,1,q-1,0,1)$, we have $\min\{|\hat{\mathcal{X}}^0|,|\hat{\mathcal{X}}^1|\}\geq |\hat{\mathcal{X}}^i|$ for all $i\in[2,q-1]$. Therefore,
\begin{align}
|\mathcal{X}^b|+|\mathcal{X}^c|\leq |\hat{\mathcal{X}}^0|+|\hat{\mathcal{X}}^1|\overset{(a)}{\leq}& 2\big(D_q(n-q,t-q+1)+D_q(n-q-3,t-q)-D_q(n-q-2,t-q+1)\big),\label{Upp:eq44}
\end{align}
where $(a)$ follows from Eq. \eqref{Upp:eq18} in Lemma \ref{Upp:lm6}. 

By Lemma \ref{Low:lm4}, we have $N_q^{(i)}(m,l)\leq N_q^{(q-3)}(m,l)$ for all $i\leq q-3$ and $m\geq l+3q$.  Combining this with Eqs. \eqref{Upp:eq43} and \eqref{Upp:eq44}, we obtain
\begin{align}
|\mathcal{X}|&\leq \sum_{i=0}^{q-3}N_q^{(q-3)}(n-q+2+i,t-q+3+i)+|\hat{\mathcal{X}}^0|+|\hat{\mathcal{X}}^1| \nonumber\\
&\overset{(a)}{=} N_q^{(q-3)}(n,t) - \left[ N_q^{(q-3)}(n-q+1,t-q+2)+N_q^{(q-3)}(n-q,t-q+1)- \left(|\hat{\mathcal{X}}^0|+|\hat{\mathcal{X}}^1|\right)\right], \label{Upp:eq45}
\end{align}
where $(a)$ follows from the recurrence  $N_q^{(i-1)}(n,t)=\sum_{k=0}^{q-1}N_q^{(i-1)}(n-1-k,t-k)$ for sufficiently large $n$.
 
For sufficiently large $n$,
\begin{align*}
&N_q^{(q-3)}(n-q+1,t-q+2)+N_q^{(q-3)}(n-q,t-q+1) \overset{(a)}{=} \frac{6}{(t-q)!}n^{t-q} - \frac{3(t-q)+6q+7}{(t-q-1)!}n^{t-q-1} + O(n^{t-q-2}),\\
&|\hat{\mathcal{X}}^0|+|\hat{\mathcal{X}}^1|\overset{(b)}{\leq}  \frac{6}{(t-q)!}n^{t-q} - \frac{3(t-q)+6q+9}{(t-q-1)!}n^{t-q-1} + O(n^{t-q-2}),
\end{align*}
where $(a)$ follows from Lemma \ref{Low:lm4}, $(b)$ follows from Eq. \eqref{def:eq1} and Lemma \ref{Pro:lm10}. Thus,
\[
N_q^{(q-3)}(n-q+1,t-q+2)+N_q^{(q-3)}(n-q,t-q+1) - \left(|\hat{\mathcal{X}}^0|+|\hat{\mathcal{X}}^1|\right) \geq  \frac{2}{(t-q-1)!}n^{t-q-1} + O(n^{t-q-2}).
\]
Combining this with Eq. \eqref{Upp:eq45}, we conclude that for sufficiently large $n$,
\begin{align}
|\mathcal{X}| \leq N_q^{(q-3)}(n,t) - \frac{2}{(t-q-1)!}n^{t-q-1} + O(n^{t-q-2}) \leq N_q^{(q-3)}(n,t).\label{Upp:eq46}
\end{align}

\textbf{Case B2:} $p=q-1$.

Then $\mathbf{u}\in \mathbf{C}_q(q-1)$, $\mathbf{x}_{[q,q+4]}=(0,1,q-1,0,1)$, and $\mathbf{y}_{[q,q+4]}=(1,0,q-1,1,0)$. Let $\hat{U}=\{u_1,u_2,\dots,u_{q-1}\}$ and $\{b\}=\Sigma_q\setminus \hat{U}$. By Lemmas $\ref{Pro:lm1}$ and $\ref{Pro:lm2}$, we have $|\mathcal{X}|=\sum\limits_{i\in [q-1]}|\mathcal{X}^{u_i}|+|\mathcal{X}^{b}|$.

Recall that $d_L(\hat{\mathbf{x}}, \hat{\mathbf{y}})=d_L(\mathbf{x}_{[q,n]},\mathbf{y}_{[q,n]}) = d_L(\mathbf{x}, \mathbf{y}) \geq 2$. By the results of \textbf{Case A} and \textbf{Case B1}, we obtain 
\begin{align}
|\mathcal{X}^{u_1}|\leq N_q^{(q-3)}(n-1,t) - \frac{2}{(t-q-1)!}n^{t-q-1} +O(n^{t-q-2}), \quad
|\mathcal{X}^{u_i}|\leq N_q^{(q-1-i)}(n-i,t-i+1), \label{Upp:eq47}
\end{align}
for $i\in[2,q-1]$. 

We now bound $|\mathcal{X}^b|$. Note that $\hat{\mathcal{X}}^b=\mathcal{X}^b$, where $\hat{\mathcal{X}}=D_{t-q+1}(\mathbf{x}_{[q,n]})\cap D_{t-q+1}(\mathbf{y}_{[q,n]})$. By Eq. \eqref{Upp:eq18} in Lemma \ref{Upp:lm6}, we have
\begin{align}\label{Upp:eq48}
|\mathcal{X}^b| \leq \max\{|\hat{\mathcal{X}}^0|,|\hat{\mathcal{X}}^1|\} = D_q(n-q-1,t-q)+D_q(n-q-4,t-q-1)-D_q(n-q-3,t-q).
\end{align}

Combining this with Eqs. \eqref{Upp:eq47} and \eqref{Upp:eq48}, we obtain
\[
|\mathcal{X}|\leq \sum_{i=0}^{q-3}N_q^{(i)}(n-q+1+i,t-q+2+i)+N_q^{(q-3)}(n-1,t)+|\mathcal{X}^b|-\frac{2}{(t-q-1)!}n^{t-q-1} +O(n^{t-q-2}).
\]

By Lemma \ref{Low:lm4}, we have $N_q^{(i)}(m,l)\leq N_q^{(q-3)}(m,l)$ for all $i\leq q-3$ and $m\geq l+3q$. Thus, 
\[
|\mathcal{X}|\leq N_q^{(q-3)}(n,t) - N_q^{(q-3)}(n-q,t-q+1) + |\mathcal{X}^b|-\frac{2}{(t-q-1)!}n^{t-q-1} +O(n^{t-q-2}).
\]

By Lemmas \ref{Pro:lm10} and \ref{Low:lm4}, we have 
\begin{align*}
N_q^{(q-3)}(n-q,t-q+1)= \frac{6}{(t-q-1)!}n^{t-q-1} + O(n^{t-q-2}),\qquad |\mathcal{X}^b|\leq  \frac{3}{(t-q-1)!}n^{t-q-1} + O(n^{t-q-2}).
\end{align*}

Therefore, for sufficiently large $n$,
\begin{align}
|\mathcal{X}| \leq N_q^{(q-3)}(n,t) - \frac{5}{(t-q-1)!}n^{t-q-1} + O(n^{t-q-2}) \leq N_q^{(q-3)}(n,t).\label{Upp:eq49}
\end{align}

\textbf{Case C:} We split into two subcases depending on whether $\hat{\mathbf{x}}_{[1,5]}$ and $\hat{\mathbf{y}}_{[1,5]}$ are of the form $(a,b,c,a,b)$ and $(b,a,c,b,a)$, respectively.

\emph{1) $\hat{\mathbf{x}}_{[1,5]}$ and $\hat{\mathbf{y}}_{[1,5]}$ are of the forms $(a,b,c,a,b)$ and $(b,a,c,b,a)$, respectively.} 

By Lemmas $\ref{Pro:lm1}$ and $\ref{Pro:lm2}$, we have $|\mathcal{X}|=\sum\limits_{i\in [q]}|\mathcal{X}^{u_i}|$. We prove by induction that for all $p \geq q-1$ and sufficiently large $n$, 
$$|\mathcal{X}| \leq N_q^{(q-3)}(n,t)-\frac{5}{(t-q-1)!}n^{t-q-1} + O(n^{t-q-2})\leq N_q^{(q-3)}(n,t).$$ 
The base case $p=q-1$ has already been established in \textbf{Case B} (see, Eq. \eqref{Upp:eq49}). For the inductive step, assume the assertion holds for all $q-1\leq p \leq s$, where $s \geq q-1$. We consider the case $p = s+1$. For convenience, let $a=0,b=1,c=q-1$. Then $\mathbf{u}\in \mathbf{C}_q(s+1)$, $\mathbf{x}_{[s+2,s+6]}=(0,1,q-1,0,1)$, $\mathbf{y}_{[s+2,s+6]}=(1,0,q-1,1,0)$, and $\Sigma_q=\{u_1,u_2,\dots,u_{q}\}$. 

Recall that $d_L(\hat{\mathbf{x}}, \hat{\mathbf{y}})=d_L(\mathbf{x}_{[s+2,n]},\mathbf{y}_{[s+2,n]}) = d_L(\mathbf{x}, \mathbf{y}) \geq 2$. 
Thus, for $i\in [2,q]$, we have
\begin{align}
|\mathcal{X}^{u_1}| &\overset{(a)}{\leq} N_q^{(q-3)}(n-1,t)-\frac{5}{(t-q-1)!}(n-1)^{t-q-1} + O(n^{t-q-2}), \label{Upp:eq50}\\
|\mathcal{X}^{u_i}| &\overset{(b)}{\leq} N_q^{(q-3)}(n-i,t-i+1),\label{Upp:eq51}
\end{align}
where $(a)$ follows from Eq.~\eqref{Upp:eq49} when $s=q-1$, and from the induction hypothesis when $s\ge q$; $(b)$ follows from \textbf{Cases A} or \textbf{B} when $s+1-i\le q-1$, and from the induction hypothesis when $s+1-i\ge q$.
 
Hence, for sufficiently large $n$, combining Eqs.~\eqref{Upp:eq50} and~\eqref{Upp:eq51} yields
\[
|\mathcal{X}| \leq \sum_{i=1}^q N_q^{(q-3)}(n-i,t-i+1)-\frac{5}{(t-q-1)!}n^{t-q-1} + O(n^{t-q-2}) = N_q^{(q-3)}(n,t)-\frac{5}{(t-q-1)!}n^{t-q-1} + O(n^{t-q-2}).
\]

By induction, we conclude that for all integers $p \geq q$ and sufficiently large $n$,
\begin{align}
 |\mathcal{X}| \leq N_q^{(q-3)}(n,t)-\frac{5}{(t-q-1)!}n^{t-q-1} + O(n^{t-q-2}). \label{Upp:eq52}
\end{align}

\emph{2) $\hat{\mathbf{x}}_{[1,5]}$ and $\hat{\mathbf{y}}_{[1,5]}$ are not of the form  $(a,b,c,a,b)$ and $(b,a,c,b,a)$, respectively.} 

Then $\mathbf{u}\in \mathbf{C}_q(s+1)$ and $\Sigma_q=\{u_1,u_2,\dots,u_{q}\}$. By Lemmas $\ref{Pro:lm1}$ and $\ref{Pro:lm2}$, we have $|\mathcal{X}|=\sum\limits_{i\in [q]}|\mathcal{X}^{u_i}|$.

We prove by induction that for all $p \geq q+1$ and sufficiently large $n$, 
$$|\mathcal{X}| \leq \frac{6}{(t-2)!} n^{t-2} - \frac{3t+15}{(t-3)!} n^{t-3} + O(n^{t-4}) < N_q^{(0)}(n, t).$$ 
The base case $1\leq p\leq q$ has already been established in Eq.  \eqref{Upp:eq30}. For the inductive step, assume the claim holds for all $q\leq p \leq s$, where $s \geq q$. We consider the case $p = s+1$.

Recall that $d_L(\hat{\mathbf{x}}, \hat{\mathbf{y}})=d_L(\mathbf{x}_{[s+2,n]},\mathbf{y}_{[s+2,n]}) = d_L(\mathbf{x}, \mathbf{y}) \geq 2$. 
Thus, by the induction hypothesis,
\begin{align}
|\mathcal{X}^{u_1}| &\leq \frac{6}{(t-2)!} {(n-1)}^{t-2} - \frac{3t+15}{(t-3)!} {(n-1)}^{t-3} + O(n^{t-4}), \nonumber\\
|\mathcal{X}^{u_2}| &\leq \frac{6}{(t-3)!} {(n-2)}^{t-3} + O(n^{t-4}), \nonumber\\
|\mathcal{X}^{u_i}| &= O(n^{t-4}),\nonumber
\end{align}
for $i\in [3,q]$.

Hence, for sufficiently large $n$, we obtain
\begin{align*}
|\mathcal{X}| &\leq \frac{6}{(t-2)!} {(n-1)}^{t-2} - \frac{3t+15}{(t-3)!} {(n-1)}^{t-3}+ \frac{6}{(t-3)!} {(n-2)}^{t-3} +O(n^{t-4})\\
&=\frac{6}{(t-2)!} n^{t-2} - \frac{3t+15}{(t-3)!} n^{t-3} + O(n^{t-4}) \leq N_q^{(0)}(n,t).
\end{align*}

By induction, we conclude that for all integers $p \geq q$ and sufficiently large $n$,
\begin{align}
 |\mathcal{X}| \leq\frac{6}{(t-2)!} n^{t-2} - \frac{3t+15}{(t-3)!} n^{t-3} + O(n^{t-4}) \leq  N_q^{(0)}(n,t). \label{Upp:eq53}
\end{align}

Note that $N_q^{(q-3)}(n,t)=N_q(n,t)$. Combining the results from \textbf{Cases} \textbf{A}, \textbf{B}, and \textbf{C} (see, Eqs. \eqref{Upp:eq41}, \eqref{Upp:eq42}, \eqref{Upp:eq46}, \eqref{Upp:eq49}, \eqref{Upp:eq52}, and \eqref{Upp:eq53}), we have proved that for any pair of sequences $\mathbf{x},\mathbf{y}\in\Sigma_q^n$ with $d_L(\mathbf{x},\mathbf{y})\geq 2$,
\[
|D_t(\mathbf{x}) \cap D_t(\mathbf{y})| \leq N^{(q-3)}(n,t)=N_q(n,t)
\]
for all sufficiently large $n$. 

Combined with the lower bound from Lemma~\ref{Low:lm4} and the upper bound from Lemma~\ref{Upp:lm6}, this establishes the exact asymptotic result
\begin{align}\label{Upp:eq54}
N_q(n,2,t) = N_q(n,t)
\end{align}
for sufficiently large $n$ and $t \geq 2$.

Moreover, by Lemma~\ref{Pro:lm4}, permutations and reversals preserve both the Levenshtein distance and the size of the intersection of deletion balls. For $t\ge q+2$, equality in Eq. \eqref{Upp:eq54} holds if and only if $\{\mathbf{x},\mathbf{y}\}\in\mathcal{P}$, where
\[
\mathcal{P}=
\left\{ \{\pi(\mathbf{x}^{(q-3)}), \pi(\mathbf{y}^{(q-3)})\} : \pi\in S_q \right\}
\cup
\left\{ \{(\pi(\mathbf{x}^{(q-3)}))^R, (\pi(\mathbf{y}^{(q-3)}))^R\} : \pi\in S_q \right\},
\]
and the sequences $\mathbf{x}^{(q-3)}$ and $\mathbf{y}^{(q-3)}$ are as defined in Eq. \eqref{Upp:eq29}. The two sets in the union are disjoint, and each has cardinality $q!$ because for distinct $\pi,\beta\in S_q$, the pairs $\{\pi(\mathbf{x}^{(q-3)}),\pi(\mathbf{y}^{(q-3)})\}$ and $\{\beta(\mathbf{x}^{(q-3)}),\beta(\mathbf{y}^{(q-3)})\}$ are distinct, and the same holds for the reversed families. Hence $|\mathcal{P}|=2(q!)$. Consequently, for sufficiently large $n$ and $t\ge q+2$, there are exactly $2(q!)$ unordered pairs of sequences in $\mathcal{P}$ whose intersection attains the maximum value $N_q(n,2,t)$.
\end{proof}

From the proofs of Lemma \ref{Upp:lm6} and Theorem \ref{def:thm6}, we obtain the following characterization of the intersection size $|D_t(\mathbf{x})\cap D_t(\mathbf{y})|$ for any $\mathbf{x},\mathbf{y}\in\Sigma_q^n$ with $d_L(\mathbf{x},\mathbf{y})\geq 2$. 

\begin{remark}
Let $\mathbf{x}=(x_1,\dots,x_n), \mathbf{y}=(y_1,\dots,y_n)\in \Sigma_q^n$ with $d_L(\mathbf{x},\mathbf{y})\geq 2$, and define $\mathcal{X}=D_t(\mathbf{x})\cap D_t(\mathbf{y})$.
\begin{itemize}
\item  By Eqs. \eqref{Upp:eq30} and \eqref{Upp:eq53} and item $1$ of Remark \ref{Upp:rmk1}, for $t \geq 3$, $|\mathcal{X}|$ matches the first two  terms of $N_q(n,2,t)$ only if there exist distinct \(a,b,c\in\Sigma_q\) and an index \(i\in[0,n-5]\) such that
$\mathbf{x}_{[i+1,i+5]}=(a,b,c,a,b)$ and $\mathbf{y}_{[i+1,i+5]}=(b,a,c,b,a)$.
\item  By Eqs. \eqref{Upp:eq30} and \eqref{Upp:eq53} and item $1$ of Remark \ref{Upp:rmk1}, if there do not exist three distinct $a,b,c\in\Sigma_q$ and an index $i\in[0,n-5]$ such that $\mathbf{x}_{[i+1,i+5]}=(a,b,c,a,b)$ and $\mathbf{y}_{[i+1,i+5]}=(b,a,c,b,a)$, then $|\mathcal{X}|\leq  \frac{6}{(t-2)!}n^{t-2}-\frac{3t+15}{(t-3)!}n^{t-3} +O(n^{t-4})$.
%\item For $t= q+1$, $|\mathcal{X}|$ has the first $q$  terms of $N_q(n,2,t)$ if and only if  $\mathbf{x}_{[1,i]}=\mathbf{y}_{[1,i]}=(c_1,\cdots,c_i)$, $\mathbf{x}_{[i+1,i+5]}=(a,b,c_{q-2},a,b)$, $\mathbf{y}_{[i+1,i+5]}=(b,a,c_{q-2},b,a)$, and $\mathbf{x}_{[i+6,n]}=\mathbf{y}_{[i+6,n]}=\mathbf{c}_q(n-5-i,\sigma)$, where $\sigma$ is an ordering of $\Sigma_q$ of the form
%\[
%\begin{aligned}
%\sigma &= (c_1,\dots,c_{q-2},a,b), \text{or} \quad \sigma &= (c_1,\dots,c_{q-2},b,a),
%\end{aligned}
%\]
%with $\{c_j \mid j\in[q-2]\} = \Sigma_q \setminus \{a,b\}$ and $i\in [0,q-3]$. 
%\item For $t\geq q+2$, $|\mathcal{X}|=N_q(n,2,t)$ if and only if $\mathbf{x}_{[1,q-3]}=\mathbf{y}_{[1,q-3]}=(c_1,\cdots,c_{q-3}),\mathbf{x}_{[q-2,q+2]}=(a,b,c_{q-2},a,b)$, $\mathbf{y}_{[q-2,q+2]}=(b,a,c_{q-2},b,a)$, and $\mathbf{x}_{[q+3,n]}=\mathbf{y}_{[q+3,n]}=\mathbf{c}_q(n-q-2,\sigma)$, where $\sigma$ is an ordering of $\Sigma_q$ of the form
%\[
%\begin{aligned}
%\sigma &= (c_1,\dots,c_{q-2},a,b), \quad\text{or} \quad \sigma &= (c_1,\dots,c_{q-2},b,a),
%\end{aligned}
%\]
%with $\{c_i \mid i\in[q-2]\} = \Sigma_q \setminus \{a,b\}$. 
\end{itemize}
\label{Upp:rmk2}
\end{remark}

Finally, we characterize the asymptotic behavior of $N_q(n,2,t)$ for $t \geq 2$, and compare it with the corresponding asymptotics of $N_3(n,2,t)$ and $N_2(n,2,t)$. We recall that the asymptotics of $N_i(n,2,t)$ for $i\in \{2,3\}$ were established in \cite{Wang}.

\begin{lemma}[Wang, Li, and Fu \cite{Wang}]
For $t\geq 3$,
\begin{align*}
 N_3(n,2,t)= \frac{6}{(t-2)!}n^{t-2}-\frac{(3t+13)}{(t-3)!}n^{t-3}+O(n^{t-4}),\qquad
 N_2(n,2,t)= \frac{6}{(t-2)!}n^{t-2}-\frac{(9t+3)}{(t-3)!}n^{t-3}+O(n^{t-4}).
\end{align*}
Moreover,
\begin{equation*}
N_3(n,2,t)-N_2(n,2,t)=\frac{6t-10}{(t-3)!}n^{t-3}+O(n^{t-4}).
\end{equation*}
\label{Upp:lm7}
\end{lemma}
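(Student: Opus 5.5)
The plan is to read the statement off the exact formulas already recorded earlier and expand them with Lemma~\ref{Pro:lm10}. For $q=3$, Theorem~\ref{def:thm4} gives $N_3(n,2,t)=M_1(n,t)$ for all $n\ge 3t$ (and for $2\le t\le 5$, $n\ge\max\{6,\lfloor 3t/2\rfloor+1\}$), so for large $n$ it suffices to expand $M_1(n,t)$. For $q=2$, Theorem~\ref{def:thm2} gives, for $n\ge\max\{8,2t+1\}$,
\[
N_2(n,2,t)=2D_2(n-4,t-2)+2D_2(n-5,t-2)+2D_2(n-7,t-2)+D_2(n-6,t-3)+D_2(n-7,t-3).
\]
Both are finite linear combinations of terms $D_q(n-a,t-b)$ with $b\ge 2$. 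We only need the coefficients of $n^{t-2}$ and $n^{t-3}$.

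The key bookkeeping step is a two-term expansion of a single shifted term. By Lemma~\ref{Pro:lm10}, write $D_q(m,s)=\frac{m^s}{s!}-\kappa_q\frac{m^{s-1}}{(s-2)!}+O(m^{s-2})$, with $\kappa_3=\tfrac12$ and $\kappa_2=\tfrac32$. Substituting $m=n-a$, $s=t-b$ and using $(n-a)^{s}=n^{s}-as\,n^{s-1}+O(n^{s-2})$ gives
\[
D_q(n-a,t-b)=\frac{n^{t-b}}{(t-b)!}-\Bigl(a+\kappa_q(t-b-1)\Bigr)\frac{n^{t-b-1}}{(t-b-1)!}+O(n^{t-b-2}).
\]
Consequently, only terms with $b=2$ contribute to $n^{t-2}$. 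The $n^{t-3}$ coefficient receives $-\sum c_{a,2}\bigl(a+\kappa_q(t-3)\bigr)$ from the $b=2$ terms, plus $\sum c_{a,3}$ from the $b=3$ terms. Terms with $b\ge4$ are $O(n^{t-4})$.

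Carrying this out for $M_1$ (with $\kappa_3=\tfrac12$):
\begin{itemize}
\item The $b=2$ terms $D_3(n-4,t-2)+5D_3(n-5,t-2)$ give leading coefficient $6$ and a shift contribution $-(4+25)=-29$.
\item The $\kappa$-part contributes $-3(t-3)$.
\item The $b=3$ terms $4D_3(n-5,t-3)+3D_3(n-6,t-3)$ contribute $+7$.
\end{itemize}
Hence the $n^{t-3}$ coefficient is $-(29+3t-9-7)=-(3t+13)$, divided by $(t-3)!$.

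For $N_2$ (with $\kappa_2=\tfrac32$):
\begin{itemize}
\item The $b=2$ terms again give leading coefficient $6$, with shift contribution $-(8+10+14)=-32$.
\item The $\kappa$-part contributes $-9(t-3)$.
\item The two $b=3$ terms contribute $+2$.
\end{itemize}
Hence the $n^{t-3}$ coefficient is $-(9t+3)$.

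Subtracting the two expansions gives $\bigl(-(3t+13)+(9t+3)\bigr)\frac{n^{t-3}}{(t-3)!}=\frac{6t-10}{(t-3)!}n^{t-3}$, since the leading terms cancel.

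The only delicate point is the second-order term of $D_2$. Lemma~\ref{Pro:lm10} writes it with denominator $(t-2)!$ instead of $(t-3)!$, so I would double-check that this term is truly $O(n^{t-2})$ relative to $D_2(n,t)$, hence $O(n^{t-4})$ after the shift $t\mapsto t-2$. Similarly, for small $t$ (e.g.\ $t=3$, where $1/(t-4)!=0$) I would check that the convention $1/k!=0$ for $k<0$ keeps the formulas valid. These are routine verifications rather than real obstacles.
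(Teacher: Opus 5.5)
Your derivation is correct. Expanding $M_1(n,t)$ and the Gabrys--Yaakobi formula term by term with the two-term expansion from Lemma~\ref{Pro:lm10} gives exactly $-(3t+13)$ and $-(9t+3)$ (both exact at $t=3$: $6n-22$ and $6n-30$), and the concern about the $(t-2)!$ denominator in the $D_2$ expansion is harmless because only the first two terms enter. The paper gives no proof of its own and simply cites \cite{Wang}; your route is the natural derivation from Theorems~\ref{def:thm2} and~\ref{def:thm4} (the latter already states the $N_3$ expansion) and is the same technique the paper uses in Lemma~\ref{Low:lm3}.
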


Combining Remark~\ref{Low:rmk2}, Lemmas~\ref{Low:lm4} and~\ref{Upp:lm7}, and Theorem~\ref{Upp:thm1}, we establish the asymptotic behavior of $N_q(n,2,t)$ in the following theorem.

\begin{theorem}
\label{Upp:thm2}
Let $t\geq 2, q\geq 3$, and let $n$ be sufficiently large. Then
\begin{align*}
 N_q(n,2,t)=\frac{6}{(t-2)!}n^{t-2}-\frac{3t+13}{(t-3)!}n^{t-3}+\frac{3t^2+25t+64}{4(t-4)!}n^{t-4}+O(n^{t-5}),
\end{align*}
Moreover, for $q\geq 4$,
\[
N_q(n,2,t)-N_{q-1}(n,2,t) =
\begin{cases}
0, & \text{if } q\ge t+1, \\
5, & \text{if } q=t, \\
\dfrac{(6t-6q+5)n^{t-q}}{(t-q)!} + O(n^{t-q-1}), & \text{if } q\le t-1.
\end{cases}
\]
Furthermore, for $q\geq 4$ and $t\geq 4$,
\begin{align*}
N_q(n,2,t)-N_2(n,2,t)=\frac{6t-10}{(t-3)!}n^{t-3}+O(n^{t-4}),\qquad
N_q(n,2,t)-N_3(n,2,t)=\frac{6t-19}{(t-4)!}n^{t-4}+O(n^{t-5}).
\end{align*}

\end{theorem}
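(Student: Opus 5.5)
The plan is to reduce everything to identities and expansions already available. By Theorem~\ref{Upp:thm1}, for sufficiently large $n$ we have $N_q(n,2,t)=N_q(n,t)=N_q^{(q-3)}(n,t)$. So all three claims of Theorem~\ref{Upp:thm2} become statements about the explicit linear combinations of $D_q(\cdot,\cdot)$ constructed in Section~\ref{sec4}.

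\emph{Asymptotic expansion.} For $q\ge4$, Lemma~\ref{Low:lm4} gives the expansion of $N_q^{(i)}(n,t)$ for every $i\in[q-3]$. In particular it holds for $i=q-3$, and this yields the first display directly. Two caveats need attention.
\begin{itemize}
\item The lemma's expansion of $N_q^{(i)}$ relies on Eq.~\eqref{Low:eq11}. That equation says the difference from $N_q^{(0)}$ is of order $n^{t-q-2}$ when $t\ge q+2$, and this is $O(n^{t-5})$ since $q\ge4$ means $t-q-2\le t-6$. When $t\le q+1$, Lemma~\ref{Low:lm4} gives $N_q^{(i)}=N_q^{(i-1)}$ exactly, so the expansion is inherited from Lemma~\ref{Low:lm3}.
\item For $q=3$, Lemma~\ref{Low:lm3} gives the coefficient $(3t^2+t+140)/4$ of $n^{t-4}$ rather than $(3t^2+25t+64)/4$. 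So the three-term statement should be read for $q\ge4$, and for $q=3$ only the first two terms agree, as in Lemma~\ref{Upp:lm7}. I would state this restriction explicitly. For $t\in\{2,3\}$ the convention $1/k!=0$ for $k<0$ handles the vanishing terms.
\end{itemize}

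\emph{Consecutive differences.} Since $N_q(n,2,t)-N_{q-1}(n,2,t)=N_q(n,t)-N_{q-1}(n,t)$ for large $n$, Remark~\ref{Low:rmk2} gives the three cases $0$, $5$, and $\frac{6t-6q+5}{(t-q)!}n^{t-q}+O(n^{t-q-1})$. The remark's justification runs through the recurrence Eq.~\eqref{Upp:eq28}. To check it, I would write
\[
N_q^{(q-3)}-N_{q-1}^{(q-4)}=\bigl(N_q^{(q-3)}-N_q^{(q-4)}\bigr)+\bigl(N_q^{(q-4)}-N_{q-1}^{(q-4)}\bigr).
\]
The first bracket is $O(n^{t-q-2})$ by Lemma~\ref{Low:lm4}, and it vanishes for $t\le q+1$. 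The second bracket is controlled by iterating Eq.~\eqref{Upp:eq28} from the base difference in Remark~\ref{Low:rmk1}. The iteration needs two facts:
\begin{itemize}
\item by Lemma~\ref{Pro:lm10}, $D_q=D_{q-1}$ whenever the deletion count is at most $q-2$;
\item the subtracted correction terms $D_q(n-3-q-m,\cdot)$ begin at order $n^{t-q-2}$.
\end{itemize}
The main obstacle is exactly this step: confirming that the leading coefficient $6t-6q+5$ survives the iterated recurrence, and that the boundary cases $q=t$ and $q\ge t+1$ give precisely $5$ and $0$. I would first try the shift-operator form $(S_1+S_2)^i$ from Lemma~\ref{Low:lm4}. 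It is linear, so it acts termwise on differences, and all the extra terms it creates have $t$-indices shifted by at least $q-2$.

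\emph{Comparisons with $q=2$ and $q=3$.} Subtract the expansions. Against Lemma~\ref{Upp:lm7}, $N_q$ and $N_2$ differ in the $n^{t-3}$ coefficient by $-(3t+13)+(9t+3)=6t-10$. For $q=3$, telescoping would require the $n^{t-4}$ coefficient difference, which is $(24t-76)/(4(t-4)!)=(6t-19)/(t-4)!$. This is consistent with the $q=4$ case of the previous paragraph, and larger $q$ contribute only lower-order terms.
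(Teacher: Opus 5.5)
Your proposal is correct and is essentially the paper's argument, which combines Theorem~\ref{Upp:thm1} (giving $N_q(n,2,t)=N_q^{(q-3)}(n,t)$) with Lemma~\ref{Low:lm4}, Remark~\ref{Low:rmk2} and Lemma~\ref{Upp:lm7}. Your extra closed-form check of Remark~\ref{Low:rmk2}, which the paper does not supply, also goes through once you note one thing: the correction terms of $N_{q-1}^{(q-4)}$ are $D_{q-1}(n-2-q-m,\,t-1-q-j)$, not shifts of those of $N_q^{(q-4)}$, and they are $O(n^{t-q-1})$ and vanish when $q\ge t$.

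You are also right that the three-term expansion fails for $q=3$. Lemma~\ref{Low:lm3} gives the coefficient $\frac{3t^2+t+140}{4(t-4)!}$ there; for example, $N_3(n,2,4)=3n^2-25n+48$, against $3n^2-25n+53$ for $q\ge4$. This is exactly what the theorem's own last display $N_q-N_3=\frac{6t-19}{(t-4)!}n^{t-4}+O(n^{t-5})$ forces, so the first display should be stated for $q\ge4$, as in the abstract.
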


\section{Conclusion}
\label{sec6}

In this paper, we have studied  the sequence reconstruction problem over $q$-ary deletion channels with $q\geq 3$. We have established the exact value of $N_q(n,2,t)$ for all $t\geq 2$, $q\geq 4$, and sufficiently large $n$. Specifically, we show that
\(
N_q(n,2,t) = N_q^{(q-3)}(n,t),
\)
where $N_q^{(q-3)}(n,t)$ is the maximum size of the intersection of two deletion balls of radius $t$. This maximum is achieved by the pair of sequences
\(
\{\mathbf{x},\mathbf{y}\}=\{(c_1,\dots,c_{q-3},c_q,c_{q-1},c_{q-2},c_q,c_{q-1},\mathbf{w}), (c_1,\dots,c_{q-3},c_{q-1},c_q,c_{q-2},c_{q-1},c_q,\mathbf{w})\},
\)
where $\mathbf{w}=(w_1,w_2,\dots,w_{n-q-2})$ is the periodic sequence with period $q$ satisfying $\mathbf{w}=\mathbf{c}_q(n-q-2,(c_1,c_2,\cdots,c_{q-2},c_{q-1},c_{q}))$.

Furthermore, we fully characterize the structure of sequence pairs with Levenshtein distance at least two that maximize the intersection size $N_q(n,2,t)$. In particular,
\begin{itemize}
  \item If \(|D_t(\mathbf{x}) \cap D_t(\mathbf{y})|\) shares the first two terms with \(N_q(n,2,t)\), then \(\mathbf{x}\) and \(\mathbf{y}\) must contain, at the same positions, length-5 subsequences of the forms \((a,b,c,a,b)\) and \((b,a,c,b,a)\), respectively, for some distinct \(a,b,c \in \Sigma_q\).
  \item For $t\geq q+2$ and $q\geq 3$, $|D_t(\mathbf{x})\cap D_t(\mathbf{y})|=N_q(n,2,t)$ if and only if  $\{\mathbf{x},\mathbf{y}\}\in\mathcal{P}$, where
\[
\mathcal{P}=
\left\{ \{\pi(\mathbf{x}^{(q-3)}), \pi(\mathbf{y}^{(q-3)})\} : \pi\in S_q \right\}
\cup
\left\{ \{(\pi(\mathbf{x}^{(q-3)}))^R, (\pi(\mathbf{y}^{(q-3)}))^R\} : \pi\in S_q \right\},
\]
and the sequences $\mathbf{x}^{(q-3)}=(q-2,\cdots, 2,0,1,q-1,0,1,\mathbf{w}^{(q-3)})$ and $\mathbf{y}^{(q-3)}=(q-2,\cdots, 2,1,0,q-1,1,0,\mathbf{w}^{(q-3)})$ with  $\mathbf{w}^{(q-3)} = \mathbf{c}_q(n-q-2,\sigma)$ and $\sigma=(2,3,\dots,q-1,0,1)$.
\end{itemize}

In terms of asymptotic behavior, we derive the explicit expansion
\[
N_q(n,2,t)=\frac{6}{(t-2)!}n^{t-2}-\frac{3t+13}{(t-3)!}n^{t-3}+\frac{3t^2+25t+64}{4(t-4)!}n^{t-4}+O(n^{t-5}),
\]
for all $q\geq 3$ and $t\geq 2$. We also show that $N_q(n,2,t)$ and $N_{q-1}(n,2,t)$ share the same first $q-1$  terms, and their difference in the $n^{t-q}$ term is $\frac{6t-6q+5}{(t-q)!}$ for $t\geq q$. Note that we have proved the identity \(N_q(n,2,t)=N_q^{(q-3)}(n,t)\) only for sufficiently large \(n\). We conjecture that the same equality holds for all \(n>k t q\) for some constant \(k\), all \(q\ge4\), and all $t\geq 2$. The problem remains open for minimum Levenshtein distance \(d\ge3\) and \(q\ge3\). In future work, we will extend our techniques to determine the exact values of \(N_q(n,d,t)\) for all \(d\ge3\), \(q\ge3\), and sufficiently large \(n\).

\appendices
\section{Proofs of Lemmas $\ref{Upp:lm1}$ and $\ref{Upp:lm3}$}\label{APP-A}
The purpose of this appendix is to provide the proofs of Lemmas $\ref{Upp:lm1}$ and $\ref{Upp:lm3}$.

\begin{proof}{ \rm (The proof of Lemma $\ref{Upp:lm1}$)}
We prove by induction on $n$ that $N_q(n,1,2,1)\leq 3$ for any $n\geq 4$. When $n=4$, by using a computerized search, we find that $N_q(4,1,2,1)=3$, which indicates that the base case of the induction holds.

Assume that for some fixed $m\geq 4$, the proposition $N_q(m,1,2,1)\leq 3$ holds.
Consider the case $n = m+1$. Let $\mathbf{x}=(x_1,...,x_{m+2})$, $\mathbf{y}=(y_1,...,y_{m+1})$, and $\mathcal{X}=D_{2}(\mathbf{x})\cap D_{1}(\mathbf{y})$.

Now, we will prove that $|\mathcal{X}|\leq 3$ by the following two cases. 
\begin{itemize}
  \item If \(x_1 \neq y_1\), then
\(
|\mathcal{X}^{x_1}| = | D_{2}(\mathbf{x}_{[2,m+2]}) \cap D_{0-\ell_1^*}(\mathbf{y}_{[3+\ell_1^*,m+1]}) |,
\)
\(
|\mathcal{X}^{y_1}| = | D_{1-\ell_2}(\mathbf{x}_{[3+\ell_2,m+2]}) \cap D_{1}(\mathbf{y}_{[2,m+1]}) |,
\)
and 
\(
|\mathcal{X}^{a}| = 0,
\)
for any \(a \in \Sigma_q \setminus (\{x_1,y_1\} \cup \{y_2\})\),
with the further specification that when $y_2\notin \{x_1,y_1\}$, we have
\(
|\mathcal{X}^{y_2}| = | D_{1-\ell_3}(\mathbf{x}_{[3+\ell_3,m+2]}) \cap D_{0}(\mathbf{y}_{[3,m+1]}) |,
\)
where \(\ell_1^*, \ell_2, \ell_3 \geq 0\). Here, $\ell_1^*=0$  implies $y_2=x_1$;  $\ell_2=0$ implies $x_2=y_1$; $\ell_3=0$ implies $x_2=a$. Thus,
\begin{align*}
|\mathcal{X}^{x_1}|&\leq |D_{0-\ell_1^*}(\mathbf{y}_{[3+\ell_1^*,m+1]})|\leq |D_{0}(\mathbf{y}_{[3,m+1]})|=1,\\
|\mathcal{X}^{y_1}|&\leq |D_{1-\ell_2}(\mathbf{x}_{[3+\ell_2,m+2]})\cap D_{1}(\mathbf{y}_{[2,m+1]})|\leq \max\{|D_{1}(\mathbf{x}_{[3,m+2]})\cap D_{1}(\mathbf{y}_{[2,m+1]})|,|D_{0}(\mathbf{x}_{[4,m+2]})|\}\overset{(a)}{\leq} 2,\\
|\mathcal{X}^{y_2}|&\leq |D_{0}(\mathbf{y}_{[3,m+1]})|=1.
\end{align*}
Here, $(a)$ follows from the fact that  \(|D_{1}(\mathbf{x}_{[3,m+2]})\cap D_{1}(\mathbf{y}_{[2,m+1]})|\le N_q(m,1,1)=2,\)
since \(\mathbf{y}\notin D_1(\mathbf{x})\) and \(x_2=y_1\) together imply \(\mathbf{x}_{[3,m+2]}\neq \mathbf{y}_{[2,m+1]}\).
By Lemmas~\ref{Pro:lm1} and~\ref{Pro:lm2}, it follows that
\(
|\mathcal{X}| = |\mathcal{X}^{x_1}|+|\mathcal{X}^{y_1}|+\sum_{a\notin\{x_1,y_1\}}|\mathcal{X}^a| \le 4.
\)
Equality would require \(\ell_1^*=0\), \(\ell_2=0\), and \(y_2\notin\{x_1,y_1\}\). However, \(\ell_1^*=0\) implies \(y_2=x_1\) which results in a contradiction. Thus, for \(x_1\ne y_1\), we have \(|\mathcal{X}|\le3\).

  \item If \(x_1 = y_1\), then
\(
|\mathcal{X}^{x_1}| = | D_{2}(\mathbf{x}_{[2,m+2]}) \cap D_{1}(\mathbf{y}_{[2,m+1]}) |,
\)
and for some \(a \in \Sigma_q \setminus \{x_1\}\),
\[
|\mathcal{X}^{a}| = | D_{1-\ell_2}(\mathbf{x}_{[3+\ell_2,m+2]}) \cap D_{0-\ell_2^*}(\mathbf{y}_{[3+\ell_2^*,m+1]}) |,
\]
with \(\ell_2, \ell_2^* \geq 0\). Here, $\ell_2=0$ implies $x_2=a$; $\ell_2^*=0$ implies $y_2=a$. If \(\ell_2 \geq 2\) or \(\ell_2^* \geq 1\), then \(|\mathcal{X}^{a}| = 0\).
If \(\ell_2 \in\{0,1\}\) and \(\ell_2^* = 0\), a direct verification also gives \(|\mathcal{X}^{a}| = 0\) since $d_L(\mathbf{x},\mathbf{y})\ge 1$. Hence, when \(x_1 = y_1\), we always have \(|\mathcal{X}^{a}| = 0\). Now, since \(d_L(\mathbf{x},\mathbf{y}) \geq 1\) and \(x_1 = y_1\), it follows that \(d_L(\mathbf{x}_{[2,m+2]},\mathbf{y}_{[2,m+1]}) \geq 1\). By the induction hypothesis, $|\mathcal{X}^{x_1}| \leq 3$. Consequently,
\(
|\mathcal{X}| = |\mathcal{X}^{x_1}| + \sum_{a \in \Sigma_q \setminus \{x_1\}} |\mathcal{X}^{a}| \leq 3.
\)
\end{itemize}

This completes the inductive step for $n = m+1$. By induction, $N_q(n,1,2,1) \leq 3$ for all $n \geq 4$.
\end{proof}

\begin{proof}{ \rm (The proof of Lemma $\ref{Upp:lm3}$)}
Let \(\mathbf{x}= (1,0,2,1,\mathbf{w})\in \Sigma_q^{n+1}\) and \(\mathbf{y}= (0,1,2,\mathbf{w})\in \Sigma_q^{n}\),
where \(\mathbf{w} = (w_1,\dots,w_{n-3})\) with \(w_i \equiv i+2 \pmod q\).
Since the Levenshtein distance between \(\mathbf{x}_{[1,4]}\) and \(\mathbf{y}_{[1,3]}\) is \(1\), we have \(d_L(\mathbf{x},\mathbf{y})\geq 1\).

Let \(\mathcal{X}= D_{t+1}(\mathbf{x}) \cap D_t(\mathbf{y})\). By Lemmas~\ref{Pro:lm1} and~\ref{Pro:lm2},
\(
|\mathcal{X}| = \sum_{i\in \Sigma_q} |\mathcal{X}^i| = |\mathcal{X}^0| + \sum_{i=1}^{q-1} |\mathcal{X}^i|.
\)
By Theorem~\ref{def:thm1}, we have
\(
|\mathcal{X}^0| = N_q(n-1,1,t).
\)
Moreover, $|\mathcal{X}^i|=|D_{t-i}(\mathbf{y}_{[i+2,n]})|= D_q(n-1-i,t-i)$ for $i\in [q-1]$. Thus, \(|\mathcal{X}| = N_q(n-1,1,t) + \sum_{i=1}^{q-1} D_q(n-1-i,t-i).\)

By Theorem \ref{def:thm1} and Eq. \eqref{def:eq1}, we have  $|\mathcal{X}| = D_q(n,t) - D_q(n-2,t) + D_q(n-3,t-1)$ for $n\geq \max\{4,t+3\}$ and $t\geq 1$. Thus, for $n\geq \max\{4,t+3\}$ and $t\geq 1$,
\begin{equation}
N_q(n,1,t+1,t)\geq |\mathcal{X}|=D_q(n,t)-D_q(n-2,t)+D_q(n-3,t-1)=M_q(n,t).\label{App:eq1}
\end{equation}

We now prove that for any \(t \geq 1\) and \(n \geq \max\{4, t+3\}\),
\[
N_q(n,1,t+1,t) \leq M_q(n,t)=D_q(n,t) - D_q(n-2,t) + D_q(n-3,t-1).
\]
Let \(\mathbf{u} = (u_1,\dots,u_{n+1}) \in \Sigma_q^{n+1}\) and \(\mathbf{v} = (v_1,\dots,v_n) \in \Sigma_q^{n}\) with \(d_L(\mathbf{u},\mathbf{v}) \ge 1\). For convenience, let \(\mathcal{Y}= D_{t+1}(\mathbf{u}) \cap D_t(\mathbf{v})\). It suffices to show
\begin{equation}
|\mathcal{Y}| \leq D_q(n,t) - D_q(n-2,t) + D_q(n-3,t-1). \label{App:eq2}
\end{equation}
We consider the following two cases:
\textbf{Case 1:} \(u_1 \neq v_1\);
\textbf{Case 2:} \(u_1 = v_1\).

\textbf{Case 1:} \(u_1 \neq v_1\). We distinguish the following two cases according to whether \(u_2 = v_1\):
\textbf{Case A1:} \(u_2 = v_1\);
\textbf{Case A2:} \(u_2 \neq v_1\).

\textbf{Case A1:}  \(u_2 = v_1 \neq u_1\).

By Lemmas~\ref{Pro:lm1} and~\ref{Pro:lm2}, we have
\(
|\mathcal{Y}| = \sum_{i \in \Sigma_q} |\mathcal{Y}^i|
= |\mathcal{Y}^{v_1}| + \sum_{a \in \Sigma_q \setminus \{v_1\}} |\mathcal{Y}^a|.
\)
Since \(u_2 = v_1\), \(u_1 \neq v_1\), and $d_L(\mathbf{u},\mathbf{v})\geq 1$, we have $\mathbf{u}_{[3,n+1]}\neq \mathbf{v}_{[2,n]}$. By Theorem \ref{def:thm1}, we have 
\(
|\mathcal{Y}^{v_1}| = |D_t(\mathbf{u}_{[3,n+1]}) \cap D_t(\mathbf{v}_{[2,n]})| \leq N_q(n-1,1,t).
\)

For each \(a \in \Sigma_q\setminus\{v_1\}\), we have
\(
|\mathcal{Y}^a| = |D_{t+1-\ell_a}(\mathbf{u}_{[\ell_a+2,n+1]}) \cap D_{t-\ell_a^*}(\mathbf{v}_{[2+\ell_a^*,n]})|
\)
with \(\ell_a \ge 0\) and \(\ell_a^* \ge 1\).  Since the indices \(\ell_a^*\) are pairwise distinct, it follows that
\[
\sum_{a\in \Sigma_q\setminus\{v_1\}} |\mathcal{Y}^a|
\le \sum_{a\in \Sigma_q\setminus\{v_1\}} |D_{t-\ell_a^*}(\mathbf{v}_{[2+\ell_a^*,n]})|
\le \sum_{i=1}^{q-1} D_q(n-1-i, t-i).
\]

Thus, we obtain
\begin{align}
|\mathcal{Y}|
\le N_q(n-1,1,t) + \sum_{i=1}^{q-1} D_q(n-1-i,\, t-i)
\overset{(a)}{\le} \sum_{i=0}^{q-1} M_q(n-1-i,\, t-i) = M_q(n,t), \label{App:eq3}
\end{align}
where $(a)$ follows from Lemma \ref{Upp:lm2}.

\textbf{Case A2:}  \(v_1 \neq u_1\) and \(v_1 \neq u_2\).

By Lemmas~\ref{Pro:lm1} and~\ref{Pro:lm2}, we have 
\(
|\mathcal{Y}| = |\mathcal{Y}^{v_1}| + \sum_{a \neq v_1} |\mathcal{Y}^a|.
\) Since \(u_1 \neq v_1\) and \(u_2 \neq v_1\), the first occurrence of \(v_1\) in \(\mathbf{u}\) appears at some index \(i\ge 3\) such that 
\(
|\mathcal{Y}^{v_1}| = |D_{t-1-\ell}(\mathbf{u}_{[4+\ell,n+1]}) \cap D_t(\mathbf{v}_{[2,n]})|,
\)
where \(\ell=i-3 \ge 0\). Consequently,
\begin{align}\label{App:eq4}
|\mathcal{Y}^{v_1}| \le |D_{t-1-\ell}(\mathbf{u}_{[4+\ell,n+1]})| \le D_q(n-2, t-1)\overset{(a)}{\le} D_q(n-1,t) - D_q(n-2,t) + D_q(n-3,t-1),
\end{align}
where $(a)$ follows from Lemma $\ref{Pro:lm6}$.

For  each \(a \in \Sigma_q\setminus\{v_1\}\),  we have
\(
|\mathcal{Y}^a| = |D_{t+1-\ell_a}(\mathbf{u}_{[\ell_a+2,n+1]}) \cap D_{t-\ell_a^*}(\mathbf{v}_{[2+\ell_a^*,n]})|
\)
with \(\ell_a \ge 0\) and \(\ell_a^* \ge 1\).  Since the indices \(\ell_a^*\) are pairwise distinct, it follows that
\[
\sum_{a \neq v_1} |\mathcal{Y}^a| \le \sum_{a \neq v_1} |D_{t-\ell_a^*}(\mathbf{v}_{[2+\ell_a^*,n]})| \le \sum_{i=1}^{q-1} D_q(n-1-i, t-i).
\]

Combining this with Eqs. \eqref{App:eq3} and  \eqref{App:eq4}, we obtain the desired bound Eq. \eqref{App:eq2}.

\textbf{Case 2:} \(u_1 = v_1\).

Let \(\mathbf{p}\) be the longest common prefix of \(\mathbf{u}\) and \(\mathbf{v}\), and set \(l = |\mathbf{p}|\) (with \(0\le l\le n-1\)).  Write
\(\mathbf{u} = (p_1,\dots,p_l, w_1,\dots,w_{n+1-l})\) and 
\(\mathbf{v} = (p_1,\dots,p_l, w'_1,\dots,w'_{n-l})\),
where \(w_1 \neq w'_1\) and $d_L(\mathbf{u},\mathbf{v})\geq 1$. 

We prove Eq. \eqref{App:eq2} by induction on \(l\)  for \(0 \le l \le n-1\), under the assumptions \(t\geq 1\) and \(n \geq \max\{4, t+3\}\).  The base case \(l=0\) is already established in \textbf{Case 1}; the case \(t=1\) follows from Lemma~\ref{Upp:lm1}, so we may assume \(t\ge2\) in the induction.

Assume the claim holds for all \(0\le l\le s\) with \(s\ge0\). We prove it for \(l=s+1\). 

In this case, write 
\( \mathbf{u} = (p_1,\dots,p_{s+1}, w_1,\dots,w_{n-s})\) and \(\mathbf{v} = (p_1,\dots,p_{s+1}, w'_1,\dots,w'_{n-s-1}) \)
with \(w_1 \neq w'_1, \mathbf{p}=(p_1,\dots,p_{s+1}), \mathbf{w}=(w_1,\dots,w_{n-s}), \mathbf{w}'=(w'_1,\dots,w'_{n-s-1})\). By Lemma~\ref{Pro:lm5}, we have $|D_{t+1}(\mathbf{u})\cap D_t(\mathbf{v})|\leq |D_{t+1}(\mathbf{p}^c,\mathbf{w})\cap D_{t}(\mathbf{p}^c,\mathbf{w}')|$. Thus, it suffices to consider the case $\mathbf{p}\in \mathbf{C}_q(s+1)$. For convenience, we let  $\mathcal{Z} = D_{t+1}(\mathbf{p},\mathbf{w}) \cap D_t(\mathbf{p},\mathbf{w}')$.

We distinguish the following two cases according to whether \(s<q\): \textbf{Case B1:} \(s < q\); \textbf{Case B2:} \(s \geq q\).

\textbf{Case B1:} \(s < q\).

Since \(s+1 \le q\) and $\mathbf{p}\in \mathbf{C}_q(s+1)$, we have \(\{p_1,\dots,p_{s+1}\}\subseteq \Sigma_q\). Let $A=\{p_i|i\in[s+1]\}$ and \(B = \Sigma_q \setminus A = \{b_1,\dots,b_{q-s-1}\}\). By Lemmas~\ref{Pro:lm1} and~\ref{Pro:lm2}, we have 
\(
|\mathcal{Z}| = \sum_{p \in A} |\mathcal{Z}^p| + \sum_{b \in B} |\mathcal{Z}^b|
= \sum_{i=1}^{s+1} |\mathcal{Z}^{p_i}| + \sum_{j=1}^{q-s-1} |\mathcal{Z}^{b_j}|.
\)

\emph{Terms for \(p_i\in A\).}

For each \(i \in [s+1]\),  applying the induction hypothesis yields
\[
|\mathcal{Z}^{p_i}| = \bigl| D_{t-i+2}(\mathbf{p}_{[i+1,s+1]},\mathbf{w}) \cap D_{t-i+1}(\mathbf{p}_{[i+1,s+1]},\mathbf{w}') \bigr|
\le M_q(n-i,\,t-i+1).
\]

\emph{Terms for \(b \in B\).} 

For each \(b \in B\), we have
\(|\mathcal{Z}^b| = \bigl| D_{t-s-\ell_b}(\mathbf{w}_{[2+\ell_b,\,n-s]}) \cap D_{t-s-1-\ell_b^*}(\mathbf{w}'_{[2+\ell_b^*,\,n-s-1]}) \bigr|\)
with \(\ell_b \ge 0\) and \(\ell_b^* \ge 0\).

\begin{itemize}
  \item If \(w'_1 \in A\), then for each \(b\in B\), its first occurrence in \(\mathbf{w}'\) is at a position at least \(2\) such that $\ell_b^*\geq 1$. Since the indices \(\ell_b^*\) (\(b\in B\)) are pairwise distinct, we obtain
  \[
  \sum_{b \in B} |\mathcal{Z}^b|
  \le \sum_{b \in B} \bigl| D_{t-s-1-\ell_b^*}(\mathbf{w'}_{[2+\ell_b^*,\,n-1-s]}) \bigr|
  \le \sum_{i=1}^{q-s-1} D_q(n-s-2-i,\,t-s-1-i).
  \]
  \item If \(w'_1 \notin A\), then \(w'_1 \in B\). For each \(b \in B \setminus \{w'_1\}\), we still have \(\ell_b^* \ge 1\). Since the indices \(\ell_b^*\) (\(b\in B \setminus \{w'_1\}\)) are pairwise distinct, we obtain
  \[
  \sum_{b \in B \setminus \{w'_1\}} |\mathcal{Z}^b|
  \le \sum_{i=1}^{q-s-2} D_q(n-s-2-i,\,t-s-1-i).
  \]
  For the value of $|\mathcal{Z}^{w'_1}|$, we discuss the following two cases. 
  \begin{enumerate}
    \item If \(w_2 = w'_1\), then \(\mathbf{w}_{[3,n-s]}\neq \mathbf{w}'_{[2,n-s-1]}\) since \(d_L(\mathbf{u},\mathbf{v}) \ge 1\). Thus, by Theorem \ref{def:thm1}, we have 
    \[
    |\mathcal{Z}^{w'_1}| = \bigl| D_{t-s-1}(\mathbf{w}_{[3,n-s]}) \cap D_{t-s-1}(\mathbf{w}'_{[2,n-s-1]}) \bigr|
    \le N_q(n-s-2,1,t-s-1).
    \]
    \item If \(w_2 \neq w'_1\) (together with \(w_1\neq w'_1\)), then the first occurrence of \(w'_1\) in \(\mathbf{w}\) is at a position at least \(2\) such that $\ell_{w'_1}\geq 2$. Thus,  
    \[
    |\mathcal{Z}^{w'_1}| \le \bigl| D_{t-s-\ell_{w'_1}}(\mathbf{w}_{[2+\ell_{w'_1},\,n-s]}) \bigr|
    \le D_q(n-s-3,\,t-s-2)
    \overset{(a)}{\le} N_q(n-s-2,1,t-s-1),
    \]
  where $(a)$ follows from Lemma \ref{Pro:lm6}.
  \end{enumerate}
\end{itemize}
Thus, we have
\[
|\mathcal{Z}| \le \sum_{i=1}^{s+1} M_q(n-i,t-i+1)
+ N_q(n-s-2,1,t-s-1)
+ \sum_{j=1}^{q-s-2} D_q(n-s-2-j,\,t-s-1-j),
\]
where the last sum is taken as \(0\) if \(q-s-2 < 1\). Applying Lemma~\ref{Upp:lm2} with parameters \(n' = n-s-1\) and \(t' = t-s-1\) (which satisfy the lemma's conditions because \(n \ge t+3\)) yields
\[
N_q(n-s-2,1,t-s-1) + \sum_{j=1}^{q-s-2} D_q(n-s-2-j,\,t-s-1-j)
\le \sum_{j=0}^{q-s-2} M_q(n-s-2-j,\,t-s-1-j).
\]
Consequently, 
\(
|\mathcal{Z}|
\le \sum_{i=1}^{s+1} M_q(n-i,t-i+1)
+ \sum_{j=0}^{q-s-2} M_q(n-s-2-j,\,t-s-1-j)
= M_q(n,t).
\)

\textbf{Case B2:} \(s \ge q\).

Since \(s+1 \ge q+1\) and $\mathbf{p}\in \mathbf{C}_q(s+1)$, we have \(\{p_1,\dots,p_{q}\}=\Sigma_q\). By Lemmas~\ref{Pro:lm1} and~\ref{Pro:lm2}, we have 
\(
|\mathcal{Z}| = \sum_{i=1}^{q} |\mathcal{Z}^{p_i}|.
\)
For each \(i\in[q]\), the induction hypothesis yields
\[
|\mathcal{Z}^{p_i}| = \bigl| D_{t-i+2}(\mathbf{p}_{[i+1,s+1]},\mathbf{w}) \cap D_{t-i+1}(\mathbf{p}_{[i+1,s+1]},\mathbf{w}') \bigr|
\le M_q(n-i,\,t-i+1).
\]
Thus, we have 
\(
|\mathcal{Z}|= \sum_{i=1}^{q} |\mathcal{Z}^{p_i}|
\le \sum_{i=1}^{q} M_q(n-i,t-i+1)= M_q(n,t).
\)

Therefore, Eq.~\eqref{App:eq2} holds in all cases, which completes the induction. Consequently,
\(
N_q(n,1,t+1,t)\le M_q(n,t)
\)
for $n\geq \max\{t+3,4\}$ and $t\geq 1$. Combining this with Eq.~\eqref{App:eq1} yields equality, proving Lemma~\ref{Upp:lm3}.
\end{proof}

\section*{Acknowledgments}
%\textcolor{red}{The authors would like to express their sincere gratefulness to the editor and the three anonymous reviewers for their valuable suggestions and comments which have greatly improved this paper.}
The work of X.~Wang is supported by the National Natural Science Foundation of China (Grant No. 12001134). The work of F.-W.~Fu is supported by the National Key Research and Development Program of China (Grant No. 2022YFA1005000), the National Natural Science Foundation of China (Grant No. 62371259),  the Fundamental Research Funds for the Central Universities of China (Nankai University), and the Nankai Zhide Foundation.

\end{document}